\def\HiLi{\leavevmode\rlap{\hbox to \hsize{\color{gray!20}\leaders\hrule height .8\baselineskip depth .5ex\hfill}}}
\def\showauthornotes{1}
\newcommand{\Authornote}[2]{{\sffamily\small\color{red}{[#1: #2]}}}
\newcommand{\Authornotecolored}[3]{{\sffamily\small\color{#1}{[#2: #3]}}}
\newcommand{\Authorcomment}[2]{{\sffamily\small\color{gray}{[#1: #2]}}}
\newcommand{\Authorstartcomment}[1]{\sffamily\small\color{gray}[#1: }
\newcommand{\Authorfnote}[2]{\footnote{\color{red}{#1: #2}}}
\newcommand{\Authorfixme}[1]{\Authornote{#1}{\textbf{??}}}
\newcommand{\Authormarginmark}[1]{\marginpar{\textcolor{red}{\fbox{\Large #1:!}}}}
\newcommand{\Authornote}[2]{}
\newcommand{\Authornotecolored}[3]{}
\newcommand{\Authorcomment}[2]{}
\newcommand{\Authorstartcomment}[1]{}
\newcommand{\Authorfnote}[2]{}
\newcommand{\Authorfixme}[1]{}
\newcommand{\Authormarginmark}[1]{}
\DeclareMathOperator{\spn}{span}
\DeclareMathOperator\rank{rank}
\DeclareMathOperator{\greedy}{Greedy}
\DeclareMathOperator{\generalizedGreedy}{GeneralizedGreedy}
\DeclareMathOperator{\role}{role}
\DeclareMathOperator{\imp}{Rel}
\DeclareMathOperator{\simp}{Simulated}
\DeclareMathOperator{\simpExt}{SimulExt}
\DeclareMathOperator{\online}{Online}
\DeclareMathOperator{\onlineExt}{OnlineExt}
\DeclareMathOperator{\nonzero}{non-zero}
\DeclareMathOperator{\argmax}{argmax}
\def\E{\mathbb{E}}
\newcommand{\MSP}{MSP\xspace}
\newcommand{\MISP}{MISP\xspace}
\newcommand{\cA}{{\mathcal{A}}}
\newcommand{\cE}{{\mathcal{E}}}
\newcommand{\cI}{{\mathcal{I}}}
\newcommand{\cL}{{\mathcal{L}}}
\newcommand{\wh}{\ensuremath{w'}}
\newcommand{\OPTh}{\ensuremath{\OPT'}}
\newcommand{\OPT}{{\mathrm{OPT}}}
\newcommand{\optc}{{\mathrm{opt}}}
\newcommand{\ALG}{\mathcal{A}}
\newtheorem{theorem}{Theorem}[section]
\newtheorem{lemma}[theorem]{Lemma}
\newtheorem{corollary}[theorem]{Corollary}
\newtheorem{definition}[theorem]{Definition}
\newtheorem{proposition}[theorem]{Proposition}
\newtheorem{observation}[theorem]{Observation}
\newtheorem*{claim*}{Claim}
\newtheorem*{rep@theorem}{\rep@title}
\newcommand{\newreptheorem}[2]{%
\newenvironment{rep#1}[1]{%
 \def\rep@title{#2 \ref{##1}}%
 \begin{rep@theorem}}%
 {\end{rep@theorem}}}
\title{A Framework for the Secretary Problem on the Intersection of Matroids}
\author{
Moran Feldman\thanks{Dept. of Mathematics and Computer Science, Open University of Israel.
Email:
\href{mailto:moranfe@openu.ac.il}{moranfe@openu.ac.il}.
Supported by Israel Science Foundation grant 1357/16.}
\and
Ola Svensson\thanks{School of Computer and Communication Sciences, EPFL.
Email:
\href{mailto:ola.svensson@epfl.ch}{ola.svensson@epfl.ch}.
Supported by ERC Starting Grant 335288-OptApprox.}
\and
Rico Zenklusen\thanks{
Department of Mathematics, ETH Zurich, Zurich, Switzerland.
Email: \href{mailto:ricoz@math.ethz.ch}%
{ricoz@math.ethz.ch}.
Supported by the Swiss National Science Foundation grant
200021\_165866, ``New Approaches to Constrained
Submodular Maximization''.}
}
\date{\today}
\begin{document}
\maketitle 
\thispagestyle{empty}
\addtocounter{page}{-1}

\begin{abstract}
The secretary problem became one of the most prominent online selection problems due to its numerous applications in online mechanism design. The task is to select a maximum weight subset of elements subject to given constraints, where elements arrive one-by-one in random order, revealing a weight upon arrival. The decision whether to select an element has to be taken immediately after its arrival.
The different applications that map to the secretary problem ask for different constraint families to be handled. The most prominent ones are matroid constraints, which both capture many relevant settings and admit strongly competitive secretary algorithms.
However, dealing with more involved constraints proved to be much more difficult, and strong algorithms are known only for a few specific settings. 
In this paper, we present a general framework for dealing with the secretary problem over the intersection of several matroids. This framework allows us to combine and exploit the large set of matroid secretary algorithms known in the literature. As one consequence, we get constant-competitive secretary algorithms over the intersection of any constant number of matroids whose corresponding (single-)matroid secretary problems are currently known to have a constant-competitive algorithm. Moreover, we show that our results extend to submodular objectives.

\end{abstract}

\medskip
\noindent
{\small \textbf{Keywords:}
matroid secretary problem, matroid intersection, online algorithms
}

\newpage

\section{Introduction}

The secretary problem is a classical online selection problem with unclear origins, but dating back to at least the 1950's (see~\cite{ferguson_1989_who,gardner_1960_mathematical,gardner_1960b_mathematical,lindley_1961_dynamic} for more details). In its original form the task is to hire the best out of a known number $n$ of candidates ranked according to an unknown total order. The candidates are interviewed in uniformly random order, and it is only possible to compare pairs of candidates that have already been interviewed. Between any two interviews, one has to decide immediately and irrevocably whether to hire the candidate interviewed last, and the game stops as soon as some candidate gets hired. The goal is to find a hiring/selection strategy that maximizes the probability to hire the best candidate.
A well-known and asymptotically optimal algorithm by Dynkin~\cite{dynkin_1963_optimum} selects the best candidate with probability $\sfrac{1}{e}$.

More recently, there has been a revival of variants of the secretary problem due to a multitude of applications in online mechanism design,
where items are sold to agents arriving online
(see~\cite{%
babaioff_2007_matroids,%
babaioff_2008_online,%
chawla_2010_multi-parameter,%
hajiaghayi_2004_adaptive,%
kleinberg_2005_multiple-choice%
} and references therein). Moreover, links between the secretary problem and further problems in mechanism design, like prophet inequalities and sequential posted pricings, led to a diverse community of researchers interested in the problem.

Most of the recently studied variants of the secretary problem are of the following type. Elements (agents) $e$ of a set $N$ appear one at a time, each revealing a nonnegative weight $w(e)$ when appearing. As before, one has to decide immediately and irrevocably after the appearance of an element whether to select it. The goal is to maximize the expected weight of the selected elements subject to a given set of constraints that the selected elements have to fulfill.
A variety of constraint types have been considered, starting with a simple cardinality constraint imposing that any $k$ elements can be selected~\cite{kleinberg_2005_multiple-choice}.
In particular, matroid constraints%
\footnote{We recall that a matroid is a tuple $M=(N,\mathcal{I})$, where $N$ is a finite set, called \emph{ground set}, and $\mathcal{I}\subseteq 2^N$ is a nonempty family of subsets of $N$, called \emph{independent sets}, fulfilling the following properties: \begin{enumerate*}[label=(\roman*)]
\item if $I\in \mathcal{I}$ and $J\subseteq I$, then $J\in \mathcal{I}$;
\item if $I,J\in \mathcal{I}$ and $|I| > |J|$, then $\exists e\in I\setminus J$ such that $J\cup \{e\}\in \mathcal{I}$.
We refer to~\cite{schrijver_2003_combinatorial} for more information on matroids.
\end{enumerate*}}
have proven to be a very interesting and promising constraint type, capturing a wide set of relevant problems, while offering at the same time many structural properties enabling the design of strong online algorithms. Consequently, matroid constraints have been by far the best-studied constraint class for the secretary problem.

The secretary problem under matroid constraints was introduced by Babaioff et al.~\cite{babaioff_2007_matroids}, and is known as the \emph{Matroid Secretary Problem (MSP)}. Constant-competitive algorithms for MSP exist for numerous classes of matroids, including graphic matroids~\cite{babaioff_2007_matroids,korula_2009_algorithms}, co-graphic matroids~\cite{soto_2013_matroid}, laminar matroids~\cite{im_2011_secretary,jaillet_2013_advances,ma_2013_simulatedSTACS}, transversal matroids~\cite{babaioff_2007_matroids,dimitrov_2012_competitive,korula_2009_algorithms,kesselheim_2013_optimal}, regular and decomposable matroids~\cite{dinitz_2013_matroid}, and linear matroids with a matrix representation having at most a constant number of non-zero entries per column~\cite{soto_2013_matroid}.
Moreover, the above list can be extended by a result of~\cite{babaioff_2007_matroids}, showing that any $c$-competitive algorithm for MSP over a given matroid $M$ can be transformed into a $\Omega(c)$-competitive algorithm for MSP over any truncation of $M$.
For general matroids, $\Omega(1/\log\log\rank)$ is the currently best known competitive ratio, where $\rank$ is the rank of the underlying matroid~\cite{feldman_2015_simple,lachish_2014_competitive}.
It remains open, whether a constant-competitive algorithm for MSP exists, with the well-known Matroid Secretary Conjecture claiming this being the case.

Motivated by various applications, secretary problems subject to constraint families beyond matroids have recently gained attention, including matching constraints in graphs and hypergraphs~\cite{dimitrov_2012_competitive,kesselheim_2013_optimal,korula_2009_algorithms}, knapsack constraints~\cite{babaioff_2007_knapsack}, independent sets in graphs~\cite{gobel_2014_online}, and the intersection of (a constant number of) laminar matroids~\cite{ma_2013_simulatedSTACS}.
Moreover, for arbitrary downward-closed set systems, Rubinstein~\cite{rubinstein_2016_beyond} presents an algorithm for the secretary problem that nearly matches a known information-theoretic hardness bound of $O(\sfrac{\log\log n}{\log n})$ by Babaioff et al.~\cite{babaioff_2007_matroids}, where $n$ is the number of elements.

The flurry of results on MSP and constraints beyond matroids has led to numerous algorithmic ideas.
However, there are still large gaps in our understanding of secretary problems, and despite the many known results for various specific constraint families, generally applicable techniques to design new strong secretary algorithms are still largely missing.
The goal of this work is to partially fill this gap by presenting a general framework for combining MSP algorithms for specific matroids into a strong secretary algorithm over the intersection of these matroids (we use the name \emph{Matroid Intersection Secretary Problem (\MISP)} for the secretary problem over such intersections).
This framework allows us to leverage the extensive existing literature on MSP towards the design of secretary algorithms for a considerably more general class of constraints in an essentially black-box way.
Previously, it was unclear how to obtain nearly-optimal competitive ratios when dealing with combined constraints, and a constant-competitive algorithm for the intersection of a constant number of matroids was only known when the involved matroids were laminar~\cite{ma_2013_simulatedSTACS}.
However, this procedure was heavily specialized to laminar matroids, and hard to extend. In particular, a question in~\cite{im_2011_secretary}, which was open since, asks whether it is possible to obtain an $\Omega(c)$-competitive algorithm for the secretary problem on the intersection of a matroid $M$ and a single laminar matroid, given a $c$-competitive algorithm for MSP on $M$.
As a very special case of our framework, we resolve this question affirmatively (under a mild assumption on the algorithm for $M$). More generally, we get constant-competitive secretary algorithms for the intersection of any constant number of matroids as long as for each involved matroid a constant-competitive MSP algorithm is known (again, assuming this algorithm obeys the above-mentioned mild assumption). Moreover, by extending a technique of~\cite{feldman_2015_submodular}, we show that our results extend to the submodular secretary problem.\footnote{In the submodular secretary problem, the objective function is a nonnegative submodular function $f\colon 2^N \rightarrow \mathbb{R}_{\geq 0}$ on the ground set $N$, i.e., $f$ fulfills $f(A) + f(B) \geq f(A\cup B) + f(A\cap B)$ for any $A,B\subseteq N$. Moreover, access to $f$ is provided via a value oracle, which can be queried with any subset $S\subseteq N$ of elements that revealed so far, and returns the value $f(S)$.}

\subsection{Overview of our results and techniques}

For better clarity, and to highlight our main ideas, we build our framework in several steps based on a sequence of results which we think to be of independent interest. In this section, we give a brief overview of our main techniques and their implications.

In order for our techniques to work, our framework requires that the single-matroid MSP algorithms used within it are \emph{order-oblivious}, a notion introduced in~\cite{azar_2014_prophet}.
An order-oblivious secretary algorithm works in two phases. In the first phase it observes a certain number of all elements without selecting any of them, to learn about their weights. Only in the second phase elements get selected. However, the second phase must not depend on the elements arriving in random order, and has to work even if the elements appear in adversarial order. Order-obliviousness is a quite weak assumption in the sense that most known MSP algorithms are order-oblivious. Moreover, for all matroid classes known to the authors that have been studied in the context of MSP, the currently best competitive ratio is either already achieved by an order-oblivious algorithm, or an order-oblivious algorithm with a competitive ratio at most a constant-factor worse than the best-known one can easily be obtained from existing procedures.
%
Thus, the restriction to order-oblivious algorithms is a rather benign assumption. Furthermore, when combining algorithms it is clear that some assumption of this nature is required (otherwise, the combining technique must be sophisticated enough to combine, for example, two algorithms such that the first of them only selects a subset of the first $n/2$ elements, while the second only selects a subset of the last $n/2$ elements).   
In Section~\ref{sec:preliminaries}, we provide a more formal definition of order-oblivious algorithms.

Let us now present our results for {\MSP} algorithms that are $c$-competitive (for some $c>0$) because they select each element of the offline optimum $\OPT$ with probability at least $c$. We call such algorithms \emph{$\optc$-competitive}, or more precisely \emph{$c$-$\optc$-competitive} (Section~\ref{sec:optSelectToMatInt} contains a formal definition).%
\footnote{Throughout this paper we assume that no two elements have the same weight, which can easily be achieved by breaking ties between elements of identical weight in an arbitrarily (but consistent) way. One important consequence of this assumption is that the offline optimum is unique.}
Many MSP algorithms in the literature are $\optc$-competitive, including algorithms for partition, laminar and co-graphic matroids; however, most are not. At a later stage we significantly weaken the condition of $\optc$-competitiveness to capture the best-known algorithms (up to a constant factor) for all matroid classes that were studied in the context of MSP, to the best of the authors' knowledge.



Consider now the secretary problem over the intersection of two matroids $M_1=(N,\mathcal{I}_1)$ and $M_2=(N,\mathcal{I}_2)$, and assume that we have an order-oblivious $\Omega(1)$-$\optc$-competitive algorithm $\ALG_j$ for MSP over $M_j$, where $j\in \{1,2\}$.
A natural na\"{\i}ve approach would be to run for both $M_1$ and $M_2$ the algorithms $\ALG_1$ and $\ALG_2$ independently in parallel, with the rule that we only select elements that get selected by both procedures simultaneously. However, this approach could have an arbitrarily bad competitive ratio due to the following.
Clearly, since both $\ALG_1$ and $\ALG_2$ are $\Omega(1)$-$\optc$-competitive algorithms, we will select elements of $\OPT_1\cap \OPT_2$ with constant probability, where $\OPT_j$ for $j\in \{1,2\}$ is the offline optimum with respect to only $M_j$.
The issue is that $\OPT_1$ and $\OPT_2$ may have very little or no overlap
(another technicality is to make sure that we do not have bad correlations between the algorithms).
%
A key part of our framework is to show that this issue can be resolved in a general way, such that we can assume to have an instance where the weight of $\OPT_1\cap \OPT_2$ is a constant fraction of the weight of $\OPT$.

More precisely, consider the intersection of $k$ matroids $M_1,\ldots, M_k$ on the same ground set $N$. We show that the following preprocessing procedure leads to a new instance where the $k$ single-matroid offline optima $\OPT_j$, one for each $M_j$ where $j\in[k]\coloneqq\{1,\ldots,k\}$, have large overlap. We start by observing a well-chosen fraction $p$ of all elements, obtaining a set of observed elements $S$. We then consider the solution $G=\greedy(S)\subseteq S$ that the greedy algorithm for the intersection of the $k$ matroids $M_1,\ldots, M_k$ returns when applied to $S$. Finally we define a new MISP instance that, among the non-sampled elements $N\setminus S$, only considers elements $e\in N\setminus S$ such that $\greedy(S\cup \{e\})\neq \greedy(S)$; 
we call such elements \emph{greedy-relevant}, or more precisely, \emph{greedy-relevant with respect to $S$}.\footnote{There are several technical aspects one has to take care of which we did not address in this brief outline. In particular, instead of defining a new MISP instance on a subset of the elements, we really define a preprocessing procedure leading to a new MISP instance on all elements $N$ (but with a modified weight function $w'$). This allows us to use existing MSP algorithms in a black-box fashion.}
Our main technical contribution here is to show that the intersection of the $k$ single-matroid offline optima on only the greedy-relevant elements has a total weight of $\Omega(\frac{1}{k^2}w(\OPT))$.
Furthermore, we show that this allows us to obtain the following.

\begin{theorem}\label{thm:mainSimple}
   Consider $k$ matroids $M_1,\dots, M_k$ on a common ground set $N$. Suppose that, for $j=1,\dots, k$,
  there is an order-oblivious $c_j$-$\optc$-competitive \MSP algorithm
  $\mathcal{A}_j$ on matroid $M_j$. Then, there is an order-oblivious $(\frac{1}{4k^2}\prod_{j=1}^k c_j)$-competitive online
  algorithm for {\MISP} on the intersection $M_1 \cap M_2 \cap
  \dots \cap M_k$.
\end{theorem}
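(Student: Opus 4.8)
My plan is to combine the two main ingredients described in the overview — the greedy-relevant preprocessing and the parallel execution of the single-matroid algorithms — and to book-keep the probability that a fixed element of $\OPT$ survives all the steps. The structure would be as follows. First, I would set up the preprocessing: sample each element independently with the prescribed probability $p$ (or, in the order-oblivious model, treat the first $p$-fraction of arrivals as the sample $S$), run the greedy algorithm for $M_1\cap\dots\cap M_k$ on $S$ to obtain $G=\greedy(S)$, and restrict attention to the greedy-relevant elements $\rel(S)\subseteq N\setminus S$. By the main technical lemma referenced in the overview, the set $\OPT_1'\cap\dots\cap\OPT_k'$ — where $\OPT_j'$ is the offline optimum of $M_j$ restricted to $\rel(S)$ — has weight $\Omega(\tfrac{1}{k^2}w(\OPT))$; I would actually want the sharper statement that, for the right choice of $p$, $\E[w(\OPT_1'\cap\dots\cap\OPT_k')]\ge \tfrac{1}{4k^2}w(\OPT)$ in expectation over the sampling (this is where the constant $\tfrac14$ enters — presumably $p=\tfrac12$ and a factor-$2$ loss from the sampling split combine with a $\tfrac1{k^2}$ from the greedy argument). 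The point of passing to greedy-relevant elements is precisely that it forces the $k$ single-matroid optima to overlap heavily, curing the ``$\OPT_1$ and $\OPT_2$ may be disjoint'' pathology flagged in the introduction.

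Second, on the post-sample instance restricted to $\rel(S)$, I would run the $k$ algorithms $\ALG_1,\dots,\ALG_k$ ``in parallel'' and select only those elements chosen simultaneously by all $k$. Here I must be careful about two things. (i) Feasibility: an element selected by every $\ALG_j$ is independent in every $M_j$, hence the output lies in $\mathcal{I}_1\cap\dots\cap\mathcal{I}_k$, so it is a valid \MISP solution. (ii) Decorrelating the runs: to analyze the probability that a given element is picked by all $k$ algorithms simultaneously, I would feed each $\ALG_j$ its own independent internal randomness, and — this is the crucial use of order-obliviousness — let each $\ALG_j$ use the \emph{same} sample $S$ as its learning phase, and then in its selection phase process the elements of $\rel(S)$ in an \emph{adversarial} order that does not depend on $j$. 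Because each $\ALG_j$ is order-oblivious, its $c_j$-$\optc$-competitiveness guarantee still holds for this fixed worst-case order; because the only shared information across the $j$'s is $S$ (already conditioned on) and the common element order, the events ``$\ALG_j$ selects $e$'' become \emph{positively correlated} — or, after conditioning on $S$ and the realized order, I can arrange them to be \emph{independent} across $j$. I would then condition on $S$, fix an element $e\in\OPT_1'\cap\dots\cap\OPT_k'$, observe that $e$ belongs to the offline optimum $\OPT_j'$ of the instance handed to $\ALG_j$, so $\ALG_j$ selects $e$ with probability at least $c_j$, and conclude that all $k$ algorithms select $e$ with probability at least $\prod_{j=1}^k c_j$.

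Putting the pieces together: condition on the sampling, sum over $e\in\OPT_1'\cap\dots\cap\OPT_k'$, and use linearity of expectation to get that the expected weight of the elements selected by every $\ALG_j$ is at least $(\prod_{j=1}^k c_j)\cdot w(\OPT_1'\cap\dots\cap\OPT_k')$; taking expectation over the sampling and invoking the preprocessing bound gives expected selected weight at least $\tfrac{1}{4k^2}\bigl(\prod_{j=1}^k c_j\bigr)w(\OPT)$, which is exactly the claimed competitive ratio. Finally I would note that the composite algorithm is itself order-oblivious: its ``learning'' phase is exactly the common sample $S$ (used to determine $\rel(S)$ and to feed the learning phases of all $\ALG_j$), and its ``selection'' phase is a deterministic-order post-processing that each order-oblivious $\ALG_j$ tolerates, so adversarial arrival order within the second phase is harmless.

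The main obstacle is step two's decorrelation argument together with the passage from the single-matroid $\optc$-guarantee to a guarantee on the \emph{modified} instance: one must verify that the offline optimum of $M_j$ on the greedy-relevant subinstance is indeed $\OPT_j'$ (so that the $\optc$-competitiveness of $\ALG_j$ applies to the right set), and that the weight modification alluded to in the overview's footnote (defining a new weight function $w'$ on all of $N$ rather than literally deleting elements) preserves this and does not break order-obliviousness. I also need the greedy overlap lemma in a form that plays well with taking expectations over the random sample. The parallel-composition and the feasibility check are routine once the bookkeeping of shared versus independent randomness is pinned down; everything else is linearity of expectation.
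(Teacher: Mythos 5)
Your overall architecture matches the paper's: preprocess with a sample $S$, pass to the greedy-relevant elements via a modified weight function to force the single-matroid optima to overlap (with $\E_S[w'(\cap_j \OPT'_j)]\ge \tfrac{1}{4k^2}w(\OPT)$; the paper uses $p=\tfrac{2k-1}{2k}$, not $\tfrac12$), then run the $k$ algorithms in parallel and keep only unanimously selected elements. However, your decorrelation step has a genuine gap. You propose to let every $\ALG_j$ reuse the \emph{same} sample $S$ as its learning phase, and then argue either that the events ``$\ALG_j$ selects $e$'' are positively correlated, or that they become independent after conditioning on $S$ and the order. Neither closes the argument. Conditioning on $S$ only decouples the internal coin flips $r_j$; the per-algorithm selection probabilities remain (possibly adversarially) correlated functions of the common random sample, so $\E_S\bigl[\prod_j \Pr_{r_j}[e\in\ALG_j(S,r_j,\sigma)]\bigr]$ need not be at least $\prod_j \E_S\bigl[\Pr_{r_j}[\cdot]\bigr]$ --- that direction of the inequality requires positive association, which is not implied by $\optc$-competitiveness (the paper's remark after this theorem explicitly notes that an FKG-style argument works for many but not all \MSP algorithms, with the general-matroid algorithms as counterexamples). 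Worse, once you condition on a specific $S$, the bound ``$\ALG_j$ selects $e$ with probability at least $c_j$'' no longer applies at all: the definition of $c_j$-$\optc$-competitiveness is an expectation over the algorithm's own sample, and for a fixed sample the probability can be $0$. There is also a distributional mismatch: $\ALG_j$ may require a sample of a specific (possibly random) cardinality $m_j$ that has nothing to do with $|S|\sim B(n,\tfrac{2k-1}{2k})$.

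The paper resolves this by drawing, for each $j$, a \emph{fresh independent} sample $S_j$ of exactly the cardinality $\ALG_j$ requests (in addition to the preprocessing sample $S$), so that $\Pr[e\in I]$ factors exactly as $\prod_j \E_{S_j}\bigl[\min_{\sigma_j}\Pr_{r_j}[e\in\ALG_j(S_j,r_j,\sigma_j)]\bigr]\ge\prod_j c_j$; the price is a separate coupling argument (the first claim in the paper's proof) showing that $S,S_1,\dots,S_k$ can all be extracted from a single uniformly random sample of cardinality $m+q_1+\cdots+q_k$, which is what keeps the composite algorithm order-oblivious. You correctly identified the decorrelation as the main obstacle, but the mechanism you sketch for overcoming it does not work; independent per-algorithm samples plus the merging argument are the missing ingredients.
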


The above theorem already has interesting implications. In particular, constant-competitive and order-oblivious algorithms are known---or can be easily obtained from known algorithms by a slight modification---for several matroids, including partition, laminar and co-graphic matroids.
Hence, Theorem~\ref{thm:mainSimple} implies that the secretary problem on the intersection of any constant number of such matroids admits a constant-competitive algorithm.

We later show how the requirement of $c$-$\optc$-competitiveness can be substantially weakened, by extending our framework to capture, to the best of the authors' knowledge, all classes of matroids that have been studied in the literature in the context of MSP, and leading to the following general result on MISP.
For simplicity, we state the result for the intersection of a \emph{constant} number of matroids; however, in a similar spirit as the statement of Theorem~\ref{thm:mainSimple}, the result can be generalized to a super-constant number of matroids (see Section~\ref{sec:genToMatInt} for more details).

\begin{theorem}\label{thm:mainGeneral}
Consider MISP over the intersection of $\alpha+\beta=O(1)$ matroids, where each of the first $\alpha$ matroids is either a partition, laminar, graphic, co-graphic, regular, max-flow min-cut, column-sparse linear, or transversal matroid---which are the matroids for which $\Omega(1)$-competitive MSP algorithms are known---and each of the remaining $\beta$ matroids can be arbitrary. Then, there is a $\Omega(1/(\log\log\rank)^\beta)$-competitive algorithm for this MISP, where $\rank$ denotes the cardinality of a maximum cardinality common independent set of all $\alpha+\beta$ matroids.%
\footnote{A direct use of the $\Omega(1/\log\log\rank)$-competitive algorithms for MSP on general matroids in our framework would lead to a competitive ratio of $\Omega(1/\prod_{j=1}^\beta \log\log\rank_j)$, where $\rank_j$ is the rank of the $j$-th general matroid. However, in the {\MISP} setting, one can improve this ratio to $\Omega(1/(\log\log\rank)^{\beta})$. This can be achieved by truncating each general matroid such that only $\rank$-many elements can be chosen from it. If the matroids are known upfront, this truncation can be performed upfront; otherwise, $\rank$ has to be first estimated through sampling (see~\cite{feldman_2015_simple} for an application of this technique).
}
\end{theorem}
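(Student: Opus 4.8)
The plan is to derive Theorem~\ref{thm:mainGeneral} by feeding appropriate single-matroid MSP algorithms into the generalized version of our framework developed in Section~\ref{sec:genToMatInt}, which replaces the $\optc$-competitiveness hypothesis of Theorem~\ref{thm:mainSimple} by the much weaker condition satisfied by essentially all MSP algorithms in the literature. Concretely, the argument proceeds in three steps: (i) assemble, for each of the $\alpha+\beta$ matroids, a single-matroid algorithm fitting the framework; (ii) combine them via the generalized combination theorem; and (iii) sharpen the resulting bound by truncation.

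First, I would collect the building blocks. For each of the $\alpha$ structured matroid classes listed in the statement, the literature provides an $\Omega(1)$-competitive MSP algorithm; going through the classes one by one, I would argue that each such algorithm is---possibly after a routine modification losing only a constant factor---order-oblivious and meets the weakened competitiveness requirement of the framework. For each of the $\beta$ arbitrary matroids $M_j$ I would instead invoke the $\Omega(1/\log\log\rank_j)$-competitive MSP algorithm for general matroids from~\cite{feldman_2015_simple,lachish_2014_competitive}, again checking that it can be cast in the order-oblivious form required by the framework at the cost of at most a constant factor.

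Second, I would plug these $k=\alpha+\beta=O(1)$ algorithms into the generalized combination theorem. This produces an (order-oblivious) algorithm for {\MISP} on $M_1\cap\cdots\cap M_k$ whose competitive ratio equals, up to the $\Theta(1/k^2)=\Theta(1)$ overhead inherited from the combination lemma, the product of the individual ratios, i.e. $\Omega\!\left(\prod_{j=1}^{\alpha} c_j\cdot\prod_{j=1}^{\beta}\tfrac{1}{\log\log\rank_j}\right)$; since $\alpha,\beta=O(1)$ and the $c_j$ for the structured matroids are absolute constants, this is $\Omega\!\left(1/\prod_{j=1}^{\beta}\log\log\rank_j\right)$.

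Third, to sharpen this to the claimed $\Omega(1/(\log\log\rank)^{\beta})$, I would truncate each of the $\beta$ general matroids to rank $\rank$, the maximum size of a common independent set of all $k$ matroids. Since every feasible {\MISP} solution has cardinality at most $\rank$, this truncation leaves the family of feasible sets, and hence the optimum, unchanged, while it caps the rank entering the guarantee of the $j$-th general-matroid algorithm at $\rank$ (note $\rank\le\rank_j$, so this only improves the bound). If the matroids are available upfront, the truncations are applied directly; otherwise $\rank$ is first estimated up to a constant factor from the sampled elements, as in~\cite{feldman_2015_simple}, which is enough since over-estimating by a constant factor changes $\log\log\rank$ only by a constant. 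The step I expect to be the main obstacle is the verification in the first part: several of the cited single-matroid algorithms are described in a form that exploits the random arrival order throughout their execution, so recasting each of them as an order-oblivious algorithm that additionally satisfies the framework's weakened competitiveness condition---without more than a constant-factor loss---requires inspecting each construction individually. The remaining steps are a black-box assembly together with the standard truncation trick, made painless by the $k=O(1)$ assumption.
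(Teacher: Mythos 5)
Your proposal is correct and follows essentially the same route as the paper: the paper derives Theorem~\ref{thm:mainGeneral} by casting each listed matroid class as a reduce-and-solve algorithm with constant parameters (and general matroids as a $(1,1/O(\log\log\rank),2)$-reduce-and-solve algorithm) in Appendix~\ref{sec:algForSpecMats}, plugging these into Theorem~\ref{thm:main}, and applying the truncation trick from the footnote exactly as you describe. The per-class verification you flag as the main obstacle is indeed where the bulk of the paper's work lies.
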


Theorem~\ref{thm:mainGeneral} allows for leveraging the best-known competitive ratios for the thoroughly studied MSP in the context of MISP. Moreover, it is important to mention that we assume in this theorem that a matroid of a given class is known upfront to the {\MISP} algorithm exactly when the current {\MSP} state of the art algorithm for this class needs this information. Currently these are the algorithms for graphic, co-graphic, regular, max-flow min-cut, column-sparse linear and transversal matroids. 

Using an extension of a technique presented in~\cite{feldman_2015_submodular}, we can also extend secretary algorithms for {\MISP} to deal with nonnegative submodular objectives.

\begin{theorem}\label{thm:mainSubm}
Let $\mathcal{A}$ be an $\alpha$-competitive algorithm for an {\MISP} instance on the intersection of $k$ matroids. Then, this algorithm can be transformed into an $\sfrac{\alpha^2}{128 k^2}$-competitive secretary algorithm for maximizing a nonnegative submodular function $f$ over the intersection of the same matroids.
Moreover, if $\mathcal{A}$ is order-oblivious, then so is the transformed algorithm for submodular {\MISP}.
\end{theorem}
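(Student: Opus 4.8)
The plan is to reduce the submodular {\MISP} to the linear {\MISP} that $\mathcal{A}$ already solves, by running $\mathcal{A}$ on a carefully chosen surrogate \emph{linear} weight function derived from sampling. The key idea, following the technique of~\cite{feldman_2015_submodular}, is to split the input stream at a random threshold, observe the first part, and use the observed elements to build a linear proxy for $f$ against which $\mathcal{A}$ performs well; then argue that a solution good for the proxy is, up to the stated loss, also good for $f$. Because {\MISP} permits any common independent set, the main extra work compared to the single-matroid case is bookkeeping the matroid-intersection structure through the sampling argument, which is exactly where the extra $1/k^2$ factor and the squaring of $\alpha$ will come from.

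\medskip
\noindent\textbf{Key steps.}
\begin{enumerate*}[label=(\roman*)]
\item \emph{Random split.} Let each element independently flip into a ``sample'' phase $S$ with probability $\sfrac12$ (equivalently, pick the split point by a binomial); let $R=N\setminus S$ be the remaining elements, on which we actually select. This preserves order-obliviousness: the sampling decision and the subsequent behavior of $\mathcal{A}$ on $R$ do not depend on the arrival order within each phase.
\item \emph{Linear proxy.} After seeing $S$, compute an approximately optimal common independent set $O_S$ of $f$ restricted to $S$ (offline, using the value oracle and any constant-factor offline algorithm for submodular maximization over the intersection of $k$ matroids, e.g. the $\Omega(1/k)$-approximation). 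Define, for $e\in R$, the marginal weight $w'(e) := f(e \mid O_S) = f(O_S\cup\{e\}) - f(O_S)$ (truncated at $0$), and run $\mathcal{A}$ on the {\MISP} instance $(M_1\cap\dots\cap M_k)$ restricted to $R$ with this \emph{linear} weight $w'$.
\item \emph{Charging argument.} Let $O^\ast$ be the offline submodular optimum over the intersection. Condition on the split. By submodularity, $f(O^\ast) \le f(O^\ast\cup O_S)$, and $f(O^\ast\cup O_S) - f(O_S) \le \sum_{e\in O^\ast\setminus O_S} f(e\mid O_S) \le \sum_{e\in O^\ast\cap R} w'(e) + \big(\text{terms charged to }O^\ast\cap S\big)$. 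Since each element of $O^\ast$ lands in $S$ with probability $\sfrac12$ independently, in expectation $O^\ast\cap R$ is a common independent set of $w'$-weight $\Omega(1)\cdot f(O^\ast)$, and symmetrically $O_S$ itself captures in expectation an $\Omega(1/k)$ fraction of $f(O^\ast)$ among the sampled elements. Hence $\mathcal{A}$ run on $w'$ over $R$ returns, in expectation, a set $T$ with $w'(T)=\Omega(\alpha)\cdot f(O^\ast)$.
\item \emph{Back to $f$.} Translating $w'(T)$ into $f(T)$ loses another factor: since $w'(e)=f(e\mid O_S)$ and the elements of $T$ are added in some order, submodularity only gives $f(T) \ge f(T\mid\emptyset)$ directly, but we can instead observe $w'(T)=\sum_{e\in T} f(e\mid O_S)\le f(T\cup O_S)-f(O_S)+ \text{(nonneg.)}$; combined with $f(O_S)=\Omega(1/k)f(O^\ast)$ from the sampled side and nonnegativity of $f$, one of $T$, $O_S$, or $T\cup O_S$ has value $\Omega(\alpha^2/k^2)f(O^\ast)$. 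Outputting the better of ``run $\mathcal{A}$ on $R$'' and ``commit to nothing and rely on the structure'' (or, more precisely, randomizing between the two candidate solutions with appropriate probability) yields the claimed $\sfrac{\alpha^2}{128k^2}$ bound after tracking the constants.
\end{enumerate*}

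\medskip
\noindent\textbf{Main obstacle.} The delicate point is step (iv): $w'$ measures marginals against $O_S$, not against the empty set, so a $w'$-heavy set $T$ need not be $f$-heavy on its own — the value may be ``hidden'' in the interaction with $O_S$. The fix is the standard trick of also keeping $O_S$ (the sampled solution) as a candidate output and randomizing between the two, so that whichever side carries the value is returned with constant probability; making the two events interact correctly with the matroid-intersection feasibility constraint (both $T$ and $O_S$ are common independent sets, but their union need not be) and bounding the resulting constant to exactly $128k^2$ is where the care is needed. The order-obliviousness claim then follows because every randomization above (the split, the choice between candidates) is independent of the arrival order, and $\mathcal{A}$ is invoked in a black-box order-oblivious manner on the second phase.
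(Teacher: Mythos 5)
Your high-level plan (sample half, build a linear surrogate from the sample, run $\mathcal{A}$ on it, convert back to $f$) matches the paper's strategy, but two of your steps break down, and the step you flag as the ``main obstacle'' is resolved in a way that is not available in this model. First, your fallback of outputting $O_S$ (or randomizing between $T$ and $O_S$) is infeasible: $O_S$ consists of elements observed during the sampling phase, and in the (order-oblivious) secretary model sampled elements can never be selected. The whole difficulty of the reduction is precisely that all value certified on the sample must be transferred to \emph{unseen} elements; the paper does this by restricting attention to greedy-relevant elements of $N\setminus S$ and assigning to each such $u$ the greedy increment $f(u \mid G_u)$ at the point where the greedy run on the sample would have inserted $u$ --- not the marginal against the whole sampled solution. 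Second, your charging argument in step (iii) starts from $f(O^\ast) \le f(O^\ast \cup O_S)$, which requires monotonicity; the theorem is for general nonnegative submodular $f$, and for non-monotone $f$ this inequality (and the subsequent telescoping) can fail badly. The paper handles non-monotonicity via a dedicated sampling lemma for the greedy algorithm (its Lemma~\ref{lem:fmValue}) rather than by this comparison.

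Finally, the conversion from $w'(T)$ back to $f(T)$ --- which you correctly identify as the crux --- is not actually carried out in your proposal. The paper's mechanism is to subsample the relevant elements into a set $E$ with a small probability $p=\sfrac{\alpha}{3k}$ and to work with the convolution $f_w(S)=\min_{A\subseteq S}\{f(A)+w(S\setminus A)\}$: because $E$ is sparse, $w(E)-f_w(E)$ is small in expectation, and since $w(Q\cap E)-f_w(Q\cap E)\le w(E)-f_w(E)$ pointwise and $f\ge f_w$, a $w$-heavy output is also $f$-heavy. The additional $k$-dependence enters through an exchange argument (maintaining auxiliary independent sets $H_1,\dots,H_k$) showing $\E[w(\OPT_w(E))]\ge \tfrac{p}{pk+1}\E[w(G)]$; your proposal has no analogue of either ingredient, and without them the constant $\sfrac{\alpha^2}{128k^2}$ cannot be recovered.
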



When algorithm $\cA$ has one of a few common properties, the result obtained by~\cite{feldman_2015_submodular} can be improved. This is true also for the result given by Theorem~\ref{thm:mainSubm}. However, to avoid repeating large sections from~\cite{feldman_2015_submodular}, we give here only the basic result, and refer the reader to~\cite{feldman_2015_submodular} for more information about these possible improvements.

\subsection{Further related work}

We highlight that for MISP on the intersection of $k$ matroids it is not hard to obtain an $\Omega(1/(k\cdot \log\rank))$-competitive algorithm by bucketing the weights of the elements in $\Theta(\log\rank)$ many classes and greedily selecting only elements from one randomly chosen class.\footnote{See, for example,~\cite{feldman_2015_simple} for details about how to do the bucketing without knowing the maximum weight upfront.}
However, the thus obtained competitive ratio is typically far from optimal.
Interestingly, Bateni et al.~\cite{bateni_2013_submodular} obtained a generalization of this result achieving a competitive ratio of $\Omega(1/(k\cdot \log^2\rank))$ for {\MISP} over the intersection of $k$ matroids even in the presence of a submodular objective function.
In contrast, our framework implies an $[\Omega(1/\log\log\rank)]^{2k}$-competitive algorithm for the same setting.

Many variants of the secretary problem have been considered in the literature. In particular, different assumptions can be made on the order in which elements arrive and how weights are assigned to the elements. We recall that the secretary problem assumes a uniformly random arrival order and adversarial weights. Progress has been achieved under various other assumptions~\cite{soto_2013_matroid,oveisgharan_2013_variants,jaillet_2013_advances,kesselheim_2015_secretary}.
Increased interest also arose in the secretary problem with nonlinear objectives, with a focus on the maximization of submodular functions~\cite{barman_2012_secretary,bateni_2013_submodular,%
feldman_2011_improved,gupta_2010_constrained,ma_2013_simulatedSTACS,feldman_2015_submodular}.
Moreover, a class of problems that are closely related to the secretary problem are prophet inequalities (see~\cite{kleinberg_2012_matroid,azar_2014_prophet,dutting_2015_polymatroid,feldman_2016_online} and references therein). For this setting, the authors recently introduced a framework, based on an online version of contention resolution schemes (see~\cite{chekuri_2014_submodular}), that allows for combining constraints~\cite{feldman_2016_online}. However, despite the affinity between secretary problems and prophet inequalities, it is not clear how these techniques could be carried over to the secretary problem.

Finally, we want to highlight a nice survey by Dinitz~\cite{dinitz_2013_recent} on {\MSP}, which contains many further links to related results.

\subsection{Organization of the paper}
We start with some preliminaries in Section~\ref{sec:preliminaries}.
In Section~\ref{sec:makeOptsOverlap}, we prove the key result that allows us to reduce a general {\MISP} instance on the intersection of $k$ matroids to an instance where the intersection of the $k$ offline optima for the individual matroids contains an $\Omega(\sfrac{1}{k^2})$-fraction of the weight of an optimum offline solution for the original instance.
Section~\ref{sec:optSelectToMatInt} demonstrates how this result can be used to obtain strong secretary algorithms for {\MISP} when we are given order-oblivious $\optc$-competitive algorithms for {\MSP} with respect to each single matroid, thus obtaining Theorem~\ref{thm:mainSimple}. In Section~\ref{sec:genToMatInt}, we present a generalization of our framework that significantly weakens the requirement of the existence of an $\optc$-competitive {\MSP} algorithm for each single matroid, leading to a broadly applicable framework for {\MISP} based on algorithms for {\MSP}. Theorem~\ref{thm:mainGeneral} is one consequence of this framework.
Appendix~\ref{sec:algForSpecMats} shows how existing {\MSP} algorithms can easily be adapted to our framework.
Finally, Appendix~\ref{sec:submodular} discusses submodular objective functions and proves Theorem~\ref{thm:mainSubm}.

\section{Preliminaries}\label{sec:preliminaries}

As highlighted in the introduction, we focus on order-oblivious algorithms in this paper, which is a benign assumption. Order-oblivious algorithms do not require the (full) random arrival assumption of the secretary problem, but only need to be able to observe a uniformly random subset of elements of a chosen cardinality, before starting to select any elements. To formalize this, we thus assume to work in the following order-oblivious secretary model.

\begin{definition}[Order-oblivious secretary model]
In the order-oblivious secretary model there are $n$ elements $N$ with unknown non-negative weights, and a down-closed constraint family $\mathcal{F}\subseteq 2^N$ given by an independence oracle, i.e., for any $S\subseteq N$ one can check whether $S\in \mathcal{F}$.
An algorithm in this model first specifies a number $m\leq n$ and requests a uniformly random sample of $m$ elements of $N$ (thereby, learning the weights of the sampled elements). This is called the \emph{sampling phase}, and none of the elements observed during it can be selected. The algorithm then observes one-by-one (in adversarial order%
\footnote{Even though this is typically not of central importance, we highlight that the adversarial order may depend on the elements observed during the sampling phase, but is independent of potential random bits used by the algorithm.})
the elements that do not belong to the sample. Whenever an element is observed, it reveals its weight and the algorithm has to decide immediately and irrevocably whether to select it. The set of all selected elements must be in $\mathcal{F}$, and the goal is to maximize the expected weight of the selected elements.
\end{definition}

In the literature, different assumptions have been made about what is known upfront about the ground set $N$ and the constraint $\mathcal{F}$. The weakest assumption that is typically made, is that one only knows the number $n=|N|$ of elements, and that the independence oracle can only be called on elements that appeared so far. However, some algorithms in the literature make the stronger assumption that both $N$ and $\mathcal{F}$ are fully known upfront. We use the weaker assumption as the default assumption for {\MSP}, and we explicitly mention it whenever we rely on an algorithm needing the stronger one. Naturally, our {\MISP} algorithms require access to the same information as required for the {\MSP} algorithms they employ. In other words, when employing for a given matroid an {\MSP} algorithm that relies on the default assumption, our algorithms need only an independence oracle for this matroid which can check the independence of sets consisting of elements that appeared so far; and when employing for a given matroid an {\MSP} algorithm relying on full access to the matroid, then our algorithms also need full access to this matroid.

An important observation that we use is that if an algorithm in the order-oblivious secretary model sets the number $m$ of elements to be observed in the sampling phase randomly according to a binomial distribution $B(|N|,p)$, for some $p\in [0,1]$, then this corresponds to observing each element independently with probability $p$. For brevity, we will sometimes simply say that an algorithm observes each element with probability $p$ during the sampling phase.



%


Finally, whenever we talk about an optimum solution, typically denoted by $\OPT$, we assume that $\OPT$ does not contain any elements of weight $0$. Moreover, for simplicity we often use `$+$' and `$-$' to denote adding/removing a single element to/from a set. For example, $S+u-v = (S\cup\{u\})\setminus \{v\}$.

\section{Making optimal sets of different matroids overlap}\label{sec:makeOptsOverlap}

In this section, we describe a key step in our framework: selecting (online) a weight
function so that the optimal solutions of the different matroids have large
overlap.  As discussed in the introduction, this overcomes the issue that the
optimal solutions of the matroids may have little overlap (if at all). Indeed, once the optimal
solutions overlap, there is a natural  way to combine $\optc$-competitive algorithms for {\MSP} as explained in the next sections.

The setting is as follows. We have $k$ matroids $M_1 = (N, \mathcal{I}_1), M_2
= (N, \mathcal{I}_2), \dots, M_k = (N, \mathcal{I}_k)$ on a common ground set
$N$ whose elements are weighted by $w\colon N \rightarrow \mathbb{R}_{\geq 0}$. Recall that we assume that a consistent tie-breaking rule is used to compare elements of equal weights, and let us number the elements of $N$ by $e_1, \dotsc, e_n$ in such a way that, for every $1 \leq i \leq n-1$, either $w(e_i) > w(e_{i + 1})$ or $w(e_i) = w(e_{i + 1})$ and $e_i$ is considered larger according to the tie breaking rule.
We also let $\mathcal{F} = \mathcal{I}_1 \cap \mathcal{I}_2 \cap \dots \cap
\mathcal{I}_k$ denote the feasible sets in the intersection of the $k$
matroids. 
An important ingredient of our approach is the standard offline
greedy algorithm:
\begin{itemize}
  \item[] Initialize a partial solution $G = \varnothing$.  For $i=1, \dots, n$, if $G + e_i \in \mathcal{F}$ then add $e_i$ to $G$.   
\end{itemize}
The set $G$ returned by the greedy algorithm after iterating through all the elements is
well-known to be a $\sfrac{1}{k}$-approximation when applied to the problem of
finding a maximum weight common independent set in the intersection of $k$
matroids.  When referring to the greedy algorithm we shall use the following notation: 
%
%
%
%
%
\begin{definition}
Let $S\subseteq N$. We denote by $\greedy(S)\subseteq S$ the output of the greedy algorithm when run on the subset $S$ of the elements.  We say that an element $e\in N\setminus S$ is \emph{greedy-relevant} with respect to $S$ if $e\in \greedy(S+e)$. Moreover, we denote by $\imp(S)\subseteq N\setminus S$ all elements that are greedy-relevant with respect to $S$.
\end{definition}

We are now ready to present Algorithm~\ref{alg:reduction} that takes as input a subset $S\subseteq N$ of the elements and  defines a weight function $\wh$ on the remaining elements. We remark that Algorithm~\ref{alg:reduction} can clearly be implemented in an online manner on the elements in $N\setminus S$ since the weight $w'(e)$ of an element $e$ is only  a function of $S$ and $e$.  We show below that, in expectation over the input set $S$, the optimal solutions of $M_1, \dots, M_k$ with respect to $\wh$ have a large overlap.
\begin{algorithm2e}
  \SetKwInOut{Input}{input}\SetKwInOut{Output}{output}
  \DontPrintSemicolon
  \Input{subset $S\subseteq N$}
  \Output{weight function $\wh$}
  \BlankLine
  For each element $e\in S$, define $\wh(e) =0$.\;
  For each other element $e\in N\setminus S$, let 
  $\wh(e) = \begin{cases}
    w(e) & \mbox{if  $e\in \imp(S)$}\,, \\
    0 & \mbox{otherwise}\,.
  \end{cases}
  $
\caption{OverlappingOPT$(S)$}\label{alg:reduction}
\end{algorithm2e}

One should think of the set $S$ supplied to Algorithm~\ref{alg:reduction} as a ``sample''. Indeed, in all our uses of this algorithm the set $S$ is obtained from elements arriving during the sample phase of an order-oblivious algorithm. Note that the ``sampled'' elements of $S$ receive a weight of $0$. This is important as it is
indeed  easy to define a weight function  such that the
optimal solutions  overlap on \emph{already} seen elements. The nice property of
Algorithm~\ref{alg:reduction} is that the weight function $\wh$ defined for each
element $e\in N$ as a function of $S$ is such that the optimal solutions with
respect to $\wh$ have  a large overlap despite containing only unseen elements from $N \setminus S$ that our online algorithm can still select (recall that we assume that the optimal solution does not contain elements of weight $0$).  For further reference, we summarize two useful properties of $\wh$ in the following observation.
\begin{observation}
  \label{obs:weight}
  On input $S\subseteq N$, Algorithm~\ref{alg:reduction} defines a weight
  function $\wh$ satisfying (i) $\wh\leq w$ and (ii) $\wh(e) = 0$ if $e\in S$ or
  $e$ is a loop in one of the matroids (i.e., $\{e\}$ is a dependent set in this matroid).
\end{observation}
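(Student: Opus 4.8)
The plan is to read both claims directly off the definition of Algorithm~\ref{alg:reduction}, since $\wh$ is given by an explicit two-case rule and no probabilistic or structural argument is needed.

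First, for property~(i): the algorithm assigns $\wh(e)\in\{0,w(e)\}$ to every $e\in N$ --- the value $0$ to sampled elements and to non-greedy-relevant unsampled elements, and the value $w(e)$ to greedy-relevant unsampled elements. Since all original weights are nonnegative ($w\colon N\to\R_{\geq 0}$), in either case $\wh(e)\le w(e)$, so $\wh\le w$ holds pointwise.

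Next, for property~(ii): if $e\in S$, then the first line of the algorithm already sets $\wh(e)=0$, so I would then assume $e\notin S$ and that $e$ is a loop of some matroid $M_j$, i.e., $\{e\}\notin\mathcal{I}_j$, and hence $\{e\}\notin\mathcal{F}$. For such an $e$ we have $\wh(e)=w(e)$ only if $e\in\imp(S)$, which by definition means $e\in\greedy(S+e)$. But the greedy algorithm inserts $e$ into its running set $G$ only when $G+e\in\mathcal{F}$; since $\{e\}$ is dependent in $M_j$, every superset of $\{e\}$ --- in particular $G+e$ --- is also dependent in $M_j$ and therefore not in $\mathcal{F}$, so this test fails at every iteration at which it is evaluated. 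Hence $e$ is never added, $e\notin\greedy(S+e)$, so $e\notin\imp(S)$, and $\wh(e)=0$.

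There is essentially no obstacle here: the statement is an immediate unfolding of the definitions, and the only step that goes beyond bookkeeping of the two cases is the remark that a loop can never be greedy-relevant, which itself reduces to the elementary fact that supersets of dependent sets are dependent.
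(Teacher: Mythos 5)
Your proof is correct and matches the intended argument: the paper states this as an observation without proof precisely because it follows by unfolding the two-case definition of $\wh$ in Algorithm~\ref{alg:reduction}, together with the fact that a loop $e$ satisfies $G+e\notin\mathcal{F}$ at every greedy iteration and hence is never greedy-relevant. Nothing is missing.
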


We now formally analyze the expected weight of the overlap of the optimal solutions with respect to $\wh$
when $S\subseteq N$ is a randomly chosen subset. Recall that $\wh$ is a function of
the set $S$ (we do not explicitly write this dependency as the set $S$  that
$\wh$ depends on will always be clear from the context).  For $p\in [0,1]$, we
let $\mu_p$ denote the distribution over subsets of $N$ where each element is
included in the subset with probability $p$ independently of other elements.
For notational convenience we also denote, for $j\in [k]$, by $\OPTh_j$ the
(unique) maximum weight independent set in $M_j$ with respect to $\wh$. Notice
that, by definition and since we assume that the optimal solutions do not contain any elements of weight $0$, we have that the optimal independent set of $M_j$ with
respect to $\wh$ equals the optimal solution with respect to $w$ when only
considering elements in $\imp(S)$.  In particular, $\OPTh_j \subseteq
N \setminus S$. We also let $\OPT$ denote the maximum weight solution (with
respect to $w$) in the intersection of the $k$ matroids $M_1, \dots, M_k$. The
main result of this section can now be stated as follows. 
\begin{theorem} 
  For $p=\tfrac{2k-1}{2k}$,
  \begin{align*}
    \E_{S\sim \mu_p} \left[ \wh\left( \cap_{j=1}^k \OPTh_j \right) \right] \geq \frac{1}{4k^2} w(\OPT)\,.
  \end{align*}
(We remark  that $\OPTh_j$ is a  function of $S$ via $\wh$.)
  \label{thm:reduce}
\end{theorem}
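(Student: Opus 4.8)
The plan is to establish the bound element by element over the offline optimum: I will show that $\Pr_{S\sim\mu_p}\bigl[o\in\bigcap_{j=1}^k\OPTh_j\bigr]\ge\frac{1}{4k^2}$ for every $o\in\OPT$. Granting this, and using that $\wh$ agrees with $w$ on $\imp(S)$ while $\bigcap_j\OPTh_j\subseteq\imp(S)$, we obtain
$\E_S[\wh(\bigcap_j\OPTh_j)]=\sum_{e\in N}w(e)\Pr[e\in\bigcap_j\OPTh_j]\ge\sum_{o\in\OPT}w(o)\cdot\tfrac{1}{4k^2}=\tfrac{1}{4k^2}\,w(\OPT)$,
which is exactly the theorem (the contributions of elements outside $\OPT$ are nonnegative and are simply dropped).

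Two elementary structural facts set up the per-element analysis. First, the classical ``span'' description of the weighted matroid greedy: for any single matroid $M_j$ and any $X\subseteq N$, the maximum-$w$-weight independent subset of $M_j$ contained in $X$ is exactly $\{e\in X: e\notin\spn_{M_j}(X_{>e})\}$, where $X_{>e}$ denotes the elements of $X$ heavier than $e$ in our fixed order. Applying this with $X=\imp(S)$, and recalling that $\OPTh_j$ is precisely the maximum-$w$-weight $M_j$-independent subset of $\imp(S)$, gives the characterization $\bigcap_j\OPTh_j=\{e\in\imp(S): e\notin\spn_{M_j}(\imp(S)_{>e})\text{ for all }j\in[k]\}$. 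Second, the combined greedy is prefix-consistent: its decision on an element depends only on the partial solution it has built from the strictly heavier elements, so, writing $G:=\greedy(S)$, the partial solution maintained just before processing $e$ (in the run on $S$, or on $S+e$) equals $G\cap N_{>e}$; hence $e\in\imp(S)$ if and only if $e\in N\setminus S$ and $e\notin\spn_{M_j}(G\cap N_{>e})$ for all $j$. A convenient consequence, for a fixed $o$, is that both $G\cap N_{>o}$ and $\imp(S)_{>o}$ depend only on $S\cap N_{>o}$, so the event $\{o\in\bigcap_j\OPTh_j\}$ is determined by the pair $(\mathbf{1}[o\in S],\,S\cap N_{>o})$. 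Since these are independent, $\Pr[o\in\bigcap_j\OPTh_j]=(1-p)\cdot\Pr_{S'}\bigl[o\in\bigcap_j\OPTh_j\bigm| o\notin S\bigr]$, where $S'$ is the restriction of a $\mu_p$-sample to $N_{>o}$; it therefore remains to lower bound this conditional probability by $1-p=\tfrac{1}{2k}$.

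This conditional bound is the heart of the proof, and the step I expect to be the main obstacle. Conditioning on $o\notin S$ and working inside $N_{>o}$, we must show that with probability at least $1-p$ the sampled heavy elements are ``just right'': enough of them are pulled into $G$ (equivalently, out of $\imp(S)_{>o}$) that $\imp(S)_{>o}$ does not span $o$ in any $M_j$, while $G\cap N_{>o}$ still does not span $o$ in any $M_j$, so that $o$ remains greedy-relevant. The obvious route---comparing $\greedy(S\cap N_{>o})$ and $\imp(S)_{>o}$ with the combined greedy run on all of $N_{>o}$---is \emph{not} available, because the combined greedy for the intersection of two or more matroids is not monotone under enlarging the ground set (an element that greedy would pick can become blocked once heavier elements are added, and conversely). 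One must instead exploit the exact characterization of greedy-irrelevance above: a non-sampled $f\in N_{>o}$ leaves $\imp(S)_{>o}$ precisely when it is already spanned, in some $M_i$, by $G\cap N_{>f}$; this controls how much of $N_{>o}$ can ever enter $\imp(S)_{>o}$ in terms of the common independent set $G$, and, together with the witness for $o\in\OPT$---that $\OPT_{>o}+o$ is independent in every $M_j$---lets one pin down, matroid by matroid, which heavy elements can take part in spanning $o$ and argue that the sample avoids forming such a spanning set on either side. The choice $p=\tfrac{2k-1}{2k}$ is exactly calibrated for this: in the extremal configuration for a single matroid (a small ``circuit-like'' obstruction among heavy elements) the favorable split of that obstruction has probability of order $p(1-p)$, and $p\ge\tfrac12$ upgrades this to at least $1-p$; a careful accounting that handles the $k$ matroids simultaneously should then yield the $\tfrac{1}{2k}$ conditional probability, and hence, with the factor $1-p$ from $\{o\notin S\}$, the per-element bound $\tfrac{1}{4k^2}$. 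Should the per-element bound prove too lossy in some configuration, the natural fallback is a global version of the same idea, lower bounding $\E_S[\wh(\bigcap_j\OPTh_j)]$ through the maximum-weight \emph{common} independent set on $\imp(S)$ and the fact that combined greedy on $\imp(S)$ is a $\tfrac1k$-approximation to it.
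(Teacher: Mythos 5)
Your central claim---that $\Pr_{S\sim\mu_p}[o\in\bigcap_{j=1}^k\OPTh_j]\ge\frac{1}{4k^2}$ for every $o\in\OPT$---is false, so the element-by-element strategy cannot be completed. A counterexample with $k=2$, $p=\sfrac{3}{4}$: take elements $g_1,\dots,g_m$ (heaviest), $f_1,\dots,f_m$ (medium), and $o$ (lightest); let $M_1$ be the partition matroid whose only nontrivial class is $\{f_1,\dots,f_m,o\}$ with capacity $1$ (all $g_i$ free), and let $M_2$ have classes $\{g_i,f_i\}$ of capacity $1$ (with $o$ free). Then $\OPT=\{g_1,\dots,g_m,o\}$, so $o\in\OPT$. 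But if some $g_i\notin S$, consider the heaviest such $f_i$: no heavier $f$ can be in $G=\greedy(S)$ (each is blocked in $M_2$ by its sampled $g$), so $f_i$ enters $G$ if $f_i\in S$ (making $o$ greedy-irrelevant) and enters $\imp(S)$ if $f_i\notin S$ (in which case $\OPTh_1$ prefers $f_i$ over $o$ in their common $M_1$-class). Hence $o\in\bigcap_j\OPTh_j$ forces $g_i\in S$ for all $i$, giving $\Pr[o\in\bigcap_j\OPTh_j]\le(1-p)p^m\to 0$. The theorem survives in this instance only because the $g_i$ carry the weight of $W=\bigcap_j\OPTh_j$; the weight of $W$ is in general not witnessed by the elements of $\OPT$ themselves, which is exactly why a per-element argument over $\OPT$ is the wrong decomposition.

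The paper's proof is correspondingly global, and your one-sentence fallback does not supply it. The actual argument proves (Lemma~\ref{lem:impPrefixGood}) that $\E[|W\cap N_{\leq\ell}|]\ge\frac{(1-(1-p)k)(1-p)}{p}\E[|G\cap N_{\leq\ell}|]$ for every prefix $N_{\leq\ell}$ of the decreasing weight order, where $G=\greedy(S)$; holding for all prefixes, this transfers to weights, and then $\E[w(G)]\ge\frac{p}{k}\,w(\OPT)$ follows from the greedy $k$-approximation applied to $\OPT\cap S$. The prefix inequality is the real content: it is proved by a coupled simulation that processes elements in decreasing weight order, flips a $p$-biased coin for each element greedy could still take, and maintains auxiliary sets $H_j\in\mathcal{I}_j$ whose span equals that of $G\cup\OPTh_j$; a potential argument on $\sum_j|H_j\setminus G|$ (expected change $(1-p)k$ when every $H_j$ can absorb the element, at most $(1-p)k-1$ otherwise) then yields the claim, and $p=\tfrac{2k-1}{2k}$ is chosen to make the coefficient work out. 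Your fallback of comparing $W$ to the maximum-weight common independent set of $\imp(S)$ skips precisely this step---relating the \emph{intersection} of the $k$ single-matroid optima over $\imp(S)$ to anything is the hard part---so as written the proposal contains no proof of the theorem.
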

As $\wh \leq w$, the above theorem implies that 
    $\E_{S\sim \mu_p} \left[ w\left( \cap_{j=1}^k \OPTh_j \right) \right] \geq \frac{1}{4k^2} w(\OPT)$.
  The theorem follows from the following slightly stronger and more technical lemma. Let $N_{\leq \ell}= \{e_1, \dots, e_\ell\}$ be the set containing the $\ell$ elements of highest weight. 

\begin{lemma}\label{lem:impPrefixGood}
  Let $p\in (0,1]$, and let $G=  \greedy(S)$ and $W = \cap_{j=1}^k \OPTh_j$ be random variables defined by $S\sim \mu_p$. Then,
\begin{equation*}
  \E [|W\cap N_{\leq \ell}|] \geq \frac{(1-(1-p)k)(1-p)}{p} \E[|G\cap N_{\leq \ell}|] \qquad \forall \ell\in [n]\enspace.
\end{equation*}
\end{lemma}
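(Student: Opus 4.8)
The plan is to fix a threshold $\ell$ and track, element by element in decreasing weight order, how the greedy solution $G$ and the overlap set $W=\cap_{j=1}^k\OPTh_j$ grow. The key structural observation I would establish is a deterministic, pointwise relation between membership in $\greedy(S)$ and membership in each $\OPTh_j$. Recall that $\OPTh_j$ is the maximum weight independent set of $M_j$ restricted to $\rel(S)$ (the greedy-relevant elements), and restricted to a single matroid the greedy algorithm is optimal; so $\OPTh_j$ is exactly what the single-matroid greedy on $M_j$ produces when fed $\rel(S)$ in decreasing weight order. The crucial claim is then: if $e\in N\setminus S$ is greedy-relevant with respect to $S$ (equivalently $e\in\rel(S)$) and, at the moment greedy for the \emph{intersection} processes $e$, it would be added — i.e.\ $e\in\greedy(S+e)$ in the strong sense that $e$ also survives against all later elements — then $e$ lies in $\OPTh_j$ for every $j$, hence $e\in W$. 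More carefully, I expect the right statement to be that $G\setminus S\subseteq W$, or at least that every element counted on the right-hand side has a ``partner'' on the left; this is where the factor $k$ will enter, since an element rejected by intersection-greedy is rejected because of a circuit in \emph{one} of the $k$ matroids.

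Next I would set up the probabilistic comparison. Condition on the relative order of the elements of $N_{\le\ell}$ (which is just the fixed weight order) and expose the sample $S$ in that order. For a fixed element $e=e_i$ with $i\le\ell$, let $B$ be the (random) set of elements of weight larger than $w(e)$, and note that $e$ contributes to $|G\cap N_{\le\ell}|$ iff $e\notin S$ and $e\in\greedy((B\cap S)+e)$, while $e$ contributes to $|W\cap N_{\le\ell}|$ iff $e\notin S$ and $e$ is greedy-relevant and survives in all $\OPTh_j$. The plan is to couple these two events across the random choice of $S$: given the prefix $B$ and its intersection with $S$, the event ``$e$ is taken by intersection-greedy on the sample together with $e$'' and the event ``$e\in W$'' differ only through what happens with the \emph{non-sampled} high-weight elements, each of which is absent from the sample independently with probability $1-p$. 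A union bound over the at most $k$ matroids, combined with the fact that each ``blocking'' element would itself have had to be sampled (probability $p$) to cause trouble, should yield that $\Pr[e\in W\mid \cdots]\ge \frac{(1-(1-p)k)(1-p)}{p}\cdot\Pr[e\in G\mid\cdots]$ pointwise in $e$; summing over $e\in N_{\le\ell}$ and taking expectations gives the lemma. The $\frac{1-p}{p}$ factor is the familiar ``sampled vs.\ not sampled'' ratio $\frac{1-p}{p}$ that appears when one compares $\Pr[e\notin S,\,\text{condition on }B\cap S]$ against $\Pr[e\in S,\ldots]$-type events, and the $(1-(1-p)k)$ factor is exactly the slack left after the $k$-fold union bound ruling out a blocking element in some matroid.

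The main obstacle I anticipate is making the deterministic step genuinely pointwise and monotone: greedy-relevance of $e$ with respect to $S$ is \emph{not} the same as $e$ being accepted by intersection-greedy run on $S\cup\{e\}$ in the presence of all other relevant elements, and $\OPTh_j$ depends on the whole set $\rel(S)$, not just on $S$ and $e$. So I would need an exchange-type argument: if $e$ were greedy-relevant in $M_j$ (i.e.\ $e\in\greedy_{M_j}(S+e)$) but $e\notin\OPTh_j$, then some later relevant element must be ``responsible,'' and I must show such an element is itself, with good probability over $S$, of the form that keeps the book-keeping consistent — essentially that the only way $e$ fails to reach $W$ is that, in some matroid $M_j$, a \emph{sampled} heavier element spans $e$, an event I can charge against the $\Pr[\cdot\in S]$ mass. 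Getting the quantifiers in the right order (fix $e$, expose heavier elements, union-bound over $j$) and verifying that the resulting per-element inequality has no hidden dependence across different $e$'s — so that linearity of expectation applies cleanly — is the delicate part; once that is in place, the summation over $\ell$-prefixes and the passage from the lemma to Theorem~\ref{thm:reduce} (plug in $p=\frac{2k-1}{2k}$, so $1-(1-p)k=\frac12$ and $\frac{1-p}{p}=\frac{1}{2k-1}$, and use $w(\OPT)\le k\cdot w(G)$-type bounds together with the prefix form to integrate over $\ell$) should be routine.
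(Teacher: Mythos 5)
Your central plan is to prove a \emph{pointwise, per-element} inequality $\Pr[e\in W\mid\cdots]\geq \frac{(1-(1-p)k)(1-p)}{p}\Pr[e\in G\mid\cdots]$ and then sum over $e\in N_{\leq\ell}$. This per-element inequality is false, and the counterexample is elementary: take $k=1$, a rank-$1$ uniform matroid on two elements $e_1,e_2$ with $w(e_1)>w(e_2)$. Then $e_2\in G=\greedy(S)$ with probability $p(1-p)$ (namely when $e_2\in S$, $e_1\notin S$), but $\Pr[e_2\in W]=0$, since whenever $e_2$ is greedy-relevant, so is the heavier $e_1$, and $\OPTh_1=\{e_1\}$. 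The required ratio for $k=1$ is $1-p>0$, so no per-element bound of this form can hold; only the prefix-aggregated statement is true (and for $\ell=2$ in this example it reduces to $(1-p)^2\geq 0$). The lemma therefore cannot be obtained by linearity of expectation from per-element comparisons: elements of $G$ with no chance of a $W$-counterpart must be \emph{charged against other elements}, which is exactly what the paper's proof does. It couples $G$ and $W$ via a single pass in decreasing weight order, maintains auxiliary independent sets $H_j$ with $\spn(H_j)=\spn(G\cup\OPTh_j)$, and analyzes the potential $X=\sum_{j}|H_j\setminus G|$; the inequality emerges from $\E[X_n]\geq 0$ after telescoping the per-step expected changes ($(1-p)k$ when no matroid blocks, at most $(1-p)k-1$ when some matroid does), not from any pointwise bound. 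Your intuition that ``the only way $e$ fails to reach $W$ is that a sampled heavier element spans it, which I can charge against the $\Pr[\cdot\in S]$ mass'' is the right instinct, but you have not supplied the bookkeeping device (the sets $H_j$ and the potential) that makes the charging rigorous, and that device is the entire content of the proof.

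A secondary but telling confusion: you write that the deterministic backbone should be ``$G\setminus S\subseteq W$.'' Since $G=\greedy(S)\subseteq S$ while $W\subseteq\rel(S)\subseteq N\setminus S$, the set $G\setminus S$ is empty and the two sides of the lemma live on \emph{disjoint} parts of the ground set; there is no containment to be had, only a comparison of expectations across the randomness of $S$. Relatedly, the factor $1-(1-p)k$ does not come from a union bound over the $k$ matroids; it is the residue of the potential argument (the expected drift $(1-p)k-|\Gamma|$ of $X$). The reduction from general $\ell$ to $\ell=n$ by restricting the ground set, and the passage from the lemma to Theorem~\ref{thm:reduce}, you do have right.
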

We prove this lemma in the next subsection. Let us now see how it implies Theorem~\ref{thm:reduce}.

\begin{proof}[Lemma~\ref{lem:impPrefixGood} implies Theorem~\ref{thm:reduce}]
  Substituting $p= \tfrac{2k-1}{2k}$ in Lemma~\ref{lem:impPrefixGood} yields,  for every $\ell\in [n]$,
  \begin{align*}
    \E [|W\cap N_{\leq \ell}|] \geq \tfrac{1}{2(2k-1)} \E[|G\cap N_{\leq \ell}|]\,.
  \end{align*}
As the elements of $N$ are ordered in a decreasing weight order according to $w$, the last inequality implies $\E[w(W)] \geq \tfrac{1}{2(2k-1)} \E[w(G)]$. Additionally, we can observe that $\wh(e) = w(e)$ for every element $e \in W$, and thus, $\E[\wh(W)]= \E[w(W)] \geq \tfrac{1}{2(2k-1)} \E[w(G)]$.

We complete the proof by showing that $\E[w(G)] \geq \tfrac{2k-1}{2k^2} w(\OPT) =  \tfrac{p}{k} w(\OPT)$. For this purpose, let $\OPT(S)\subseteq S$ be the maximum weight set in $S$ which is independent in all the matroids. Then, we have
\begin{equation*}
\E[w(\OPT(S))] \geq \E[w(\OPT\cap S)] = p \cdot w(\OPT)\enspace.
\end{equation*}
Moreover, the greedy algorithm is a $k$-approximation algorithm when applied to the intersection of $k$ matroids, which implies
\begin{equation*}
\E[w(G)] \geq \frac{1}{k} \cdot \E[w(\OPT(S))]\enspace,
\end{equation*}
and the desired inequality $\E[w(G)] \geq  \tfrac{p}{k} w(\OPT)$  follows by combining the last two inequalities.
\end{proof}

\subsection*{Proof of Lemma~\ref{lem:impPrefixGood}}
%
%

To prove Lemma~\ref{lem:impPrefixGood}, we consider the offline algorithm given as Algorithm~\ref{alg:simp}, which constructs sets $G$ and $W$ with a joint distribution identical to the random variables $G=\greedy(S)$ and $W=\cap_{j=1}^k \OPTh_j$ in the statement of the lemma (recall that $\OPTh_j$ is the maximum weight independent set of $M_j$ restricted to  greedy-relevant elements).
   \begin{algorithm2e}
        \DontPrintSemicolon
        \SetKwFor{Prob}{With probability}{}{endprob}
        \SetKwFor{Otherwise}{Otherwise}{}{endother}
        $G= \varnothing$, $W = \varnothing$ and $\OPTh_j = \varnothing$ \quad $\forall j\in [k]$\\
								\For{$i=1,\ldots, n$}{
          \If{$G+e_i \in \mathcal{F}$}{
            \Prob{$p$:}{
              $G= G+e_i$\;
            }
            \Otherwise{ (with remaining probability $1-p$):}{
              \For{$j=1, \ldots, k$} { 
                \label{l:rest}    \lIf{$\OPTh_j + e_i \in \mathcal{I}_j$}{$\OPTh_j = \OPTh_j + e_i$}
              } 
              \lIf{$e_i \in \cap_{j=1}^k \OPTh_j $}{$W= W+e_i$}
            }
            
          }
        }
      \caption{$\simp(p)$}\label{alg:simp}
   \end{algorithm2e}
Without loss of generality, we prove Lemma~\ref{lem:impPrefixGood} only for
$\ell=n$. The case of general $\ell$ then follows by considering a restricted
ground set $N$ consisting only of the elements of $N_{\leq \ell}$. Indeed, it is
clear from the description of Algorithm~\ref{alg:simp} that elements in $N
\setminus N_{\leq \ell}$ do not affect the distributions of $G\cap N_{\leq
\ell}$ and $W \cap N_{\leq \ell}$.

For analysis purposes we extend Algorithm~\ref{alg:simp} to maintain additional sets $W'$ and $H_j$ (for $j = 1,...,k$). Algorithm~\ref{alg:simpExt} extends Algorithm~\ref{alg:simp} so as to describe how we maintain these sets (the changes are highlighted in gray for convenience). One can observe that the way each set $H_j$ is maintained, guarantees that throughout the execution of Algorithm~\ref{alg:simpExt} we have $H_j\in \mathcal{I}_j$.

   \begin{algorithm2e}[ht]
        \DontPrintSemicolon
        \SetKwFor{Prob}{With probability}{}{endprob}
        \SetKwFor{Otherwise}{Otherwise}{}{endother}
        $G= \varnothing$, $W = \varnothing$, $W' = \varnothing$, $\OPTh_j = \varnothing$ and $H_j=\varnothing$ \quad $\forall j\in [k]$\;
        \For{$i=1,\ldots, n$}{
          \If{$G+e_i \in \mathcal{F}$}{
            \Prob{$p$:}{
            $G= G+e_i$\;
             \HiLi \For{$j=1,\dots, k$}{
             \HiLi   \lIf{$H_j + e_i \in \mathcal{I}_j$}{$H_j = H_j + e_i$}
             \HiLi  \If{$H_j + e_i \not \in \mathcal{I}_j$} {\HiLi  $H_j = H_j - f + e_i$, where $f\in H_j\setminus G$ is any element with $H_j -f +e_i\in \mathcal{I}_j$. \label{step:exchange}}
              }
            }
            \Otherwise{ (with remaining probability $1-p$):}{
              \For{$j=1, \ldots, k$} { 
                \lIf{$\OPTh_j + e_i \in \mathcal{I}_j$}{$\OPTh_j = \OPTh_j + e_i$}  \label{step:easy} 
                \HiLi  \lIf{$H_j + e_i \in \mathcal{I}_j$}{$H_j = H_j + e_i$} \label{step:rest}
              } 
              \lIf{$e_i \in \cap_{j=1}^k \OPTh_j $}{$W= W+e_i$}
              \HiLi  \lIf{$e_i \in \cap_{j=1}^k H_j $}{$W'= W'+e_i$}
            }

          }
        }

    \caption{$\simpExt(p)$}\label{alg:simpExt}
   \end{algorithm2e}

%
In the following, we denote by $G^i$ and $H_j^i$ the sets $G$  and $H_j$ at the end of iteration $i\in[n]$ of the algorithm.
Maintaining that $H_j$ for $j\in [k]$ is independent in $M_j$ throughout the algorithm can thus be rephrased as $H_j^i\in \mathcal{I}_j$ for $i\in [n]$.
Additionally, one can observe that the way $H_j$ is
maintained also implies that the span of $H_j$ equals the span of $G \cup \OPTh_j$. Thus, we have $\OPTh_j + e_i \in \mathcal{I}_j$ 
at Step~\ref{step:easy} of Algorithm~\ref{alg:simpExt} whenever we have
$H_j + e_i \in \mathcal{I}_j$ at Step~\ref{step:rest}, and therefore, the constructed set $W'$ is always a subset of $W = \cap_{j=1}^k
\OPTh_j$. We proceed to prove the inequality $\E[| W'|]
\geq \frac{(1-(1-p)k)(1-p)}{p} \E[|G|]$, which is stronger than the statement of the lemma.

For $i\in [n]$ we define the events
\begin{align*}
A_i  &\coloneqq  (G^{i-1} + e_i\in \mathcal{F}) \wedge (H_j^{i-1}+e_i \in \mathcal{I}_j \;\;\forall j\in [k]) \enspace, \\
B_i &\coloneqq  (G^{i-1} + e_i \in \mathcal{F}) \wedge \neg (H_j^{i-1}+e_i \in \mathcal{I}_j \;\;\forall j\in [k]) \enspace,
\end{align*}
and let
\[
\alpha_i \coloneqq \Pr[A_i]
\qquad \text{and} \qquad
\beta_i \coloneqq \Pr[B_i]\enspace.
\]

The following are two basic observations related to the above quantities.
\begin{equation} \label{eq:expG_expW}
\E\left[|G|\right] =    p \sum_{i=1}^n (\alpha_i + \beta_i)
\qquad \text{and} \qquad
\E\left[|W'|\right] = (1-p)\sum_{i=1}^n \alpha_i\enspace.
\end{equation}

We further relate the above quantities by providing an upper bound on $\E[\sum_{j=1}^k |H_j \setminus G|]$. More precisely, we will consider how $\sum_{j=1}^k |H_j \setminus G|$ changes over the iterations of the algorithm. The intuition why this quantity is of interest is as follows. Suppose that at some point during the algorithm $H_j = G$ for all $j\in [k]$, then the next element $e$ such that $G+e \in \mathcal{F}$ is put in $G$ with probability $p$ and put in $W'$ with probability $1-p$. This indicates that it is desirable for the quantity $\sum_{j=1}^k |H_j \setminus G|$ to be small. That we can  upper bound this quantity in expectation follows from the fact that, if there is an element for which $G+e \in \mathcal{F}$ but $e$ would not be added to $W'$ because $H_j + e \not \in \mathcal{I}_j$ for some $j \in [k]$, then we have (due to Step~\ref{step:exchange}) that the quantity $\sum_{j=1}^k |H_j \setminus G|$ decreases in expectation if $p$ is chosen appropriately. We continue with the formal analysis. For $i = 0,\ldots,n$ we define 
\begin{align*}
X_i &\coloneqq \sum_{j=1}^k |H_j^i \setminus G^i|\enspace,
\end{align*}
where $X_0 = 0$.
To obtain a bound on $\E[\sum_{j=1}^k |H_j \setminus G|] = \E[X_n]$, we study the changes $\E[X_i] - \E[X_{i-1}]$.
First, notice that if neither event $A_i$ happens nor $B_i$, then $G^{i-1}+e_i \not\in \mathcal{F}$, and the change $X_i - X_{i-1}$ is zero. Hence, by the law of total expectation, we have for $i = 1,\dotsc, n$
\begin{align}
\E[X_i - X_{i-1}] &= \alpha_i \cdot \E[X_i - X_{i-1} \mid A_i]
+ \beta_i \cdot \E[X_i - X_{i-1} \mid B_i]\enspace.
\label{eq:diffXTotExp}
\end{align}
Furthermore, for every such $i$ it holds that
\begin{align}
\E[X_i - X_{i-1} \mid A_i] &=  (1-p) k\enspace, \text{ and}\label{eq:diffXCondA}\\
\E[X_i - X_{i-1} \mid B_i] &\leq  (1-p) k - 1 \enspace\label{eq:diffXCondB},
\end{align}
due to the following. Let
$\Gamma = \{j\in [k] \mid H_j^{i-1}+e_j \not\in \mathcal{I}_j\}$.
Hence, $A_i$ corresponds to $\Gamma = \varnothing$, whereas $B_i$ corresponds to $\Gamma \neq \varnothing$. If $e_i$ gets added to $G$, which happens with probability $p$, then in each $H_j^{i-1}$ for $j\in \Gamma$ one element of $H_j^{i-1}\setminus G^{i-1}$ is replaced by $e_i$ to obtain $H_j^{i}$. Moreover, if $e_i$ does not get added to $G$, which happens with probability $1-p$, then $e_i$ is added to each $H_j^{i-1}$ with $j\in [k]\setminus \Gamma$ to obtain $H_j^i$. Hence, overall we have
\begin{equation*}
\E[X_i - X_{i-1}] = p\cdot (-|\Gamma|) + (1-p)\cdot (k-|\Gamma|) = (1-p)k - |\Gamma| \enspace,
\end{equation*}
where the expectation is only over the decision whether to add $e_i$ to $G$ or not. Equation~\eqref{eq:diffXCondA} now follows by plugging $\Gamma=\varnothing$ into the above equality, and~\eqref{eq:diffXCondB} follows by observing that, for $\Gamma\neq\varnothing$, the above expression is maximized for $|\Gamma|=1$.

We thus obtain
\begin{align*}
0 &\leq \E[X_n] = \sum_{i=1}^n \E[X_i - X_{i-1}] \\
  &\leq (1-p)k \left(\sum_{i=1}^n \alpha_i\right)
     + ((1-p)k-1) \left(\sum_{i=1}^n \beta_i\right) && \text{(using~\eqref{eq:diffXTotExp}, \eqref{eq:diffXCondA}, and \eqref{eq:diffXCondB})}\\
  &=  (1-p)k \left(\sum_{i=1}^n \alpha_i + \beta_i\right)  - \sum_{i=1}^n \beta_i\\
  &\leq \frac{1-p}{p} k \E[|G|] - \left(\frac{1}{p}\E[|G|] - \frac{1}{1-p} \E[|W'|] \right)
 &&\text{(using~\eqref{eq:expG_expW})}\\
  &= \frac{(1-p)k-1}{p} \E[|G|] + \frac{1}{1-p}\E[|W'|]\enspace.
\end{align*}

Finally, by reordering terms (and using our previous observation that $W' \subseteq W$, and thus, $\E[|W|] \geq \E[|W'|]$), we obtain the statement of Lemma~\ref{lem:impPrefixGood} for $\ell=n$ as desired. Recall that this implies the lemma for all $\ell\in [n]$ as discussed in the beginning of the proof.

\section{Extending \texorpdfstring{$\OPT$}{OPT}-selecting \MSP algorithms to the intersection of matroids}\label{sec:optSelectToMatInt}
In this section, we  use the reduction presented in the previous section to
combine algorithms for {\MSP}. In order to achieve
a good guarantee for the intersection of the matroids we make two assumptions
on the considered single-matroid algorithms. As already discussed, we 
consider algorithms that are order-oblivious, and, in addition, we shall introduce
the assumption of \emph{$\optc$-competitiveness}.  These assumptions are easily
satisfied by existing algorithms for {\MSP} when the
matroid is a  partition, laminar or co-graphic matroid. To capture other
algorithms in the literature (that are not $\optc$-competitive) such as algorithms for graphic matroids, transversal
matroids and the best known algorithm for general matroids, we generalize the
framework in the next section. As the proofs in this section are rather clean,
we believe that it serves as a good starting point and motivation before
reading the more complex (and general) framework.  In what follows, we first
define $\optc$-competitiveness, and then state and prove the main theorem of this
section.

\begin{definition}[$\optc$-competitiveness]
  We say that an order-oblivious algorithm $\mathcal{A}$ is $c$-$\optc$-competitive if for any $e\in \OPT$,
  \begin{align*}
    \mathbb{E}_{S} \left[ \min_{\sigma}  \Pr_{r} [ e\in \mathcal{A}(S, r, \sigma)] \right] \geq c\,,
  \end{align*}
  where $\mathcal{A}(S,r, \sigma)$ denotes the elements selected by
  $\mathcal{A}$ when using the random bits $r$, given the sample $S$, and where
  the elements of $N \setminus S$ arrive in the second phase according to  the
  adversarial order $\sigma$ that is allowed to depend on $S$ and the considered element $e$. 
\end{definition}
We remark that a $c$-$\optc$-competitive algorithm is clearly $c$-competitive, but the other direction may not hold.
At first sight, $\optc$-competitiveness perhaps seems like a less severe
restriction than order-obliviousness. However, it turns out that most\footnote{In fact all current algorithms which are known to the authors can be made order-oblivious by losing at most a small constant factor in the competitive ratio.}  \MSP algorithms are
order-oblivious, but many are only $c$-competitive and not $c$-$\optc$-competitive. The
generalization of the framework in the next section is therefore designed so as to relax
the assumption of $\optc$-competitiveness while we keep the rather benign
assumption of order-obliviousness. 

Having defined $\optc$-competitiveness, the result of this section follows rather easily from our reduction in the previous section. 
\begin{reptheorem}{thm:mainSimple}
  Consider $k$ matroids $M_1,\dots, M_k$ on a common ground set $N$. Suppose that, for $j=1,\dots, k$,
  there is an order-oblivious $c_j$-$\optc$-competitive \MSP algorithm
  $\mathcal{A}_j$ on matroid $M_j$. Then, there is an order-oblivious $(\frac{1}{4k^2}\prod_{j=1}^k c_j)$-competitive online
  algorithm for {\MISP} on the intersection $M_1 \cap M_2 \cap
  \dots \cap M_k$.
\end{reptheorem}
\begin{proof}
  Let $\mathcal{I}_j$ denote the family of independent sets of matroid $M_j$. 
  We describe an order-oblivious algorithm for {\MISP} on the intersection $M_1 \cap M_2 \cap \dots \cap M_k$. At first we describe the sample phase of the algorithm (Steps~\ref{step:first_sample} and~\ref{step:second_sample} below) as if we could take several independent samples from the set $N$, and we then explain how we can implement these steps by taking a single sample. The algorithm proceeds as follows:
  \begin{enumerate}
    \item Let $p = \frac{2k-1}{2k}$, and let $S$ be a sample containing every element of $N$ with probability $p$, independently. Observe that $S \sim \mu_p$, and thus, we can use Algorithm~\ref{alg:reduction} on $S$ to obtain a weight function $w'$ on all the elements of $N$ that satisfies the properties of Observation~\ref{obs:weight} and Theorem~\ref{thm:reduce}.
\label{step:first_sample}
    \item For each $j=1,\dots, k$, 
          let $m_j$ be the number of elements $\mathcal{A}_j$ samples, and let $S_j \subseteq N$ be a uniformly at random sample from $N$ of
          cardinality $m_j$ (if the number of elements $\mathcal{A}_j$ samples is random, let $m_j$ be the outcome in the current execution of $\mathcal{A}_j$). \label{step:second_sample}
    \item Finally, let $I=\varnothing$, and  start the selection phase of the algorithms $\mathcal{A}_1, \dots, \mathcal{A}_k$ with the weight function $w'$. For each arriving remaining element (i.e., those elements that were not in any sample), add it to our solution (i.e., set $I\leftarrow I + e$) if each one of the algorithms $\mathcal{A}_1, \dots, \mathcal{A}_k$ selects element $e$.\label{step:selection}
  \end{enumerate}

  We remark that the output $I$ is an independent set in the intersection of $M_1, \dots, M_k$ since each $\mathcal{A}_j$ selects a set $I_j$ that is independent in $M_j$ and $I = \cap_{j=1}^k I_j$.  
  
  Before analyzing the expected weight of $I$, we explain how to implement the
  algorithm in an order-oblivious fashion.
  \begin{claim*}
    The above algorithm can be implemented in the order-oblivious model.
  \end{claim*}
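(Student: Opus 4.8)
The goal is to show that the ``several independent samples'' $S,S_1,\dots,S_k$ used in Steps~\ref{step:first_sample} and~\ref{step:second_sample} can be produced from a single sample request of the kind allowed by the order-oblivious model, in such a way that they retain the joint distribution needed for the rest of the argument (namely: $S\sim\mu_p$, each $S_j$ the sample $\mathcal{A}_j$ expects, and all of them mutually independent). The basic tool is the equivalence recalled in Section~\ref{sec:preliminaries}: requesting a uniformly random sample whose size is drawn from $B(n,q)$ is the same as putting each element of $N$ into the sample independently with probability $q$. Using it, $S$ is formed by including each element independently with probability $p$; and we may assume that each $\mathcal{A}_j$ obtains its sample $S_j$ by including each element independently with some probability $p_j$, which is the only property of $\mathcal{A}_j$'s sampling we rely on.

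First I would make explicit the idealized experiment behind Steps~\ref{step:first_sample}--\ref{step:second_sample}: for each element $e\in N$ independently, flip $k+1$ independent coins deciding membership of $e$ in $S,S_1,\dots,S_k$, with probabilities $p,p_1,\dots,p_k$. Under this experiment each element lies in $T\coloneqq S\cup S_1\cup\dots\cup S_k$ independently with probability $q\coloneqq 1-(1-p)\prod_{j=1}^k(1-p_j)$, i.e.\ $T\sim\mu_q$. Our algorithm therefore issues the single request for a uniformly random sample of size $m\sim B(n,q)$, which by the equivalence above produces a set distributed exactly as $T$. It then reconstructs the samples: for each observed element $e\in T$ it re-draws the $k+1$ membership coins conditioned on the event that not all of them fail (which is exactly the event $e\in T$), and for each unobserved element $e\in N\setminus T$ it sets all $k+1$ memberships to failure. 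Since the idealized experiment is a product over the elements and $T$ is precisely its marginal on the union $S\cup\bigcup_j S_j$, this reconstruction reproduces the idealized joint law of $(S,S_1,\dots,S_k)$; in particular the $S_j$ are mutually independent and independent of $S$, and each has the distribution $\mathcal{A}_j$ expects. Moreover, since $w'$ is a deterministic function of $S$ alone, $S_j$ is, conditionally on $S$, still a fresh random sample, so the $\optc$-competitiveness guarantee of $\mathcal{A}_j$ remains applicable to the instance $(M_j,w')$ on which it is run.

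It then remains to verify that the procedure is a legitimate order-oblivious algorithm and that each $\mathcal{A}_j$ is fed a legitimate input. Only one sample is requested; the only elements the selection step ever considers are those of $N\setminus T$, i.e.\ elements not in the sample, each of which is accepted or rejected immediately and irrevocably upon arrival; and the output $I=\bigcap_{j=1}^k I_j$ is feasible because each $I_j$ is independent in $M_j$. For a fixed $j$, algorithm $\mathcal{A}_j$ receives $S_j$ as its sampling phase and, in its selection phase, is presented with the elements of $N\setminus S_j$ in the order: first the already-observed elements of $T\setminus S_j$ (in an arbitrary order of our choosing), then the elements of $N\setminus T$ in their true arrival order. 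This is a valid adversarial order for $\mathcal{A}_j$, as it is determined by $S_j$, by the rest of the sample, and by the original adversarial arrival order of the non-sampled elements---all of which the order-oblivious model already allows the adversary to depend on---while it does not depend on $\mathcal{A}_j$'s internal random bits, which is all the definition of $\optc$-competitiveness allows.

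The one step that genuinely needs care is the reconstruction of the samples inside the single request: one must request the \emph{union} $T\sim\mu_q$ and then re-derive the roles inside it conditioned on $\{e\in T\}$, rather than, for instance, carving fixed-size subsamples out of $T$, or thinning $T$ once for $S$ and once for each $S_j$ with fresh coins---either of these would leave $S$ and the $S_j$'s (or the $S_j$'s among themselves) spuriously correlated, which would break both the applicability of Theorem~\ref{thm:reduce} to $S$ and the independence that the competitive-ratio analysis in Step~\ref{step:selection} needs. The reduction, via the equivalence of Section~\ref{sec:preliminaries}, to ``independent-inclusion'' sampling for each $\mathcal{A}_j$ is the other ingredient that makes this single-sample reconstruction possible at all.
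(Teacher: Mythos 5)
Your overall architecture is the right one and matches the paper's: realize the single permitted sample as the union $T=S\cup S_1\cup\dots\cup S_k$, and then reconstruct the individual samples inside $T$ with the correct conditional law, so that $(S,S_1,\dots,S_k)$ retains its idealized joint distribution and, in particular, the mutual independence that the later competitive-ratio analysis needs. You also correctly flag the pitfall that naive thinning or carving of $T$ would introduce spurious correlations.

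There is, however, a genuine gap in how you justify the reconstruction. You ``assume that each $\mathcal{A}_j$ obtains its sample $S_j$ by including each element independently with some probability $p_j$,'' and your entire product-measure argument (computing $q=1-(1-p)\prod_{j=1}^k(1-p_j)$, requesting $m\sim B(n,q)$, and re-drawing $k+1$ membership coins per element conditioned on $e\in T$) rests on that assumption. But the order-oblivious model only guarantees that $\mathcal{A}_j$ requests a uniformly random subset of some cardinality $m_j$, where $m_j$ may be deterministic or follow an arbitrary distribution; nothing forces it to be binomial. An algorithm that samples exactly $\lfloor n/2\rfloor$ elements is order-oblivious but does not fit your per-element-coin model, and one cannot replace its sample by a Bernoulli sample without altering the algorithm (and hence possibly its guarantee). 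Since Theorem~\ref{thm:mainSimple} assumes nothing about the $\mathcal{A}_j$ beyond order-obliviousness and $\optc$-competitiveness, the claim must be proved for arbitrary sample-size distributions. The paper handles this by first sampling the joint distribution $D$ of the cardinalities $m,m_1,\dots,m_k$ together with the overlap sizes $q_j=|S_j\cap(S\cup S_1\cup\dots\cup S_{j-1})|$, then requesting a single uniformly random set whose cardinality is that of the union, and finally carving out $S,S_1,\dots,S_k$ as uniformly random subsets of the prescribed sizes from the appropriate parts; exchangeability of fixed-size uniform samples is what makes this reconstruction exact. Your argument is correct in the special case where every $\mathcal{A}_j$ happens to use independent-inclusion sampling, but as written it does not establish the claim in the generality required.
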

  \begin{proof}
    To implement the algorithm in the order-oblivious model, we need to turn the several samples made in Steps~\ref{step:first_sample} and~\ref{step:second_sample} into a single sample during the sample phase (and leave the selection phase, i.e.,  Step~\ref{step:selection}, untouched).  Let $m = |S|$ and $m_j = |S_j|$ for $j=1,
  \dots, k$ be the cardinality  of the sets. In addition, let $q_1 = |S_1 \cap
  S|, q_2 = |S_2 \cap (S \cup S_1)|, \dots, q_k = |S_k \cap (S\cup S_1 \cup
  \dots \cup S_{k-1})|$, i.e., $q_j$ denotes cardinality of the intersection of
  $S_j$ with  $S \cup S_1 \cup \dots \cup S_{j-1}$. Notice that the execution of
  Steps~\ref{step:first_sample} and~\ref{step:second_sample} produces a distribution $D$ over integers $m, m_1, \dots, m_k, q_1,
  \dots, q_k$ and an equivalent way of sampling $S, S_1, \dots, S_k$ would be to first sample $m, m_1, \dots, m_k, q_1, \dots, q_k$ according to $D$, and then obtain $S$ by selecting $m$ elements  uniformly at random from $N$, $S_1$ by selecting $q_1$ elements uniformly at random from $S$ and $m_1 -q_1$ elements from $N\setminus S$, $S_2$ by selecting $q_2$ elements uniformly at random from $S \cup S_1$ and $m_2 - q_2$ elements uniformly at random from $N\setminus (S \cup S_1)$, and so on. We can thus implement the sampling phase order-obliviously as follows.
  \begin{itemize}
    \item Sample $m, m_1, \dots, m_k, q_1, \dots, q_k$ according to distribution $D$.
    \item Take $A \subseteq N$ to be a uniformly at random set of cardinality $m+ q_1 + q_2 + \dots q_k$ (this is our single sample of the elements in the sample phase).
    \item Obtain $S$ by selecting $m$ elements from $A$ uniformly at random. Additionally, for every $j=1, \dots, k$, obtain $S_j$ by selecting $q_j$ elements from $S \cup S_1 \cup \dots \cup S_{j-1}$  and $m_j -q_j$ elements from $A \setminus (S \cup S_1 \cup \dots \cup S_{j-1})$ uniformly at random. \qedhere
  \end{itemize}
\end{proof}


Having explained how to implement the algorithm,   we proceed to analyze the expected weight $w(I) \geq w'(I)$ of the returned set. By Theorem~\ref{thm:reduce}  in the previous section, we have  (using that $w \geq w'$)
  \begin{align}
    \label{GuaranteeAfterFirstStep}
    \mathbb{E}[w(\cap_{j=1}^k \OPT'_j)] \geq \frac{1}{4k^2}w(\OPT)\enspace,
  \end{align}
  where the expectation is taken over the sample $S \sim \mu_p$. Recall that $\OPT'_j$ denotes the
  maximum weight independent set in $M_j$ with respect to weight function $w'$ (not including elements of weight $0$),
  and $\OPT$ denotes the maximum weight independent set in the intersection
  with respect to weight function $w$. Let $U = N\setminus S$. We remark that, as $w'(e) = 0$ for
  every element $e\in S$ (see Observation~\ref{obs:weight}), $\OPT'_j \subseteq U$. 
  
  We shall analyze the expected weight $w(I)$ by showing that, for a fixed sample  $S$, each element in
  $\cap_{j=1}^k \OPT'_j$ is taken with a good probability:
  \begin{claim*}
    For each element $e \in \cap_{j=1}^k \OPT'_j$, 
    \begin{align*}
      \Pr[e\in I] \geq  \prod_{j=1}^k c_j\,, 
    \end{align*}
    where the probability is over the sets $S_1, \dots, S_k$ and the randomness used by the algorithms $\mathcal{A}_1, \dots, \mathcal{A}_k$.
  \end{claim*}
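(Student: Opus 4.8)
The plan is to keep the first‑phase sample $S$ fixed throughout (so that $w'$, and hence each $\OPT'_j$, is fixed), and to prove, for every fixed realization of the second‑phase samples $S_1,\dots,S_k$, a pointwise lower bound on $\Pr_{r_1,\dots,r_k}[e\in I\mid S_1,\dots,S_k]$ that can afterwards be averaged over $S_1,\dots,S_k$ using their independence. The starting point is that in Step~\ref{step:selection} each $\mathcal{A}_j$ is run exactly as an order‑oblivious \MSP algorithm on the instance $(M_j,w')$, and $e\in\OPT'_j$ is by definition a member of the (unique) offline optimum of that instance. Hence the definition of $c_j$-$\optc$-competitiveness, applied to the element $e$, gives $\E_{S_j}[\rho_j(S_j)]\ge c_j$, where $\rho_j(S_j):=\min_\sigma\Pr_{r_j}[e\in\mathcal{A}_j(S_j,r_j,\sigma)]$ is a function of $S_j$ only (given $S$) and $\sigma$ ranges over all arrival orders of $N\setminus S_j$. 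Also recall $\OPT'_j\subseteq N\setminus S$, so $e\notin S$.

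Next I would fix $S_1,\dots,S_k$ and argue that $\Pr_{r_1,\dots,r_k}[e\in I\mid S_1,\dots,S_k]\ge\prod_{j}\rho_j(S_j)$ in two cases. If $e\in S_i$ for some $i$, then $e$ is sampled and is never offered during Step~\ref{step:selection}, so $e\notin I$; but then also $\rho_i(S_i)=0$, since $\mathcal{A}_i$ never selects an element of its own sample, and the inequality is trivial. Otherwise $e$ is offered during Step~\ref{step:selection} to every $\mathcal{A}_j$; since $I=\bigcap_j I_j$ we have $e\in I$ iff every $\mathcal{A}_j$ selects $e$, and because $r_1,\dots,r_k$ are independent this conditional probability factors as $\prod_j\Pr_{r_j}[\mathcal{A}_j\text{ selects }e\mid S_1,\dots,S_k]$. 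To bound one factor: the elements offered to $\mathcal{A}_j$ in Step~\ref{step:selection} are those of $N$ lying in no sample, in some order $\tau$ fixed by the adversary (and possibly depending on $S_1,\dots,S_k$ and on $e$); complete $\tau$ to a full order $\sigma_j$ of $N\setminus S_j$ by appending the elements of $(S\cup\bigcup_{i\ne j}S_i)\setminus S_j$ at the end. Order‑obliviousness of $\mathcal{A}_j$ ensures that its irrevocable decision about $e$ is the same in the combined run as in the standalone run on $(M_j,w')$ with sample $S_j$ and arrival order $\sigma_j$; since $\sigma_j$ is a fixed order of $N\setminus S_j$ independent of $r_j$, this gives $\Pr_{r_j}[\mathcal{A}_j\text{ selects }e\mid S_1,\dots,S_k]=\Pr_{r_j}[e\in\mathcal{A}_j(S_j,r_j,\sigma_j)]\ge\rho_j(S_j)$, and multiplying over $j$ finishes this case.

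Finally, taking the expectation over $S_1,\dots,S_k$ and using that these samples are mutually independent (and that each $\rho_j\ge0$) yields $\Pr[e\in I]\ge\E_{S_1,\dots,S_k}\bigl[\prod_j\rho_j(S_j)\bigr]=\prod_j\E_{S_j}[\rho_j(S_j)]\ge\prod_{j=1}^k c_j$, which is the claim. The step I expect to need the most care is the reconciliation in the middle paragraph: each $\mathcal{A}_j$ is fed only the globally non‑sampled elements, which form a strict subset of $N\setminus S_j$ and in particular omit $e$ whenever $e$ lies in some other algorithm's sample, so one cannot apply $\optc$-competitiveness to the runs verbatim. The resolution is exactly the two observations above — that the ``$e$ hidden in $S_i$'' event is absorbed by $\rho_i(S_i)=0$, and that padding the truncated arrival order with the never‑seen elements at the end turns the combined execution of $\mathcal{A}_j$ into a bona fide adversarial‑order execution to which the bound $\rho_j(S_j)$ applies.
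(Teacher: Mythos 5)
Your proof is correct and follows essentially the same route as the paper's: lower bound the probability by passing to (worst-case) adversarial orders for each $\mathcal{A}_j$, factor the product over $j$ using the independence of the random bits and of the samples $S_1,\dots,S_k$, pad each algorithm's arrival order with the elements it never sees, and invoke $c_j$-$\optc$-competitiveness for $e\in\OPT'_j$. Your explicit treatment of the case $e\in S_i$ (absorbed by $\rho_i(S_i)=0$) is a detail the paper leaves implicit, but the argument is the same.
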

  \begin{proof}
    By definition,
    \begin{align*}
      \Pr[e\in I] =  \Pr_{S_1,r_1, \dots, S_k, r_k, \sigma} [ e\in \cap_{j=1}^k \mathcal{A}_j(S_j, r_j, \sigma)]\,, 
    \end{align*}
    where $\sigma$ is the given order of the elements seen in the selection phase (Step~\ref{step:selection}), i.e., of the elements $U \setminus (S_1 \cup \dots \cup S_k)$.\footnote{Strictly speaking,  $\mathcal{A}_j$ expects to see $n- m_j$ elements in the selection phase. This can be achieved by simply feeding the elements of $(S  \cup S_1 \cup\dots \cup  S_k) \setminus S_j$ to $\mathcal{A}_j$ in any order at the end (without including any of them in our solution $I$).}
    We can clearly lower bound this probability by selecting the worst-case order (even allowing a different order for each algorithm):
    \begin{align*}
      \Pr[e\in I] \geq  \mathbb{E}_{S_1, \dots, S_k} \left[ \min_{\sigma_1, \dots, \sigma_k} \Pr_{r_1, \dots, r_k} [ e\in \cap_{j=1}^k \mathcal{A}_j(S_j, r_j, \sigma_j)]\right]\,, 
    \end{align*}
    where $\sigma_j$ is an ordering of the elements in $U\setminus S_j$ that is allowed to depend on $S_1,  \dots, S_k$ and $e$. The beauty of this lower bound is that now the execution of each algorithm becomes independent. Indeed,
    \begin{align*}
      \mathbb{E}_{S_1, \dots, S_k} \left[ \min_{\sigma_1, \dots, \sigma_k} \Pr_{r_1, \dots, r_k} [ e\in \cap_{j=1}^k \mathcal{A}_j(S_j, r_j, \sigma_j)]\right]
      & = 
      \mathbb{E}_{S_1, \dots, S_k} \left[ \min_{\sigma_1, \dots, \sigma_k} \prod_{j=1}^k \Pr_{r_j} [ e\in  \mathcal{A}_j(S_j, r_j, \sigma_j)]\right] \\
      & = 
      \mathbb{E}_{S_1, \dots, S_k} \left[  \prod_{j=1}^k \min_{\sigma_j}\Pr_{r_j} [ e\in  \mathcal{A}_j(S_j, r_j, \sigma_j)]\right] \\
      & = 
         \prod_{j=1}^k \mathbb{E}_{S_j}\left[\min_{\sigma_j}\Pr_{r_j} [ e\in  \mathcal{A}_j(S_j, r_j, \sigma_j)]\right]
      \geq 
         \prod_{j=1}^k c_j \enspace,
    \end{align*}
    where the last equality follows since the samples $S_1, \dots, S_k$ are independent and the  
    inequality follows since $\mathcal{A}_j$ is $c_j$-$\optc$-competitive (and $e\in \OPT'_j \subseteq U$). 
  \end{proof}
  By linearity of expectation, the above claim combined with~\eqref{GuaranteeAfterFirstStep} implies the theorem. 
\end{proof}

\noindent \textbf{Remark:} The competitive ratio guaranteed by Theorem~\ref{thm:mainSimple} (and our more general framework presented in Section~\ref{sec:genToMatInt}) is exponential in $k$ even when $c_1, \dotsc, c_k$ are all constants. This is the result of the fact that we need to execute independently the algorithms for the individual matroids of the intersection. In many cases this can be avoided using the FKG inequality. More specifically, many algorithms for individual matroids have the property that the probability of an element $e \in OPT \setminus S$ to be selected can only increase as other elements are added to $S$, and moreover, this probability approaches $1$ as the sample becomes very large. For such algorithms one can use a common very large sample, and get an algorithm which selects every given element in the intersection of the optimal solutions of the individual matroids with a significant probability. We omit the technical details, but note that this mode of employment of our ideas resembles the contention resolution schemes framework of~\cite{chekuri_2014_submodular}. Moreover, we would like to stress that there are examples of algorithms for {\MSP} which do not have the above property---the most notable of which are the algorithms for general matroids of~\cite{babaioff_2007_matroids,chakraborty_2012_improved,feldman_2015_simple,lachish_2014_competitive}.

\section{A generalized framework to extend \MSP algorithms to \MISP}\label{sec:genToMatInt}

In Section~\ref{sec:optSelectToMatInt} we have seen a simple version of our framework that works for algorithms that are $\optc$-competitive. Unfortunately, many known algorithms for {\MSP} are not $\optc$-competitive, and there is no obvious way to make them $\optc$-competitive. Definition~\ref{def:weakly_opt_competitive} relaxes $\optc$-competitiveness in a way that makes it include algorithms that are not $\optc$-competitive due to two common issues. The first issue is that some algorithms occasionally select high weight non-$\OPT$ elements early, and these elements then block them from selecting some $\OPT$ elements later. Intuitively, this should not be an issue since the algorithm is guaranteed to select good elements instead of the elements of $\OPT$, however, this violates $\optc$-competitiveness. The second issue is that some algorithms ignore very light elements of the optimal solution whose total contribution to the solution is minor.

Definition~\ref{def:weakly_opt_competitive} is based on a slight extension of the order-oblivious secretary model which is presented by the following definition.

\begin{definition}
We occasionally assume the existence of a ``switching adversary'' with the following power. Consider an execution of an order-oblivious algorithm $\cA$ for {\MSP} on matroid $M = (N, \cI)$. Every time after $\cA$ selects an element $e \in N$, the switching adversary immediately and irrevocably puts it in exactly one of two ``sub-solutions'' named $T^\cA_1$ and $T^\cA_2$. The algorithm learns the decision of the switching adversary immediately after it is made, and must keep the sub-solution $T^\cA_1$ independent in $\cI$, i.e., it cannot select an element $e$ if its addition to $T^\cA_1$ would violate $T^\cA_1$'s independence because the switching adversary may decide to add $e$ to this set. In contrast, the set $T^\cA_2$ (and also the union $T^\cA_1 \cup T^\cA_2$) need not be kept independent.
\end{definition}

Intuitively, one can think of the switching adversary as an adversary with the power to partially discard elements from the solution of the algorithm by assigning them to the sub-solution $T^\cA_2$. In the context of $\OPT$-competitiveness, we want this discard to be meaningful only for non-$\OPT$ elements. In other words, $\OPT$ elements that are discarded by the switching adversary should still count towards the algorithm's objective. The following definition formalizes this intuitive notion.

 \begin{definition} \label{def:weakly_opt_competitive}
Consider an order-oblivious algorithm $\cA$ for {\MSP} on matroid $M = (N, \cI)$ with weight function $w$ which is executed against a switching adversary.
Let $h(\OPT)$ be the heaviest element in $\OPT$, let $L(\OPT, \ell) = \{e \in \OPT \mid w(e) \leq \ell\}$ and let $L(\OPT) = L(\OPT, \ell^*)$---where $\ell^*$ is the maximum value in $\{w(e) \mid e \in N\}$ for which $w(L(\OPT, \ell^*)) < w(h(\OPT))$. Note that $L(\OPT)$ is not well-defined when $|\OPT| \leq 1$, and thus, we assume $L(\OPT) = \varnothing$ in this case . Using these definitions, we say that $\cA$ is \emph{$(c^o, c^a)$-weakly-$\optc$-competitive} if, for every given switching adversary, there exists a random set $O^\cA(S, r, \sigma) \subseteq \OPT$, which is a function of the sample $S$, the random bits $r$ of the algorithm, and the arrival order $\sigma$ of the elements in the non-sample phase such that
\[
	\E_S\left[\min_{\sigma} \Pr_r[e \in O^\cA(S, r, \sigma)]\right] \geq c^o \quad \forall\; e \in \OPT \setminus L(\OPT)
\]
and
\[
	w((T^\cA_1 \cup T^\cA_2) \cap O^\cA(S, r, \sigma)) + c^a \cdot w(T^\cA_1 \setminus O^\cA(S, r, \sigma)) \geq w(O^\cA(S, r, \sigma))
	\enspace.
\]
\end{definition}

Given the above discussed intuition, it is relatively easy to see that any $c$-$\optc$-competitive algorithm $\cA$ is also $(c, 0)$-weakly-$\optc$-competitive with the set $O^\cA(S, r, \sigma)$ chosen as the set of elements selected by $\cA$ given the sample $S$, the random bits $r$ and the arrival order $\sigma$. Conversely, any $(c^o, c^a)$-weakly-$\optc$-competitive algorithm $\cA$ is $\frac{c^o}{2\max\{c^a, 1\}}$-competitive in the absence of a switching adversary because one can think of the absence of a switching adversary as equivalent to a switching adversary placing all elements of $\cA$'s solution into $T^\cA_1$. 

Another class of algorithms that are not $\optc$-competitive consists of algorithms that work by reducing their input problem into {\MSP} on a simpler matroid for which there exists an $\optc$-competitive algorithm. The next definition captures a wide range of algorithms of this kind. Our framework applies to any algorithm that can be made to fit into this definition.
\begin{definition}
A \emph{$(c^r, c^o, c^a)$-reduce-and-solve} algorithm $\cA$ for {\MSP} on matroid $M = (N, \cI)$ is a collection of the following three entities.
\begin{itemize}
	\item A (possibly random) ground set $N^\cA$ with a function $g^\cA\colon N^\cA \to N$ mapping every element of $N^\cA$ to a \emph{source} element in $N$. We call $N^\cA$ a refinement of $N$, and say that the elements of $(g^{\cA})^{-1}(e) \subseteq N^\cA$ refine the element $e$ of $N$. 
	Moreover, it must be possible to evaluate $(g^{\cA})^{-1}(e)$ for every element $e$ that has been observed, and, for technical reasons, every element of $N$ must have at least one refinement element in $N^\cA$.
	\item A (possibly random) matroid $M^\cA = (N^\cA, \cI^\cA)$ such that for every independent set $S$ of $M$ there exists a random set $R^\cA(S)$ which is independent in $M^\cA$ and obeys
	\[
		\Pr[R^\cA(S) \cap (g^{\cA})^{-1}(e) \neq \varnothing] \geq c^r \quad \forall\; e \in S 
		\enspace.
	\]
	Moreover, it must be possible to evaluate independence oracle queries with respect to $M^\cA$ for subsets containing only refinements of elements of $N$ that have already been observed.
	\item A $(c^o, c^a)$-weakly-$\optc$-competitive algorithm $\bar{\cA}$ for {\MSP} on restrictions of $M^\cA$ to subsets of $N^\cA$ containing exactly one refinement element of every source element of $N$. Moreover, this algorithm must have the extra property that $g^\cA(T^{\bar{\cA}}_1)$ is independent in the original matroid $M$.\footnote{In fact, we can even allow $T^{\bar{\cA}}_1$ to be dependent in $M^{\cA}$ as long as $g^\cA(T^{\bar{\cA}}_1)$ is independent in $M$. However, this is not necessary for any of the algorithms that are currently known to fall into our framework.}
\end{itemize}
\end{definition}

Technically, a $(c^r, c^o, c^a)$-reduce-and-solve algorithm on matroid $M$ is not really a secretary algorithm on $M$. However, our framework shows that it implies an $\Omega(c^r c^o / \max\{c^a, 1\})$-competitive secretary algorithm on $M$. Moreover, every \emph{$(c^o, c^a)$-weakly-$\optc$-competitive} algorithm $\cA$ can be cast as a $(1, c^o, c^a)$-reduce-and-solve algorithm by simply setting $N^\cA = N$, $M^\cA = M$ and $\bar{\cA} = \cA$ (with the function $g^\cA$ mapping elements of $N^\cA$ to $N$ being the identity function). We now state our main theorem.

\begin{theorem}
  Consider $k$ matroids $M_1,\dots, M_k$ with a common ground set $N$. Suppose that, for $i=1,\dots, k$,
  there is a $(c^r_i, c^o_i, c^a_i)$-reduce-and-solve algorithm
  $\mathcal{A}_i$ on matroid $M_i$. Then, there is an order-oblivious
  algorithm for {\MISP} on the intersection $M_1 \cap M_2 \cap
  \dots \cap M_k$ which is $\left(\frac{\prod_{i=1}^k c^r_i c^o_i}{8k(k + 1) \cdot \max\{\sum_{i = 1}^k c^a_i, 1\}}\right)$-competitive.
  \label{thm:main}
\end{theorem}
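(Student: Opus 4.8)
The plan is to reuse the two-stage template of Theorem~\ref{thm:mainSimple}---first the overlap reduction of Section~\ref{sec:makeOptsOverlap}, then run the $k$ single-matroid procedures in parallel and commit to the elements on which they all agree---but with two additional ingredients dictated by the more general hypotheses. The combined algorithm draws a sample $S\sim\mu_p$ with $p=\tfrac{2k-1}{2k}$, runs Algorithm~\ref{alg:reduction} on $S$ to obtain $w'$ (each refinement element later inheriting the $w'$-weight of its source), and runs the inner algorithms $\bar{\cA}_1,\dots,\bar{\cA}_k$ on (restrictions of) the refined matroids $M^{\cA_1},\dots,M^{\cA_k}$; all the sample phases are merged into a single sample exactly as in the internal claim of the proof of Theorem~\ref{thm:mainSimple}. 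Writing $W':=\cap_{j=1}^k\OPT'_j$, Theorem~\ref{thm:reduce} gives $\E_S[w'(W')]=\E_S[w(W')]\ge\tfrac1{4k^2}w(\OPT)$, so it suffices to show that, conditioned on $S$, the committed set $I$ satisfies $\E[w'(I)]\ge\Omega\big(\tfrac{\prod_i c^r_i c^o_i}{k\cdot\max\{\sum_i c^a_i,1\}}\big)\cdot w'(W')$.

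The first new ingredient handles the passage from $M_i$ to its refinement: since $W'$ is independent in $M_i$, the lift $R^{\cA_i}(W')$ is independent in $M^{\cA_i}$ and hits $(g^{\cA_i})^{-1}(e)$ with probability at least $c^r_i$ for each $e\in W'$; coupling the (online) choice of restriction with the sampling so that the analysis may treat it as containing $R^{\cA_i}(W')$, the role of $W'$ inside $\bar{\cA}_i$'s instance is then played by a random $W'_i\subseteq W'$ with $\Pr[e\in W'_i]\ge c^r_i$---equivalently, one first proves the theorem for weakly-$\optc$-competitive algorithms ($c^r_i=1$) and then lifts the argument into the refinements. The second ingredient keeps the output feasible while using only the \emph{weak} guarantee: we attach a switching adversary to each $\bar{\cA}_i$, feeding $\bar{\cA}_i$ the designated refinement of each arriving element $e$; if \emph{all} $\bar{\cA}_i$ select it we commit $e$ to $I$ and each switching adversary puts $e$'s refinement into $T^{\bar{\cA}_i}_1$, and otherwise each $\bar{\cA}_i$ that selected it has its refinement put into $T^{\bar{\cA}_i}_2$. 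Then $g^{\cA_i}(T^{\bar{\cA}_i}_1)=I$ for every $i$, so $I$ is independent in each $M_i$ by the extra property of reduce-and-solve algorithms (hence $I\in\mathcal{F}$), and no $\bar{\cA}_i$ ever has the independence of $T^{\bar{\cA}_i}_1$ violated, so the inner algorithms keep running validly.

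For the analysis, fix $S$, and for each $i$ let $O_i\subseteq\OPT(\bar{\cA}_i)$ be the set guaranteed by Definition~\ref{def:weakly_opt_competitive} for the switching adversary above; we identify subsets of $\bar{\cA}_i$'s ground set with their weight-preserving images under $g^{\cA_i}$. Put $K_e=\{i:\bar{\cA}_i\text{ selects the designated refinement of }e\}$ and $\mathrm{Sel}_i=\{e:i\in K_e\}$. Weak-$\optc$-competitiveness then supplies: (a) a random $W'_i\subseteq W'$ with $\Pr[e\in W'_i]\ge c^r_i$ (coming from $R^{\cA_i}$) such that every non-light $e\in W'_i$ lands in $O_i$ with probability at least $c^o_i$ against a worst-case arrival order, where ``non-light'' refers to the set $L$ of Definition~\ref{def:weakly_opt_competitive} for $\bar{\cA}_i$'s instance, whose removal costs only a constant factor; and (b), rearranging the weight inequality for our commitment rule and using $w'(T^{\bar{\cA}_i}_1\setminus O_i)\le w'(I)$, that $w'(O_i\setminus\mathrm{Sel}_i)\le c^a_i\,w'(I)$. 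Since $e\notin I$ forces some $i\notin K_e$, which then has either $e\notin O_i$ or $e\in O_i\setminus\mathrm{Sel}_i$, summing over $e\in W'$ and bounding the first alternative by $\Pr[\exists i:e\notin O_i]$ and the second (in aggregate) via (b) by $\sum_i\E[w'(O_i\setminus\mathrm{Sel}_i)]\le(\sum_i c^a_i)\E[w'(I)]$ yields, after rearrangement, $(1+\sum_i c^a_i)\E[w'(I)]\ge w'(W')-\sum_{e\in W'}w'(e)\Pr[\exists i:e\notin O_i]$. Combined with the decoupling estimate $\Pr[\exists i:e\notin O_i]\le 1-\prod_i c^r_i c^o_i$ and the light-element estimate (the light part of any optimal set weighs less than its heaviest element), this gives the target ratio, the denominator $8k(k+1)$ absorbing the $\tfrac1{4k^2}$ of Theorem~\ref{thm:reduce}, the $(1+\sum_i c^a_i)$ rearrangement, and the light-element loss.

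The main obstacle is the decoupling estimate $\Pr[\exists i:e\notin O_i]\le 1-\prod_i c^r_i c^o_i$. Unlike in Theorem~\ref{thm:mainSimple}, where the executions are literally independent given the samples and a worst-case order, the sets $O_i$ here are \emph{not} independent, because each $\bar{\cA}_i$'s switching adversary reacts to the selections of the other $k-1$ algorithms. The fix is to combine the worst-order trick of Theorem~\ref{thm:mainSimple} with the fact that Definition~\ref{def:weakly_opt_competitive} quantifies over \emph{every} switching adversary: conditioning on the samples and random bits of all algorithms $j\neq i$ turns $\bar{\cA}_i$'s switching adversary into a fixed (if intricate) strategy, so the per-element guarantee for $O_i$ still applies after this conditioning; integrating the other algorithms' randomness back in then lets the product $\prod_i c^r_i c^o_i$ emerge, mirroring how $\prod_j c_j$ appeared in Theorem~\ref{thm:mainSimple}. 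Making this decoupling rigorous---together with the bookkeeping for the online choice of restrictions feeding the $\bar{\cA}_i$ and for the light sets $L(\OPT(\bar{\cA}_i))$ in the $k$ refined instances---is the technical heart of the proof.
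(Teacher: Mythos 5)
Your high-level architecture matches the paper's: an overlap-inducing preprocessing step, parallel runs of the inner algorithms $\bar{\cA}_1,\dots,\bar{\cA}_k$, switching adversaries routing jointly-accepted elements to $T^{\bar{\cA}_i}_1$ and the rest to $T^{\bar{\cA}_i}_2$, feasibility via $g^{\cA_i}(T^{\bar{\cA}_i}_1)=I$, a worst-case-order decoupling to make the inner executions independent, a factor-$2$ loss for the light sets $L(\cdot)$, and the rearrangement $w'(I\cap O)+\sum_i c^a_i\, w'(I\setminus O)\geq w'(O)$, which is essentially Lemma~\ref{lem:O_I_relation}.

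The gap is in how you handle the refinement layer. You apply Theorem~\ref{thm:reduce} to the \emph{original} matroids to get $\E[w'(W')]\geq\frac{1}{4k^2}w(\OPT)$ with $W'=\cap_j\OPT'_j$, and then try to recover the $\prod_i c^r_i$ factor afterwards by ``lifting'' $W'$ via $R^{\cA_i}(W')$. This does not work as stated: $\bar{\cA}_i$ is weakly-$\optc$-competitive only with respect to the optimum of the \emph{restricted refined} matroid $M^{\cA_i}|_{d_i(N)}$, whose independent sets are in general unrelated to those of $M_i$ (for graphic matroids, $M^{\cA_i}|_{d_i(N)}$ is a partition matroid on designated endpoints, whose optimum need not be a maximum-weight forest). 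The parameter $c^r_i$ only asserts the existence of \emph{some} random independent lift $R^{\cA_i}(S)$ in $M^{\cA_i}$; it says nothing about the \emph{designated} refinement $d_i(e)$ actually fed to $\bar{\cA}_i$, and it does not place $R^{\cA_i}(W')$ inside the optimum of $\bar{\cA}_i$'s instance. Consequently your key estimate $\Pr[\exists i:e\notin O_i]\leq 1-\prod_i c^r_i c^o_i$ is unjustified. The paper resolves this differently: (i) the $\prod_i c^r_i$ loss is charged against $\OPT$ \emph{before} the overlap step, via $R=\bigcap_i g^{\cA_i}(R^{\cA_i}(\OPT))$ with $\E[w(R)]\geq\prod_i c^r_i\cdot w(\OPT)$ (Observation~\ref{obs:R_value}); and (ii) the overlap reduction is redone entirely in the refined world (Lemma~\ref{lem:intersection_OPT_general}), where a generalized greedy runs on an extended ground set under the intersection of the $k$ lifted matroids \emph{plus} an auxiliary partition matroid, simultaneously producing the designation functions $d_i$ and the bound $\E\left[w'\left(\bigcap_i g^{\cA_i}(\OPT'_i)\right)\right]\geq w(R)/(4k(k+1))$ --- the denominator is $4k(k+1)$ rather than $4k^2$ precisely because greedy now runs over $k+1$ matroids. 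This joint construction of $(w',d_1,\dots,d_k)$ is the missing ingredient; without it there is no instance on which the $c^o_i$ guarantee applies to a constant fraction of $W'$, and your final constant does not chain to the claimed one.
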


In the rest of this section our objective is to prove Theorem~\ref{thm:main}. We begin with the following simple observation. Let $R = \bigcap_{i = 1}^k g^{\cA_i}(R^{\cA_i}(\OPT))$.

\begin{observation} \label{obs:R_value}
The expected weight of the set $R$ is at least $w(\OPT) \cdot \prod_{i = 1}^k c^r_i$.
\end{observation}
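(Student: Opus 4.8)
The plan is to bound the expected weight of $R = \bigcap_{i=1}^k g^{\cA_i}(R^{\cA_i}(\OPT))$ element-by-element, using linearity of expectation over the elements of $\OPT$. First I would fix an element $e \in \OPT$ and ask for the probability that $e \in R$. By definition of $R$, we have $e \in R$ if and only if $e \in g^{\cA_i}(R^{\cA_i}(\OPT))$ for every $i \in [k]$, which (since $g^{\cA_i}$ maps every refinement element to its source) is equivalent to requiring that $R^{\cA_i}(\OPT) \cap (g^{\cA_i})^{-1}(e) \neq \varnothing$ for every $i$. Since $\OPT$ is independent in each $M_i$ (being a common independent set of all $k$ matroids), the defining property of the $(c^r_i, c^o_i, c^a_i)$-reduce-and-solve algorithm applies with $S = \OPT$, giving $\Pr[R^{\cA_i}(\OPT) \cap (g^{\cA_i})^{-1}(e) \neq \varnothing] \geq c^r_i$ for each $i$ and each $e \in \OPT$.

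The one genuine subtlety is that these $k$ events — one per matroid — need not be independent, so I cannot immediately multiply the probabilities. The clean way around this is to observe that the sets $R^{\cA_1}(\OPT), \dots, R^{\cA_k}(\OPT)$ are constructed by the $k$ separate reduce-and-solve algorithms, which operate on $k$ disjoint refined ground sets $N^{\cA_1}, \dots, N^{\cA_k}$ using independent internal randomness; hence the random sets $R^{\cA_i}(\OPT)$ are mutually independent across $i$. (This mirrors the independence argument already used in the proof of Theorem~\ref{thm:mainSimple}, where the algorithms $\cA_1, \dots, \cA_k$ are run on independent samples.) With independence in hand, $\Pr[e \in R] = \prod_{i=1}^k \Pr[e \in g^{\cA_i}(R^{\cA_i}(\OPT))] \geq \prod_{i=1}^k c^r_i$.

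Finally I would assemble the bound: by linearity of expectation,
\[
  \E[w(R)] = \sum_{e \in \OPT} w(e) \cdot \Pr[e \in R] \geq \sum_{e \in \OPT} w(e) \cdot \prod_{i=1}^k c^r_i = w(\OPT) \cdot \prod_{i=1}^k c^r_i\,,
\]
which is exactly the claimed inequality. The main obstacle, as noted, is making the independence of the $R^{\cA_i}(\OPT)$ across matroids rigorous — it follows from the fact that each reduce-and-solve algorithm lives on its own refinement and uses its own random bits, so the product form of the probability is justified; everything else is a direct unwinding of definitions together with the observation that $\OPT$ is independent in every $M_i$.
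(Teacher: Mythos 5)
Your proposal is correct and follows essentially the same route as the paper's own proof: fix $e \in \OPT$, apply the defining property of each reduce-and-solve algorithm to the common independent set $\OPT$, use independence of the $k$ algorithms' randomness to multiply the per-matroid probabilities, and conclude by linearity of expectation. You in fact justify the independence step slightly more explicitly than the paper does, which simply asserts it.
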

\begin{proof}
Fix an arbitrary element $e \in \OPT$. By definition, $R^{\cA_i}(\OPT)$ contains a refinement of $e$ with probability at least $c^r_i$, which implies that $g^{\cA_i}(R^{\cA_i}(\OPT))$ contains $e$ with at least this probability. Since the membership of $e$ in each set $g^{\cA_i}(R^{\cA_i}(\OPT))$ is independent, we get that the probability that $e$ belongs to $\bigcap_{i = 1}^k g^{\cA_i}(R^{\cA_i}(\OPT))$ is at least $\prod_{i = 1}^k c^r_i$. The observation now follows by the linearity of expectation.
\end{proof}

The following lemma generalizes results from Section~\ref{sec:makeOptsOverlap}. The proof of this lemma is based on a quite technical adaptation of the proofs of Observation~\ref{obs:weight} and Theorem~\ref{thm:reduce}, and is, thus, deferred to the next subsection.

\begin{lemma} \label{lem:intersection_OPT_general}
There is an algorithm that given the matroids $M^{\cA_1}, M^{\cA_2}, \dotsc, M^{\cA_k}$ and a set $S \sim \mu_p$ for $p = \frac{2k - 1}{2k}$ produces a weight function $\wh$ for all the elements of $N$ and a function $d_i \colon N \to N^{\cA_i}$ for every $i = 1,\dotsc, k$ such that
\begin{itemize}
	\item $d_i$ assigns each element of $N$ to one of its refinements in $N^{\cA_i}$.
	\item $w'(e) \leq w(e)$ for every $e \in N$, and $w'(e) = 0$ whenever $e \in S$ or $d_i(e)$ is a loop in $M^{\cA_i}$ for some $i = 1, \dotsc, k$.
	\item For every $i = 1,\dotsc, k$, let $\OPT'_i$ be the optimal set in the matroid $M^{\cA_i}|_{d_i(N)}$ with respect to a weight function assigning to each element $e$ of $M^{\cA_i}|_{d_i(N)}$ the weight $w'(g^{\cA_i}(e))$. Then,\[\E\left[w'\mathopen{}\left(\bigcap_{i = 1}^k g^{\cA_i}(\OPT'_i)\right)\right] \geq \frac{w(R)}{4k(k + 1)} \enspace,\]
	where the expectation is over the randomness of $S$.
\end{itemize}
\end{lemma}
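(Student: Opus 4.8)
The plan is to mimic the three-step proof of Theorem~\ref{thm:reduce} from Section~\ref{sec:makeOptsOverlap}, but now carried out inside the refined matroids $M^{\cA_i}$ rather than the original ones, with $R$ playing the role that $\OPT$ played there. First I would describe the claimed algorithm: on input $S\sim\mu_p$, for each $i$ run (an online-implementable version of) the greedy algorithm for the intersection of the matroids $M^{\cA_1},\dots,M^{\cA_k}$ over refinements of the elements observed so far, and use the resulting information both to pick, for each element $e\in N$, a single refinement $d_i(e)$ in $N^{\cA_i}$ (taking a non-loop whenever one exists, and otherwise an arbitrary loop), and to define the reduced weight $\wh$ exactly as in Algorithm~\ref{alg:reduction}: set $\wh(e)=0$ if $e\in S$ or if $d_i(e)$ is a loop in some $M^{\cA_i}$, and $\wh(e)=w(e)$ precisely when $e$ is ``greedy-relevant'' in the appropriate refined sense, i.e.\ when adding the refinements $d_i(e)$ changes the greedy solution. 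The first two bullet points then hold essentially by construction, and the content is the third.

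For the weight-of-overlap bound I would run the analogue of Algorithm~\ref{alg:simpExt}: an offline process that scans the elements $e_1,\dots,e_n$ in decreasing $w$-order, with probability $p$ ``commits'' $e_i$ to the greedy solution $G$ (in the refined intersection $\bigcap_i \mathcal{I}^{\cA_i}$, using the refinements $d_i(e_i)$), and with probability $1-p$ instead tries to add $d_i(e_i)$ to each $\OPTh_i$ and to the auxiliary independent sets $H_i$. The key invariants carry over verbatim: $H_i$ stays independent in $M^{\cA_i}$, $\mathrm{span}(H_i)=\mathrm{span}(G\cup\OPTh_i)$ in $M^{\cA_i}$, and hence $W'=\bigcap_i H_i \subseteq \bigcap_i \OPTh_i =: W$ (at the level of refinements). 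Running the identical potential argument on $X_i=\sum_{i'}|H_{i'}\setminus G|$ gives $\E[|W'\cap N^{\cA}_{\le\ell}|] \ge \frac{(1-(1-p)k)(1-p)}{p}\,\E[|G\cap N^{\cA}_{\le\ell}|]$ for all prefixes (with elements ordered by the source weights), and plugging $p=\frac{2k-1}{2k}$ yields $\E[\wh(\bigcap_i g^{\cA_i}(\OPTh_i))] \ge \frac{1}{2(2k-1)}\,\E[w(G)]$. It then remains to lower-bound $\E[w(G)]$ against $w(R)$: here the greedy solution over the refined intersection, restricted to refinements of sampled source elements, is a $\frac1k$-approximation, and $R\cap S$ is a feasible candidate whose refinements $R^{\cA_i}(\cdot)$ are simultaneously independent (this is where the guarantee $\Pr[R^{\cA_i}(\OPT)\cap (g^{\cA_i})^{-1}(e)\neq\varnothing]\ge c^r_i$ and the definition of $R$ are used), so $\E[w(G)] \ge \frac{p}{k}\,\E[w(R\cap S)] \cdot$ — wait, $R$ itself is random, so I would instead argue $\E[w(G)\mid R]\ge \frac{p}{k} w(R)$ by conditioning on the randomness defining $R$ and then taking expectations, giving $\E[w(G)]\ge\frac{2k-1}{2k^2}\,\E[w(R)]$. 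Combining, $\E[\wh(\bigcap_i g^{\cA_i}(\OPTh_i))] \ge \frac{1}{2(2k-1)}\cdot\frac{2k-1}{2k^2}\E[w(R)] = \frac{\E[w(R)]}{4k^2}$; the stated bound has $4k(k+1)$ in the denominator rather than $4k^2$, so the extra slack presumably absorbs the cost of the refinement/loop handling (elements whose only refinement is a loop, or whose chosen refinement $d_i(e)$ becomes a loop), and I would route that loss through that looser constant.

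The main obstacle I anticipate is precisely the refinement bookkeeping, which has no counterpart in Section~\ref{sec:makeOptsOverlap}. Two issues need care: (a) an element $e\in N$ may have several refinements in each $N^{\cA_i}$, and the greedy process, the sets $\OPTh_i$, and the exchange step~\ref{step:exchange} all operate on refinements, so one must fix the map $d_i$ consistently and make sure the ``$W'\subseteq W$'' and span arguments still go through when only the single representative $d_i(e)$ is used; and (b) when $d_i(e)$ is (or becomes) a loop in $M^{\cA_i}$, that source element must be zeroed out in $\wh$ and cannot appear in any $\OPTh_i$, so the prefix-counting argument must be phrased in terms of source elements and must tolerate these deletions — this is exactly the place where Observation~\ref{obs:weight}(ii) gets generalized. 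I would handle (a) by committing to $d_i$ at the moment $e$ is first observed (choosing a non-loop refinement greedily, mirroring how the real algorithm must work online) and then noting that all the matroid-exchange lemmas used in the Algorithm~\ref{alg:simpExt} analysis only ever reference the single refinements $d_i(e_i)$, so they apply unchanged to the restricted matroids $M^{\cA_i}|_{d_i(N)}$; and (b) by observing that source elements with a loop refinement contribute $0$ to both sides of every inequality and can simply be discarded from the ground set before running the counting argument, at the cost of the weakened constant. Once these adaptations are in place, the algebra is identical to the proof of Lemma~\ref{lem:impPrefixGood} and Theorem~\ref{thm:reduce}.
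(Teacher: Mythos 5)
Your overall strategy matches the paper's: define $\wh$ and the maps $d_i$ via a generalized greedy over refinements, rerun the Algorithm~\ref{alg:simpExt} potential argument to get $\E[\wh(\bigcap_i g^{\cA_i}(\OPTh_i))]\geq \E[w(G)]/(2(2k-1))$, and then lower-bound $\E[w(G)]$ against $w(R)$ via a greedy approximation guarantee. The gap is in that last step, and it is exactly the point where you confess uncertainty. You assert that the generalized greedy is a $\sfrac{1}{k}$-approximation against $R\cap S$, arrive at a bound of $\sfrac{w(R)}{4k^2}$, and then suggest that the discrepancy with the stated $4k(k+1)$ ``presumably absorbs the cost of the refinement/loop handling.'' It does not: loops contribute zero weight and are disposed of for free (Observation~\ref{obs:immediate_results_general}), and no constant is lost to them. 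The $(k+1)$ has a precise structural origin that your proposal is missing: the paper lifts the analysis to the product ground set $N'=\bigcup_{e\in N} N'(e)$ with $N'(e)=\bigtimes_{i=1}^k (g^{\cA_i})^{-1}(e)$, extends each $M^{\cA_i}$ to a matroid $\hat{M}^{\cA_i}$ on $N'$, and adds an \emph{extra partition matroid} $M^P$ enforcing that at most one refinement tuple per source element is chosen. The generalized greedy (Algorithm~\ref{alg:generalized_greedy}) is then exactly the standard greedy on the intersection of these $k+1$ matroids, and the reference set $R'$ (one tuple per element of $R$, with $i$-th coordinate in $R^{\cA_i}(\OPT)$) is independent in all $k+1$ of them; the standard guarantee gives $\E[w(G)]\geq p\,w(R')/(k+1)$, whence the $4k(k+1)$.

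Without this lift, your claimed $\sfrac{1}{k}$ bound is unjustified: the generalized greedy does not operate in a plain intersection of $k$ matroids on source elements, because feasibility of adding a source element depends on the existence of compatible refinements, and both the greedy solution and the comparison set are constrained to pick a single refinement per source element --- that constraint is itself a (partition) matroid and must be paid for in the exchange argument. (It is conceivable that a bespoke charging argument recovers $\sfrac{1}{k}$ here, which would give a slightly stronger lemma, but you have not supplied it, and the mismatch you flag is a symptom of the missing matroid, not of loop handling.) The remainder of your plan --- committing to $d_i(e)$ when $e$ is first processed, carrying the span invariant and the inclusion $W'\subseteq W$ through to the refined matroids, and discarding loop refinements at zero cost --- is consistent with how the paper executes the corresponding steps (Algorithms~\ref{alg:reduction_general} and~\ref{alg:simp_general}).
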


We can now present the algorithm that we use to prove Theorem~\ref{thm:main}. The following description of the algorithm assumes that the algorithm can obtain multiple independent samples of elements from $N$. This assumption can be lifted using the same technique used to remove this assumption from the algorithm presented in Section~\ref{sec:optSelectToMatInt}.
\begin{enumerate}
    \item Let $p = \frac{2k-1}{2k}$, and let $S$ be a sample containing every element of $N$ with probability $p$, independently. Observe that $S \sim \mu_p$, and thus, we can use the algorithm whose existence is guaranteed by Lemma~\ref{lem:intersection_OPT_general} on $S$ to obtain the weight function $w'$ and the functions $d_i$ whose properties are given by the lemma.
		
    \item For each $i=1,\dots, k$, let $m_i$ be the number of elements $\bar{\cA_i}$ samples, and let $S_i \subseteq N$ be a uniformly random sample from $N$ of cardinality $m_i$ (if the number of elements $\bar{\cA_i}$ samples is random, let $m_i$ be the outcome in the current execution of $\bar{\cA_i}$). Then, for every such $i$, pass $d_i(S_i)$ as the sample to algorithm $\bar{\cA_i}$ .
		
    \item Finally, let $I=\varnothing$ be the output set of this algorithm, and start the selection phase of algorithms $\bar{\cA}_1, \dots, \bar{\cA}_k$. For each arriving element $e$ which has not been seen yet (i.e., $e$ does not belong to any sample), pass $d_i(e)$ to the algorithm $\bar{\cA}_i$ for every $i = 1,\dotsc, k$ with the weight $w'(e)$. If all the algorithms $\bar{\cA}_1, \dots, \bar{\cA}_k$ select the elements they got, then add $e$ to $I$ and make the switching adversary of each algorithm $\bar{\cA}_i$ put $d_i(e)$ into the set $T^{\bar{\cA}_i}_1$. In contrast, if at least one of the algorithms $\bar{\cA}_1, \dots, \bar{\cA}_k$ rejects the element it got, then do not add $e$ to $I$ and make the switching adversary of every algorithm $\bar{\cA}_i$ that accepts $d_i(e)$ put $d_i(e)$ into the set $T^{\bar{\cA}_i}_2$. \label{step:feeding}
\end{enumerate}

One can observe that for every $i = 1,\dotsc, k$ the algorithm $\bar{\cA}_i$ is executed on a restriction of $M^{\cA_i}$ to the subset $d_i(N)$. This means that, by the definition of a reduce-and-solve algorithm, that the sets $g^{\cA_1}(T^{\bar{\cA_1}}_1), \dotsc, g^{\cA_k}(T^{\bar{\cA_k}}_1)$ are independent in the matroids $M_1, \dotsc, M_k$, respectively. Moreover, one can observe that all these sets are in fact equal to $I$, and thus, $I$ is independent in the intersection of these matroids. Thus, to prove Theorem~\ref{thm:main} it only remains to analyze the competitive ratio of $I$.

Let $O = \bigcap_{i = 1}^k g^{\cA_i}(O^{\bar{\cA}_i}(d_i(S_i), r_i, \sigma_i))$, where $r_i$ denotes the random bits of $\bar{A}_i$ and $\sigma_i$ is the order in which the elements are fed to algorithm $\bar{\cA_i}$ in Step~\ref{step:feeding} of the above algorithm. It is useful to observe that every set $B \subseteq d_i(N)$ obeys $d_i(g^{\cA_i}(B)) = B$. In other words, $d_i$ is the inverse function of $g^{\cA_i}$ when the domain of $g^{\cA_i}$ is restricted to elements of $d_i(N)$. Additionally, we define the weight according to $w'$ of an element in $N^{\cA_i}$ as equal to the weight of its source element in $N$ according to this weight function. Notice that this definition is consistent with the weights passed to the algorithms $\bar{\cA}_1, \dotsc, \bar{\cA}_k$.

\begin{lemma} \label{lem:O_value}
$\E[w'(O)] \geq \frac{1}{2}\prod_{i = 1}^k c^o_i \cdot w'\mathopen{}\left(\bigcap_{i = 1}^k g^{\cA_i}(\OPT'_i)\right)$, where the expectation is over the samples $S_1, \dotsc, S_k$ and the random bits $r_1, \dotsc, r_k$.
\end{lemma}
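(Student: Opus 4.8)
\noindent\emph{Proof plan.}
The plan is to imitate the argument used to prove the Claim inside the proof of Theorem~\ref{thm:mainSimple}: lower bound, element by element, the probability that a given element of the overlap $T := \bigcap_{i=1}^k g^{\cA_i}(\OPT'_i)$ lands in $O$, and then sum these bounds by linearity of expectation. Two ingredients are new compared to Theorem~\ref{thm:mainSimple}. First, one has to carry out the bookkeeping that moves back and forth between $N$ and the refined ground sets $N^{\cA_i}$ via the maps $g^{\cA_i}$ and $d_i$. Second, since the algorithms $\bar{\cA}_i$ are only weakly-$\optc$-competitive, the per-element guarantee applies only to elements of $\OPT'_i\setminus L(\OPT'_i)$; this is exactly where the extra factor $\tfrac12$ in the statement comes from.

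First I would record the easy structural facts. Because $O^{\bar{\cA}_i}(d_i(S_i),r_i,\sigma_i)\subseteq\OPT'_i$ by the definition of a weakly-$\optc$-competitive algorithm, we get $O\subseteq T$. Since $d_i$ is injective and inverts $g^{\cA_i}$ on $d_i(N)$, for $e\in T$ we have $e\in O$ precisely when $d_i(e)\in O^{\bar{\cA}_i}(d_i(S_i),r_i,\sigma_i)$ for every $i$. Call $e\in T$ \emph{good} if $d_i(e)\in\OPT'_i\setminus L(\OPT'_i)$ for all $i$, and let $T^\ast\subseteq T$ be the set of good elements. For a good $e$ I would bound $\Pr[e\in O]$ exactly as in Theorem~\ref{thm:mainSimple}: first replace, for each $\bar{\cA}_i$ separately, the environment it actually faces in Step~\ref{step:feeding} (its arrival order together with its switching-adversary decisions, which in the algorithm depend on the other $\bar{\cA}_j$'s) by the worst such environment $\sigma_i$; this can only decrease the probability, and it decouples the executions. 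Once the environments are fixed, the event $\{d_i(e)\in O^{\bar{\cA}_i}(d_i(S_i),r_i,\sigma_i)\}$ depends only on $(S_i,r_i)$, and the pairs $(S_i,r_i)$ are independent across $i$, so the joint probability factorizes; taking expectations over the independent samples then gives $\Pr[e\in O]\ge\prod_i\E_{S_i}[\min_{\sigma_i}\Pr_{r_i}[d_i(e)\in O^{\bar{\cA}_i}(d_i(S_i),r_i,\sigma_i)]]$. As $d_i$ is injective, $d_i(S_i)$ is a uniformly random subset of $d_i(N)$ of the cardinality $\bar{\cA}_i$ expects; and as $d_i(e)\in\OPT'_i\setminus L(\OPT'_i)$, the weakly-$\optc$-competitive guarantee of $\bar{\cA}_i$ — which holds against every switching adversary, in particular against the worst environment — lower bounds each factor by $c^o_i$. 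Hence $\Pr[e\in O]\ge\prod_{i=1}^k c^o_i$ for every good $e$, so by linearity $\E[w'(O)]\ge\prod_{i=1}^k c^o_i\cdot w'(T^\ast)$.

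It then remains to prove $w'(T^\ast)\ge\tfrac12 w'(T)$, i.e.\ $w'\bigl(T\cap\bigcup_{i=1}^k g^{\cA_i}(L(\OPT'_i))\bigr)\le\tfrac12 w'(T)$, and this is the step I expect to be the only genuinely delicate one. The inequality is false for an arbitrary collection of matroids and weights — the light suffix $L(\OPT'_i)$ of one matroid could, a priori, exhaust a small overlap $T$ — so one must use the structure of the weight function $w'$ and the refinements $d_i$ supplied by Lemma~\ref{lem:intersection_OPT_general}. Concretely, that construction keeps $T$ ``prefix-heavy'' in the same quantitative sense as in Lemma~\ref{lem:impPrefixGood} (for every weight threshold, $T$ retains a fixed fraction of the heavy elements kept by the greedy algorithm), while each $L(\OPT'_i)$ is, by definition, a light suffix of $\OPT'_i$ whose total $w'$-weight is strictly below that of the single heaviest element of $\OPT'_i$; combining these two facts one gets that the portion of $T$ falling into any individual $L(\OPT'_i)$ is a controlled fraction of $w'(T)$, and that summing the at most $k$ such contributions still leaves at least half of $w'(T)$. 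Turning this reasoning into a clean inequality from the properties listed in Lemma~\ref{lem:intersection_OPT_general} is the bulk of the work; everything else reduces to the decoupling argument above, which is essentially the one already carried out for Theorem~\ref{thm:mainSimple}.
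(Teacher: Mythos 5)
Your first half is exactly the paper's argument: the worst-case-order decoupling over independent $(S_i,r_i)$ gives $\Pr[e\in O]\geq\prod_i c^o_i$ for every $e\in T^\ast=\bigcap_i g^{\cA_i}(\OPT'_i\setminus L(\OPT'_i))$, and linearity of expectation reduces the lemma to the deterministic halving inequality $w'(T^\ast)\geq\tfrac12 w'(T)$. The gap is in how you propose to prove that halving inequality. Your plan --- bound the portion of $T$ falling into each individual $L(\OPT'_i)$ and then ``sum the at most $k$ such contributions'' --- does not work: the only a priori bound on each contribution is $w'(L(\OPT'_i))<w'(h(\OPT'_i))$, which can be as large as (just under) $\tfrac12 w'(T)$, so summing $k$ of them could apparently wipe out $T$ entirely. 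The paper avoids the sum altogether by observing that $g^{\cA_i}(L(\OPT'_i))\cap T$ is a \emph{threshold} set of $T$ (all elements of $T$ below some weight), hence the $k$ sets $g^{\cA_i}(L(\OPT'_i))\cap T$ are nested and their union inside $T$ equals a single $g^{\cA_{i'}}(L(\OPT'_{i'}))\cap T$; one pays only once, not $k$ times.

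The second missing ingredient is the identification of the heaviest elements. The paper shows that if $T\neq\varnothing$ then the elements $h(\OPT'_1),\dotsc,h(\OPT'_k)$ all refine one and the same source element $h_{i^*}$: since $w'(h_{i^*})>0$, Lemma~\ref{lem:intersection_OPT_general} guarantees $d_i(h_{i^*})$ is not a loop of $M^{\cA_i}$, so it is a candidate first pick of the greedy construction of $\OPT'_i$, forcing $w'(h(\OPT'_i))\geq w'(h_{i^*})$ and hence equality. This single element $h_{i^*}$ lies in $T^\ast$ and is used twice: it settles the case $w'(T)\leq 2w'(h_{i^*})$ outright, and in the complementary case it gives $w'(T)-w'(L(\OPT'_{i'}))>w'(T)-w'(h_{i^*})>\tfrac12 w'(T)$. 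Neither the nestedness observation nor the coincidence of the $h(\OPT'_i)$'s appears in your sketch, and your appeal to a Lemma~\ref{lem:impPrefixGood}-style ``prefix-heaviness'' of $T$ is not the right tool here --- the halving inequality is a statement for a \emph{fixed} sample $S$, whereas Lemma~\ref{lem:impPrefixGood} only controls expectations over $S$. So the step you yourself flag as ``the bulk of the work'' is genuinely missing, and the route you indicate for it would fail.
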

\begin{proof}
Consider an arbitrary element $e \in \bigcap_{i = 1}^k g^{\cA_i}(\OPT'_i \setminus L(\OPT'_i))$, and observe that
\begin{align*}
	\Pr[e \in O]
	={} &
	\Pr\left[e \in \bigcap_{i = 1}^k g^{\cA_i}(O^{\bar{\cA}_i}(d_i(S_i), r_i, \sigma_i))\right]\\
	={} &
	\Pr\left[\bigwedge_{i = 1}^k e \in g^{\cA_i}(O^{\bar{\cA}_i}(d_i(S_i), r_i, \sigma_i))\right]
	=
	\Pr\left[\bigwedge_{i = 1}^k d_i(e) \in O^{\bar{\cA}_i}(d_i(S_i), r_i, \sigma_i)\right]
	\enspace.
\end{align*}
The rightmost hand side of the above equality can be lower bounded by replacing the orders $\sigma_1, \dotsc, \sigma_k$ induced by Step~\ref{step:feeding} of our algorithm with worst case orders that are allowed to depend on $S_1, \dotsc, S_k$ and $e$ (but not on the random bits $r_1, \dotsc, r_k$). Thus, we get
\[
	\Pr[e \in O]
	\geq
	\E_{S_1, \dotsc, S_k} \left[\min_{\sigma'_1, \dotsc, \sigma'_k} \Pr_{r_1, \dotsc, r_k} \left[\bigwedge_{i = 1}^k d_i(e) \in O^{\bar{\cA}_i}(d_i(S_i), r_i, \sigma'_i)\right]\right]
\]
Since the random bits $r_i$ of each algorithm $\cA_i$ are independent of the random bits of the other algorithms $\{\bar{\cA}_j\}_{i \neq j}$, we get that for every $i = 1,\dotsc, k$ the membership of the element $d_i(e)$ in the set $O^{\bar{\cA}_i}(d_i(S_i), r_i, \sigma'_i)$ is independent of this membership for other values of $i$ (given that $\sigma'_1, \dotsc, \sigma'_k$ and $S_1, \dotsc, S_k$ are all considered deterministic). Plugging this observation into the above lower bound on $\Pr[e \in O]$, we get
\begin{align*}
	\Pr[e \in O]
	\geq{} &
	\E_{S_1, \dotsc, S_k} \left[\min_{\sigma'_1, \dotsc, \sigma'_k} \prod_{i = 1}^k \Pr_{r_i} [d_i(e) \in O^{\bar{\cA}_i}(d_i(S_i), r_i, \sigma'_i)]\right]\\
	={} &
	\E_{S_1, \dotsc, S_k} \left[\prod_{i = 1}^k \min_{\sigma'_i} \Pr_{r_i} [d_i(e) \in O^{\bar{\cA}_i}(d_i(S_i), r_i, \sigma'_i)]\right]\\
	={} &
	\prod_{i = 1}^k \E_{S_i} [\min_{\sigma'_i} \Pr_{r_i} [d_i(e) \in O^{\bar{\cA}_i}(d_i(S_i), r_i, \sigma'_i)]]
	\geq
	\prod_{i = 1}^k c^o_i
	\enspace,
\end{align*}
where the last equality holds since $\min_{\sigma'_i} \Pr_{r_i} [d_i(e) \in O^{\bar{\cA}_i}(d_i(S_i), r_i, \sigma'_i)]$ is a function of the sample $S_i$, which is independent of the sample $S_j$ for every $j \neq i$; and the last inequality follows from the facts that $\bar{\cA}_i$ is $(c^o_i, c^a_i)$-weakly-$\optc$-competitive and $d_i(e) \in \OPT'_i \setminus L(\OPT'_i)$. Using the linearity of expectation, we now get
\[
	\E[w'(O)]
	\geq
	\prod_{i = 1}^k c^o_i \cdot w'\mathopen{}\left(\bigcap_{i = 1}^k g^{\cA_i}(\OPT'_i \setminus L(\OPT'_i))\right)
	\enspace.
\]

It remains to show that
\begin{equation} \label{eq:halving}
	w'\mathopen{}\left(\bigcap_{i = 1}^k g^{\cA_i}(\OPT'_i \setminus L(\OPT'_i))\right)
	\geq
	\frac{1}{2} \cdot w'\mathopen{}\left(\bigcap_{i = 1}^k g^{\cA_i}(\OPT'_i)\right)
	\enspace.
\end{equation}
If $\bigcap_{i = 1}^k g^{\cA_i}(\OPT'_i) = \varnothing$, then we are done. Otherwise, for every $i = 1, \dotsc, k$, let $h_i = g^{\cA_i}(h(\OPT'_i))$, and let $i^*$ be one of the values maximizing $w'(h_{i^*})$. Since $w'(h_{i^*}) > 0$, we get that $h_{i^*} \not \in S$ and that $d_i(h_{i^*})$ is not a loop of $M^{\cA_i}$ for any $i = 1, \dotsc, k$. One corollary of these observations is that $d_i(h_{i^*})$ is one of the candidates to be the first element of $\OPT'_i$ when this optimal set is constructed by the greedy algorithm. Since the greedy algorithm picks $h(\OPT'_i)$, we get that $w'(h_i) = w'(h(\OPT'_i)) \geq w'(h_{i^*})$, which can coexist with the definition of $i^*$ only when $h_i = h_{i^*}$. In other words, we have proved that $h_{i^*} = g^{\cA_i}(h(\OPT'_i))$ for every $i = 1, \dotsc, k$. Hence, $\{h_{i^*}\} = \bigcap_{i = 1}^k \{g^{\cA_i}(h(\OPT'_i))\} \subseteq \bigcap_{i = 1}^k g^{\cA_i}(\OPT'_i \setminus L(\OPT'_i))$, where the inclusion holds because $h(\OPT'_i)$ cannot be in $L(\OPT'_i)$. This already proves Inequality~\eqref{eq:halving} when $w'\mathopen{}\left(\bigcap_{i = 1}^k g^{\cA_i}(\OPT'_i)\right) \leq 2w'(h_{i^*})$. Hence, we may assume from now on $w'\mathopen{}\left(\bigcap_{i = 1}^k g^{\cA_i}(\OPT'_i)\right) > 2w'(h_{i^*})$.

For ease of the reading, let us denote $C = \bigcap_{i = 1}^k g^{\cA_i}(\OPT'_i)$. One can observe that the definition of $L(\OPT'_i)$ implies that $g^{\cA_i}(L(\OPT'_i)) \cap C$ is either empty or contains all the elements of $C$ whose weight according to $w'$ is below some threshold. Thus, there must exist some $i'$ such that $g^{\cA_i}(L(\OPT'_i)) \cap C \subseteq g^{\cA_{i'}}(L(\OPT'_{i'})) \cap C$ for every $i = 1, \dotsc, k$. Using this observation, we get
\begin{align*}
	w'\mathopen{}\left(\bigcap_{i = 1}^k g^{\cA_i}(\OPT'_i \setminus L(\OPT'_i))\right)
	={}\mspace{-96mu} & \mspace{96mu}
	w'(C) - w'\mathopen{}\left(C \cap \bigcup_{i = 1}^k g^{\cA_i}(L(\OPT'_i))\right)\\
	={} &
	w'(C) - w'(C \cap g^{\cA_{i'}}(L(\OPT'_{i'})))
	\geq
	w'(C) - w'(g^{\cA_{i'}}(L(\OPT'_{i'})))\\
	={} &
	w'(C) - w'(L(\OPT'_{i'}))
	>
	w'(C) - w'(h(\OPT'_{i'}))
	=
	w'(C) - w'(h_{i^*})
	\enspace,
\end{align*}
where the last inequality holds by the definition of $L(\OPT'_{i'})$. Note that this completes the proof of Inequality~\eqref{eq:halving} since we assume $w'(C) > 2w'(h_{i^*})$.
\end{proof}

It turns out that the weights of $O$ and $I$ can be related.
\begin{lemma} \label{lem:O_I_relation}
$w'(I) \geq w'(O) / \max\{\sum_{i = 1}^k c^a_i, 1\}$.
\end{lemma}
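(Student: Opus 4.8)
The plan is to relate the weight that algorithm $\cA$ collects, namely $w'(I)$, to the weight of the set $O$ by exploiting the guarantee provided by weak-$\optc$-competitiveness for each individual algorithm $\bar{\cA}_i$. Recall that in Step~\ref{step:feeding} of the algorithm, an element $e$ is added to $I$ precisely when all algorithms $\bar{\cA}_i$ accept $d_i(e)$, in which case each switching adversary places $d_i(e)$ into $T^{\bar{\cA}_i}_1$; otherwise, every algorithm that accepted its element has it placed into $T^{\bar{\cA}_i}_2$. Consequently, for every $i$, an element $e \in N$ satisfies $d_i(e) \in T^{\bar{\cA}_i}_1$ if and only if $e \in I$, while $d_i(e) \in T^{\bar{\cA}_i}_2$ implies $e \notin I$. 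In particular, $g^{\cA_i}(T^{\bar{\cA}_i}_1) = I$, and $g^{\cA_i}((T^{\bar{\cA}_i}_1 \cup T^{\bar{\cA}_i}_2) \cap O^{\bar{\cA}_i}) \subseteq g^{\cA_i}(O^{\bar{\cA}_i})$.

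First I would apply the second inequality in the definition of $(c^o_i, c^a_i)$-weak-$\optc$-competitiveness to each $\bar{\cA}_i$, which gives, writing $O_i = O^{\bar{\cA}_i}(d_i(S_i), r_i, \sigma_i)$,
\[
	w'((T^{\bar{\cA}_i}_1 \cup T^{\bar{\cA}_i}_2) \cap O_i) + c^a_i \cdot w'(T^{\bar{\cA}_i}_1 \setminus O_i) \geq w'(O_i)
	\enspace,
\]
where weights of elements of $N^{\cA_i}$ are taken according to the convention that an element has the $w'$-weight of its source element. Since $g^{\cA_i}$ restricted to $d_i(N)$ is injective with inverse $d_i$, and since all the elements involved are refinements under $d_i$, I can push this down to $N$: the term $w'((T^{\bar{\cA}_i}_1 \cup T^{\bar{\cA}_i}_2) \cap O_i)$ is bounded above by $w'(g^{\cA_i}(O_i))$, while $w'(T^{\bar{\cA}_i}_1 \setminus O_i)$ equals $w'(I \setminus g^{\cA_i}(O_i))$ since $g^{\cA_i}(T^{\bar{\cA}_i}_1) = I$ and $g^{\cA_i}$ is injective on these elements. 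Thus for each $i$,
\[
	w'(g^{\cA_i}(O_i)) + c^a_i \cdot w'(I \setminus g^{\cA_i}(O_i)) \geq w'(O_i) \geq w'(g^{\cA_i}(O_i))
	\enspace,
\]
where the last step uses $O_i \subseteq \OPT \subseteq N$ and that $g^{\cA_i}$ restricted to $O_i$ need not be injective but only decreases weight — actually since $O_i \subseteq \OPT'_i \subseteq d_i(N)$ it is injective, so equality holds there; the point is $w'(O_i) = w'(g^{\cA_i}(O_i))$.

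Next I would sum over $i = 1, \dotsc, k$. Rearranging, $w'(I \setminus g^{\cA_i}(O_i)) \geq (w'(O_i) - w'(g^{\cA_i}(O_i)))/c^a_i$ is not quite the right manipulation; instead, the clean route is: from $O = \bigcap_i g^{\cA_i}(O_i)$ we have $w'(O) \le w'(g^{\cA_i}(O_i))$ for every $i$. The elements of $g^{\cA_i}(O_i) \setminus I$ are exactly the $\OPT$-elements that $\bar{\cA}_i$ declared in $O_i$ but whose source was not added to $I$ (because some other algorithm rejected it), so these contribute to the ``$c^a$-discounted'' slack. Concretely, any element $e \in O \setminus I$ lies in $g^{\cA_i}(O_i)$ for all $i$ but $e \notin I$; I want to charge its weight. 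If $\sum_i c^a_i \le 1$ the bound $w'(I) \ge w'(O)$ should follow because then each individual inequality already yields $w'(g^{\cA_i}(O_i)) \le w'(I) + c^a_i w'(I)$-type control; the general case needs the averaging. I expect the main obstacle to be bookkeeping the set-algebra carefully so that the per-coordinate weak-competitiveness inequalities combine — after dividing by $\max\{\sum_i c^a_i, 1\}$ and using $w'(O) \le \min_i w'(g^{\cA_i}(O_i))$ together with $O \setminus I = \bigcup_i (g^{\cA_i}(O_i) \setminus I) \cap O$ — into the single clean bound $w'(I) \ge w'(O)/\max\{\sum_i c^a_i, 1\}$. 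Once the correct charging is set up, everything else is routine rearrangement.
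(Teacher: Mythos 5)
There is a genuine gap here, and you half-acknowledge it yourself. The useful content of the weak-$\optc$-competitiveness guarantee is the quantity $w'(O_i) - w'((T^{\bar{\cA}_i}_1 \cup T^{\bar{\cA}_i}_2) \cap O_i) = w'(O_i \setminus (T^{\bar{\cA}_i}_1 \cup T^{\bar{\cA}_i}_2))$, i.e., the weight of the declared-$\OPT$ elements that $\bar{\cA}_i$ \emph{rejected outright}; the definition bounds exactly this by $c^a_i \cdot w'(T^{\bar{\cA}_i}_1 \setminus O_i)$. By replacing $w'((T^{\bar{\cA}_i}_1 \cup T^{\bar{\cA}_i}_2) \cap O_i)$ with its upper bound $w'(O_i)$ you collapse the guarantee to $c^a_i \cdot w'(I \setminus g^{\cA_i}(O_i)) \geq 0$, which carries no information, and the subsequent attempt to recover is not carried out. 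Your framing of "the elements of $g^{\cA_i}(O_i)\setminus I$ contribute to the $c^a$-discounted slack" is also not quite right: an element $e\in O$ can fail to reach $I$ even though $\bar{\cA}_i$ \emph{accepted} $d_i(e)$ (it then lands in $T^{\bar{\cA}_i}_2$ because some other algorithm rejected it), and such elements are \emph{not} controlled by the $c^a_i$ term for index $i$.

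The missing idea is the combinatorial observation that resolves this: if $e \in O$ has $d_i(e) \in T^{\bar{\cA}_i}_1 \cup T^{\bar{\cA}_i}_2$ for \emph{every} $i$, then every $\bar{\cA}_i$ accepted $d_i(e)$ and hence $e \in I$. Therefore $O \setminus I \subseteq \bigcup_{i=1}^k \bigl(O \setminus g^{\cA_i}(T^{\bar{\cA}_i}_1 \cup T^{\bar{\cA}_i}_2)\bigr)$, and a union bound together with $d_i(O)\subseteq O_i$ and the per-$i$ guarantee gives
\[
  w'(O \setminus I) \;\leq\; \sum_{i=1}^k w'\bigl(O \setminus g^{\cA_i}(T^{\bar{\cA}_i}_1 \cup T^{\bar{\cA}_i}_2)\bigr) \;\leq\; \sum_{i=1}^k c^a_i \cdot w'(I \setminus O)\enspace,
\]
from which $w'(I\cap O) + \bigl(\sum_{i=1}^k c^a_i\bigr)\, w'(I\setminus O) \geq w'(O)$ and the claimed bound follow by bounding the left-hand side by $\max\{\sum_{i=1}^k c^a_i, 1\}\cdot w'(I)$. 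Your proposal gestures at "the correct charging" and "routine rearrangement" but never identifies this union-bound step, which is the crux of the proof; as written the argument does not close.
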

\begin{proof}
Fix $i \in \{1, \dotsc, k\}$. The fact that $\bar{\cA}_i$ is $(c^o_i, c^a_i)$-weakly-$\optc$-competitive implies that
\[
	w'((T^{\bar{\cA}_i}_1 \cup T^{\bar{\cA}_i}_2) \cap O^{\bar{\cA}_i}(d_i(S_i), r_i, \sigma_i)) + c^a_i \cdot w'(T^{\bar{\cA}_i}_1 \setminus O^{\bar{\cA}_i}(d_i(S_i), r_i, \sigma_i)) \geq w'(O^{\bar{\cA}_i}(d_i(S_i), r_i, \sigma_i))
	\enspace.
\]
Recall that $g^{\cA_i}(T^{\bar{\cA}_i}_1) = I$, which implies $T^{\bar{\cA}_i}_1 = d_i(I)$. Thus,
\begin{align} \label{eq:elements_outside_opt}
	c^a_i \cdot w'(I \setminus O)
	={} &
	c^a_i \cdot w'(T^{\bar{\cA}_i}_1 \setminus d_i(O))
	\geq
	c^a_i \cdot w'(T^{\bar{\cA}_i}_1 \setminus O^{\bar{\cA}_i}(d_i(S_i), r_i, \sigma_i))\\ \nonumber
	\geq{} &
	w'(O^{\bar{\cA}_i}(d_i(S_i), r_i, \sigma_i) \setminus (T^{\bar{\cA}_i}_1 \cup T^{\bar{\cA}_i}_2))\\ \nonumber
	\geq{} &
	w'(d_i(O) \setminus (T^{\bar{\cA}_i}_1 \cup T^{\bar{\cA}_i}_2))
	=
	w'(O \setminus g^{\cA_i}(T^{\bar{\cA}_i}_1 \cup T^{\bar{\cA}_i}_2))
	\enspace,
\end{align}
where the first and last inequalities hold since $O \subseteq g^{\cA_i}(O^{\bar{\cA}_i}(d_i(S_i), r_i, \sigma_i))$---which implies the inclusion $d_i(O) \subseteq O^{\bar{\cA}_i}(d_i(S_i), r_i, \sigma_i)$.

We now observe that if an element $e \in O$ belongs to the set $g^{\cA_i}(T^{\bar{\cA}_i}_1 \cup T^{\bar{\cA}_i}_2)$ for every $i = 1,\dotsc, k$, then this means that each one of the algorithms $\bar{\cA}_i$ accepted its corresponding element $d_i(e)$, which implies that $e$ is also a member of $I$. Thus, we can lower bound $w'(I)$ by
\begin{align} \label{eq:I_O_relation}
	w'(I)
	\geq{} &
	w'(I \setminus O) + w'(O) - \sum_{i = 1}^k w'(O \setminus g^{\cA_i}(T^{\bar{\cA}_i}_1 \cup T^{\bar{\cA}_i}_2))\\\nonumber
	\geq{} &
	w'(I \setminus O) + w'(O) - \sum_{i = 1}^k c^a_i \cdot w'(I \setminus O)
	\enspace,
\end{align}
where the second inequality follows from Inequality~\eqref{eq:elements_outside_opt}. Rearranging Inequality~\eqref{eq:I_O_relation}, we get
\[
	w'(I \cap O) + w'(I \setminus O) \cdot \sum_{i = 1}^k c^a_i
	\geq
	w'(O)
	\enspace.
\]
The lemma now follows by observing that the left hand side of the last inequality is upper bounded by $\max\{\sum_{i = 1}^k c^a_i, 1\} \cdot w'(I)$.
\end{proof}

We are now ready to prove Theorem~\ref{thm:main}.

\begin{proof}[Proof of Theorem~\ref{thm:main}]
Recall that the output $I$ of our algorithm is independent in $M_1 \cap \dotso \cap M_k$ by the above discussion. Additionally,
\begin{align*}
	\E[w(I)]
	\geq{} &
	\E[w'(I)]
	\geq
	\frac{\E[w'(O)]}{\max\{\sum_{i = 1}^k c^a_i, 1\}}
	\geq
	\frac{\frac{1}{2}\prod_{i = 1}^k c^o_i \cdot \E\left[w'\mathopen{}\left(\bigcap_{i = 1}^k g^{\cA_i}(\OPT'_i)\right)\right]}{\max\{\sum_{i = 1}^k c^a_i, 1\}}\\
	\geq{} &
	\frac{\frac{1}{2}\prod_{i = 1}^k c^o_i \cdot \E[w(R)]}{4k(k + 1) \cdot \max\{\sum_{i = 1}^k c^a_i, 1\}}
	\geq
	\frac{\prod_{i = 1}^k c^r_i c^o_i \cdot w(\OPT)}{8k(k + 1) \cdot \max\{\sum_{i = 1}^k c^a_i, 1\}}
	\enspace,
\end{align*}
where the first inequality holds since $w(e) \geq w'(e)$ for every element $e \in N$, the second inequality holds due to Lemma~\ref{lem:O_I_relation}, the third due to Lemma~\ref{lem:O_value}, the fourth due to Lemma~\ref{lem:intersection_OPT_general} and the last due to Observation~\ref{obs:R_value}.
\end{proof}

\subsection*{Proof of Lemma~\ref{lem:intersection_OPT_general}}

We begin the proof of the lemma by presenting the algorithm whose existence it guarantees. This algorithm assumes that, for every matroid $M^{\cA_i}$ and element $e \in N$, there exists a fixed order over the refinements $(g^{\cA_i})^{-1}(e)$ of $e$. This order should be independent of $S$. However, if there is no natural candidate for this order, one can simply use a uniformly random order. The first component of the guaranteed algorithm is the generalization of the greedy algorithm given as Algorithm~\ref{alg:generalized_greedy}. This generalization uses $\cI_i$ to denote the collection of independent sets of $M^{\cA_i}$.

\begin{algorithm2e}
  \DontPrintSemicolon
	\lFor{$i = 1 \dotsc, k$}{Let $I_i = \varnothing$.}
  \For{each element $e\in S$ in a decreasing $w$ weight order}
	{
		\If{for every $i = 1,\dotsc, k$ there exists $e_i \in (g^{\cA_i})^{-1}(e)$ such that $I_i + e_i \in \cI_i$}
		{
			\lFor{$i = 1 \dotsc, k$}{Add to $I_i$ the first such element $e_i \in (g^{\cA_i})^{-1}(e)$.}
		}
  }
	\Return{$I_1, \dotsc, I_k$}.
\caption{GeneralizedGreedy$(S)$}\label{alg:generalized_greedy}
\end{algorithm2e}

Like in the case of the standard greedy algorithm, we define an element $e \not \in S$ as greedy-relevant with respect to $S$ if and only if $\generalizedGreedy(S) \neq \generalizedGreedy(S + e)$. Recall that $\imp(S)$ is the set of elements that are greedy-relevant with respect to $S$. For a greedy-relevant element $e$ we also denote by $\role_i(S, e)$ the element of $(g^{\cA_i})^{-1}(e)$ which is added to $I_i$ by $\generalizedGreedy(S + e)$. Using this notation we can now give, as Algorithm~\ref{alg:reduction_general}, the promised algorithm whose existence is guaranteed by Lemma~\ref{lem:intersection_OPT_general}.

\begin{algorithm2e}
  \SetKwInOut{Input}{input}\SetKwInOut{Output}{output}
  \DontPrintSemicolon
  \Input{subset $S\subseteq N$}
  \Output{weight function $\wh$ and functions $d_i\colon N \to N^{\cA_i}$ for every $i = 1,\dotsc, k$}
  \BlankLine
  For each element $e \in S$, define $\wh(e) =0$ and let $d_i(e)$ be the first element of $(g^{\cA_i})^{-1}(e)$.\\
  \For{each other element $e\in N\setminus S$}
	{
		\If{$e \in \imp(S)$}{
			Define $\wh(e) = w(e)$, and let $d_i(e) = \role_i(S, e)$ for every $i = 1, \dotsc, k$.
		}
		\Else
		{
			Define $\wh(e) =0$, and let $d_i(e)$ be the first element of $(g^{\cA_i})^{-1}(e)$.
		}
  }
\caption{GeneralizedOverlappingOPT$(S)$}\label{alg:reduction_general}
\end{algorithm2e}

The following observation can be immediately deduced from Algorithm~\ref{alg:reduction_general}. This observation corresponds to Observation~\ref{obs:weight} from Section~\ref{sec:makeOptsOverlap}.
\begin{observation}
  \label{obs:immediate_results_general}
  On input $S\subseteq N$, Algorithm~\ref{alg:reduction_general} defines a weight
  function $\wh$ and functions $d_i\colon N \to N^{\cA_i}$ for every $i = 1,\dotsc, k$ satisfying:
	\begin{compactitem}
		\item $d_i$ maps every element of $N$ to one of its refinements in $N^{\cA_i}$.
		\item $\wh\leq w$.
		\item $\wh(e) = 0$ whenever $e\in S$ or $d_i(e)$ is a loop in $M^{\cA_i}$ for some $i = 1,\dotsc, k$.
	\end{compactitem}
\end{observation}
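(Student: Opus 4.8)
The plan is to prove the observation by direct inspection of Algorithm~\ref{alg:reduction_general}, exactly in the spirit of the (trivial) proof of Observation~\ref{obs:weight} that it generalizes. I would case-split on which of the three branches an element $e \in N$ enters: (a) $e \in S$; (b) $e \in N \setminus S$ and $e \in \imp(S)$; (c) $e \in N \setminus S$ and $e \notin \imp(S)$. In branches (a) and (c) the algorithm sets $\wh(e) = 0$ and $d_i(e)$ to the first element of $(g^{\cA_i})^{-1}(e)$, while in branch (b) it sets $\wh(e) = w(e)$ and $d_i(e) = \role_i(S, e)$.

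For the first bullet, I would argue that $d_i(e)$ is always a refinement of $e$: in branches (a) and (c) this holds because $(g^{\cA_i})^{-1}(e)$ is nonempty---which is part of the definition of a reduce-and-solve algorithm---so its ``first element'' is well-defined and lies in $(g^{\cA_i})^{-1}(e)$ by definition; in branch (b) it holds because, by the definition of $\role_i(S,e)$, this element is the member of $(g^{\cA_i})^{-1}(e)$ that $\generalizedGreedy(S+e)$ adds to $I_i$. The second bullet is immediate: in every branch $\wh(e)$ equals either $0$ or $w(e) \ge 0$, so $\wh(e) \le w(e)$ in all cases.

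For the third bullet, the cases $e \in S$ and $e \notin \imp(S)$ already give $\wh(e) = 0$, so it suffices to handle branch (b), where $\wh(e) = w(e)$, and to show that there $d_i(e)$ is not a loop of $M^{\cA_i}$ for any $i$. Here $d_i(e) = \role_i(S,e)$ is an element $e_i \in (g^{\cA_i})^{-1}(e)$ added to $I_i$ by $\generalizedGreedy(S+e)$, and Algorithm~\ref{alg:generalized_greedy} performs such an addition only when $I_i + e_i \in \cI_i$; in particular $\{e_i\}$ is independent in $M^{\cA_i}$, so $e_i$ is not a loop. Contrapositively, if $d_i(e)$ is a loop of $M^{\cA_i}$ for some $i$, then $e \notin \imp(S)$, and hence $\wh(e) = 0$, as claimed.

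I do not expect any real obstacle here: the statement is an immediate consequence of how $\wh$ and the $d_i$ are defined by the algorithm, together with the independence test inside $\generalizedGreedy$ and the ``every element has at least one refinement'' clause of the reduce-and-solve definition. The only point that needs even a one-line argument rather than pure transcription of the algorithm's branches is the loop claim in the third bullet, handled above.
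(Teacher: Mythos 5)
Your proof is correct and matches the paper's treatment: the paper states this observation without proof as something that "can be immediately deduced" from Algorithm~\ref{alg:reduction_general}, and your branch-by-branch inspection (including the one substantive point, that $\role_i(S,e)$ is only ever an element passing the independence test $I_i + e_i \in \cI_i$ inside $\generalizedGreedy$ and hence cannot be a loop) is exactly the intended justification.
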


To prove Lemma~\ref{lem:intersection_OPT_general} it remains to show that 
\[\E\left[w'\mathopen{}\left(\bigcap_{i = 1}^k g^{\cA_i}(\OPT'_i)\right)\right] \geq \frac{w(R)}{4k(k + 1)} \enspace,\]
where the expectation is over the randomness of $S$, and $\OPT'_i$ is the optimal set in the matroid $M^{\cA_i}|_{d_i(N)}$ with respect to a weight function assigning to each element $e$ of $M^{\cA_i}|_{d_i(N)}$ the weight $w'(g^{\cA_i}(e))$. For that purpose we need to describe an equivalent way to get the functions defined by Algorithm~\ref{alg:reduction_general}. Consider the following extended ground set.
\[
	N' = \bigcup_{e \in N} N'(e)
	\enspace,
	\qquad \text{where} \quad
	N'(e) = \bigtimes_{i = 1}^k (g^{\cA_i})^{-1}(e)
	\enspace.
\]
For every element $e \in N'$ we define $\role_i(e)$ as the element of $N^{\cA_i}$ in $e$. We now extend each matroid $M^{\cA_i}$ to the ground set $N'$ by treating every element $e \in N'$ as equivalent to $\role_i(e)$. More formally, for every matroid $M^{\cA_i}$ we define $\hat{M}^{\cA_i}$ as a matroid over $N'$ such that a set $I \subseteq N'$ is independent in $\hat{M}^{\cA_i}$ if and only if the set $\{\role_i(e) \mid e \in I\}$ is independent in $M^{\cA_i}$ and, additionally, there are no two distinct elements $e_1, e_2 \in I$ such that $\role_i(e_1) = \role_i(e_2)$. Note that the second condition is a technical condition necessary for guaranteeing that $\hat{M}^{\cA_i}$ is indeed a matroid because it ensures that applying the $\role_i$ function to every element of an independent set of $\hat{M}^{\cA_i}$ maps the set to an independent set of $M^{\cA_i}$ of the same size. We also need an additional partition matroid $M^P$ over $N'$ in which a set $I \subseteq N'$ is independent if and only if $|I \cap N'(e)| \leq 1$ for every element $e \in N$.

We extend the weights function $w$ to $N'$ in the natural way, i.e., the weight $w(e')$ of an element $e' \in N(e)$ is defined by $w(e') = w(e)$. In the rest of this section, whenever we refer to the greedy algorithm or to greedy-relevant elements, we assume that the greedy algorithm is executed with respect to the weight function $w$ and the constraint defined by the intersection of the matroids $\hat{M}^{\cA_1}, \dotsc, \hat{M}^{\cA_k}$ and $M^P$. Additionally, we assume that the greedy algorithm uses a tie breaking rule which is consistent with the orders mentioned above within the refinements $(g^{\cA_i})^{-1}(e)$ of every element $e \in N$. More specifically, consider two distinct arbitrary elements $e_1$ and $e_2$ of $N'$ such that $\role_i(e_1)$ and $\role_i(e_2)$ are both refinements of the same element $e \in N$ for every $i = 1, \dotsc, k$. Then, we assume that the greedy algorithm breaks ties between $e_1$ and $e_2$ in favor of $e_1$ whenever it holds that, for every $i = 1, \dotsc, k$, $\role_i(e_1)$ is either equal to $\role_i(e_2)$ or appears before $\role_i(e_2)$. Given these assumptions, one can observe that Algorithm~\ref{alg:reduction_large_groundset} defines the same functions $w'$ and $d_i$ as Algorithm~\ref{alg:reduction_general}.

\begin{algorithm2e}[ht]
  \SetKwInOut{Input}{input}\SetKwInOut{Output}{output}
  \DontPrintSemicolon
  \Input{subset $S\subseteq N$}
  \Output{weight function $\wh$ and functions $d_i\colon N \to N^{\cA_i}$ for every $i = 1,\dotsc, k$}
  \BlankLine
  For each element $e \in S$, define $\wh(e) =0$ and let $d_i(e)$ be the first element of $(g^{\cA_i})^{-1}(e)$.\\
	Let $S' = \bigcup_{e \in S} N'(e)$.\\
  \For{each other element $e\in N\setminus S$}
	{
		\If{$N'(e) \cap \imp(S') \neq \varnothing$}{
			Let $e'$ be the first element in the intersection $N'(e) \cap \imp(S')$.\\
			Define $\wh(e) = w(e)$, and let $d_i(e) = \role_i(e')$ for every $i = 1, \dotsc, k$.
		}
		\Else
		{
			Define $\wh(e) =0$, and let $d_i(e)$ be the first element of $(g^{\cA_i})^{-1}(e)$.
		}
  }
\caption{GeneralizedOverlappingOPT$(S)$}\label{alg:reduction_large_groundset}
\end{algorithm2e}

Recalling that $S$ contains every element of $N$ with probability $p$, one can observe that Algorithm~\ref{alg:simp_general} is an offline algorithm which constructs sets $G$ and $W$ with a joint distribution identical to the random variables $G=\greedy(\bigcup_{e \in S} N'(e))$ and $W=\bigcap_{j=1}^k g^{\cA_i}(\OPTh_j)$. This algorithm uses $\hat{\cI}_i$ to denote the collection of independent sets of $\hat{M}^{\cA_i}$.
   \begin{algorithm2e}[ht]
        \DontPrintSemicolon
        \SetKwFor{Prob}{With probability}{}{endprob}
        \SetKwFor{Otherwise}{Otherwise}{}{endother}
        $G= \varnothing$, $W = \varnothing$ and $\OPTh_j = \varnothing$ \quad $\forall j\in [k]$\\
								\For{every element $e \in N$ in a decreasing $w$ weight order}{
          \If{there is an element $e' \in N(e)$ such that $G+e' \in \bigcap_{i = 1}^k \hat{\cI}_i$}{
						Let $e'$ be the first such element.\\
            \Prob{$p$:}{
              $G= G+e'$\;
            }
            \Otherwise{ (with remaining probability $1-p$):}{
              \For{$i=1, \ldots, k$} { 
                \lIf{$\OPTh_i + \role_i(e') \in \cI_i$}{$\OPTh_i = \OPTh_i + \role_i(e')$}
              } 
              \lIf{$e \in \cap_{i=1}^k g^{\cA_i}(\OPTh_i) $}{$W= W+e$}
            }
            
          }
        }
      \caption{$\simp(p)$}\label{alg:simp_general}
   \end{algorithm2e}

At this point we observe that Algorithm~\ref{alg:simp_general} is very similar to Algorithm~\ref{alg:simp}. In particular, it is similar enough for the proof from Section~\ref{sec:makeOptsOverlap} of the inequality
\begin{equation} \label{eq:inherited}
	\E[w'(W)]
	\geq
	\frac{\E[w(G)]}{2(2k-1)}
\end{equation}
to carry over, with only technical modifications, also to Algorithm~\ref{alg:simp_general}. Additionally, let $R'$ be a subset of $N'$ obtained by picking for every element $e \in R$ an element $e' \in N'(e)$ such that $\role_i(e') \in R^{\cA_i}(\OPT)$ for every $i = 1,\dotsc, k$. Observe that $w(R') = w(R)$, and $R'$ is independent in $\hat{M}^{\cA_i}$ for every $i = 1,...,k$ since $R^{\cA_i}(\OPT)$ is independent in $M^{\cA_i}$ ($R'$ is also independent in $M^P$ by construction). Moreover, since $G$ has the same distribution as the output of the greedy algorithm on $\bigcup_{e \in S} N'(e)$, and the greedy algorithm is a $(k + 1)$-approximation algorithm when its constraint is the intersection of $k + 1$ matroids, we get
\[
	\E[w(G)]
	\geq
	\E\left[\frac{w(R' \cap \bigcup_{e \in S} N'(e))}{k + 1}\right]
	=
	\frac{\E[w(R' \cap \bigcup_{e \in S} N'(e))]}{k + 1}
	=
	\frac{p \cdot w(R')}{k + 1}
	=
	\frac{(2k - 1) \cdot w(R)}{2k(k + 1)}
	\enspace.
\]
where the expectation is over the choice of $S$. Combining this inequality with Inequality~\eqref{eq:inherited}, we get
\[
	\E\left[w'\mathopen{}\left(\bigcap_{i = 1}^k g^{\cA_i}(\OPT'_i)\right)\right]
	=
	\E[w'(W)]
	\geq
	\frac{\E[w(G)]}{2(2k-1)}
	\geq
	\frac{\E[w(R)]}{4k(k + 1)}
	\enspace,
\]
which completes the proof of Lemma~\ref{lem:intersection_OPT_general}.

\bibliographystyle{plain}
\bibliography{lit}

\appendix

\section{Algorithms for specific classes of matroids}\label{sec:algForSpecMats}

In this appendix we describe algorithms for {\MSP} on various specific classes of matroids. All the algorithms we give here can be fitted into our framework. Moreover, the competitive ratio of the algorithm we describe for every class matches up to a (small) constant the state of the art competitive ratio for {\MSP} on this class of matroids.

\subsection{Partition matroids}

In a (generalized) partition matroid the ground set $N$ is the disjoint union of multiple partitions $N_1, \dotsc, N_\rho$, and each partition is associated with a positive integer parameter $\ell_1, \dotsc, \ell_\rho$. A set $I$ is independent in such a matroid if and only if $|I \cap N_i| \leq \ell_i$ for every $i = 1, \dotsc, \rho$. A particularly interesting class of partition matroids is the class of simple partition matroids, in which the parameters $\ell_1, \dotsc, \ell_\rho$ are all equal to $1$ (i.e., an independent set may include at most one element from each partition of the ground set). Azar et al.~\cite{azar_2014_prophet} give an order-oblivious $4$-competitive secretary algorithm on rank $1$ matroids, and this algorithm can be easily extended to simple partition matroids since the selection of an element from each partition of a simple partition matroid can be done independently. Nevertheless, for the sake of completeness we give here explicitly, as Algorithm~\ref{alg:partition}, the algorithm obtained for simple partition matroids. This algorithm has a parameter $p$, which is a probability to be determined later. Additionally, the algorithm uses the notation $N(e)$ to denote the partition of element $e$.

\begin{algorithm2e}
\caption{Order-Oblivious {\MSP} Algorithm on Simple Partition Matroids} \label{alg:partition}
Pick a random value $m$ from the binomial distribution $B(|N|, p)$.\\
Request a sample $S$ of size $m$.\\
Let $I \gets \varnothing$.\\
\For{every arriving non-sample element $e$}{
	Let $t_e$ be the maximum weight element in $S \cap N(e)$.\\
	\If{$I \cap N(e) = \varnothing$ and ($S \cap N(e) = \varnothing$ or $w(e) > w(t_e)$)}{
		Add $e$ to $I$.
	}
}
\Return{$I$}.
\end{algorithm2e}

One can observe that Algorithm~\ref{alg:partition} is an order-oblivious algorithm which always outputs a solution set $I$ which is independent in the simple partition matroid, and moreover, it can be implemented using only independence oracle queries on sets of elements that already arrived. Thus, we concentrate here on analyzing the competitive properties of Algorithm~\ref{alg:partition}.

\begin{theorem} \label{thm:partition_algorithm}
Algorithm~\ref{alg:partition} is a $p(1 - p)$-$\optc$-competitive order-oblivious algorithm for {\MSP} on simple partition matroids. Hence, for $p = \nicefrac{1}{2}$ it is $\nicefrac{1}{4}$-$\optc$-competitive.
\end{theorem}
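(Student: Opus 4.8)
The plan is to fix an element $e \in \OPT$ and lower bound the probability that Algorithm~\ref{alg:partition} selects $e$, after which averaging over the sample and taking a worst-case (adversarial) arrival order will give the claimed $\optc$-competitiveness. Since the simple partition matroid decomposes over the partitions $N_1, \dotsc, N_\rho$, and the algorithm treats each partition completely independently (it only ever compares $e$ with elements of $S \cap N(e)$ and checks whether it has already picked something from $N(e)$), I would first observe that the analysis reduces to a single partition. So fix the partition $N_i = N(e)$ containing $e$; because $\OPT$ is the unique optimum of the simple partition matroid, $e$ must be the maximum-weight element of $N_i$.

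The key event is the following: $e$ is \emph{not} placed in the sample $S$, and the maximum-weight element of $N_i$ that \emph{is} placed in $S$ has weight smaller than every element of $N_i \setminus S$ that arrives before $e$ in the selection phase. If this event holds, then when $e$ arrives the threshold $t_e$ (if it exists at all) satisfies $w(e) > w(t_e)$ since $e$ is the global maximum of $N_i$, and no earlier non-sample element of $N_i$ could have been selected (each such element failed its own threshold test, because the sampled maximum beats it), so $I \cap N(e) = \varnothing$ still holds and $e$ gets added. The cleanest way to argue the probability here is the standard ``second-largest in sample'' trick: order the elements of $N_i$ by decreasing weight as $f_1 = e, f_2, f_3, \dotsc$; then $e$ is selected with certainty whenever $e \notin S$ and $f_2 \in S$ (this forces the sampled maximum of $N_i$ to be $f_2$, which then blocks every non-sample element $f_3, f_4, \dotsc$ that could arrive before $e$). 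The events $\{e \notin S\}$ and $\{f_2 \in S\}$ are independent (if $|N_i| \geq 2$) with probabilities $1-p$ and $p$ respectively, giving $\Pr[e \in I] \geq p(1-p)$; and this bound holds for \emph{every} adversarial arrival order $\sigma$, because the argument never used the order. If $|N_i| = 1$, then $e$ is the only element of its partition, it is sampled with probability $p$, and otherwise selected with certainty, giving $\Pr[e \in I] = 1 - p \geq p(1-p)$.

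Putting this together: for every $e \in \OPT$ and every adversarial order $\sigma$ we have $\Pr_r[e \in \ALG(S,r,\sigma)] \geq p(1-p) \cdot \mathbf{1}[e \notin S, f_2 \in S]$ pathwise on the sample, hence $\E_S[\min_\sigma \Pr_r[e \in \ALG(S,r,\sigma)]] \geq \E_S[p \cdot \mathbf{1}[e \notin S] \cdot \mathbf{1}[f_2 \in S]] = p(1-p)$, where I used that the sample includes each element independently with probability $p$ (Algorithm~\ref{alg:partition} draws $m \sim B(|N|,p)$, which is exactly the independent-inclusion model noted in Section~\ref{sec:preliminaries}). This is the definition of $p(1-p)$-$\optc$-competitiveness; plugging in $p = \nicefrac12$ yields the $\nicefrac14$-$\optc$-competitive claim. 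The only mild subtlety to get right is the adversarial arrival order: I would make explicit that the event ``$f_2 \in S$ and $e \notin S$'' guarantees selection of $e$ no matter where $e$ and the other non-sample elements of $N_i$ sit in $\sigma$, which is exactly why $\min_\sigma$ can be pulled inside harmlessly — that robustness to the order is the whole point of the sampled-threshold rule.
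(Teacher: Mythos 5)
Your proof is correct and follows essentially the same route as the paper: both identify the event that $e\notin S$ while the second-heaviest element of $N(e)$ lands in $S$ (so the sampled threshold blocks every other element of the partition regardless of arrival order), compute its probability as $p(1-p)$ by independence, and handle the singleton-partition case separately via $1-p\geq p(1-p)$. The only blemish is a stray factor in your final display (the indicator $\mathbf{1}[e\notin S,\,f_2\in S]$ already has expectation $p(1-p)$, so the extra multiplicative $p$ you carry along is a typo, not a gap).
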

\begin{proof}
We begin the proof by noting that the way $S$ is defined as a uniformly random subset of $N$ of size $m$, where $m$ is distributed according to the binomial distribution $B(|N|, p)$, implies that $S$ contains every element of $N$ with probability $p$, independently.

Consider now an arbitrary element $e^* \in \OPT$, where $\OPT$ is the maximum weight independent set of the simple partition matroid. Note that the weight of $e^*$ is the largest weight of any element in $N(e^*)$ because otherwise we could improve $\OPT$ by replacing $e^*$ with a heavier element of its partition. There are now two cases to consider. We begin with the case that $N(e^*)$ contains other elements beside $e^*$, and let $e_2$ be the heaviest element in $N(e^*) - e^*$. Observe that in this case Algorithm~\ref{alg:partition} is guaranteed to take $e^*$ when $e_2 \in S$ and $e^* \not \in S$ because the presence of $e_2$ in $S$ prevents any element of $N(e^*)$ other than $e^*$ itself from being selected by the algorithm. Moreover, this happens with probability $p(1 - p)$ by the above discussion regarding the distribution of $S$. It remains to consider the case that $N(e^*)$ contains only $e^*$. In this case Algorithm~\ref{alg:partition} selects $e^*$ whenever $e^* \not \in S$, which happens with probability $1 - p \geq p(1 - p)$.
\end{proof}

The above theorem implies, by the discussion in Section~\ref{sec:genToMatInt}, the following corollary.

\begin{corollary}
For $p = \nicefrac{1}{2}$, Algorithm~\ref{alg:partition} is a $(\nicefrac{1}{4}, 0)$-weakly-$\optc$-competitive order-oblivious algorithm for {\MSP} on simple partition matroids. Moreover, it implies a $(1, \nicefrac{1}{4}, 0)$-reduce-and-solve algorithm for this problem.
\end{corollary}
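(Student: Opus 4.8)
The plan is to deduce the corollary from Theorem~\ref{thm:partition_algorithm} by invoking the two generic reductions recorded in the discussion surrounding Definition~\ref{def:weakly_opt_competitive}: namely, that every $c$-$\optc$-competitive algorithm is $(c,0)$-weakly-$\optc$-competitive, and that every $(c^o,c^a)$-weakly-$\optc$-competitive algorithm can be cast as a $(1,c^o,c^a)$-reduce-and-solve algorithm. Since Theorem~\ref{thm:partition_algorithm} already gives that Algorithm~\ref{alg:partition} is order-oblivious and $\nicefrac14$-$\optc$-competitive for $p=\nicefrac12$, it only remains to make these two reductions explicit in our setting.

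For the first reduction I would take $O^{\cA}(S,r,\sigma)$ to be $\OPT$ intersected with the set of elements that Algorithm~\ref{alg:partition} selects on sample $S$, random bits $r$, and arrival order $\sigma$, so that $O^{\cA} \subseteq \OPT$ as required. The first displayed inequality of Definition~\ref{def:weakly_opt_competitive} is then immediate: for $e\in\OPT$ (in particular for $e\in\OPT\setminus L(\OPT)$) the event $e\in O^{\cA}$ coincides with the event that $e$ is selected, and $\E_S[\min_\sigma \Pr_r[e \text{ selected}]]\geq\nicefrac14$ by $\optc$-competitiveness. The second displayed inequality, with $c^a=0$, reduces to $w((T^{\cA}_1\cup T^{\cA}_2)\cap O^{\cA})\geq w(O^{\cA})$, i.e.\ to $O^{\cA}\subseteq T^{\cA}_1\cup T^{\cA}_2$, which holds because the switching adversary places every selected element into exactly one of $T^{\cA}_1,T^{\cA}_2$. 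The one point I expect to require genuine care --- and which is really the only content of the argument --- is to check that the presence of the switching adversary does not change the run of Algorithm~\ref{alg:partition}, so that the $\optc$-competitiveness proved in the absence of such an adversary still applies: since the algorithm keeps at most one element of each partition $N(e)$ in its solution $I$ and $T^{\cA}_1\subseteq I$, the set $T^{\cA}_1$ is automatically independent, hence the extra restriction ``do not select $e$ if $T^{\cA}_1+e$ is dependent'' never binds, and the algorithm's selections depend only on $S,r,\sigma$ exactly as before.

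For the second reduction I would instantiate the generic construction: set $N^{\cA}=N$ with $g^{\cA}$ the identity (so $(g^{\cA})^{-1}(e)=\{e\}$ for every $e$), set $M^{\cA}=M$ with $R^{\cA}(S)=S$ (which gives $\Pr[R^{\cA}(S)\cap(g^{\cA})^{-1}(e)\neq\varnothing]=1=c^r$ for all $e\in S$), and take $\bar{\cA}$ to be Algorithm~\ref{alg:partition} itself, which is $(\nicefrac14,0)$-weakly-$\optc$-competitive by the previous paragraph and whose set $T^{\bar{\cA}}_1$ is independent in $M=M^{\cA}$, so that $g^{\cA}(T^{\bar{\cA}}_1)$ is trivially independent in $M$. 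The oracle-access requirements of the reduce-and-solve definition are all inherited verbatim from the fact, already noted for Algorithm~\ref{alg:partition}, that it can be implemented using only independence-oracle queries on sets of elements that have already arrived. This exhibits the desired $(1,\nicefrac14,0)$-reduce-and-solve algorithm and completes the proof.
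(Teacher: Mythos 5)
Your proof is correct and follows exactly the route the paper intends: the corollary is stated as an immediate consequence of Theorem~\ref{thm:partition_algorithm} via the two generic reductions recorded in Section~\ref{sec:genToMatInt} ($c$-$\optc$-competitive $\Rightarrow$ $(c,0)$-weakly-$\optc$-competitive, and weakly-$\optc$-competitive $\Rightarrow$ $(1,c^o,c^a)$-reduce-and-solve with the identity refinement). Your additional observation that the switching adversary's constraint on $T^{\cA}_1$ never binds --- because $T^{\cA}_1\subseteq I$ and $I$ is kept independent in the (downward-closed) partition matroid --- is a detail the paper leaves implicit, and it is handled correctly.
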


We now shift our attention to the more general class of (generalized) partition matroids. The {\MSP} algorithm we give for this class is Algorithm~\ref{alg:generalized_partition}. This algorithm denotes by $N(e)$ the partition of element $e$, and by $\ell(e)$ the parameter associated with this partition.

\begin{algorithm2e}
\caption{Order-Oblivious {\MSP} Algorithm on Partition Matroids} \label{alg:generalized_partition}
Pick a random value $m$ from the binomial distribution $B(|N|, \nicefrac{1}{2})$.\\
Request a sample $S$ of size $m$.\\
Let $I \gets \varnothing$.\\
\For{every arriving non-sample element $e$}{
	Let $t_e$ be the element of $S \cap N(e)$ having the $\ell(e)$-th largest weight in this set.\\
	\If{$|I \cap N(e)| < \ell(e)$ and ($|S \cap N(e)| < \ell(e)$ or $w(e) > w(t_e)$) \label{line:if}}{
		Add $e$ to $I$.
	}
}
\Return{$I$}.
\end{algorithm2e}

Once again, one can observe that Algorithm~\ref{alg:partition} is an order-oblivious algorithm which always outputs a solution set $I$ which is independent in the partition matroid. The following observation shows that, additionally, Algorithm~\ref{alg:partition} can be implemented using only independence oracle queries on sets of elements that already arrived.
\begin{observation}
Algorithm~\ref{alg:generalized_partition} can be implemented using only independence oracle queries on sets of elements that already arrived.
\end{observation}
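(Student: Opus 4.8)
The plan is to rewrite each of the two sub-conditions tested in Line~\ref{line:if} of Algorithm~\ref{alg:generalized_partition} as an independence-oracle query involving only elements that have already arrived; as a byproduct this shows that the algorithm never needs to know the partition structure, nor the integers $\ell(e)$, explicitly.

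First I would deal with the condition $|I \cap N(e)| < \ell(e)$. Throughout the execution $I$ is an independent set of the partition matroid, so $|I \cap N_i| \le \ell_i$ for every part $N_i$; moreover, since the arriving element $e$ is not in $I$, inserting $e$ into $I$ affects only the count inside the part $N(e)$. Hence $|I \cap N(e)| < \ell(e)$ holds if and only if $I + e$ is still independent, which is an independence query on $I + e$ --- a set consisting of previously selected elements together with the current element $e$, all of which have arrived.

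Next I would handle the disjunction ``$|S \cap N(e)| < \ell(e)$ or $w(e) > w(t_e)$''. Let $A = \{f \in S : w(f) > w(e)\}$ be the set of sampled elements heavier than $e$ (recall that we assume all weights to be distinct); every element of $A$ was observed during the sampling phase. The key claim is that, regardless of the size of $S \cap N(e)$, this disjunction is equivalent to $|A \cap N(e)| < \ell(e)$: when $|S \cap N(e)| < \ell(e)$ this is immediate since $A \subseteq S$ (and the element $t_e$ is not even defined in this case, so only the first disjunct can be invoked); and when $|S \cap N(e)| \ge \ell(e)$, the inequality $w(e) > w(t_e)$ says precisely that at most $\ell(e) - 1$ elements of $S \cap N(e)$ are heavier than $e$, i.e., $|A \cap N(e)| < \ell(e)$. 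Using the standard description of the span of a set in a partition matroid, $\spn(A) = \bigcup_{i : |A \cap N_i| \ge \ell_i} N_i$, we obtain that $|A \cap N(e)| < \ell(e)$ if and only if $e \notin \spn(A)$. Finally, computing greedily --- with independence queries on subsets of $A$ only --- a maximal independent subset $B \subseteq A$, we have $\spn(B) = \spn(A)$, and therefore $e \notin \spn(A)$ if and only if $B + e$ is independent; this is again an independence query on a subset of $A + e$, all of whose elements have arrived.

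Combining the two parts, every arriving element $e$ is processed using only independence-oracle queries on subsets of $I + e$ and of $A + e$, which proves the observation. The step that requires the most care is the equivalence in the third paragraph: one must check that the two-case test involving the (sometimes undefined) element $t_e$ collapses cleanly to the single counting inequality $|A \cap N(e)| < \ell(e)$, and then translate that counting statement into the span language that an independence oracle can decide. The remaining manipulations are routine.
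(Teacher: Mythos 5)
Your proof is correct, and it handles the second condition by a genuinely different route than the paper. For the first condition ($|I \cap N(e)| < \ell(e)$ iff $I+e$ is independent) the two arguments coincide. For the disjunction in Line~\ref{line:if}, the paper computes $\OPT(S)$, the maximum weight independent subset of $S$, via the greedy algorithm, and then recovers $t_e$ \emph{explicitly}: if $\OPT(S)+e$ is independent then $|S\cap N(e)|<\ell(e)$, and otherwise $t_e$ is identified as the lightest element other than $e$ whose removal from $\OPT(S)+e$ restores independence. You instead avoid ever locating $t_e$: you collapse the two-case test into the single counting condition $|A\cap N(e)|<\ell(e)$ with $A=\{f\in S: w(f)>w(e)\}$, and decide it with a rank comparison ($B+e$ independent for a maximal independent $B\subseteq A$). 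Your case analysis establishing the equivalence is sound (in the case $|S\cap N(e)|\ge \ell(e)$, the inequality $w(e)>w(t_e)$ is indeed equivalent, under distinct weights, to fewer than $\ell(e)$ elements of $S\cap N(e)$ being heavier than $e$), and since the tested element $e$ lies outside $S\supseteq A$, the span characterization you invoke is applied only where it is valid. What your approach buys is uniform treatment of the degenerate case where $t_e$ is undefined and no need to reconstruct a specific element of the sample; what the paper's approach buys is an explicit handle on $t_e$ via a single greedy computation on $S$, reusable across all arriving elements. Both use only independence queries on elements that have already arrived, so either argument proves the observation.
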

\begin{proof}
Let us consider the arrival of an arbitrary element $e$ during the second (non-sample) phase of Algorithm~\ref{alg:generalized_partition}. This arrival requires the algorithm to check a few conditions. One of these conditions is whether $|I \cap N(e)| < \ell(e)$, which can be checked by simply checking whether $I + e$ is independent since $I$ itself is always independent. Implementing the other conditions is somewhat more involved. Let $\OPT(S)$ be the maximum weight independent set in $S$. Notice that $\OPT(S)$ can be found using the greedy algorithm. If $\OPT(S) + e$ is independent, then we get that $|S \cap N(e)| < \ell(e)$, which allows us to implement the other conditions of the algorithm (notice that $t_e$ is not well-defined in this case, which is fine because in this case we do not need it for evaluating the condition of the if statement on Line~\ref{line:if}). Consider now the case that $\OPT(S) + e$ is not independent. In this case we have $|S \cap N(e)| \geq \ell(e)$. Moreover, $t_e$ is the lightest element of $\OPT(S) + e$, other than $e$ itself, whose removal restores the independence of $\OPT(S) + e$.
\end{proof}

It remains to analyze the competitive properties of Algorithm~\ref{alg:generalized_partition}.

\begin{theorem}
Algorithm~\ref{alg:generalized_partition} is a $\nicefrac{1}{4}$-$\optc$-competitive order-oblivious algorithm for {\MSP} on partition matroids.
\end{theorem}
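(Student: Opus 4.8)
The plan is to follow the structure of the proof of Theorem~\ref{thm:partition_algorithm}, replacing the simple ``good event'' used there by one adapted to the parameter $\ell(e^*)$. First I would note that, since $m\sim B(|N|,\nicefrac{1}{2})$, the sample $S$ contains each element of $N$ independently with probability $\nicefrac{1}{2}$; and since Algorithm~\ref{alg:generalized_partition} is deterministic once $S$ is fixed, establishing $\nicefrac{1}{4}$-$\optc$-competitiveness is equivalent to showing that, for every fixed $e^*\in\OPT$, the probability over $S$ that $e^*$ is selected under \emph{every} adversarial arrival order of $N\setminus S$ is at least $\nicefrac{1}{4}$.

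Fix $e^*\in\OPT$, write $\ell=\ell(e^*)$ and $r=|N(e^*)|$, and list the elements of $N(e^*)$ in decreasing weight order as $f_1,\dots,f_r$. Since $\OPT\cap N(e^*)$ consists of the $\min(\ell,r)$ heaviest elements of $N(e^*)$, we have $e^*=f_j$ with $j\le\min(\ell,r)$, so at most $j-1<\ell$ elements of $N(e^*)$ are heavier than $e^*$. From this I would deduce that, whenever $e^*\notin S$, the threshold condition on Line~\ref{line:if} (``$|S\cap N(e^*)|<\ell$ or $w(e^*)>w(t_{e^*})$'') holds for $e^*$, so $e^*$ is selected if and only if $e^*\notin S$ and $|I\cap N(e^*)|<\ell$ at the moment $e^*$ arrives. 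Call an element of $N(e^*)\setminus S$ \emph{live} if it satisfies Line~\ref{line:if}'s threshold condition; this is a property of $S$ alone, not of the arrival order, and only live elements can ever enter $I\cap N(e^*)$. Hence the worst arrival order places all live elements other than $e^*$ before $e^*$, and I would conclude that $e^*$ is selected under every order precisely when $e^*\notin S$ and the total number of live elements (which includes $e^*$ in this case) is at most $\ell$.

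The main work is then to lower bound the probability of this event. I would show that the number of live elements is determined by $S$: if $|S\cap N(e^*)|\ge\ell$ and $f_m$ denotes the $\ell$-th heaviest element of $S\cap N(e^*)$, then exactly $\ell-1$ of $f_1,\dots,f_{m-1}$ lie in $S$, so the live elements are precisely the $m-\ell$ elements of $\{f_1,\dots,f_{m-1}\}\setminus S$; and if $|S\cap N(e^*)|<\ell$, then every element of $N(e^*)\setminus S$ is live. Setting $q=\min(2\ell,r)$ and $Q=\{f_1,\dots,f_q\}$, both cases are subsumed by the sufficient condition: if $e^*\notin S$ and $|S\cap Q|\ge q-\ell$, then $e^*$ is selected under every order. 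Since $e^*\in Q$, conditioning on $e^*\notin S$ makes $|S\cap Q|$ distributed as $B(q-1,\nicefrac{1}{2})$, so this event has probability $\nicefrac{1}{2}\cdot\Pr[B(q-1,\nicefrac{1}{2})\ge q-\ell]$; since $q-\ell\le (q-1)/2$ whenever $q\le 2\ell-1$, and $\Pr[B(2\ell-1,\nicefrac{1}{2})\ge\ell]=\nicefrac{1}{2}$ by the symmetry of the binomial distribution when $q=2\ell$, this probability is always at least $\nicefrac{1}{4}$ (and when $r\le\ell$ one has $q-\ell\le 0$, so the sufficient event is just $e^*\notin S$, of probability $\nicefrac{1}{2}$).

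The step I expect to be the main obstacle is identifying the right sufficient event: the obvious candidate, ``all of $\OPT\cap N(e^*)$ except $e^*$ is sampled,'' has probability only $2^{-\ell}$, so one must instead exploit that it suffices to keep fewer than $\ell$ competitors above the sampled threshold (rather than blocking every competitor) together with the fact that, with probability $\nicefrac{1}{2}$, the $\ell$-th order statistic of a $B(\cdot,\nicefrac{1}{2})$-sample from $N(e^*)$ lies among the $2\ell$ heaviest elements. Making the choice $q=\min(2\ell,r)$ and its accompanying case split ($|S\cap N(e^*)|\ge\ell$ versus $|S\cap N(e^*)|<\ell$, and $r>\ell$ versus $r\le\ell$) line up cleanly is the delicate part; the rest is routine bookkeeping about partition matroids and the binomial distribution.
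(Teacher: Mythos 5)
Your proof is correct, and its first half coincides with the paper's: you isolate the event $e^*\notin S$ (probability $\nicefrac{1}{2}$), observe that $e^*\notin S$ already forces the threshold test of Line~\ref{line:if} to pass for $e^*$ (because at most $\ell(e^*)-1$ elements of $N(e^*)$ outweigh $e^*$, so $w(e^*)>w(t_{e^*})$ whenever $t_{e^*}$ exists), and reduce survival under every adversarial order to the condition that at most $\ell(e^*)$ ``live'' elements exist --- which is exactly the paper's event $\cE_2$ in different clothing. Where you diverge is in proving that this conditional event has probability at least $\nicefrac{1}{2}$. The paper uses a complementation involution on $S-e^*$ over the \emph{entire} ground set: it argues that if $\cE_2$ fails for $S-e^*=A$ then it holds for $S-e^*=(N-e^*)\setminus A$, so at least half of all equally likely samples are good; this requires a two-case analysis of how $t_{e^*}$ changes under complementation but no counting. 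You instead restrict attention to the prefix $Q$ of the $q=\min(2\ell,r)$ heaviest elements of $N(e^*)$, show that $|S\cap Q|\geq q-\ell$ is a sufficient condition (via the nice observation that the live elements are exactly the unsampled elements above the $\ell$-th order statistic of $S\cap N(e^*)$), and finish with the median of $B(q-1,\nicefrac{1}{2})$. Both arguments are symmetry arguments at heart, but yours requires identifying the right prefix and a small case split on $q$ versus $r$ and $2\ell$, while the paper's avoids any explicit counting at the cost of a slightly more delicate ``failure for $A$ implies success for $A^c$'' verification. Your route has the minor advantage of exhibiting an explicit, monotone sufficient event (useful if one wanted FKG-style correlation statements later); the paper's is shorter once the involution is believed. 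Both yield the same $\nicefrac{1}{4}$ bound.
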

\begin{proof}
Like in the proof of Theorem~\ref{thm:partition_algorithm}, we have here also that $S$ contains every element of $N$ with probability $\nicefrac{1}{2}$, independently. Consider now an arbitrary element $e^* \in \OPT$, where $\OPT$ is the maximum weight independent set of the partition matroid. Note that $\OPT \cap N(e^*)$ contains the $\ell(e^*)$ elements of $N(e^*)$ with the largest weights (unless there are less than $\ell(e^*)$ elements in $N(e^*)$ of non-zero weight, in which case $\OPT \cap N(e^*)$ simply consists of all these elements) because otherwise we could improve $\OPT$ by replacing one of its elements in $N(e^*)$ with a heavier element from this partition.

Let us define now two events. The event $\cE_1$ is the event that $e^* \not \in S$, and the event $\cE_2$ is the event that one of the following happens:
\begin{itemize}
	\item $w(e^*) > w(t_{e^*})$, and there are less than $\ell(e^*)$ elements in $(N(e^*) - e^*) \setminus S$ whose weight is larger than $w(t_{e^*})$.
	\item $t_{e^*}$ is not defined since $|S \cap N(e^*)| < \ell(e^*)$, and $|(N(e^*) - e^*) \setminus S| < \ell(e^*)$.
\end{itemize}
One can observe that the events $\cE_1$ and $\cE_2$ imply together that $e^*$ is selected by Algorithm~\ref{alg:generalized_partition}. Thus, it is enough to prove that $\Pr[\cE_1] \cdot \Pr[\cE_2 \mid \cE_1] \geq 1/4$. Moreover, since it clearly holds that $\Pr[\cE_1] = 1/2$, it remains to show that $\Pr[\cE_2 \mid \cE_1] \geq 1/2$.

We prove the last inequality by showing that, conditioned on $\cE_1$, if $\cE_2$ does not happen when $S - e^*$ is equal to some set $A$, then it must happen when $S - e^* = (N - e^*) \setminus A$; which implies the inequality since it implies that, given $\cE_1$, $\cE_2$ holds for at least one half of the possible values for $S - e^*$ (and each of these values happens with equal probability). Hence, in the rest of the proof we assume that $\cE_1$ happens, and concentrate on a set $A$ such that $\cE_2$ does not happen when $S - e^* = A$. We first observe that whenever $t_{e^*}$ is well-defined (i.e., $|S \cap N(e^*)| \geq \ell(e^*)$), then the fact that $\cE_1$ happened (and thus $e^*$---which is an element of $\OPT \cap N(e^*)$---does not belong to $S$) implies that $t_{e^*}$ is not an element of $\OPT$ since it is defined as the element with the $\ell(e^*)$-th largest weight in $S \cap N(e^*)$. In particular, we get that the inequality $w(e^*) > w(t_{e^*})$ holds whenever $t_{e^*}$ is well defined. We now need to consider two cases.
\begin{itemize}
	\item If $|A \cap N(e^*)| < \ell(e^*)$, then the fact that $\cE_2$ did not happen implies that $|(N - e^*) \setminus A| \geq \ell(e^*)$. This means that when $S - e^* = (N - e^*) \setminus A$ then $t_{e^*}$ is well defined, and thus, $w(e^*) \geq w(t_{e^*})$ Additionally, we have in this case $(N(e^*) - e^*) \setminus S = (N(e^*) - e^*) \setminus [(N - e^*) \setminus A] = A \cap N(e^*)$, which implies that the total number of elements in $(N(e^*) - e^*) \setminus S$ is less than $\ell(e^*)$, and thus, the number of elements in this set whose weight is larger than $w(t_{e^*})$ must also be less than $\ell(e^*)$. Combining all the above, we get that $\cE_2$ happens when $S - e^* = (N - e^*) \setminus A$.
	\item It remains to consider the case that $|A \cap N(e^*)| \geq \ell(e^*)$. In this case $t_{e^*}$ is well defined when $S - e^* = A$, and thus, we have $w(e^*) > w(t_{e^*})$ by the above discussion. Since $\cE_2$ does not happen when $S - e^* = A$, this implies that $(N(e^*) - e^*) \setminus A$ contains at least $\ell(e^*)$ elements whose weight is larger than $w(t_1)$, where $t_1$ is the element which becomes $t_{e^*}$ when $S - e^* = A$. If we now denote by $t_2$ the element that becomes $t_{e^*}$ when $S = (N - e^*) \setminus A$, then the above observation implies $w(t_2) > w(t_1)$. Moreover, by definition, there are less than $\ell(e^*)$ elements in $A \cap N(e^*)$ whose weight is larger than $w(t_1)$, and thus, when $S = (N - e^*) \setminus A$ there are less than $\ell(e^*)$ elements in $(N(e^*) - e^*) \setminus S = A \cap N(e^*)$ whose weight is larger than $w(t_{e^*}) = w(t_2) > w(t_1)$. Hence, we get that $\cE_2$ happens when $S - e^* = (N - e^*) \setminus A$. \qedhere
\end{itemize}
\end{proof}

The above theorem implies, by the discussion in Section~\ref{sec:genToMatInt}, the following corollary.

\begin{corollary}
Algorithm~\ref{alg:generalized_partition} is a $(\nicefrac{1}{4}, 0)$-weakly-$\optc$-competitive order-oblivious algorithm for {\MSP} on partition matroids. Moreover, it implies a $(1, \nicefrac{1}{4}, 0)$-reduce-and-solve algorithm for this problem.
\end{corollary}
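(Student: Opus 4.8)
The plan is to obtain this corollary directly from the theorem just proved---that Algorithm~\ref{alg:generalized_partition} is $\nicefrac{1}{4}$-$\optc$-competitive---by chaining the two generic conversions discussed in Section~\ref{sec:genToMatInt}: that a $c$-$\optc$-competitive algorithm is $(c,0)$-weakly-$\optc$-competitive, and that a $(c^o,c^a)$-weakly-$\optc$-competitive algorithm yields a $(1,c^o,c^a)$-reduce-and-solve algorithm. Both conversions are essentially bookkeeping, so I would just spell out the two or three points that need checking, instantiated with $c=\nicefrac{1}{4}$.

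For the first conversion, I would argue that running Algorithm~\ref{alg:generalized_partition} (or any standard {\MSP} algorithm) against a switching adversary leaves its behaviour unchanged: the only additional restriction the switching adversary imposes is that $T^\cA_1$ stay independent, but $T^\cA_1$ is always a subset of the set the algorithm has currently selected, which the algorithm already keeps independent, so the restriction is vacuous and the algorithm may ignore the adversary's labelling entirely. Hence, for every switching adversary we may take $O^\cA(S,r,\sigma)$ to be the set of $\OPT$-elements selected by the algorithm. The probability requirement of Definition~\ref{def:weakly_opt_competitive} then holds with $c^o=\nicefrac{1}{4}$ because, for every $e\in\OPT\setminus L(\OPT)\subseteq\OPT$, we have that $e\in O^\cA(S,r,\sigma)$ precisely when $e$ is selected, so the bound $\E_S[\min_\sigma\Pr_r[e\in O^\cA(S,r,\sigma)]]\ge\nicefrac{1}{4}$ is exactly the $\optc$-competitiveness guarantee; and the weight inequality holds with $c^a=0$ because the switching adversary assigns every selected element---in particular every element of $O^\cA$---to $T^\cA_1\cup T^\cA_2$, so $(T^\cA_1\cup T^\cA_2)\cap O^\cA=O^\cA$.

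For the second conversion, I would use the trivial reduce-and-solve encoding: $N^\cA=N$ with $g^\cA$ the identity (so $(g^\cA)^{-1}(e)=\{e\}$ for every $e$), $M^\cA=M$, and $\bar\cA$ equal to Algorithm~\ref{alg:generalized_partition}. Taking $R^\cA(S)=S$ for each independent set $S$ of $M$ witnesses $c^r=1$, since $S$ is independent in $M^\cA=M$ and meets $(g^\cA)^{-1}(e)=\{e\}$ for every $e\in S$. Because each element has a unique refinement, a subset of $N^\cA$ containing exactly one refinement of every source element is just $N$ itself, so the requirement on $\bar\cA$ reduces to the $(\nicefrac{1}{4},0)$-weak-$\optc$-competitiveness just established, and the extra condition that $g^\cA(T^{\bar\cA}_1)=T^{\bar\cA}_1$ be independent in $M$ is precisely the invariant guaranteed in the switching-adversary model.

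The only point that I expect to require a genuine (if short) justification, rather than pure unwinding of definitions, is the claim that a standard {\MSP} algorithm runs identically whether or not a switching adversary is present---so that its $\optc$-competitiveness bound carries over unchanged---which follows from the fact that $T^\cA_1$ is always contained in the algorithm's current solution; everything else is a matter of instantiating the generic constructions with $c=\nicefrac{1}{4}$.
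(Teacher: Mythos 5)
Your proposal is correct and follows exactly the route the paper intends: the corollary is obtained by instantiating, with $c=\nicefrac{1}{4}$, the two generic conversions already spelled out in Section~\ref{sec:genToMatInt} (a $c$-$\optc$-competitive algorithm is $(c,0)$-weakly-$\optc$-competitive via $O^\cA$ equal to the selected $\OPT$-elements, and any weakly-$\optc$-competitive algorithm is a $(1,c^o,c^a)$-reduce-and-solve algorithm via the identity refinement). Your added observation that the switching adversary's independence requirement on $T^\cA_1$ is vacuous because $T^\cA_1$ is a subset of the algorithm's (independent) solution is the right justification and matches the paper's intent.
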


\subsection{Graphic matroids}

A graphic matroid is defined by a graph $G = (V, N)$ whose edges form the ground set of the matroid. A set of edges is independent in the graphic matroid if and only if it is acyclic. Azar et al.~\cite{azar_2014_prophet} explain how to get a $\nicefrac{1}{8}$-competitive order-oblivious algorithm for the secretary problem on graphic matroids based on an algorithm of~\cite{korula_2009_algorithms}. Unfortunately, this algorithm is not known to be $\optc$-competitive (or even weakly-$\optc$-competitive) for any constant. However, it can be easily adapted to be a $(\nicefrac{1}{2}, \nicefrac{1}{4}, 0)$-reduce-and-solve algorithm. Here we present a different reduce-and-solve algorithm for this problem which has different values for the parameters. It is important to note that both the algorithms of~\cite{azar_2014_prophet,korula_2009_algorithms} and our algorithm assume more than independence oracle access to the graphic matroid. Specifically, they assume the ability to determine the endpoints for every given edge $e \in N$.

The reduce-and-solve algorithm $\cA$ we suggest for the graphic matroid consists of the following components:
\begin{itemize}
\item The ground set $N^\cA$ consists of two elements for every edge $e \in N$. These elements are denoted by $e^u$ and $e^v$, where $u$ and $v$ are the end points of $e$. Accordingly, the function $g^\cA \colon N^\cA \to N$ assigns $e$ to be the source element for both $e^u$ and $e^v$. Note that it is possible to evaluate $(g^\cA)^{-1}(e) = \{e^u, e^v\}$ given access to $e$ alone.
\item The matroid $M^\cA$ is a simple partition matroid over the ground set $N^\cA$ containing a single partition for every node of $G$. The partition of a node $u \in V$, which we denote by $N^\cA(u)$, contains the elements $\{e^u \mid e \in \delta(u)\}$---where $\delta(u)$ denotes the set of edges of $G$ hitting $u$.
\item The order-oblivious algorithm for {\MSP} on restrictions of $M^\cA$ is given by Algorithm~\ref{alg:graphic_partition}. This algorithm denotes by $N'$ the ground set of the restriction on which it is applied. Additionally, one can note that, aside from notational issues, this algorithm is identical to Algorithm~\ref{alg:partition} except for the change in Line~\ref{line:addition_graphic_partition} preventing an element $e^u$ from being added to $I$ if that addition will result in a cycle in $g^\cA(I)$.
\end{itemize}

\begin{algorithm2e}
\caption{Order-Oblivious {\MSP} Algorithm on Simple Partition Matroids Induced by Graphic Matroids} \label{alg:graphic_partition}
Pick a random value $m$ from the binomial distribution $B(|N'|, p)$.\\
Request a sample $S$ of size $m$.\\
Let $I \gets \varnothing$.\\
\For{every arriving non-sample element $e^u$}{
	Let $t_u$ be the maximum weight element in $S \cap N^\cA(u)$.\\
	\If{$I \cap N^\cA(u) = \varnothing$ and ($S \cap N^\cA(u) = \varnothing$ or $w(e^u) > w(t_u)$)}{
		Add $e^u$ to $I$ if that does not create a cycle in $g^\cA(I)$. \label{line:addition_graphic_partition}
	}
}
\Return{$I$}
\end{algorithm2e}

One can note that the output set $I$ of Algorithm~\ref{alg:graphic_partition} is independent in $M^\cA$, and moreover, $g^\cA(I)$ is independent in the original graphic matroid. Thus, $\cA$ is a valid reduce-and-solve algorithm for graphic matroids. It remains, however, to determine the parameters of this algorithm. The next lemma shows that $c^r = 1$ for this algorithm.

\begin{lemma}
For every independent set $B$ of the graphic matroid, there exists an independent set $R^\cA(B)$ of $M^\cA$ such that $R^\cA(B) \cap (g^\cA)^{-1}(e) \neq \varnothing$ for every edge $e \in B$.
\end{lemma}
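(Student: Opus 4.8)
The plan is to interpret the desired set $R^\cA(B)$ as an orientation of the forest $B$, using the classical fact that every forest admits an orientation in which every vertex has in-degree at most one. Concretely, since $B$ is independent in the graphic matroid it is a forest in $G = (V,N)$. For each connected component $T$ of $B$ I would fix an arbitrary root $r_T \in V$; this gives every edge $e = \{u,v\} \in T$ a well-defined \emph{child endpoint} — the one of $u,v$ that is farther from $r_T$ in $T$ (the two endpoints of a tree edge are always at consecutive distances from the root, so this is unambiguous). I then define $R^\cA(B)$ to contain, for every edge $e \in B$ with child endpoint $v$, exactly the element $e^v \in (g^\cA)^{-1}(e)$.

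First I would check the covering requirement: by construction $R^\cA(B)$ contains precisely one refinement $e^v$ of every edge $e \in B$, so $R^\cA(B) \cap (g^\cA)^{-1}(e) = \{e^v\} \neq \varnothing$ for all $e \in B$. Next I would verify that $R^\cA(B)$ is independent in $M^\cA$. Since $M^\cA$ is the simple partition matroid whose partitions are the sets $N^\cA(w) = \{e^w \mid e \in \delta(w)\}$ for $w \in V$, independence is equivalent to $|R^\cA(B) \cap N^\cA(w)| \leq 1$ for every $w \in V$. But an element $e^w$ can lie in $R^\cA(B)$ only if $w$ is the child endpoint of $e$ in its tree $T$, and a vertex $w$ is the child endpoint of at most one edge of $T$ — the edge joining $w$ to its parent, which does not exist when $w = r_T$. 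Hence $|R^\cA(B) \cap N^\cA(w)| \leq 1$ for every $w$, so $R^\cA(B)$ is independent in $M^\cA$, as required.

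I do not anticipate any real obstacle: the argument is the standard forest-orientation lemma, and the only points needing minor care are that the child endpoint is well-defined (immediate from the consecutive-distance observation) and that independence in $M^\cA$ is exactly the per-vertex cardinality-one constraint (immediate from the definition of the simple partition matroid $M^\cA$ used by the reduce-and-solve algorithm $\cA$). Since $R^\cA(B)$ is constructed deterministically and always hits every edge of $B$, this proves the lemma, and in the terminology of the reduce-and-solve definition it shows that one may take $c^r = 1$ for $\cA$.
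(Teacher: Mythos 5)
Your proof is correct and follows essentially the same route as the paper's: root each tree of the forest $B$, assign each edge its endpoint farther from the root, and observe that each vertex is the child endpoint of at most one edge, so the resulting set is independent in the partition matroid $M^\cA$. The only cosmetic difference is that the paper phrases the independence step as a proof by contradiction (two edges sharing a farther endpoint would give two distinct root-to-vertex paths in a tree), while you argue it directly via the parent edge; both are the same observation.
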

\begin{proof}
We assume for the sake of simplicity that the edges of $B$ induce a connected graph. If that is not the case, then the proof has to be applied separately for every connected component of $B$.

Fix a root node $r$ for the graph induced by $B$, and let us denote for every edge $e \in B$ by $d(e)$ the end point of $e$ which is further away from $r$. Note that $d(e)$ is well defined since the independence of $B$ implies that the graph induced by $B$ is a tree. We now define $R^\cA(B) = \{e^{d(e)} \mid e \in B\}$. Clearly, we have by definition $R^\cA(B) \cap (g^\cA)^{-1}(e) \neq \varnothing$ for every edge $e \in B$. Additionally, $R^\cA(B)$ must be independent in $M^\cA$ because otherwise there must be two distinct edges $e_1, e_2 \in B$ such that $d(e_1) = d(e_2)$, and this leads to a contradiction as it implies that $B$ contains two different paths from $r$ to $d(e_1)$ ($= d(e_2)$).
\end{proof}

The next lemma gives values for the two other parameters of $\cA$.

\begin{lemma}\label{lem:graphicMatExt}
Algorithm~\ref{alg:graphic_partition} is a $p^2(1-p)$-$\optc$-competitive algorithm for {\MSP} on restrictions of $M^\cA$. Hence, for $p = \nicefrac{2}{3}$ it is $\nicefrac{4}{27} (\approx 0.148)$-$\optc$-competitive.
\end{lemma}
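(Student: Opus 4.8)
The plan is to mimic the proof of Theorem~\ref{thm:partition_algorithm}, but to account for the extra event (beyond the two events already used there) that needs to hold so that adding $e^u$ to $I$ does not create a cycle in $g^\cA(I)$. Recall that $M^\cA$ is a simple partition matroid whose partitions are indexed by the nodes of $G$, and Algorithm~\ref{alg:graphic_partition} processes elements of the form $e^u$, trying to add $e^u$ whenever its partition $N^\cA(u)$ has not been used yet, $e^u$ beats the best sampled element of that partition, and the addition keeps $g^\cA(I)$ acyclic. Fix an element $e^{*} = e^{u^*} \in \OPT$, where $\OPT$ is the maximum-weight independent set of the restriction of $M^\cA$ that the algorithm is run on, and let $e = g^\cA(e^*)$ be its source edge. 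Just as in the simple-partition-matroid analysis, $e^*$ is the heaviest element of $N^\cA(u^*)$, and there are two cases depending on whether $N^\cA(u^*)$ contains another element or not.

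First I would identify three (sampling-)events whose conjunction forces $e^*$ to be selected: (i) if $N^\cA(u^*)$ has a second-heaviest element $e_2^{u^*}$, then $e_2^{u^*} \in S$; (ii) $e^* \notin S$; and (iii) the ``other endpoint partition'' of $e$, i.e. $N^\cA(v^*)$ where $v^*$ is the other endpoint of $e$, has its heaviest relevant element handled in a way that prevents the partner element $e^{v^*}$ from having been added to $I$ before $e^*$ arrives --- concretely, it suffices that $e^{v^*} \in S$ (so the algorithm never even considers adding $e^{v^*}$). The point is that the only way the acyclicity check on Line~\ref{line:addition_graphic_partition} can fail for $e^*$ is if the edge $e$ closes a cycle, and since at the moment $e^*$ arrives the set $g^\cA(I)$ is a forest, this requires $g^\cA(I)$ to already connect $u^*$ and $v^*$; by forcing $e^{v^*} \in S$ we guarantee that $I$ never selected the element $e^{v^*}$, which is the only element whose selection could have used the endpoint $v^*$ in a way coupled to $e$. (One has to argue a little more carefully: other edges incident to $v^*$ could also connect $v^*$ into the tree containing $u^*$; this is where the argument needs care --- see below.) Conditioning on these independent-with-probability-$p$ events, in the case where $N^\cA(u^*)$ has a second element the probability is $p$ (for $e_2^{u^*}\in S$) times $(1-p)$ (for $e^*\notin S$) times $p$ (for $e^{v^*}\in S$), giving $p^2(1-p)$; in the degenerate cases (no second element, or $e$ is a self-loop so $v^*=u^*$) one of the $p$ factors is absent and the bound is only better.

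The main obstacle, and the step I expect to require the most care, is justifying claim (iii): that forcing $e^{v^*} \in S$ is really enough to rule out \emph{every} way the acyclicity test could reject $e^*$. The subtlety is that a cycle through $e$ could be closed via a long path $u^* \rightsquigarrow v^*$ in $g^\cA(I)$ that does not use the element $e^{v^*}$ at all. The way to handle this is to exploit the structure of $\OPT$ and of the partition-matroid selection rule: because the algorithm selects at most one element per node-partition, $g^\cA(I)$ is a graph in which each vertex is the ``$d(\cdot)$-endpoint'' (in the sense of the $R^\cA$ construction from the previous lemma) of at most one selected edge, so $g^\cA(I)$ is automatically a forest and, more importantly, an edge $e=\{u^*,v^*\}$ with $e^* = e^{u^*}$ can only be rejected if $v^*$ already ``owns'' a selected element, i.e. some $f^{v^*}\in I$ --- but that cannot happen once we additionally know that $e^*$ is an $\OPT$ element and $e^{v^*}\in S$, because $\OPT$'s choice of $e^{u^*}$ rather than $e^{v^*}$ together with $e^{v^*}$ being the heaviest of $N^\cA(v^*)$ (or, more precisely, a comparison of weights via the tie-breaking/optimality of $\OPT$) forces $v^*$'s partition to be the one left unused. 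I would write this out by first proving the clean structural fact that, \emph{at the time $e^*$ arrives}, $v^*$ has not been claimed by $I$ whenever $e^{v^*}\in S$ and $e^*\in\OPT$, and then concluding that the acyclicity test passes since adding $e$ to a forest where $v^*$ is isolated (as an endpoint of selected edges) cannot close a cycle. Plugging in $p=\nicefrac23$ maximizes $p^2(1-p)$ over $[0,1]$ and gives $\nicefrac{4}{27}$, which is the claimed bound.
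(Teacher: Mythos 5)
Your first two events and the resulting factor $p(1-p)$ match the paper's event $\cE_1$ exactly, but your third event is where the proof breaks down, and it breaks down in two ways. First, the event ``$e^{v^*}\in S$'' is not even well-defined in this setting: Algorithm~\ref{alg:graphic_partition} runs on a restriction $N'$ of $M^\cA$ containing \emph{exactly one} refinement of each edge, so if $e^{u^*}\in N'$ then $e^{v^*}\notin N'$ and cannot appear in the sample. Second, and more fundamentally, blocking the single element $e^{v^*}$ would not suffice even if it were possible. You correctly flag the real difficulty---the cycle through $e$ can be closed by a path from $u^*$ to $v^*$ whose edges are all ``owned'' by vertices other than $v^*$---but your proposed resolution rests on two false claims: that $g^\cA(I)$ is ``automatically a forest'' (it is not; a triangle with each edge owned by a distinct vertex is independent in the partition matroid, which is precisely why Line~\ref{line:addition_graphic_partition} exists), and that $e$ can only be rejected if some element of $N^\cA(v^*)$ is already in $I$ (false for the same reason: the last edge of the closing path is incident to $v^*$ but may be owned by its other endpoint). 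The appeal to ``$\OPT$'s choice of $e^{u^*}$ rather than $e^{v^*}$'' is vacuous, since the restriction already made that choice and it imposes nothing on the algorithm's set $I$.

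The paper's proof fixes both issues at once. Its event $\cE_2$ is that $h_{v}$, the \emph{heaviest element of the entire partition} $N^\cA(v)\cap N'$, lies in the sample; this guarantees that \emph{no} element of $N^\cA(v)$ is ever added to $I$, not just one. Then, instead of a local argument at $v$, it uses a global counting argument: if adding $e^{u}$ created a cycle, consider the connected component $C$ of $e$ in $g^\cA(I)$. Each edge of $C$ contributes at least one element of $I$ lying in a partition of a node of $C$, each such partition contributes at most one element, and the partition of $v$ contributes none; hence $|E(C)|<|V(C)|$, contradicting the existence of a cycle in $C$. This component-level edge-versus-vertex count is the missing idea; without it (or a substitute of comparable strength) your third event cannot certify that the acyclicity test passes.
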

\begin{proof}
Consider now an arbitrary element $e^u \in \OPT$, where $\OPT$ is the maximum weight independent set of the input restriction of $M^\cA$. Let $\cE_1$ be the event that Algorithm~\ref{alg:graphic_partition} gets to Line~\ref{line:addition_graphic_partition} with $e^u$ (i.e., Algorithm~\ref{alg:graphic_partition} adds $e^u$ to $I$ unless this creates a cycle in $g^\cA(I)$), and moreover, $e^u$ is the first element of $N^\cA(u)$ with which Algorithm~\ref{alg:graphic_partition} gets to Line~\ref{line:addition_graphic_partition}. Following the same argument as in the proof of Theorem~\ref{thm:partition_algorithm}, we get that the probability of $\cE_1$ is at least $p(1 - p)$. In the rest of the proof we show that there exists an event $\cE_2$ which is independent of $\cE_1$, has a probability of at least $p$ and implies that the addition of $e^u$ to $I$ does not create a cycle in $g^\cA(I)$. Note that the lemma follows from the existence of such an event.

Let $v$ be the end point of $e$ which is not $u$, and let $h_v$ be the heaviest element in $N^\cA(v) \cap N'$. We define $\cE_2$ as the event that $h_v \in S$. If $h_v$ does not exist, i.e., $N^\cA(v) \cap N' = \varnothing$, then we simply assume $\cE_2$ happens always. Note that $\cE_1$ depends only on elements of the partition $N^\cA(u)$, and thus, it is independent of $\cE_2$ which depends on an element of $N^\cA(v)$. It remains to explain why $\cE_2$ implies that the addition of $e^u$ to $I$ does not create a cycle in $g^\cA(I)$. Assume towards a contradiction that this is not the case, and consider the connected component $C$ of $e$ in $g^\cA(I)$ after the addition of $e^u$ to $I$. Algorithm~\ref{alg:graphic_partition} adds to $I$ at most one element from each partition of a node of $C$. Moreover, $\cE_2$ implies that no element from the partition of $v$ is added to $C$, and thus, the number of elements in $I$ from partitions of nodes of $C$ must be strictly less than the number of such nodes. Since every edge in $C$ corresponds to at least one element of $I$ belonging to the partitions of the nodes of $C$, we get also that the number of edges in $C$ is strictly less than the number of nodes in it. However, this is a contradiction since $C$ is, by definition, a connected component containing a cycle.
\end{proof}

\begin{corollary}\label{cor:graphicMat}
For $p = \nicefrac{2}{3}$, $\cA$ is a $(1, \nicefrac{4}{27} \approx 0.148, 0)$-reduce-and-solve algorithm for {\MSP} on graphic matroids.
\end{corollary}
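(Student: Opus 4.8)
The plan is to verify directly that $\cA$, instantiated with the parameter choice $p=\nicefrac{2}{3}$, satisfies all three requirements in the definition of a $(c^r,c^o,c^a)$-reduce-and-solve algorithm with $c^r=1$, $c^o=\nicefrac{4}{27}$ and $c^a=0$; almost all of the work has already been carried out in the two lemmas preceding the corollary, so this is essentially an assembly argument. First I would recall the three components described just above the corollary. The refinement $N^\cA=\{e^u,e^v\mid e=uv\in N\}$ together with $g^\cA(e^u)=g^\cA(e^v)=e$ meets the structural conditions: $(g^\cA)^{-1}(e)=\{e^u,e^v\}$ can be computed from the endpoints of $e$ (which the model allows us to query), and every edge has two refinement elements. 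The matroid $M^\cA$ is the simple partition matroid with one partition $N^\cA(u)$ per vertex, and an independence-oracle query for a restriction of $M^\cA$ to refinements of observed edges reduces to counting, for each vertex, how many refinements in the queried set lie in that vertex's partition, so it is implementable under the assumed access.

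For the parameter $c^r$ I would invoke the first of the two lemmas preceding Lemma~\ref{lem:graphicMatExt}: for every independent (acyclic) set $B$ of the graphic matroid it produces a set $R^\cA(B)$ that is independent in $M^\cA$ and hits $(g^\cA)^{-1}(e)$ for every $e\in B$. Since this hitting property holds deterministically, we may take the ``random'' set required by the definition to equal $R^\cA(S)$ with probability $1$, giving $\Pr[R^\cA(S)\cap(g^\cA)^{-1}(e)\neq\varnothing]=1\ge c^r$ for $c^r=1$ and every $e\in S$. For the last component I would take $\bar{\cA}$ to be Algorithm~\ref{alg:graphic_partition} run with $p=\nicefrac{2}{3}$: by Lemma~\ref{lem:graphicMatExt} it is $p^2(1-p)$-$\optc$-competitive on restrictions of $M^\cA$, which for $p=\nicefrac{2}{3}$ equals $\tfrac{4}{9}\cdot\tfrac13=\tfrac{4}{27}$, and by the observation recorded in Section~\ref{sec:genToMatInt} any $c$-$\optc$-competitive algorithm is $(c,0)$-weakly-$\optc$-competitive with $O^{\bar\cA}(S,r,\sigma)$ taken to be the set of selected elements; hence $\bar{\cA}$ is $(\nicefrac{4}{27},0)$-weakly-$\optc$-competitive.

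It then remains only to check the extra property required of $\bar{\cA}$, namely that $g^\cA(T^{\bar\cA}_1)$ is independent in the original graphic matroid $M$. This is where a moment's care is needed, and it is the only point I expect to be non-routine: the switching adversary can reassign selected elements between $T^{\bar\cA}_1$ and $T^{\bar\cA}_2$, so one has to be sure the property is robust to it. But the adversary only ever moves elements that $\bar{\cA}$ has already selected, so $T^{\bar\cA}_1$ is always a subset of the output set $I$ of Algorithm~\ref{alg:graphic_partition}; the cycle test on Line~\ref{line:addition_graphic_partition} guarantees that $g^\cA(I)$ is acyclic, and acyclicity is preserved under taking subsets, so $g^\cA(T^{\bar\cA}_1)\subseteq g^\cA(I)$ is acyclic, i.e.\ independent in $M$. (The same inclusion $T^{\bar\cA}_1\subseteq I$, combined with the fact that $I$ is independent in $M^\cA$, also shows that $\bar{\cA}$ can keep $T^{\bar\cA}_1$ independent in $M^\cA$ as required, since any addition the algorithm makes to $I$ is checked for independence.) Collecting the three points, $\cA$ with $p=\nicefrac{2}{3}$ is a $(1,\nicefrac{4}{27},0)$-reduce-and-solve algorithm for \MSP on graphic matroids, which is the statement of Corollary~\ref{cor:graphicMat}.
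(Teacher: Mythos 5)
Your proposal is correct and takes essentially the same route as the paper, which states the corollary as a direct assembly of the two preceding lemmas (giving $c^r=1$ and $c^o=p^2(1-p)=\nicefrac{4}{27}$ at $p=\nicefrac{2}{3}$) together with the observation from Section~\ref{sec:genToMatInt} that a $c$-$\optc$-competitive algorithm is $(c,0)$-weakly-$\optc$-competitive. Your extra check that $g^\cA(T^{\bar\cA}_1)$ stays independent under the switching adversary, via $T^{\bar\cA}_1\subseteq I$ and the cycle test on Line~\ref{line:addition_graphic_partition}, is exactly the detail the paper leaves implicit.
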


\subsection{Column-sparse linear matroids} \label{ssc:sparse}

A linear matroid is a matroid defined by a matrix $C$ over a field $\mathbb{F}$. The ground set $N$ of this matroid is the set of columns of $C$, and a set of columns is independent in this matroid if and only if the columns of the set are independent as linear vectors. Graphic matroids can be represented as linear matroids defined by a matrix $C$ containing at most two non-zero values in each column. Motivated by this observation, Soto~\cite{soto_2013_matroid} studies {\MSP} on linear matroids defined by a matrix with at most $k$ non-zero values in each column, where $k$ is an arbitrary constant, and describes a $\nicefrac{1}{ke}$-competitive algorithm for this problem. Like the state of the art algorithm for graphic matroids, this algorithm is not order-oblivious, but it can be made order-oblivious at the cost of a small loss in the competitive ratio (specifically, it deteriorates to $\nicefrac{1}{4k}$). Moreover, this order-oblivious version of the algorithm is not known to be $\optc$-competitive (or even weakly-$\optc$-competitive) for any constant, but it can be easily adapted to be a $(\nicefrac{1}{k}, \nicefrac{1}{4}, 0)$-reduce-and-solve algorithm. Here we take a different approach, and adapt the reduce-and-solve algorithm given above for graphic matroids to this more general setting, which results in a reduce-and-solve algorithm with different values for the parameters. It is important to note that both the algorithm of~\cite{soto_2013_matroid} and our algorithm assume more than independence oracle access to the matroid. Specifically, they assume the ability to determine, given a column of the matrix $C$, the rows in which this column has non-zero values.

The reduce-and-solve algorithm $\cA$ we suggest for {\MSP} on linear matroids defined by a matrix $C$ containing at most $k$ non-zero values in each column consist of the following components:
\begin{itemize}
\item The ground set $N^\cA$ is defined as $\{(c, r) \mid c \in N, r \in \nonzero(c)\}$, where $\nonzero(c)$ is the set of rows in which the column $c$ has non-zero values. Additionally, the function $g^\cA \colon N^\cA \to N$ assigns every element $(c, r)$ to its corresponding column $c$. Note that it is possible to evaluate $(g^\cA)^{-1}(c)$ given access to the column $c$ alone.\footnote{Technically, for $\cA$ to be a valid reduce-and-solve algorithm we need to guarantee that $(g^\cA)^{-1}(c)$ is not empty for every $c \in N$. This does not happen when some columns contain only zero values, and thus, we need to create a dummy element $e_c$ for every such column $c$, add it to $N^\cA$, set $(g^\cA)^{-1}(c) = \{e_c\}$ and make $e_c$ a loop element of $M^\cA$. For simplicity, we ignore this technical issue.}
\item The matroid $M^\cA$ is a partition matroid over the ground set $N^\cA$ containing a single partition for every row of $C$. The partition of row $r$, which we denote by $N^\cA(r)$, contains the elements $\{(c, r) \mid r \in \nonzero(c)\}$.
\item The order-oblivious algorithm for {\MSP} on restrictions of $M^\cA$ is given by Algorithm~\ref{alg:sparse_linear_partition}. Like Algorithm~\ref{alg:graphic_partition}, this algorithm denotes by $N'$ the ground set of the restriction on which it is applied. Moreover, one can note that, aside from notational issues, this algorithm is identical to Algorithm~\ref{alg:partition} except for the change in Line~\ref{line:addition_sparse_linear_partition} preventing an element $(c, r)$ from being added to $I$ if $I$ already contains an element from one of the partitions associated with the other rows in which $c$ has a non-zero value.
\end{itemize}

\begin{algorithm2e}
\caption{Order-Oblivious {\MSP} Algorithm on Simple Partition Matroids Induced by Column-Sparse Linear Matroids} \label{alg:sparse_linear_partition}
Pick a random value $m$ from the binomial distribution $B(|N'|, p)$.\\
Request a sample $S$ of size $m$.\\
Let $I \gets \varnothing$.\\
\For{every arriving non-sample element $(c, r)$}{
	Let $t_r$ be the maximum weight element in $S \cap N^\cA(r)$.\\
	\If{$I \cap N^\cA(r) = \varnothing$ and ($S \cap N^\cA(r) = \varnothing$ or $w((c, r)) > w(t_r)$)}{ 
		Add $(c, r)$ to $I$ if $I \cap N^\cA(r') = \varnothing$ for every $r' \in \nonzero(c)$. \label{line:addition_sparse_linear_partition}
	}
}
\Return{$I$}.
\end{algorithm2e}

The following lemma implies that $\cA$ is a valid reduce-and-solve algorithm for {\MSP} on linear matroids defined by a matrix $C$ containing at most $k$ non-zero values in each column.

\begin{lemma}
The set $g^\cA(I)$, where $I$ is the output set of Algorithm~\ref{alg:sparse_linear_partition}, is always independent in the original linear matroid.
\end{lemma}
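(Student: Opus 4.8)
The plan is to exploit the two conditions under which Algorithm~\ref{alg:sparse_linear_partition} adds an element to $I$, together with the fact that elements are only ever added to $I$ (never removed), in order to exhibit a square triangular submatrix of $C$ certifying that $g^\cA(I)$ is linearly independent.

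First I would record the basic structure of $I$. Since $M^\cA$ is a partition matroid with one part $N^\cA(r)$ per row and the outer \textbf{if} guarantees $I \cap N^\cA(r) = \varnothing$ right before $(c,r)$ is added, the set $I$ contains at most one element of each $N^\cA(r)$; hence we may list the elements of $I$ as $(c_1,r_1), \dotsc, (c_m,r_m)$ \emph{in the order in which they were added}, with the rows $r_1, \dotsc, r_m$ pairwise distinct. Moreover no column is selected twice: if $(c,r)\in I$ then $r\in\nonzero(c)$, so $I\cap N^\cA(r)\neq\varnothing$, and the condition on Line~\ref{line:addition_sparse_linear_partition} then forbids adding any later $(c,r')$ with $r'\in\nonzero(c)$. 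Thus $g^\cA(I)=\{c_1,\dotsc,c_m\}$ consists of $m$ distinct columns of $C$.

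The heart of the argument is the following observation about the insertion order. When $(c_i,r_i)$ is added, the current set $I$ already contains $(c_j,r_j)$ for every $j<i$, and Line~\ref{line:addition_sparse_linear_partition} requires $I\cap N^\cA(r')=\varnothing$ for every $r'\in\nonzero(c_i)$. Since $(c_j,r_j)\in N^\cA(r_j)$, this forces $r_j\notin\nonzero(c_i)$ for all $j<i$, i.e.\ the entry of column $c_i$ in row $r_j$ is zero; on the other hand $(c_i,r_i)\in N^\cA(r_i)$ gives $r_i\in\nonzero(c_i)$, so the entry of $c_i$ in row $r_i$ is non-zero. Consequently the $m\times m$ submatrix of $C$ with rows $r_1,\dotsc,r_m$ and columns $c_1,\dotsc,c_m$ (in these orders) has a vanishing entry in row $r_j$, column $c_i$ whenever $j<i$ and a non-zero entry when $j=i$, hence is triangular with non-zero diagonal and therefore non-singular. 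This implies that $c_1,\dotsc,c_m$ are linearly independent over $\mathbb{F}$, so $g^\cA(I)$ is independent in the linear matroid defined by $C$, as claimed.

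The only subtlety I anticipate is using the timing correctly: the off-diagonal zeros are obtained by evaluating the condition of Line~\ref{line:addition_sparse_linear_partition} \emph{at the moment} $(c_i,r_i)$ is added, and it is essential here that $I$ is monotone so that every earlier-selected $(c_j,r_j)$ is present and blocks row $r_j$. Everything else is a routine consequence of triangularity, with no case analysis or field-specific reasoning required.
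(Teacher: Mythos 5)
Your proof is correct. It rests on exactly the same consequence of Line~\ref{line:addition_sparse_linear_partition} as the paper's argument---namely that when $(c,r)$ is accepted, no previously selected element can occupy a partition $N^\cA(r')$ with $r'\in\nonzero(c)$, so every earlier-selected row is outside the support of the later-selected column---but you convert this observation into linear independence by a different route. The paper argues by contradiction via matroid circuits: it takes a hypothetical circuit $C$ of $g^\cA(I)$, picks the first-inserted element $(c,r)$ with column in $C$, uses the fact that a circuit of a linear matroid must contain a second column with a non-zero entry in row $r$, and then shows that the element of $I$ refining that second column could not have been accepted. You instead order the selected pairs by insertion time and exhibit the $m\times m$ submatrix on rows $r_1,\dotsc,r_m$ and columns $c_1,\dotsc,c_m$, which your observation makes triangular with non-zero diagonal, hence non-singular. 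Your version is more constructive and self-contained (it needs only the determinant of a triangular matrix rather than the structure of circuits of linear matroids, and it produces an explicit certificate of independence), at the cost of being slightly longer; the paper's version is shorter but leans on a matroid-theoretic fact it only gestures at. Both are sound, and your handling of the timing subtlety---that monotonicity of $I$ guarantees every earlier $(c_j,r_j)$ is present when the condition for $(c_i,r_i)$ is evaluated---is exactly right.
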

\begin{proof}
Assume towards a contradiction that this is not the case, and let $C$ be an arbitrary cycle in $g^\cA(I)$. Additionally, let $(c, r)$ be the first element that was added to $I$ such that $g^\cA((c, r)) \in C$. We observe that $C$ must contain a second column $c' \neq c$ containing the row $r$---otherwise, the removal of $g^\cA((c. r)) = c$ from $C$ could not make $C$ an independent set, as required by the definition of $C$ as a cycle. Next, let us denote by $(c', r')$ an element of $I$ such that $g^\cA((c', r')) = c'$. When $(c', r')$ arrived the solution $I$ already contained the element $(c, r)$ which belongs to the partition $N^\cA(r)$. Note that this leads to a contradiction since Algorithm~\ref{alg:sparse_linear_partition} is guaranteed not to add $(c', r')$ to $I$ if there is already an element in $I$ from one of the partitions associated with rows in which $c'$ has a non-zero value, and $r$ is one of these rows.
\end{proof}

It remains to determine the parameters of $\cA$ as a reduce-and-solve algorithm. The next lemma shows that $c^r = 1$ for this algorithm.

\begin{lemma}
For every independent set $B$ of the original linear matroid, there exists an independent set $R^\cA(B)$ of $M^\cA$ such that $R^\cA(B) \cap (g^\cA)^{-1}(c) \neq \varnothing$ for every column $c \in B$.
\end{lemma}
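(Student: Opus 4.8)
The plan is to show that producing the desired set $R^\cA(B)$ is essentially the same as producing a system of distinct representatives (SDR) for the family of row-supports $\{\nonzero(c)\}_{c \in B}$, and then to derive the existence of such an SDR from the linear independence of $B$ via Hall's marriage theorem. Concretely, suppose we are given an injective map $r \colon B \to \bigcup_{c \in B} \nonzero(c)$ with $r(c) \in \nonzero(c)$ for every $c \in B$. Set $R^\cA(B) = \{(c, r(c)) \mid c \in B\}$. This is a subset of $N^\cA = \{(c, r) \mid c \in N,\ r \in \nonzero(c)\}$, it contains exactly one element of $(g^\cA)^{-1}(c)$ for each $c \in B$ (so it meets every such set), and, because $r$ is injective, for every row $r_0$ at most one column $c \in B$ has $r(c) = r_0$, so $R^\cA(B)$ contains at most one element from every partition $N^\cA(r_0)$; hence $R^\cA(B)$ is independent in the (simple) partition matroid $M^\cA$. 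So it suffices to produce the map $r$.

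First I would invoke Hall's theorem: the family $\{\nonzero(c)\}_{c \in B}$ admits an SDR if and only if $|\bigcup_{c \in B'} \nonzero(c)| \geq |B'|$ for every $B' \subseteq B$. (No $\nonzero(c)$ with $c \in B$ is empty, since an all-zero column would be a loop of the linear matroid and cannot lie in the independent set $B$; this is the only place zero columns could cause trouble, and it is handled.) To verify Hall's condition, fix $B' \subseteq B$ and put $R' = \bigcup_{c \in B'} \nonzero(c)$. By definition of $\nonzero(c)$, every column $c \in B'$ vanishes outside the rows indexed by $R'$, so the submatrix of $C$ with column set $B'$ and row set $R'$ has the same rank as the columns $B'$ viewed inside the full matrix $C$. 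Since $B' \subseteq B$ and $B$ is independent in the linear matroid, this rank equals $|B'|$. But a matrix with only $|R'|$ rows has rank at most $|R'|$, so $|B'| \leq |R'| = |\bigcup_{c \in B'} \nonzero(c)|$, which is exactly Hall's condition.

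Applying Hall's theorem then yields the injective selection $r$, and the construction in the first paragraph turns it into an independent set $R^\cA(B)$ of $M^\cA$ that meets $(g^\cA)^{-1}(c)$ for every $c \in B$, proving the lemma. (This generalizes the rooting argument used earlier for graphic matroids: there $\nonzero(c)$ is the pair of endpoints of the edge $c$, and orienting each tree of $B$ away from a root is precisely a choice of SDR.) I do not expect a genuine obstacle here; the only two points needing a moment of care are that columns in $B$ cannot be zero, and that $M^\cA$ is a simple partition matroid, so that ``at most one element per part'' is indeed the correct independence criterion that the injectivity of $r$ supplies.
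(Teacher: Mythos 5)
Your proof is correct and follows essentially the same route as the paper: both find a system of distinct representatives (equivalently, a matching in the column--row bipartite graph covering $B$) via Hall's theorem and turn it into the independent set $R^\cA(B)$. The only difference is that you spell out the verification of Hall's condition with the rank argument on the submatrix indexed by $B'$ and $\bigcup_{c\in B'}\nonzero(c)$, a step the paper attributes to Soto without reproducing it.
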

\begin{proof}
Consider a bipartite graph $G$ in which one side consists of the columns of $C$, the other side consists of the rows of $C$ and there is an edge between a given pair of a column $c$ and a row $r$ if and only if $r \in \nonzero(c)$. Soto~\cite{soto_2013_matroid} observed that, since $B$ is independent, there must be a matching $M_B$ in this graph covering all columns of $B$. This follows from Hall's theorem since the independence of $B$ implies that $\bigcup_{c \in B'} \nonzero(c)$ contains at least $|B'|$ rows for every subset $B'$ of $B$. Using the matching $M_B$, we can now define
\[
	R^\cA(B)
	=
	\{(c, r) \mid c \in B \text{ and $c$ is matched to $r$ by $M_B$}\}
	\enspace.
\]
Since this is a matching, $R^\cA(B)$ contains at most a single element from each partition of $M^\cA$, and is, thus, independent in $M^\cA$. Additional, for every column $c \in B$, the set $(g^\cA)^{-1}(c)$ contains all the element of $\{(c, r) \mid r \in \nonzero(c)\}$, and thus, has a non-empty intersection with $R^\cA(B)$.
\end{proof}

The next lemma gives values for the two other parameters of $\cA$.

\begin{lemma}
Algorithm~\ref{alg:sparse_linear_partition} is a $p^k(1-p)$-$\optc$-competitive algorithm for {\MSP} on restrictions of $M^\cA$. Hence, for $p = (k - 1)/k$ it is $k^{-1}(1 - 1/k)^k$-$\optc$-competitive.
\end{lemma}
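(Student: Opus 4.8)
The plan is to follow the blueprint of Lemma~\ref{lem:graphicMatExt} almost verbatim, replacing the single ``blocking'' event used there for the opposite endpoint of an edge by up to $k-1$ blocking events, one for each of the other rows in which the relevant column is non-zero. Fix an arbitrary element $(c, r) \in \OPT$, where $\OPT$ is the maximum-weight independent set of the restriction of $M^\cA$ on which Algorithm~\ref{alg:sparse_linear_partition} is run, and recall that since this restriction is a simple partition matroid, $(c, r)$ must be the heaviest element of $N^\cA(r) \cap N'$ (otherwise $\OPT$ could be improved within its own partition). As in Theorem~\ref{thm:partition_algorithm}, the binomial choice of the sample size makes $S$ contain each element of $N'$ independently with probability $p$.

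First I would define the event $\cE_1$ that Algorithm~\ref{alg:sparse_linear_partition} reaches Line~\ref{line:addition_sparse_linear_partition} with $(c, r)$, and that $(c,r)$ is the only element of $N^\cA(r)$ with which it ever reaches that line. Exactly the argument of Theorem~\ref{thm:partition_algorithm} gives $\Pr[\cE_1] \geq p(1-p)$ and shows that $\cE_1$ depends only on which elements of the single partition $N^\cA(r)$ lie in $S$: if $N^\cA(r) \cap N'$ has a second-heaviest element $e_2$, then $\cE_1$ is implied by $\{e_2 \in S,\ (c,r) \notin S\}$, and otherwise $(c,r) \notin S$ already suffices; in either case every other element of $N^\cA(r) \cap N'$ fails the ``\textbf{If}'' test (being lighter than $t_r$) and hence never enters $I$, regardless of arrival order. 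Next I would define $\cE_2$ as the event that, for every row $r' \in \nonzero(c) \setminus \{r\}$ with $N^\cA(r') \cap N' \neq \varnothing$, the heaviest element $h_{r'}$ of $N^\cA(r') \cap N'$ lies in $S$. Since $|\nonzero(c)| \leq k$, this is a conjunction of at most $k-1$ independent events, each of probability $p$, so $\Pr[\cE_2] \geq p^{\,k-1}$; and since each such event concerns a partition $N^\cA(r')$ disjoint from $N^\cA(r)$, the event $\cE_2$ is independent of $\cE_1$.

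The heart of the argument is then that $\cE_1 \wedge \cE_2$ forces $(c,r) \in I$ for every (adversarial) arrival order. If $h_{r'} \in S$, then for any non-sample $(c'', r') \in N^\cA(r')$ we have $w((c'', r')) \leq w(h_{r'}) = w(t_{r'})$, so the ``\textbf{If}'' test fails and no element of $N^\cA(r')$ is ever added to $I$; thus $I \cap N^\cA(r') = \varnothing$ permanently. Hence, under $\cE_2$, when $(c,r)$ arrives the condition on Line~\ref{line:addition_sparse_linear_partition} holds for every $r' \in \nonzero(c)$ with $r' \neq r$, and under $\cE_1$ it holds for $r' = r$ as well, while $(c,r)$ does reach that line; therefore $(c,r)$ is added to $I$. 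Since $\cE_1$ and $\cE_2$ are determined by $S$ alone, this yields $\min_\sigma \Pr_r[(c,r) \in \cA(S,r,\sigma)] \geq \mathbf{1}[\cE_1 \wedge \cE_2]$ pointwise in $S$, and taking expectations gives $\E_S[\min_\sigma \Pr_r[(c,r) \in \cA]] \geq \Pr[\cE_1 \wedge \cE_2] = \Pr[\cE_1]\Pr[\cE_2] \geq p(1-p)\cdot p^{\,k-1} = p^k(1-p)$, which is the claimed $\optc$-competitiveness. The ``hence'' clause is just the substitution $p = \tfrac{k-1}{k} = 1 - \tfrac1k$, so that $p^k(1-p) = (1 - \tfrac1k)^k \cdot \tfrac1k = k^{-1}(1-1/k)^k$.

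The main obstacle here is bookkeeping rather than anything deep: one must get the exponent right (the product in $\cE_2$ ranges over $|\nonzero(c)| - 1 \leq k-1$ rows, not over $k$), handle the degenerate rows with $N^\cA(r') \cap N' = \varnothing$ (which satisfy the emptiness condition trivially), and—most importantly—make sure the conclusion $(c,r) \in I$ is genuinely order-independent, i.e. that ``$h_{r'} \in S$'' blocks $N^\cA(r')$ at \emph{every} arrival, so the argument survives the worst-case $\sigma$ in the definition of $\optc$-competitiveness. None of these points requires an idea beyond those already used for the graphic-matroid special case in Lemma~\ref{lem:graphicMatExt}.
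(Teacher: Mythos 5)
Your proposal is correct and follows essentially the same route as the paper: the same events $\cE_1$ (with $\Pr[\cE_1]\geq p(1-p)$ inherited from the simple-partition-matroid argument) and $\cE_2$ (the at most $k-1$ heaviest elements $h_{r'}$ of the other relevant partitions landing in $S$, giving $\Pr[\cE_2]\geq p^{k-1}$), together with their independence and the observation that $\cE_2$ blocks all other partitions of $\nonzero(c)$ regardless of arrival order. Your write-up is in fact somewhat more explicit than the paper's about why the blocking is order-independent and why $(c,r)$ is the heaviest element of its own partition.
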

\begin{proof}
Consider now an arbitrary element $(c, r) \in \OPT$, where $\OPT$ is the maximum weight independent set of the input restriction of $M^\cA$. Let $\cE_1$ be the event that Algorithm~\ref{alg:sparse_linear_partition} gets to Line~\ref{line:addition_sparse_linear_partition} with $(c, r)$ (i.e., Algorithm~\ref{alg:sparse_linear_partition} adds $(c, r)$ to $I$ unless $I$ already contains an element from some partition of $M^\cA$ associated with a different row in which $c$ takes a non-zero value), and moreover, $(c, r)$ is the first element of $N^\cA(r)$ with which Algorithm~\ref{alg:sparse_linear_partition} gets to Line~\ref{line:addition_sparse_linear_partition}. Following the same argument as in the proof of Theorem~\ref{thm:partition_algorithm}, we get that the probability of $\cE_1$ is at least $p(1 - p)$.
In the rest of the proof we show that there exists an event $\cE_2$ which is independent of $\cE_1$, has a probability of at least $p^{k-1}$ and implies that no element is ever added to $I$ from the partitions associated with rows other than $r$ in which $c$ takes a non-zero value. Note that the lemma follows from the existence of such an event.

For every row $r' \in \nonzero(c) - r$, let $h_{r'}$ denote the heaviest element in $N^\cA(r) \cap N'$. We define $\cE_2$ as the event that $h_{r'} \in S$ for every row $r' \in \nonzero(c) - r$ having $N^\cA(r') \cap N' \neq \varnothing$. Note that $\cE_1$ depends only on elements of the partition $N^\cA(r)$, and thus, it is independent of $\cE_2$ which depends on elements from other partitions. Additionally, the probability of $\cE_2$ is at least $p^{k - 1}$ since every element of $N'$ belongs to $S$ with probability $p$, independently. It remains to observe that $\cE_2$ implies that, for every row $r' \in \nonzero(c) - r$ associated with a non-empty partition $N^\cA(r') \cap N'$, the heaviest element of the partition $N^\cA(r') \cap N'$ appears within the sample $S$, and this prevents Algorithm~\ref{alg:sparse_linear_partition} from adding to $I$ any elements of this partition.
\end{proof}

\begin{corollary}
For $p = (k - 1)/k$, $\cA$ is a $(1, k^{-1}(1 - 1/k)^k, 0)$-reduce-and-solve algorithm for {\MSP} on linear matroids defined by a matrix $C$ containing at most $k$ non-zero values in each column.
\end{corollary}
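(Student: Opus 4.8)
The plan is to obtain the corollary by assembling the three lemmas just proved together with the generic fact, recorded in Section~\ref{sec:genToMatInt}, that every $c$-$\optc$-competitive algorithm is $(c,0)$-weakly-$\optc$-competitive. Concretely, to show that $\cA$ is a $(1,\, k^{-1}(1-1/k)^k,\, 0)$-reduce-and-solve algorithm for {\MSP} on linear matroids with at most $k$ nonzero entries per column, I would verify that its three components --- the refinement $(N^\cA, g^\cA)$, the matroid $M^\cA$, and the inner algorithm $\bar{\cA}$, which I take to be Algorithm~\ref{alg:sparse_linear_partition} run with $p = (k-1)/k$ --- satisfy the three bullet points in the definition of a reduce-and-solve algorithm with $c^r = 1$, $c^o = k^{-1}(1-1/k)^k$, and $c^a = 0$.

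First I would dispatch the structural conditions. The refinement $N^\cA = \{(c,r) : c \in N,\ r \in \nonzero(c)\}$ (augmented, as in the construction, by a dummy loop for every all-zero column) is deterministic, gives at least one refinement to every column, and both $(g^\cA)^{-1}(c)$ and independence queries in the row-partition matroid $M^\cA$ can be answered from the columns observed so far; this settles the first bullet. For the second bullet I invoke the matching lemma, which via Hall's theorem produces, for every independent set $B$ of the original matroid, a set $R^\cA(B)$ independent in $M^\cA$ that meets $(g^\cA)^{-1}(c)$ for every $c \in B$; since this $R^\cA(B)$ is deterministic, the required probability bound holds with $c^r = 1$.

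For the third bullet I would set $\bar{\cA} = $ Algorithm~\ref{alg:sparse_linear_partition}. By the competitiveness lemma, for $p = (k-1)/k$ this algorithm is $k^{-1}(1-1/k)^k$-$\optc$-competitive on restrictions of $M^\cA$, hence $(k^{-1}(1-1/k)^k,\, 0)$-weakly-$\optc$-competitive by the translation in Section~\ref{sec:genToMatInt} (take $O^{\bar{\cA}}(S,r,\sigma)$ to be the set of selected elements; then $T^{\bar{\cA}}_1 \cup T^{\bar{\cA}}_2 = O^{\bar{\cA}}$, so the weight inequality holds with equality and the $c^a$-term is vacuous, giving $c^a = 0$). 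Finally, the \emph{extra property} that $g^\cA(T^{\bar{\cA}}_1)$ is independent in the original matroid follows from the validity lemma: since $T^{\bar{\cA}}_1 \subseteq T^{\bar{\cA}}_1 \cup T^{\bar{\cA}}_2 = I$, the output $I$ of Algorithm~\ref{alg:sparse_linear_partition}, and $g^\cA(I)$ is independent in the original linear matroid, downward-closedness of matroids gives that $g^\cA(T^{\bar{\cA}}_1) \subseteq g^\cA(I)$ is independent as well. Collecting the three parameters completes the argument.

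I do not anticipate a genuine obstacle: the corollary is essentially bookkeeping over the preceding lemmas. The one point worth keeping straight is the double role of independence --- Algorithm~\ref{alg:sparse_linear_partition} keeps $I$ independent in the row-partition matroid $M^\cA$, so the switching adversary's constraint that $T^{\bar{\cA}}_1$ stay independent in $M^\cA$ is automatically met (as $T^{\bar{\cA}}_1 \subseteq I$), whereas the reduce-and-solve definition additionally demands that $g^\cA(T^{\bar{\cA}}_1)$ be independent in the original matroid $M$, which is exactly what the validity lemma supplies; keeping these two notions apart is the only place to be mildly careful.
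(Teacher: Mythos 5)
Your proposal is correct and matches the paper's (largely implicit) derivation: the corollary is obtained exactly by combining the validity lemma, the Hall's-theorem matching lemma (giving $c^r=1$), and the $p^k(1-p)$-$\optc$-competitiveness lemma at $p=(k-1)/k$, together with the generic observation from Section~\ref{sec:genToMatInt} that a $c$-$\optc$-competitive algorithm is $(c,0)$-weakly-$\optc$-competitive. Your added care about the two distinct independence requirements (of $T^{\bar{\cA}}_1$ in $M^\cA$ versus of $g^\cA(T^{\bar{\cA}}_1)$ in the original matroid) is exactly the right bookkeeping.
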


\subsection{Co-graphic matroids}

A co-graphic matroid is defined by a graph $G = (V, N)$ whose edges form the ground set of the matroid. A set of edges is independent in the co-graphic matroid if and only if their removal from $G$ does not increase the number of connected components in $G$. Alternatively, as implied by its name, the co-graphic matroid can also be defined as the dual matroid of the graphic matroid defined by $G$. Soto~\cite{soto_2013_matroid} describes a procedure that given a co-graphic matroid outputs a random simple partition matroid such that every element belonging to the optimal solution in the original co-graphic matroid also belongs to the optimal solution of the new simple partition matroid with probability $\nicefrac{1}{3}$.


Soto~\cite{soto_2013_matroid} used his procedure to get a $\nicefrac{1}{3e}$-competitive algorithm for {\MSP} on co-graphic matroids by applying a $\nicefrac{1}{e}$-competitive {\MSP} algorithm on the simple partition matroid produced by the procedure. Unfortunately, the algorithm obtained this way is not order-oblivious because the $\nicefrac{1}{e}$-competitive algorithm for {\MSP} on simple partition matroids is not order-oblivious (or more accurately, it is not $\nicefrac{1}{e}$-competitive when the arrival order in the second phase is adversarial). This issue can be fixed by using instead of this algorithm the $\nicefrac{1}{4}$-$\optc$-competitive order-oblivious algorithm for {\MSP} on simple partition matroids given by Algorithm~\ref{alg:partition}. This gives us the following result.

\begin{theorem}
There exists a $\nicefrac{1}{12}$-$\optc$-competitive order-oblivious algorithm for {\MSP} on co-graphic matroids.
\end{theorem}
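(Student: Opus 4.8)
The plan is to obtain the algorithm by composing Soto's reduction with Algorithm~\ref{alg:partition}: first run Soto's procedure on the given co-graphic matroid $M = (N, \cI)$ to produce a random simple partition matroid $P = (N, \cI_P)$, and then run Algorithm~\ref{alg:partition} (with $p = \nicefrac{1}{2}$) on $P$, returning whatever it outputs. First I would verify the two structural requirements. \emph{Order-obliviousness:} the underlying graph of $M$ is assumed to be known upfront (as it is by all the (co-)graphic matroid algorithms of~\cite{soto_2013_matroid}), so Soto's procedure uses no samples, and the only sampling phase is the one of Algorithm~\ref{alg:partition}; hence the composed algorithm sits in the order-oblivious model. \emph{Feasibility:} Soto's procedure guarantees $\cI_P \subseteq \cI$, so the returned set, being independent in $P$, is automatically independent in $M$.

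For the competitive analysis I would argue element by element. Fix $e^* \in \OPT$, where $\OPT$ is the unique maximum-weight independent set of $M$ with respect to $w$, and let $\OPT_P$ denote the maximum-weight independent set of $P$ with respect to the same weights---a deterministic function of the random matroid $P$. Writing $\cA(P, S, r, \sigma)$ for the output of the composed algorithm when Soto outputs $P$, Algorithm~\ref{alg:partition} draws the sample $S$ and internal bits $r$, and the non-sample elements arrive in order $\sigma$, the key computation is
\begin{align*}
  \E_{P, S}\left[\min_\sigma \Pr_r[e^* \in \cA(P, S, r, \sigma)]\right]
  ={} & \E_P\left[\E_S\left[\min_\sigma \Pr_r[e^* \in \cA(P, S, r, \sigma)]\right]\right]\\
  \geq{} & \E_P\left[\tfrac{1}{4} \cdot \mathbf{1}[e^* \in \OPT_P]\right]
  = \tfrac{1}{4} \cdot \Pr[e^* \in \OPT_P]
  \geq \tfrac{1}{4} \cdot \tfrac{1}{3}
  = \tfrac{1}{12}\,,
\end{align*}
where the first equality uses that $P$ is drawn independently of $(S, r)$; the inequality uses that, for every fixed $P$ with $e^* \in \OPT_P$, the element $e^*$ lies in the optimal solution of the simple partition matroid $P$, so Theorem~\ref{thm:partition_algorithm} applied to $P$ (with $p = \nicefrac{1}{2}$) gives $\E_S[\min_\sigma \Pr_r[e^* \in \cA(P, S, r, \sigma)]] \geq \nicefrac{1}{4}$, while this quantity is non-negative for every other $P$; and the last inequality is Soto's guarantee $\Pr[e^* \in \OPT_P] \geq \nicefrac{1}{3}$. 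Since $e^* \in \OPT$ was arbitrary, this is precisely $\nicefrac{1}{12}$-$\optc$-competitiveness.

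The only subtle point---and the one I expect to need the most care in a full write-up, although it is minor---is the legitimacy of invoking Theorem~\ref{thm:partition_algorithm} conditionally on $e^* \in \OPT_P$ and of feeding Algorithm~\ref{alg:partition} a matroid generated at random by Soto's procedure. This goes through because the $\optc$-competitiveness guarantee of Theorem~\ref{thm:partition_algorithm} is worst-case over all simple partition matroids and all adversarial arrival orders: it therefore holds verbatim for every fixed $P$ in the support of Soto's procedure---in particular for those with $e^* \in \OPT_P$---and the arrival order $\sigma$ seen by the composed algorithm, even if it depends on $P$, is already dominated by the $\min_\sigma$ once $P$ is fixed. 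Moreover, a matroid $P$ output by Soto's procedure is a simple partition matroid known upfront, so all the operations Algorithm~\ref{alg:partition} performs on it (independence queries and looking up the partition of a given element) are available.
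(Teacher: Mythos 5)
Your proposal is correct and follows essentially the same route as the paper: compose Soto's reduction to a random simple partition matroid (each $\OPT$ element lands in its optimum with probability $\nicefrac{1}{3}$) with the $\nicefrac{1}{4}$-$\optc$-competitive Algorithm~\ref{alg:partition}, and multiply the two guarantees. The paper's proof is terser; your careful handling of the conditioning on $P$, the $\min_\sigma$, and feasibility only makes explicit what the paper leaves implicit.
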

\begin{proof}
Let us denote the input co-graphic matroid by $M$ and the maximum weight independent set of this matroid by $\OPT$. In this proof we consider an algorithm that executes the procedure of~\cite{soto_2013_matroid} on $M$, to get a random simple partition matroid $M'$ such that every element of $\OPT$ belongs to the optimal solution for $M'$ with probability $\nicefrac{1}{3}$. Then, it executes Algorithm~\ref{alg:partition} on $M'$. We observe that, by Theorem~\ref{thm:partition_algorithm}, every element that belongs to the optimal solution of $M'$ is selected by Algorithm~\ref{alg:partition} with probability at least $\nicefrac{1}{4}$. Combining this with the properties of \cite{soto_2013_matroid}'s procedure, we get that every element $e \in \OPT$ ends up in the output of the above algorithm with probability at least $\nicefrac{1}{12}$.
\end{proof}

It is worth to mention that the algorithm whose existence is guaranteed by the above theorem, like the original algorithm of~\cite{soto_2013_matroid} for co-graphic matroids, requires full access to the co-graphic matroid from the beginning. In other words, the algorithm knows the graph $G$ from the beginning, but learns the weight of every edge only when it arrives. Additionally, the above theorem implies, by the discussion in Section~\ref{sec:genToMatInt}, the following corollary.

\begin{corollary}
There exists a $(\nicefrac{1}{12}, 0)$-weakly-$\optc$-competitive order-oblivious algorithm for {\MSP} on co-graphic matroids, and this implies a $(1, \nicefrac{1}{12}, 0)$-reduce-and-solve algorithm for this problem.
\end{corollary}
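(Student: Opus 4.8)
The plan is to obtain the corollary directly from the preceding theorem, which already supplies a $\nicefrac{1}{12}$-$\optc$-competitive order-oblivious algorithm $\cA$ for {\MSP} on co-graphic matroids, by invoking the two generic implications recorded in Section~\ref{sec:genToMatInt}: that any $c$-$\optc$-competitive order-oblivious algorithm is $(c,0)$-weakly-$\optc$-competitive, and that any $(c^o,c^a)$-weakly-$\optc$-competitive algorithm gives rise to a $(1,c^o,c^a)$-reduce-and-solve algorithm. So the work is essentially to spell out these two unwindings for the concrete algorithm at hand.

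First I would verify the weakly-$\optc$-competitiveness claim. Given $\cA$ and an arbitrary switching adversary, define the set $O^\cA(S,r,\sigma)$ required by Definition~\ref{def:weakly_opt_competitive} to be the set of $\OPT$-elements that $\cA$ selects, i.e.\ $\cA(S,r,\sigma) \cap \OPT \subseteq \OPT$. The first condition of the definition asks that $\E_S[\min_\sigma \Pr_r[e \in O^\cA(S,r,\sigma)]] \ge \nicefrac{1}{12}$ for every $e \in \OPT \setminus L(\OPT)$; since membership of $e \in \OPT$ in $O^\cA$ is the same event as membership of $e$ in $\cA(S,r,\sigma)$, this is immediate from $\nicefrac{1}{12}$-$\optc$-competitiveness, which in fact gives the inequality for every $e \in \OPT$. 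The second condition, with $c^a = 0$, reduces to $w((T^\cA_1 \cup T^\cA_2) \cap O^\cA) \ge w(O^\cA)$, and this holds with equality because every element $\cA$ selects is assigned by the switching adversary to $T^\cA_1$ or $T^\cA_2$, so $O^\cA \subseteq \cA(S,r,\sigma) = T^\cA_1 \cup T^\cA_2$. Hence $\cA$ is $(\nicefrac{1}{12},0)$-weakly-$\optc$-competitive.

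Next I would cast $\cA$ as a reduce-and-solve algorithm with a trivial reduction. Take $N^\cA = N$ with $g^\cA$ the identity map (so $(g^\cA)^{-1}(e) = \{e\}$, which is evaluable and nonempty for every $e$), set $M^\cA = M$, and let $\bar{\cA} = \cA$. For the family of sets required of $M^\cA$, choose $R^\cA(S) = S$ for every independent set $S$ of $M$; it is independent in $M^\cA = M$ and meets $(g^\cA)^{-1}(e) = \{e\}$ for every $e \in S$ with probability $1$, so $c^r = 1$. The only subset of $N^\cA$ containing exactly one refinement of each source element is $N$ itself, so $\bar{\cA}$ must merely be $(\nicefrac{1}{12},0)$-weakly-$\optc$-competitive on $M$, which we just established; and the extra property that $g^\cA(T^{\bar{\cA}}_1) = T^{\bar{\cA}}_1$ be independent in $M$ holds because $\bar{\cA}$, executed against a switching adversary, is required to keep $T^{\bar{\cA}}_1$ independent in $\cI$. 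Therefore $\cA$ is a $(1,\nicefrac{1}{12},0)$-reduce-and-solve algorithm.

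Since every step is a direct consequence of the definitions in Section~\ref{sec:genToMatInt}, there is no genuine obstacle; the only point that warrants a moment's attention is observing that, under the identity refinement, the ``restrictions of $M^\cA$'' on which $\bar{\cA}$ must be competitive collapse to $M$ itself, so that no assumption about $\cA$ beyond what the theorem already gives is needed.
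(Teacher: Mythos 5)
Your proposal is correct and follows exactly the route the paper intends: the paper proves the corollary simply by citing the generic observations from Section~\ref{sec:genToMatInt} that a $c$-$\optc$-competitive algorithm is $(c,0)$-weakly-$\optc$-competitive and that any weakly-$\optc$-competitive algorithm yields a $(1,c^o,c^a)$-reduce-and-solve algorithm via the identity refinement, which is precisely what you spell out. Your choice of $O^\cA = \cA(S,r,\sigma)\cap\OPT$ is, if anything, slightly more careful than the paper's phrasing, since the definition requires $O^\cA \subseteq \OPT$.
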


\subsection{Transversal matroids}

A transversal matroid is defined by a bipartite graph $G = (N \cup V, E)$ in which the ground set $N$ forms one side. A set of nodes $I \subseteq N$ is independent in the transversal matroid if and only if there exists a matching in $G$ whose edges hit all nodes of $I$. A few algorithms with constant competitive ratios have been suggested for {\MSP} on transversal matroids. The first such algorithm was described by~\cite{dimitrov_2012_competitive}, and was later simplified and improved by~\cite{korula_2009_algorithms}. Both the original algorithm of~\cite{dimitrov_2012_competitive} and the improved algorithm of~\cite{korula_2009_algorithms} are order-oblivious, but are not known to be $\optc$-competitive (or even weakly-$\optc$-competitive) for any constant. In a later work, Kesselheim et al.~\cite{kesselheim_2013_optimal} came up with a very different algorithm for {\MSP} on transversal matroids. The algorithm of~\cite{kesselheim_2013_optimal} achieves an optimal competitive ratio of $\nicefrac{1}{e}$, but is, unfortunately, not order-oblivious.

Given the above survey of existing results, one can observe that currently there is no algorithm for {\MSP} on transversal matroids known to be order-oblivious and $c$-$\optc$-competitive for any constant $c$. Thus, we have no choice but to directly define here a reduce-and-solve algorithm $\cA$ for transversal matorids. We assume that $\cA$, like all the above mentioned algorithms for transversal matroids, knows the nodes of $V$ and also gets to know all the edges hitting a node $u \in N$ when this node arrives. The components of $\cA$ are given as follows:

\begin{itemize}
\item The ground set $N^\cA$ is the set $E$ of edges of the graph $G$, and the function $g^\cA \colon E(=N^\cA) \to N$ assigns every edge $e \in E$ to the single node of $N$ it hits. Note that for every $u \in N$ it is possible to evaluate $(g^\cA)^{-1}(u)$ after the arrival of $u$.\footnote{Technically, for $\cA$ to be a valid reduce-and-solve algorithm we need to guarantee that $(g^\cA)^{-1}(u)$ is not empty for every $u \in N$. This does not happen when there are nodes hitted by no edges, and thus, we need to create a dummy element $e_u$ for every such node $u$, add it to $N^\cA$, set $(g^\cA)^{-1}(u) = \{e_u\}$ and make $e_u$ a loop element of $M^\cA$. For simplicity, we ignore this technical issue.}
\item The matroid $M^\cA$ is a simple partition matroid over the ground set $E$ containing a single partition for every node of $V$. The partition of a node $v \in V$, which we denote by $\delta(v)$, contains the edges hitting it.
\item We need an order-oblivious algorithm $\bar{\cA}$ for {\MSP} on restrictions of $M^\cA$ containing exactly one edge from $(g^\cA)^{-1}(u)$ for every node $u \in N$. Moreover, we need $\bar{\cA}$ to have the extra property that, if we denote by $I$ its output set, then $g^\cA(I)$ is always independent in the original transversal matroid. Fortunately, we observe that in this case this extra property is meaningless because any independent set $I$ of such a restriction must be a matching of $G$ (no two edges in it hit a node $u \in N$ by the definition of the restriction, and no two edges in it hit a node $v \in V$ since it is independent in $M^\cA$), which implies that $g^\cA(I)$ is independent in the original transversal matroid. Hence, we can choose as $\bar{\cA}$ any order-oblivious algorithm for {\MSP} on simple partition matroids, and we choose it to be Algorithm~\ref{alg:partition}.
\end{itemize}

It remains to determine the parameters of the reduce-and-solve algorithm $\cA$. The fact that Algorithm~\ref{alg:partition} is $\nicefrac{1}{4}$-$\optc$-competitive by Theorem~\ref{thm:partition_algorithm} implies that $c^o = 1/4$ and $c^a = 0$. The next lemma shows that the last parameter ($c^r$) takes the value $1$ for this algorithm.

\begin{lemma}
For every independent set $B \subseteq N$ of the transversal matroid, there exists an independent set $R^\cA(B)$ of $M^\cA$ such that $R^\cA(B) \cap (g^\cA)^{-1}(u) \neq \varnothing$ for every node $u \in B$.
\end{lemma}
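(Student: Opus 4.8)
The plan is to let $R^\cA(B)$ be (the edge set of) a matching witnessing the independence of $B$ in the transversal matroid. First I would invoke the definition of a transversal matroid: since $B$ is independent, there exists a matching $M_B$ in the bipartite graph $G = (N \cup V, E)$ whose edges cover every node of $B$. I then set $R^\cA(B) = M_B \subseteq E = N^\cA$.

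Next I would verify that $R^\cA(B)$ is independent in $M^\cA$. Recall that $M^\cA$ is the simple partition matroid on $E$ whose partition classes are the sets $\delta(v)$ for $v \in V$, so a subset of $E$ is independent in $M^\cA$ precisely when it contains at most one edge from each $\delta(v)$. Since $M_B$ is a matching, no two of its edges share an endpoint in $V$; hence $|R^\cA(B) \cap \delta(v)| \leq 1$ for every $v \in V$, and $R^\cA(B)$ is independent in $M^\cA$. Finally, for the covering property: for each node $u \in B$, the matching $M_B$ contains an edge $e$ incident to $u$, and $g^\cA(e) = u$ by the definition of $g^\cA$ as the map sending an edge to the unique node of $N$ it hits; therefore $e \in R^\cA(B) \cap (g^\cA)^{-1}(u)$, so this intersection is nonempty, as desired.

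I do not expect any genuine obstacle here: the statement is essentially a reformulation of the definition of independence in a transversal matroid, combined with the observation that matchings of $G$ are exactly the common ``one-per-$V$-node'' subsets that are independent in the partition matroid $M^\cA$. The only point worth a brief remark is the harmless treatment of nodes $u \in N$ hit by no edge of $G$; such a $u$ is a loop of the transversal matroid and hence never belongs to an independent set $B$, so the dummy loop elements added to $N^\cA$ for such nodes are never actually needed in constructing $R^\cA(B)$.
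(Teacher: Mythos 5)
Your proposal is correct and follows the same route as the paper: take the matching witnessing the independence of $B$ in the transversal matroid, use it as $R^\cA(B)$, and observe that it contains at most one edge per partition $\delta(v)$ and at least one edge incident to each $u \in B$. Your write-up is in fact slightly more explicit than the paper's (which leaves the independence of the matching in $M^\cA$ implicit), and the remark about edge-less nodes of $N$ is a harmless, accurate aside.
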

\begin{proof}
Since $B$ is independent, there must be a matching in $G$ whose edges hit all nodes of $B$. Let $R^\cA(B)$ be this matching. Note that, for every node $u \in B$, the set $(g^\cA)^{-1}(u)$ contains all edges hitting $u$, and thus, it must have a non-empty intersection with the matching $R^\cA(B)$ which contains such an edge by definition.
\end{proof}

\begin{corollary}
$\cA$ is a $(1, \nicefrac{1}{4}, 0)$-reduce-and-solve algorithm for {\MSP} on transversal matroids.
\end{corollary}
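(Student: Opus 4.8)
The plan is to verify, entity by entity, that the triple $(N^\cA, g^\cA)$, $M^\cA$, and $\bar{\cA}$ described above meets every requirement in the definition of a $(c^r, c^o, c^a)$-reduce-and-solve algorithm, with $c^r = 1$, $c^o = \nicefrac{1}{4}$, and $c^a = 0$. No genuinely new argument is needed: the three parameters are supplied by the preceding lemma and by Theorem~\ref{thm:partition_algorithm}, and the remaining work is to check the structural side conditions of the definition.

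First I would handle the refinement structure. For every $u \in N$ the set $(g^\cA)^{-1}(u)$ is exactly the set of edges hitting $u$, which $\cA$ learns upon the arrival of $u$; together with the dummy-element convention for nodes of $N$ hit by no edge, this ensures that $(g^\cA)^{-1}(u)$ is non-empty and can be evaluated for every observed $u$. Next, $M^\cA$ is a simple partition matroid whose partitions are the sets $\delta(v)$, $v \in V$, so independence of any $F \subseteq N^\cA$ is decided by checking $|F \cap \delta(v)| \leq 1$ for all $v$; in particular this can be done for subsets containing only refinements of already-observed nodes, as required. Finally, the sets $R^\cA(B)$ realizing $c^r = 1$ are provided verbatim by the preceding lemma, which exhibits, for every independent set $B$ of the transversal matroid, a matching of $G$ covering $B$, and hence an independent set of $M^\cA$ meeting $(g^\cA)^{-1}(u)$ for every $u \in B$.

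It remains to check the third entity. A restriction of $M^\cA$ to a subset $N' \subseteq N^\cA$ containing exactly one edge of $(g^\cA)^{-1}(u)$ for every $u \in N$ is again a simple partition matroid (its partitions are the sets $\delta(v) \cap N'$), so Algorithm~\ref{alg:partition} is a legitimate order-oblivious {\MSP} algorithm on it, and by Theorem~\ref{thm:partition_algorithm} (with $p = \nicefrac{1}{2}$) it is $\nicefrac{1}{4}$-$\optc$-competitive; by the observation in Section~\ref{sec:genToMatInt} that any $c$-$\optc$-competitive algorithm is $(c, 0)$-weakly-$\optc$-competitive, it is $(\nicefrac{1}{4}, 0)$-weakly-$\optc$-competitive, giving $c^o = \nicefrac{1}{4}$ and $c^a = 0$. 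Moreover, any set $I$ independent in such a restriction is a matching of $G$: it contains at most one edge at each $u \in N$ because $N'$ does, and at most one edge at each $v \in V$ because it is independent in $M^\cA$; hence $g^\cA(I)$ is independent in the transversal matroid, and in particular $g^\cA(T^{\bar{\cA}}_1)$ is independent in $M$ — the extra property demanded of $\bar{\cA}$. Collecting $c^r = 1$, $c^o = \nicefrac{1}{4}$, $c^a = 0$ then yields the corollary. The only point requiring any care is this last bookkeeping step — confirming that Algorithm~\ref{alg:partition} qualifies as an algorithm for \emph{restrictions} of $M^\cA$ and that the image under $g^\cA$ of an independent set stays independent in $M$ — and both facts reduce to the single observation that the relevant independent sets are matchings of $G$, which is immediate from the definitions.
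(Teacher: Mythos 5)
Your proposal is correct and follows essentially the same route as the paper: the paper derives the corollary by combining the preceding lemma (giving $c^r=1$ via the matching $R^\cA(B)$), Theorem~\ref{thm:partition_algorithm} together with the general observation that a $c$-$\optc$-competitive algorithm is $(c,0)$-weakly-$\optc$-competitive (giving $c^o=\nicefrac{1}{4}$, $c^a=0$), and the observation that any independent set of the restriction is a matching of $G$, so its image under $g^\cA$ is independent in the transversal matroid. Your bookkeeping of the structural side conditions matches the paper's discussion point for point.
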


\subsection{Laminar matroids}

Consider a collection $\cL$ of subsets of a ground set $N$ and an upper bound $\mu(L)$ for every subset $L \in \cL$. One can define that a subset $I \subseteq N$ is independent if and only if $|I \cap L| \leq \mu(L)$ for every $L \in \cL$. It turns out that this way to define independence results in a matroid whenever the subsets of $\cL$ are laminar, i.e., any two subsets in $\cL$ are either disjoint or one of them contains the other. The matroid obtained this way is called a laminar matroid.

Ma et al.~\cite{ma_2013_simulatedSTACS} describe a $\nicefrac{1}{9.6}$-competitive algorithm for {\MSP} on laminar matroids which can be implemented using only independence oracle queries on sets of elements that already arrived. Moreover, the algorithm of~\cite{ma_2013_simulatedSTACS} is order-oblivious, and its analysis by~\cite{ma_2013_simulatedSTACS} shows that there exists a random set $R$ that depends only on the sample of the algorithm such that: $R$ is a subset of the output of the algorithm for every adversarial order in the second phase of the algorithm, and $R$ contains every element of $\OPT$ with probability at least $\nicefrac{1}{9.6}$ (where $\OPT$ is the maximum weight independent set in the laminar matroid). Combining these facts with our terminology, we get the following theorem.
\begin{theorem}
The algorithm of~\cite{ma_2013_simulatedSTACS} for {\MSP} on laminar matroids is a $\nicefrac{1}{9.6}$-$\optc$-competitive order-oblivious algorithm (and thus, also $(\nicefrac{1}{9.6}, 0)$-weakly-$\optc$-competitive). Moreover, it implies a $(1, \nicefrac{1}{9.6}, 0)$-reduce-and-solve algorithm for this problem.
\end{theorem}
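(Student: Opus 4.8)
The plan is to reduce the statement to two ingredients: the concrete guarantee established for the algorithm of~\cite{ma_2013_simulatedSTACS} in that paper (and already recalled in the paragraph preceding the theorem), and the two generic implications recorded in Section~\ref{sec:genToMatInt} --- namely, that a $c$-$\optc$-competitive algorithm is $(c,0)$-weakly-$\optc$-competitive when $O^\cA(S,r,\sigma)$ is taken to be the set of elements it selects, and that a $(c^o,c^a)$-weakly-$\optc$-competitive algorithm yields a $(1,c^o,c^a)$-reduce-and-solve algorithm via the identity refinement $N^\cA=N$, $M^\cA=M$, $g^\cA=\mathrm{id}$, $\bar{\cA}=\cA$.

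First I would record precisely what~\cite{ma_2013_simulatedSTACS} gives: their laminar-matroid algorithm $\cA$ is order-oblivious, is implementable with an independence oracle restricted to already-arrived elements, and its analysis produces a random set $R$ that is a function of the sample alone --- in particular, independent of the adversarial order $\sigma$ of the non-sampled elements --- such that (i) $R\subseteq \cA(S,r,\sigma)$ for every $\sigma$ and all random bits $r$, and (ii) $\Pr[e\in R]\ge\nicefrac{1}{9.6}$ for every $e\in\OPT$. I would then deduce $\nicefrac{1}{9.6}$-$\optc$-competitiveness directly: fixing $e\in\OPT$ and a sample $S$, property (i) gives $\Pr_r[e\in\cA(S,r,\sigma)]\ge\Pr[e\in R\mid S]$ uniformly over $\sigma$, hence $\min_\sigma\Pr_r[e\in\cA(S,r,\sigma)]\ge\Pr[e\in R\mid S]$; averaging over $S$ and using (ii) gives $\E_S[\min_\sigma\Pr_r[e\in\cA(S,r,\sigma)]]\ge\Pr[e\in R]\ge\nicefrac{1}{9.6}$, which is exactly the definition of $\nicefrac{1}{9.6}$-$\optc$-competitiveness, and order-obliviousness is inherited verbatim.

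The remaining two conclusions then follow from the generic implications. Since $\cA$ is $\nicefrac{1}{9.6}$-$\optc$-competitive, choosing $O^\cA(S,r,\sigma)$ to be the set of selected elements makes the first inequality of Definition~\ref{def:weakly_opt_competitive} hold (it is $\optc$-competitiveness restricted to $\OPT\setminus L(\OPT)\subseteq\OPT$) and the second hold with $c^a=0$ (because $O^\cA\subseteq T^{\bar\cA}_1\cup T^{\bar\cA}_2$), so $\cA$ is $(\nicefrac{1}{9.6},0)$-weakly-$\optc$-competitive; and casting it as a reduce-and-solve algorithm with the identity refinement makes all three reduce-and-solve requirements trivial ($g^\cA$ is a bijection, $R^\cA(S)=S$ works with $c^r=1$, and $g^\cA(T^{\bar\cA}_1)=T^{\bar\cA}_1$ is independent in $M$ since $\cA$ keeps its solution independent), yielding the $(1,\nicefrac{1}{9.6},0)$-reduce-and-solve algorithm. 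The whole argument is essentially bookkeeping; the one place to be slightly careful is the middle step, where one must use that the set $R$ of~\cite{ma_2013_simulatedSTACS} depends on the sample only, so that the adversarial $\min_\sigma$ in the definition of $\optc$-competitiveness can be absorbed without loss.
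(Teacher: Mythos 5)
Your proposal is correct and matches the paper's argument: the paper likewise derives the theorem directly from the properties of the set $R$ in the analysis of~\cite{ma_2013_simulatedSTACS} (a function of the sample alone, contained in the output under every adversarial order, and containing each element of $\OPT$ with probability at least $\nicefrac{1}{9.6}$), combined with the generic implications from Section~\ref{sec:genToMatInt}. Your write-up is in fact slightly more explicit than the paper's, which leaves the $\min_\sigma$ absorption step implicit.
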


\subsection{Regular and max-flow min-cut matroids}

In this section we show how our framework captures regular and max-flow min-cut matroids.

A regular matroid $M=(N,\mathcal{I})$ is a matroid that can be represented as a linear matroid (over the field $\mathbb{R}$) using a totally unimodular matrix (see Subsection~\ref{ssc:sparse} for a definition of linear matroids). We recall that a matrix is totally unimodular (TU) if and only if the determinant of every square submatrix of $A$ is either $-1$, $0$, or $1$.
An equivalent definition of a regular matroid is that a matroid is regular if and only if it can be represented as a linear matroid over any field (in fact, the representation by a TU matrix works simultaneously over any field). We refer the interested reader to~\cite{oxley_1992_matroid} for more information on regular matroids.

Dinitz and Kortsarz~\cite{dinitz_2013_matroid} presented a $\sfrac{1}{9e}$-competitive algorithm for {\MSP} on regular matroids, assuming that the matroid is known upfront, which we assume throughout this subsection.
In a preprocessing step, before observing or selecting any elements, their algorithm constructs a random matroid $M'=(N',\mathcal{I}')$ with the following properties:
\begin{enumerate}[label=(\roman*)]
\item $N' \subseteq N$ and $\mathcal{I}'\subseteq \mathcal{I}$, i.e., each independent set of $M'$ is independent in $M$.

\item\label{item:regRedSmallLoss} For each independent set $S\in \mathcal{I}$, there is a random set $R'(S)\subseteq \mathcal{I}'$, where the randomness is also over the random construction of $M'$, such that
\begin{equation*}
\Pr[e\in R'(S)] \geq \frac{3}{10} \qquad \forall e\in S\enspace.
\end{equation*}

\item The matroid $M'$ is a disjoint union of graphic and co-graphic matroids. Hence, there is a partition of the ground set $N'=N'_1 \cup N'_2 \cup \cdots \cup N'_q$, where $q\in \mathbb{Z}_{>0}$, and matroids $M'_i = (N'_i,\mathcal{I}'_i)$ for $i\in [q]$ such that
\begin{itemize}[nosep]
\item $M'_i$ is graphic or co-graphic for each $i\in [q]$,
\item $\mathcal{I}' = \{I_1\cup \cdots \cup I_q \mid I_i \in \mathcal{I}'_i \text{ for } i\in [q]\}$.
\end{itemize}

\end{enumerate}

In words, the matroid $M'$ can be interpreted as a more restrictive version of $M$, leading to a much simpler structure, but losing a factor of at most $\sfrac{10}{3}$ in terms of the maximum weight independent set. 
The {\MSP} algorithm of~\cite{dinitz_2013_matroid} first constructs the matroid $M'$, and then simply applies in parallel a $\sfrac{1}{2e}$-competitive algorithm for all graphic matroids in the decomposition and a $\sfrac{1}{3e}$-competitive algorithm for all co-graphic matroids. A direct analysis would lead to a competitive ratio of $\sfrac{1}{10e}$, which follows from the factor $\sfrac{3}{10}$ loss incurred due to~\ref{item:regRedSmallLoss} multiplied by the competitive ratio of $\sfrac{1}{3e}$ for co-graphic matroids.
The competitive ratio of $\sfrac{1}{9e}$ obtained in~\cite{dinitz_2013_matroid} follows by a more careful analysis. We briefly address this in a comment at the end of this subsection.


To obtain a reduce-and-solve algorithm for regular matroids, we can simply first construct the matroid $M'$ with the above properties as described in~\cite{dinitz_2013_matroid}, and then run in parallel our previously-described reduce-and-solve algorithms for graphic and co-graphic matroids.
A technicality we have to watch out for, is to make sure that we stay in the order-oblivious model. More precisely, when running order-oblivious secretary algorithms in parallel, then each algorithm may want to query a certain number of elements of its ground set during the sampling phase. However, to make sure that the algorithm which corresponds to the parallel run of several algorithms is still order-oblivious, we can only decide about $m$ uniformly random elements that will be sampled overall. There are different ways around this problem. One simple way to resolve this is to make sure that both algorithms we use for graphic and for co-graphic matroids observe, during their sampling phase, each element of their ground set with probability $0.5$ (or any other common value within $(0,1)$). This way, we can simply observe each element of the combined ground set with probability $0.5$. Our $(1,\sfrac{1}{12},0)$-reduce-and-solve algorithm for co-graphic matroids already has this property. The algorithm we suggested for graphic matroids (see Corollary~\ref{cor:graphicMat}) observes each element with probability $p=2/3$. However, we can run the same algorithm for $p=1/2$, which, by Lemma~\ref{lem:graphicMatExt}, leads to a slightly weaker $(1,\sfrac{1}{8},0)$-reduce-and-solve algorithm for graphic matroids that observes each element with probability $0.5$ in the sampling phase. Hence, by running these algorithms in parallel, we obtain a $(1,\sfrac{1}{12},0)$-reduce-and-solve algorithm for any matroid that is a disjoint union of graphic and co-graphic matroids. 
Moreover, we loose a factor of $\sfrac{10}{3}$ due to the reduction to the matroid $M'$. It is not hard to observe that the reduction to $M'$ done at the beginning of the algorithm can be interpreted as part of the reduce-and-solve algorithm, leading to a $(\sfrac{3}{10},\sfrac{1}{12},0)$-reduce-and-solve procedure.

\begin{theorem}
There is a $(\sfrac{3}{10},\sfrac{1}{12},0)$-reduce-and-solve algorithm for {\MSP} on regular matroids.
\end{theorem}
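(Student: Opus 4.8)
The plan is to obtain the reduce-and-solve algorithm $\cA$ by composing two layers. The inner layer is a reduce-and-solve algorithm for \emph{any} matroid that is a disjoint union of graphic and co-graphic matroids, obtained by running the reduce-and-solve algorithms already established for graphic and for co-graphic matroids in parallel on the components. The outer layer is the Dinitz--Kortsarz preprocessing, which replaces the regular matroid $M$ by the random matroid $M'$, and $M'$ has exactly this disjoint-union structure while losing only a factor of $\sfrac{10}{3}$ on the optimum. So I would first prove the intermediate claim that a disjoint union of graphic and co-graphic matroids admits a $(1,\sfrac{1}{12},0)$-reduce-and-solve algorithm, and then show that prepending the reduction $M \rightsquigarrow M'$ to it costs only a factor $\sfrac{3}{10}$ in the first parameter.

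\emph{The inner layer.} Write the disjoint union as $M'_1, \dots, M'_q$ with each $M'_i = (N'_i, \cI'_i)$ graphic or co-graphic. For a graphic component use the reduce-and-solve algorithm of Corollary~\ref{cor:graphicMat} but instantiated with $p = \sfrac{1}{2}$: by Lemma~\ref{lem:graphicMatExt} this is a $(1,\sfrac{1}{8},0)$-reduce-and-solve algorithm whose underlying order-oblivious algorithm (Algorithm~\ref{alg:graphic_partition}) samples each element of its ground set independently with probability $0.5$. For a co-graphic component use the $(1,\sfrac{1}{12},0)$-reduce-and-solve algorithm, whose underlying order-oblivious algorithm (Soto's reduction followed by Algorithm~\ref{alg:partition} with $p = \sfrac{1}{2}$) also samples each element independently with probability $0.5$. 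Now assemble the three components of a reduce-and-solve algorithm for $M'_1, \dots, M'_q$: let $N^\cA$ be the disjoint union of the refined ground sets $N^{\cA_i}$ (plus one loop refinement element for each source element lacking a refinement, needed only to keep $g^\cA$ onto), let $g^\cA$ and $M^\cA$ act componentwise, and let $\bar\cA$ be the parallel run of the $\bar\cA_i$. Since the $N^{\cA_i}$ have pairwise disjoint sets of source elements, independence in $M^\cA$ is componentwise and all required oracle and refinement-inverse queries reduce to queries inside a single component; the parallel run is order-oblivious because every $\bar\cA_i$ samples each of its ground-set elements with the \emph{common} probability $0.5$; taking $R^\cA(S) = \bigcup_i R^{\cA_i}(S \cap N'_i)$ (each $S \cap N'_i$ being independent in $M'_i$) gives $c^r = 1$; and since every $\bar\cA_i$ is genuinely $\optc$-competitive with ratio $\sfrac18$ or $\sfrac1{12}$, the parallel run is $\sfrac1{12}$-$\optc$-competitive, hence $(\sfrac1{12},0)$-weakly-$\optc$-competitive, so $c^o = \sfrac1{12}$ and $c^a = 0$. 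The extra property also holds, since $g^\cA(T^{\bar\cA}_1) = \bigcup_i g^{\cA_i}(T^{\bar\cA_i}_1)$ is a disjoint union of sets independent in the respective $M'_i$, hence independent in $M'_1, \dots, M'_q$.

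\emph{The outer layer.} The Dinitz--Kortsarz preprocessing is essentially a reduction layer: before any element is observed it produces a random $M' = (N', \cI')$ with $N' \subseteq N$, $\cI' \subseteq \cI$, and for every $S \in \cI$ a random $R'(S) \in \cI'$ with $\Pr[e \in R'(S)] \geq \sfrac{3}{10}$ for all $e \in S$; moreover $M'$ is a disjoint union of graphic and co-graphic matroids. Prepending this layer to the inner algorithm amounts to composing the two refinement maps and composing the two covering sets, i.e.\ setting $R^\cA(S) = R^{\cA'}(R'(S))$, where $R^{\cA'}$ is the covering set of the $M'$-level algorithm constructed above. By independence of the two random choices, $\Pr[R^\cA(S) \cap (g^\cA)^{-1}(e) \neq \varnothing] \geq \sfrac{3}{10} \cdot 1$ for every $e \in S$, while $c^o = \sfrac1{12}$ and $c^a = 0$ are inherited unchanged, and the order-oblivious implementation is untouched because this layer uses no samples. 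Also $g^\cA(T^{\bar\cA}_1)$, being independent in $M'$, is independent in $M$ by $\cI' \subseteq \cI$. This yields the claimed $(\sfrac{3}{10}, \sfrac{1}{12}, 0)$-reduce-and-solve algorithm, under the blanket assumption of this subsection (and needed for constructing $M'$) that the regular matroid is known upfront.

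\emph{Where the work is.} The obstacle is not a single deep step but the bookkeeping required to stay inside the order-oblivious model while composing: one must deliberately downgrade the graphic subroutine from $p = \sfrac{2}{3}$ to $p = \sfrac{1}{2}$ (losing $\sfrac{4}{27}$ down to $\sfrac{1}{8}$) so that every parallel subroutine samples its ground set with one and the same probability $0.5$, and one must check that the two-level refinement map $g^\cA$ and the two-level matroid $M^\cA$ still support the oracle accesses demanded by the definition of a reduce-and-solve algorithm. A secondary subtlety---how the light-element set $L(\OPT)$ interacts with the disjoint-union structure---is harmless here precisely because the component algorithms are genuinely $\optc$-competitive rather than merely weakly-$\optc$-competitive, so their optimum-covering guarantee holds for \emph{every} element of the optimum and composes verbatim.
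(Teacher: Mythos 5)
Your proposal is correct and follows essentially the same route as the paper: downgrade the graphic subroutine to $p=\sfrac{1}{2}$ so that all parallel subroutines sample with the common probability $0.5$ (yielding a $(1,\sfrac{1}{12},0)$-reduce-and-solve algorithm for disjoint unions of graphic and co-graphic matroids), then absorb the Dinitz--Kortsarz reduction to $M'$ into the refinement/covering structure at a cost of $\sfrac{3}{10}$ in $c^r$. Your explicit remark on why $L(\OPT)$ causes no trouble (the components being genuinely $\optc$-competitive) is a detail the paper leaves implicit, but it does not change the argument.
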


Furthermore, it was shown in~\cite{dinitz_2013_matroid} that their technique naturally extends to a generalization of regular matroids, known as max-flow min-cut matroids (see~\cite{seymour_1977_matroids} for a formal definition). The approach is analogous to the one used for regular matroids, with the difference that instead of losing a factor of $\sfrac{3}{10}$ through the reduction to $M'$, a factor of $\sfrac{2}{7}$ is lost. By the same reasoning as above we therefore get.

\begin{theorem}
There is a $(\sfrac{2}{7},\sfrac{1}{12},0)$-reduce-and-solve algorithm for {\MSP} on max-flow min-cut matroids.
\end{theorem}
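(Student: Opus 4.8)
The plan is to mirror, essentially verbatim, the argument just given for regular matroids, changing only the quantitative loss of the structural reduction. As shown in~\cite{dinitz_2013_matroid} (building on Seymour's decomposition theory for max-flow min-cut matroids), there is a preprocessing step --- carried out before any element is observed --- that replaces the given max-flow min-cut matroid $M=(N,\cI)$ by a random matroid $M'=(N',\cI')$ satisfying the analogues of the three properties listed above for regular matroids, with the sole difference that property~\ref{item:regRedSmallLoss} holds with $\sfrac{2}{7}$ in place of $\sfrac{3}{10}$: that is, $N'\subseteq N$, $\cI'\subseteq \cI$, $M'$ is a disjoint union of graphic and co-graphic matroids, and for every $S\in\cI$ there is a random set $R'(S)$ that is independent in $M'$ and satisfies $\Pr[e\in R'(S)]\ge \sfrac{2}{7}$ for all $e\in S$ (the randomness being over the construction of $M'$).

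First I would invoke the $(1,\sfrac{1}{12},0)$-reduce-and-solve algorithm $\cB$ for disjoint unions of graphic and co-graphic matroids obtained earlier in this appendix (running, as discussed there, the graphic sub-algorithm with sampling probability $\sfrac{1}{2}$ via Lemma~\ref{lem:graphicMatExt}, so that it and the co-graphic sub-algorithm both observe each element with probability $\sfrac{1}{2}$); in particular $\cB$ applies to $M'$. It then remains to prepend the reduction $M\rightsquigarrow M'$ to $\cB$ and to check that the composition is a reduce-and-solve algorithm on $M$ with parameters $(\sfrac{2}{7}\cdot 1,\ \sfrac{1}{12},\ 0)$. Concretely, take $N^{\cA}$ to be the refinement $N^{\cB}$ of $N'$ together with one fresh loop element for each element of $N\setminus N'$; let $g^{\cA}$ equal $g^{\cB}$ on $N^{\cB}$ and send each fresh loop to its source; let $M^{\cA}$ be $M^{\cB}$ with the fresh loops added as loops; and let $\bar{\cA}=\bar{\cB}$. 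For $S\in\cI$ set $R^{\cA}(S)=R^{\cB}(R'(S))$, which is independent in $M^{\cB}$ (hence in $M^{\cA}$) because $R'(S)\in\cI'$; conditioned on the event $e\in R'(S)$, which has probability at least $\sfrac{2}{7}$, the $c^r=1$ property of $\cB$ forces $R^{\cB}(R'(S))\cap (g^{\cB})^{-1}(e)=R^{\cA}(S)\cap (g^{\cA})^{-1}(e)\neq\varnothing$, so the required bound $\Pr[R^{\cA}(S)\cap (g^{\cA})^{-1}(e)\neq\varnothing]\ge \sfrac{2}{7}$ holds for every $e\in S$. The weak-$\optc$-competitiveness parameters $(\sfrac{1}{12},0)$ of $\bar{\cA}=\bar{\cB}$ are unchanged; the extra independence requirement is inherited, since $g^{\cB}(T^{\bar{\cB}}_1)$ is independent in $M'$ and $\cI'\subseteq\cI$, so it is independent in $M$, and $g^{\cA}(T^{\bar{\cA}}_1)=g^{\cB}(T^{\bar{\cB}}_1)$; and all oracle-evaluability conditions hold because $M'$ and its graphic/co-graphic decomposition are fixed by the (element-independent) preprocessing step, so independence queries to $M^{\cA}$ on refinements of already-observed elements reduce to queries to $M^{\cB}$.

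The only genuine point of care --- and the step I expect to be the main obstacle --- is staying inside the order-oblivious model under this composition: a single uniformly-random-size sample has to serve the reduction together with both sub-algorithms, which forces them all to observe each element with one common probability. This is resolved exactly as in the regular case, namely by running both the graphic and co-graphic reduce-and-solve algorithms with sampling probability $\sfrac{1}{2}$; by Lemma~\ref{lem:graphicMatExt} the graphic one then becomes a $(1,\sfrac{1}{8},0)$-reduce-and-solve algorithm, and since $\sfrac{1}{8}\ge\sfrac{1}{12}$ this only strengthens the $c^o$ parameter of $\cB$, so the combined algorithm for disjoint unions of graphic and co-graphic matroids is order-oblivious with a common sampling probability. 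Beyond this bookkeeping no new idea is needed: the competitive guarantee is a straightforward consequence of the structural reduction of~\cite{dinitz_2013_matroid} together with the composition of reduce-and-solve parameters, yielding an order-oblivious $(\sfrac{2}{7},\sfrac{1}{12},0)$-reduce-and-solve algorithm for {\MSP} on max-flow min-cut matroids.
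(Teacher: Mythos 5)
Your proposal is correct and follows essentially the same route as the paper: invoke the decomposition of Dinitz and Kortsarz to reduce the max-flow min-cut matroid to a disjoint union of graphic and co-graphic matroids at a loss of $\sfrac{2}{7}$, run the $(1,\sfrac{1}{12},0)$-reduce-and-solve algorithm for such unions (with both sub-algorithms sampling at probability $\sfrac{1}{2}$ to remain order-oblivious), and fold the $\sfrac{2}{7}$ loss into the $c^r$ parameter. The paper states this in one line by analogy with the regular-matroid case; your explicit construction of $N^{\cA}$, $g^{\cA}$, $M^{\cA}$, and $R^{\cA}(S)=R^{\cB}(R'(S))$ is just a more detailed writing-out of the same composition.
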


\noindent \textbf{Remark:} 
The algorithms presented in~\cite{dinitz_2013_matroid} do the reduction from $M$ to $M'$ in two steps. In a first step, a matroid $\bar{M}$ is obtained that is a disjoint union of graphic, co-graphic, and one additional simple type of matroid, which depends on whether $M$ is a regular or a max-flow min-cut matroid. This reduction only loses a factor of $\sfrac{1}{3}$. In a second step, the additional type of matroid is reduced to a graphic matroid at a loss of either $\sfrac{9}{10}$ for regular matroids or $\sfrac{6}{7}$ for max-flow min-cut matroids. However, this additional loss is compensated by the fact that the competitive ratio of graphic matroids is better than for co-graphic matroids, leading to an overall competitiveness of $\sfrac{1}{9e}$ for regular matroids and max-flow min-cut matroids. With a similar analysis we could also avoid losing a factor of $\sfrac{9}{10}$ for regular matroids and $\sfrac{6}{7}$ for max-flow min-cut matroids, respectively. For clarity we decided to present here a slightly weaker, but simpler, reduction.

\subsection{General matroids}

In this section we describe a weakly-$\optc$-competitive algorithm for {\MSP} on a general matroid constraint $M = (N, \cI)$ which is based on the algorithm of~\cite{feldman_2015_simple} for {\MSP}. We begin by recapping some of the details of the last algorithm.

One can interpreted the algorithm of~\cite{feldman_2015_simple} as an algorithm getting two parameters $w_{\max}$ and $h$, and then defining $h$ weight classes $C_1, \ldots C_h$ by
\begin{equation*}
C_i = \left\{
e\in N \;\middle\vert\; w(e) \in \left.\left(\frac{w_{\max}}{2^{h-i+1}}, \frac{w_{\max}}{2^{h-i}}\right.\right]
\right\} \qquad \forall\; i = 1, \dotsc, h \enspace.
\end{equation*}
After defining these weight classes, the algorithm executes one of two procedures, which, for simplicity, we simply call Procedure~A and Procedure~B. More precisely, the algorithm decides at the beginning at random, with well-defined probabilities, which of the two procedures to run. For simplicity, we can assume that each one of the two procedures is chosen with probability $\nicefrac{1}{2}$. This is not the probability employed in~\cite{feldman_2015_simple}, but the change in the probability only changes the guarantee of the algorithm by a constant factor. For the interested readers, we note that the algorithm of~\cite{feldman_2015_simple} is parameterized by a parameter $\tau\in \{0,\dots, \lceil\log(h+1)\rceil \}$ which is chosen uniformly at random. What we denote as Procedure~A is the algorithm obtained by choosing $\tau$ uniformly at random within $\{1,\ldots, \lceil \log(h+1) \rceil\}$, whereas Procedure~B corresponds to letting the algorithm of~\cite{feldman_2015_simple} run with $\tau = 0$.

Both Procedure~A and Procedure~B first observe a subset $S$ containing each element of $N$ with probability $\nicefrac{1}{2}$ (equivalently, $S$ is a sample $m$ elements, where $m$ is distributed according to the binomial distribution $B(|N|, $\nicefrac{1}{2}$)$). Then, they select elements of $N \setminus S$ in a way that we describe later. The following is a key quantity in the analysis of both procedures.
\begin{equation*}
p_{e,i} \coloneqq \Pr[e\in \spn(S \cap C_{\geq i}) \mid e\not\in S]
\qquad \forall e\in N, i = 1,\dotsc, h \enspace,
\end{equation*}
where $C_{\geq i}$ is the union of all the classes $C_i, C_{i+1}, \dotsc, C_h$.

\paragraph{Selection by Procedure~A.}

Based on $S$, Procedure~A randomly constructs a matroid $M^A=(N^A,\mathcal{I}^A)$, with $N^A \subseteq N\setminus S$ and $\mathcal{I}^A\subseteq \mathcal{I}$, and then greedily accepts any appearing element that, together with the already accepted elements, forms an independent set in $M^A$. The key power of Procedure~A is that $M^A$ has the following property for every $i = 1,\dotsc, h$ and $e\in C_i$.
\[
	\Pr\mathopen{}\left[e \text{ is a free element in $M^A$}\right]
	\geq
	\frac{1-p_{e,i}}{8 \lceil\log(h+1)\rceil}
	\enspace,
\]
where the probability is with respect to the randomness of $S$ and the random construction of $M^A$. We recall that an element is \emph{free} in a matroid if it is not spanned by all the other elements together. Hence, a free element can be added to any independent set without destroying independence.
Consequently, a free element in $M^A$ is always selected by Procedure~A, regardless of the order in which the elements of $N\setminus S$ appear.

\paragraph{Selection by Procedure~B}

Like Procedure~A, Procedure~B also randomly constructs a matroid $M^B=(N^B, \mathcal{I}^B)$ based on $S$ with $N^B\subseteq N\setminus S$ and $\mathcal{I}^B\subseteq \mathcal{I}$, and then greedily selects appearing elements, maintaining an independent set in $M^B$. However, the properties of $M^B$ are different. The first property that $M^B$ has is that it is a disjoint union of $h$ matroids, where each one of the matroids is defined over $N^B\cap C_i$ for a different $i = 1, \dotsc, h$. Consequently, the selection of an element of $N^B\cap C_i$ for some $i$ does not affect the set of elements that can be select from $N^B \cap C_{i'}$ for any $i' \neq i$. The second property of $M^B$ is that it has a random independent set $I^B$ such that

\begin{equation*}
	\Pr[e\in I^B] = \frac{p_{e,i}}{4} 
\qquad \forall\; i = 1,\dotsc, k \text{ and } e\in C_i\cap \OPT \enspace,
\end{equation*}
where the probability is, again, with respect to the randomness of $S$ and the random construction of $M^B$.

We now would like to run Procedures~A and~B against a switching adversary. The switching adversary places each element selected by the procedures into one of two sets $T_1$ and $T_2$, and we assume that Procedure~X (where $X$ is either $A$ or $B$) adds an arriving element to the solution if and only if the addition of this element to $T_1$ will not destroy the independence in $M^X$ of $T_1$. Observe that $T_1$ always remains independent in $M$ since $\cI^X \subseteq \cI$. The following lemma shows that we already have an algorithm which is close to be weakly-$\optc$-competitive.

\begin{lemma} \label{lem:old_algorithm}
Executing Procedures~A or~B, with probability $\nicefrac{1}{2}$ each, against a switching adversary satisfies the following inequality.
\[
	w((T_1 \cup T_2) \cap O) + 2 \cdot w(T_1 \setminus O) \geq w(O)
	\enspace,
\]
where $O$ is a set, depending only on $S$ and the random coins used by the procedures, containing each element of $\OPT \cap (C_1, \dotsc, C_h)$ with probability at least $1/O(\log h)$.
\end{lemma}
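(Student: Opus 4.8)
The plan is to define the set $O$ by cases according to which procedure the combined algorithm runs, so that $O$ depends only on $S$ and the internal random coins (in particular on the coin selecting the procedure, which is allowed): if Procedure~A is run I would take $O$ to be the set of elements that are free in $M^A$, and if Procedure~B is run I would take $O = I^B$. Rearranging, the target inequality is equivalent to $w(O \setminus (T_1 \cup T_2)) \le 2\,w(T_1 \setminus O)$, and I would prove this conditioned on each procedure separately. Conditioned on Procedure~A, it is immediate: the algorithm keeps $T_1$ independent in $M^A$, and a free element of $M^A$ can be added to any independent set of $M^A$, so every free element is selected on arrival and placed by the switching adversary into $T_1$ or $T_2$; hence $O \subseteq T_1 \cup T_2$ and the left-hand side is $0$.

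The bulk of the work will be the case of Procedure~B. Here I would exploit that $M^B$ is a disjoint union of matroids $M^B_i$, one over $N^B \cap C_i$ for each $i = 1, \dots, h$, so that whether an arriving element of $C_i$ is selectable depends only on $T_1 \cap C_i$, together with the fact that $T_1$ only grows. Fixing a class $C_i$ and writing $J = I^B \cap C_i$ (independent in $M^B_i$), $T = T_1 \cap C_i$, $J_1 = J \cap T_1$ and $J' = J \setminus (T_1 \cup T_2)$, the key observation will be that any element of $J$ that arrives and is not placed into $T_1$ is either placed into $T_2$ or, by the disjoint-union structure and the monotonicity of $T_1$, already lies in $\spn_{M^B_i}(T)$; thus $J' \subseteq \spn_{M^B_i}(T)$. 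Since $J_1$ and $J'$ are disjoint subsets of the independent set $J$ and $J_1 \cup J' \subseteq \spn_{M^B_i}(T)$, a rank count gives $|J_1| + |J'| \le \rank_{M^B_i}(T) = |T|$, hence $|J'| \le |T| - |J_1| = |T \setminus J|$. Because the weights inside $C_i$ lie within a factor of two of each other, this cardinality bound upgrades to $w(J') \le 2\,w(T \setminus J)$, and summing over $i$ yields $w(I^B \setminus (T_1 \cup T_2)) \le 2\,w(T_1 \setminus I^B)$. Combining the two cases establishes the inequality.

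For the probability guarantee, fix $e \in \OPT \cap C_i$. Conditioning on the procedure and using the stated property of $M^A$ for Procedure~A and of $I^B$ for Procedure~B, one gets $\Pr[e \in O] \ge \tfrac12 \cdot \tfrac{1 - p_{e,i}}{8\lceil \log(h+1) \rceil} + \tfrac12 \cdot \tfrac{p_{e,i}}{4}$; this is an affine function of $p_{e,i} \in [0,1]$ with slope $\tfrac18 - \tfrac1{16\lceil \log(h+1) \rceil} \ge 0$ (as $\lceil \log(h+1) \rceil \ge 1$), so it is minimized at $p_{e,i} = 0$, giving $\Pr[e \in O] \ge \tfrac1{16\lceil \log(h+1) \rceil} = 1/O(\log h)$. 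I expect the main obstacle to be the Procedure~B analysis against the switching adversary: one must carefully combine the greedy selection rule, the monotonicity of $T_1$, and the per-weight-class decomposition of $M^B$ to force every unselected element of $I^B \cap C_i$ into $\spn_{M^B_i}(T_1 \cap C_i)$, and then convert the resulting rank inequality into the desired weight inequality using that each $C_i$ is a narrow weight band; the Procedure~A case and the probability computation are routine by comparison.
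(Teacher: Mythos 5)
Your proposal is correct and follows essentially the same route as the paper: case split on the procedure, with Procedure~A giving $O\subseteq T_1\cup T_2$ via freeness, Procedure~B handled per weight class using the disjoint-union structure of $M^B$ and the factor-$2$ weight band, and the same convex-combination bound for the probability. The only differences are cosmetic: you take $O$ to be all free elements of $M^A$ (resp.\ all of $I^B$) rather than their intersection with $\OPT$, which the lemma statement permits, and your span/rank count $|J'|\leq|T\setminus J|$ is a more careful justification of the paper's assertion that at least $|O_i|$ elements of $T_1\cap C_i$ outside $O$ must have blocked the unselected elements of $O\cap C_i$.
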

\begin{proof}
First, we define $O$ to be the set of elements of $\OPT \cap (C_1, \dotsc, C_h)$ which are either free in $M^A$ if Procedure~A is applied or appear in $I^B$ if Procedure~B is applied. According to the properties of Procedures~A and~B described above, the probability of every element of $\OPT \cap (C_1, \dotsc, C_h)$ to belong to $O$ is at least
\[
	\frac{1}{2} \cdot \frac{1-p_{e,i}}{8 \lceil\log(h+1)\rceil} + \frac{1}{2} \cdot \frac{p_{e,i}}{4}
	=
	\frac{1}{O(\log h)}
	\enspace.
\]

There are now two cases to consider. If Procedure~A is executed, then all the elements of $O$ are selected because they are free elements of $M^A$, which implies $O \subseteq T_1 \cup T_2$. Hence, we get
\[
	w((T_1 \cup T_2) \cap O) + 2 \cdot w(T_1 \setminus O)
	\geq
	w((T_1 \cup T_2) \cap O)
	=
	w(O)
	\enspace.
\]
Consider now the case that Procedure~B is executed, and recall that $O \subseteq I_B$ is independent in $M^B$ and $M^B$ has the property that the selection of elements from one class does affect the possibility to select elements from the other classes. These facts imply that if we denote the set of elements of $O \cap C_i$ which are not selected by $O_i$, then there must be at least $|O_i|$ elements from $C_i \setminus O$ which appear in $T_1$ and prevented the selection of the elements of $O_i$. Since the weights of all the elements of $C_i$ are equal up to a factor of $2$, this implies
\[
	2 \cdot w((T_1 \cap C_i) \setminus O)
	\geq
	w(O_i)
	\quad
	\forall \; i = 1, \dotsc, k
	\enspace.
\]
Using this inequality we can now get
\begin{align*}
	w((T_1 \cup T_2) \cap O) + 2 \cdot w(T_1 \setminus O)
	={} &
	w((T_1 \cup T_2) \cap O) + 2 \cdot \sum_{i = 1}^h w((T_1 \cap C_i) \setminus O)\\
	\geq{} &
	w\left(O \setminus \bigcup_{i = 1}^h O_i\right) + \sum_{i = 1}^h w(O_i)
	=
	w(O)
	\enspace.
	\qedhere
\end{align*}
\end{proof}

There are still a few issues that we need to take care of in order to get a true weakly-$\optc$-competitive algorithm. Specifically, we need to find a way to set values for $w_{\max}$ and $h$, and we need to guarantee that all the elements of $\OPT$ outside the weight classes $C_1, \dots, C_h$ belong to $L(\OPT)$. Algorithm~\ref{alg:weakly_general} takes care of these issues.

\SetKwIF{With}{OtherwiseWith}{Otherwise}{with}{do:}{otherwise with}{otherwise}{}
\begin{algorithm2e}[ht]
\caption{\textsf{Weakly-$\optc$-competitive Algorithm for General Matroids}} \label{alg:weakly_general}
Pick $m$ according to the binomial distribution $B(|N|, \nicefrac{1}{2})$.\\
Request a sample $S$ of $m$ elements.\\
Let $W$ be the maximum weight of a non-loop element in $S$, and let $\tilde{\rho} \leftarrow 10 \cdot \rank(S)$.\\
\With{probability $\nicefrac{1}{2}$}
{
	Pick the first non-loop element of $N \setminus S$ whose value is at least $W$. \label{line:noprocedures}\\
}
\Otherwise
{
	Set $h = \lceil\log_2(\tilde{\rho} + 1)\rceil$ and $w_{\max} = W$, and execute Procedures~A or~B, with probability $\nicefrac{1}{2}$ each, on the matroid $M|_{N \setminus S}$. \label{line:procedures}
}
\end{algorithm2e}

\noindent \textbf{Remark:} Technically, as given, Algorithm~\ref{alg:weakly_general} is not order-oblivious since it requests the sample set $S$, and then Procedures~A and~B might require a second sample from $N \setminus S$. However, since the event that these procedures require a second sample is independent of the content of $S$, this technical issue can be fixed by taking a larger sample (containing every element of $N$ with probability $\nicefrac{3}{4}$) when a second sample is necessary, and then splitting it randomly into the two samples. We omit the very technical details of the split.

Observe that the definition of a weakly-$\optc$-competitive algorithm requires nothing in the case that $\OPT = \varnothing$. Thus, we may assume in the rest of this section, for simplicity, that $\OPT \neq \varnothing$. Let us now introduce some additional notation. Recall that $h(\OPT)$ is the heaviest element in $\OPT$, and let $s(N)$ be the second heaviest non-loop element in $N$ (the first must be $h(\OPT)$ itself). Next, let $\cE_1$ be the event that $h(\OPT) \not \in S$, $s(N) \in S$ and Algorithm~\ref{alg:weakly_general} gets to Line~\ref{line:noprocedures} (if $s(N)$ does not exist, i.e., $N$ contains only one non-loop element, then we drop the requirement $s(N) \in S$ from the definition of $\cE_1$). Additionally, let $\cE_2$ be the event that $h(\OPT) \in S$, $\tilde{\rho} \geq |\OPT|$ and Algorithm~\ref{alg:weakly_general} gets to Line~\ref{line:procedures}. It is easy to see that $\Pr[\cE_1] \geq 1/8$. Bounding the probability of $\cE_2$ is more involved, and it is done by the next lemma.

\begin{lemma} \label{lem:e_2_probability}
Fix any element $e \in \OPT - h(\OPT)$. Then, $\Pr[\cE_2 \mid e \not \in S] \geq 0.2$.
\end{lemma}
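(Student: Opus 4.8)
The plan is to decompose the event $\cE_2$ into its three constituents and bound each one. Recall that $\cE_2$ asks that $h(\OPT)\in S$, that $\tilde{\rho}=10\cdot\rank(S)\geq|\OPT|$, and that Algorithm~\ref{alg:weakly_general} reaches Line~\ref{line:procedures}. The last of these is the outcome of an independent fair coin, so it is independent of $S$ and hence of the conditioning event $e\notin S$; therefore
\[
\Pr[\cE_2\mid e\notin S]=\tfrac12\cdot\Pr[\,h(\OPT)\in S\ \wedge\ 10\cdot\rank(S)\geq|\OPT|\mid e\notin S\,],
\]
and it suffices to show that the probability on the right is at least $2/5$. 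Throughout we use that $S$ contains each element of $N$ independently with probability $\nicefrac{1}{2}$, which holds for the idealized description of Algorithm~\ref{alg:weakly_general} as well as for the order-oblivious fix discussed in the remark (that fix only changes how $S$ is extracted from a larger common sample, not the marginal law of $S$).

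Next I would eliminate the rank. Since $\OPT$ is independent in $M$, so is $\OPT\cap S$, whence $\rank(S)\geq|\OPT\cap S|$ and thus $\{10\cdot|\OPT\cap S|\geq|\OPT|\}\subseteq\{10\cdot\rank(S)\geq|\OPT|\}$. Because $e\in\OPT-h(\OPT)$ we have $|\OPT|\geq2$ and $e\neq h(\OPT)$, so conditioning on $e\notin S$ leaves the event $h(\OPT)\in S$ with probability exactly $\nicefrac{1}{2}$; factoring it out reduces the task to proving
\[
\Pr[\,|\OPT\cap S|\geq\tfrac{|\OPT|}{10}\mid h(\OPT)\in S,\ e\notin S\,]\geq\tfrac45 .
\]

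To establish this I would split on $t:=|\OPT|$. Conditioned on $h(\OPT)\in S$ and $e\notin S$, the remaining $t-2$ elements of $\OPT$ still lie in $S$ independently with probability $\nicefrac{1}{2}$, so $|\OPT\cap S|=1+Y$ with $Y\sim\mathrm{Bin}(t-2,\nicefrac{1}{2})$. If $t\leq10$ then $|\OPT\cap S|\geq1\geq t/10$ deterministically and we are done. If $t\geq11$, the requirement becomes $Y\geq t/10-1$, and a one-line computation gives $t/10-1=\E[Y]-\tfrac{2t}{5}$; a standard Hoeffding/Chernoff tail bound for a binomial then yields
\[
\Pr\!\left[Y<\tfrac{t}{10}-1\right]\leq\exp\!\left(-\frac{2(2t/5)^2}{t-2}\right)=\exp\!\left(-\frac{8t^2}{25(t-2)}\right)\leq\exp\!\left(-\frac{8\cdot121}{25\cdot9}\right)<\frac15 ,
\]
using that $t^2/(t-2)$ is increasing for $t\geq4$ and hence minimized over $\{t\geq11\}$ at $t=11$. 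Combining the three displays gives $\Pr[\cE_2\mid e\notin S]\geq\tfrac12\cdot\tfrac12\cdot\tfrac45=\tfrac15$.

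I do not expect a genuine difficulty here: once one observes $\rank(S)\geq|\OPT\cap S|$, everything collapses to a binomial tail estimate. The only point requiring care is the numerics — checking that the constant $10$ in the definition of $\tilde{\rho}$ is generous enough that $t/10-1$ lies well below $\E[Y]$ for every $t\geq11$ (the slack is $2t/5$, which is more than sufficient for the Hoeffding bound to beat $1/5$), together with the minor bookkeeping that the marginal distribution of $S$ is unchanged by the order-oblivious fix.
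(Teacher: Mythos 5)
Your proof is correct and follows essentially the same route as the paper: factor out the two independent probability-$\nicefrac{1}{2}$ events, lower bound $\rank(S)$ by $|S\cap\OPT|$ using the independence of $\OPT$, and finish with a binomial concentration bound. The only cosmetic differences are that you split off the case $|\OPT|\leq 10$ (where the bound is deterministic) and use Hoeffding's inequality, whereas the paper applies a single multiplicative Chernoff-type bound uniformly.
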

\begin{proof}
The event $\cE_2$ requires both that $h(\OPT)$ belongs to $S$ and that Algorithm~\ref{alg:weakly_general} gets to Line~\ref{line:procedures}. Each one of these conditions holds with probability $\nicefrac{1}{2}$, independently, and the second of them is also independent of $\tilde{\rho}$. Thus,
{\allowdisplaybreaks
\begin{align*}
	\Pr[\cE_2 \mid e \not \in S]
	={} &
	\frac{1}{4} \Pr[\tilde{\rho} \geq |\OPT| \mid e \not \in S, h(\OPT) \in S]\\
	={} &
	\frac{1}{4} \Pr[10 \cdot \rank(S) \geq |\OPT| \mid e \not \in S, h(\OPT) \in S]\\
	\geq{} &
	\frac{1}{4} \Pr[10 \cdot  \rank(S \cap \OPT) \geq |\OPT| \mid e \not \in S, h(\OPT) \in S]\\
	={} &
	\frac{1}{4} \Pr[10 \cdot  |S \cap \OPT| \geq |\OPT| \mid e \not \in S, h(\OPT) \in S]\\
	={} &
	\frac{1}{4} \Pr[10 \cdot |S \cap (\OPT \setminus \{e, h(\OPT)\})| \geq |\OPT| - 10]
	\enspace,
\end{align*}
}%
where the penultimate equality holds since $S \cap \OPT$ is an independent set, and the last equality holds since each element belongs to $S$ with probability $\nicefrac{1}{2}$, independently. It remains to prove that
\[
	\Pr[10 \cdot |S \cap (\OPT \setminus \{e, h(\OPT)\})| \geq |\OPT| - 10] \geq 0.8
	\enspace,
\]
which is equivalent to
\begin{equation} \label{eq:objective}
	\Pr[10 \cdot |S \cap (\OPT \setminus \{e, h(\OPT)\})| < |\OPT| - 10] \leq 0.2
	\enspace.
\end{equation}
Observe that
\[
	\E[10 \cdot |S \cap (\OPT \setminus \{e, h(\OPT)\})|]
	=
	10 \cdot \frac{|\OPT \setminus \{e, h(\OPT)\}|}{2}
	=
	5 \cdot (|\OPT| - 2)
	\enspace.
\]
Thus, Inequality~\eqref{eq:objective} can be rewritten as
\[
	\Pr\mathopen{}\left[10 \cdot |S \cap (\OPT \setminus \{e, h(\OPT)\})| < \left(1 - \frac{4}{5}\right) \cdot \E[10 \cdot |S \cap (\OPT \setminus \{e, h(\OPT)\})|] - 8\right] \leq 0.2
	\enspace.
\]
To see why this inequality is true, we observe that $|S \cap (\OPT \setminus \{e, h(\OPT)\})|$ can be viewed as the sum of $|\OPT| - 2$ independent random variables taking values from the set $\{0, 1\}$, which implies by Lemma~2.3 of~\cite{badanidiyuru_2014_fast} that the left hand side of the last inequality is upper bounded by
\[
	e^{-\frac{(4/5) \cdot 8}{2}}
	=
	e^{-3.2}
	<
	0.2
	\enspace.
	\qedhere
\]
\end{proof}

Let $\hat{O}$ be a set which takes a value depending on the events $\cE_1$ and $\cE_2$. When $\cE_1$ happens $\hat{O} = \{h(\OPT)\}$. When $\cE_2$ happens $\hat{O} = O \cap \OPT$, where $O$ is the set whose existence is guaranteed by Lemma~\ref{lem:old_algorithm} for the execution of Procedures~A or~B on the matroid $M|_{N \setminus S}$ (recall that $\cE_2$ implies that Algorithm~\ref{alg:weakly_general} gets to Line~\ref{line:procedures}, and thus, executes one of the Procedures~A or~B). Finally, when neither $\cE_1$ nor $\cE_2$ happen, we define $\hat{O}$ to be the empty set. We observe that $\hat{O}$ is a function of the random bits used by Algorithm~\ref{alg:weakly_general} and Procedures~A and~B and on the samples used by these algorithms.

\begin{lemma} \label{lem:necessary_equality_general}
When Algorithm~\ref{alg:weakly_general} is executed against a switching adversary the following inequality is satisfied:
\[
	w((T_1 \cup T_2) \cap \hat{O}) + 2 \cdot w(T \setminus \hat{O}) \geq w(\hat{O})
	\enspace.
\]
\end{lemma}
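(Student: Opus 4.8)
The plan is a case analysis according to which of the mutually exclusive events $\cE_1$, $\cE_2$, or neither occurs; these are mutually exclusive because $\cE_1$ requires $h(\OPT) \notin S$ and that Algorithm~\ref{alg:weakly_general} reaches Line~\ref{line:noprocedures}, whereas $\cE_2$ requires $h(\OPT) \in S$ and that it reaches Line~\ref{line:procedures}. If neither event occurs, then by definition $\hat{O} = \varnothing$, and the claimed inequality reduces to $2 \cdot w(T_1) \geq 0$, which is trivial.

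A useful observation that handles the two remaining cases uniformly is the following monotonicity property: if a set $O$ satisfies $w((T_1 \cup T_2) \cap O) + 2 \cdot w(T_1 \setminus O) \geq w(O)$, then so does every subset $\hat{O} \subseteq O$, i.e., $w((T_1 \cup T_2) \cap \hat{O}) + 2 \cdot w(T_1 \setminus \hat{O}) \geq w(\hat{O})$. Indeed, decomposing $O = ((T_1 \cup T_2) \cap O) \,\dot\cup\, (O \setminus (T_1 \cup T_2))$ lets one rewrite the hypothesis as $2 \cdot w(T_1 \setminus O) \geq w(O \setminus (T_1 \cup T_2))$, and then $2 \cdot w(T_1 \setminus \hat{O}) \geq 2 \cdot w(T_1 \setminus O) \geq w(O \setminus (T_1 \cup T_2)) \geq w(\hat{O} \setminus (T_1 \cup T_2))$, which is exactly the analogous rewriting of the conclusion.

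When $\cE_2$ occurs, Algorithm~\ref{alg:weakly_general} reaches Line~\ref{line:procedures} and runs Procedures~A or~B on $M|_{N \setminus S}$ against the same switching adversary; Lemma~\ref{lem:old_algorithm} therefore yields $w((T_1 \cup T_2) \cap O) + 2 \cdot w(T_1 \setminus O) \geq w(O)$ for the set $O$ from its statement, and since $\hat{O} = O \cap \OPT$ is a subset of $O$, the monotonicity observation closes this case. When $\cE_1$ occurs, $\hat{O} = \{h(\OPT)\}$, and I would argue that the element picked by Algorithm~\ref{alg:weakly_general} on Line~\ref{line:noprocedures} is exactly $h(\OPT)$: using the distinct-weights assumption together with $h(\OPT) \notin S$ and the non-loop element $s(N) \in S$, the heaviest non-loop element of $S$ is $s(N)$, so $W = w(s(N))$; hence $h(\OPT)$ — which is non-loop because it lies in the independent set $\OPT$ and has positive weight — is the unique element of $N \setminus S$ that is both non-loop and of weight at least $W$, so it is selected upon arrival. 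Whatever the switching adversary then does, $h(\OPT) \in T_1 \cup T_2$, so $(T_1 \cup T_2) \cap \hat{O} = \hat{O}$ and $w((T_1 \cup T_2) \cap \hat{O}) + 2 \cdot w(T_1 \setminus \hat{O}) \geq w(\hat{O})$, with the second term to spare.

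I expect the only point requiring genuine care to be the $\cE_1$ case: pinning down that the ``first non-loop element of value at least $W$'' that arrives is unambiguously $h(\OPT)$, which rests on the distinct-weights convention and on $h(\OPT)$ not being a loop (plus a harmless degenerate subcase when $s(N)$ does not exist), rather than on anything deep. Everything else is either a triviality or a direct invocation of Lemma~\ref{lem:old_algorithm} together with the elementary monotonicity observation above.
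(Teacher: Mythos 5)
Your proof is correct and follows essentially the same route as the paper's: a case split on $\cE_1$, $\cE_2$, or neither, with the $\cE_2$ case resolved by invoking Lemma~\ref{lem:old_algorithm} and observing that the inequality is inherited by the subset $\hat{O} = O \cap \OPT$ (the paper performs exactly your ``monotonicity'' rewriting, just inline), and the $\cE_1$ case resolved by noting that the algorithm is guaranteed to pick $h(\OPT)$ on Line~\ref{line:noprocedures}. Your extra care in the $\cE_1$ case about why the selected element is unambiguously $h(\OPT)$ is a welcome elaboration of a step the paper states more tersely, but it does not change the argument.
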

\begin{proof}
The lemma is trivial when $\hat{O} = \varnothing$. Thus, we only need to consider the cases that either $\cE_1$ or $\cE_2$ happen. Let us consider first the case that $\cE_1$ occurs. By definition, $\cE_1$ implies $h(\OPT) \not \in S$. Additionally, one can observe that it also implies that $W$ is set to $s(N)$ (when $s(N)$ exists), and thus, it guarantees that Algorithm~\ref{alg:weakly_general} picks $h(\OPT)$ when it gets to Line~\ref{line:noprocedures}. Hence, $\cE_1$ implies $T_1 \cup T_2 = \{h(\OPT)\}$, and we get
\[
	w((T_1 \cup T_2) \cap \hat{O}) + 2 \cdot w(T_1 \setminus \hat{O})
	\geq
	w((T_1 \cup T_2) \cap \hat{O})
	=
	w(\{h(\OPT)\})
	=
	w(\hat{O})
	\enspace.
\]
Consider now the case that $\cE_2$ occurs. In this case we have, by Lemma~\ref{lem:old_algorithm},
\[
	w((T_1 \cup T_2) \cap O) + 2 \cdot w(T_1 \setminus O) \geq w(O)
	\Rightarrow
	2 \cdot w(T_1 \setminus O) \geq w(O \setminus (T_1 \cup T_2))
	\enspace.
\]
Since $\hat{O} = O \cap \OPT \subseteq O$, the last inequality implies
\[
	2 \cdot w(T_1 \setminus \hat{O}) \geq w(\hat{O} \setminus (T_1 \cup T_2))
	\Rightarrow
	w((T_1 \cup T_2) \cap \hat{O}) + 2 \cdot w(T_1 \setminus \hat{O}) \geq w(\hat{O})
	\enspace.
	\qedhere
\]
\end{proof}

\begin{lemma} \label{lem:necessary_probabilities_general}
For every element $e \in \OPT \setminus L(\OPT)$,
\[
	\Pr[e \in \hat{O}]
	\geq
	\frac{1}{O(\log \log \rank(M))}
	\enspace.
\]
\end{lemma}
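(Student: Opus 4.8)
The plan is to treat separately the case $e = h(\OPT)$ and the case $e \in \OPT \setminus L(\OPT)$ with $e \neq h(\OPT)$. In the first case we are done immediately: whenever $\cE_1$ occurs we have $\hat{O} = \{h(\OPT)\} \ni e$, so $\Pr[e \in \hat{O}] \geq \Pr[\cE_1] \geq 1/8 \geq \tfrac{1}{O(\log\log\rank(M))}$. So assume from now on that $e \neq h(\OPT)$; in particular $|\OPT| \geq 2$. Since $\hat{O} = O \cap \OPT$ whenever $\cE_2$ occurs, and $e \in \OPT$, it suffices to lower bound $\Pr[\cE_2 \wedge e \in O]$, where $O$ is the set guaranteed by Lemma~\ref{lem:old_algorithm} for the execution of Procedures~A or~B on $M|_{N \setminus S}$.

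Two elementary facts do the heavy lifting. First, $h(\OPT)$ is the unique heaviest non-loop element of $N$: the greedy algorithm (adding only positive-weight elements) picks the heaviest non-loop element of $N$ first, hence it belongs to $\OPT$, and by the consistent tie-breaking it must therefore equal $h(\OPT)$. Consequently, on the event $h(\OPT) \in S$ — which is part of $\cE_2$ — Algorithm~\ref{alg:weakly_general} sets $w_{\max} = W = w(h(\OPT))$. Second, if $e \in \OPT$ and $e \notin S$ then $e \in \OPT(M|_{N \setminus S})$: by the standard greedy characterization a positive-weight element is selected iff it is not spanned by the elements heavier than it, and deleting $S$ only shrinks this set of heavier elements, hence its $\spn$. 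I would also record the weight bound $w(e') \geq w(h(\OPT))/|\OPT|$ for every $e' \in \OPT \setminus L(\OPT)$: letting $e_0$ be the lightest element of $\OPT \setminus L(\OPT)$, distinctness of the weights gives $L(\OPT, w(e_0)) = L(\OPT) \cup \{e_0\}$, so maximality of $\ell^*$ in the definition of $L(\OPT)$ yields $w(L(\OPT)) + w(e_0) \geq w(h(\OPT))$; since $|L(\OPT)| \leq |\OPT| - 1$ and every element of $L(\OPT)$ has weight at most $w(e_0)$, we get $w(h(\OPT)) \leq |\OPT| \cdot w(e_0) \leq |\OPT| \cdot w(e')$.

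To assemble the bound I would condition on the outer sample $S$ and decompose $\cE_2$ as the sample-dependent event $\cE_2' = \{h(\OPT) \in S \text{ and } 10\rank(S) \geq |\OPT|\}$ intersected with the independent fair coin that sends the algorithm to Line~\ref{line:procedures}. On the event $\cE_2' \cap \{e \notin S\}$ we have, by the first fact, $w_{\max} = w(h(\OPT))$, and $h = \lceil\log_2(10\rank(S)+1)\rceil$, so $2^h \geq 10\rank(S)+1 > |\OPT|$ and therefore $w_{\max}/2^h < w(h(\OPT))/|\OPT| \leq w(e) < w(h(\OPT)) = w_{\max}$, which places $e$ in one of the weight classes $C_1, \dots, C_h$; and by the second fact $e \in \OPT(M|_{N\setminus S})$. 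Hence, conditioned on such an $S$ and on the coin sending us to Line~\ref{line:procedures} (so that Procedures~A/B run on $M|_{N\setminus S}$), Lemma~\ref{lem:old_algorithm} gives $\Pr[e \in O] \geq 1/O(\log h) \geq 1/O(\log\log\rank(M))$, using $h \leq O(\log\rank(M))$. Combining,
\[
	\Pr[e \in \hat{O}] \;\geq\; \Pr[\cE_2 \wedge e \in O] \;\geq\; \tfrac12 \cdot \Pr[\cE_2' \wedge e \notin S] \cdot \tfrac{1}{O(\log\log\rank(M))},
\]
and since $\Pr[\cE_2' \wedge e \notin S] = 2\Pr[\cE_2 \wedge e \notin S] = 2\Pr[e \notin S]\Pr[\cE_2 \mid e \notin S] \geq 2 \cdot \tfrac12 \cdot \tfrac15$ by Lemma~\ref{lem:e_2_probability}, the claim follows. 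The step I expect to be most delicate is the conditioning bookkeeping: one has to cleanly separate the outer sample $S$ (which fixes $w_{\max}$, $h$, and the matroid $M|_{N\setminus S}$) from the internal sample and coins of Procedures~A and~B, so that Lemma~\ref{lem:old_algorithm} is invoked against the correct conditional distribution; the structural facts themselves are routine.
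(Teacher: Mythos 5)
Your proposal is correct and follows essentially the same route as the paper: the same split into the cases $e=h(\OPT)$ and $e\in(\OPT-h(\OPT))\setminus L(\OPT)$, the same weight bound $w(e)\geq w(h(\OPT))/|\OPT|$ (which you prove directly rather than by contradiction), and the same combination of Lemma~\ref{lem:old_algorithm} with Lemma~\ref{lem:e_2_probability} conditioned on $\cE_2$ and $e\notin S$. Your extra care in justifying that $W=w(h(\OPT))$ on $\cE_2$ and that $e$ survives into the optimum of $M|_{N\setminus S}$ only makes explicit steps the paper leaves implicit.
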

\begin{proof}
There are two cases to consider. First, consider the case that $e = h(\OPT)$. Observe that, by the definition of $\hat{O}$,
\[
	\Pr[e \in \hat{O}]
	\geq
	\Pr[\cE_1]
	\geq
	\frac{1}{8}
	\enspace.
\]

It remains to consider the case that $e \in (\OPT - h(\OPT)) \setminus L(\OPT)$. Note that this case implies $|\OPT| \geq 2$, and let us prove that the inequality $w(e) \geq w(h(\OPT)) / |\OPT|$ must hold in this case. Assume towards a contradiction that this inequality does not hold, and recall that $L(\OPT)$ contains all the elements of $\OPT$ whose weight is at most some threshold $\ell$, where the threshold is selected to be the maximal threshold in $\{w(e) \mid e \in N\}$ such that the total weight of $L(\OPT)$ is less than $h(\OPT)$. By our assumption and the fact that $e \neq h(\OPT)$, we get
\[
	w(\{e' \in \OPT \mid w(e') \leq w(e)\}
	<
	|\{e' \in \OPT - h(\OPT)\}| \cdot \frac{w(h(\OPT))}{|\OPT|}\\
	<
	w(h(\OPT))
	\enspace.
\]
Hence, setting the threshold $\ell$ to $w(e)$ would have kept the total weight of $L(\OPT)$ no larger than $w(h(\OPT))$, and thus, $\ell$ must be at least as large as $w(e)$. However, this leads to a contradiction since the fact that $e \not \in L(\OPT)$ implies $\ell < w(e)$.

Next, let us denote by $\OPT'$ the optimal solution of $M_{N \setminus S}$, and consider now what happens when the event $\cE_2$ happens and $e \not \in S$. These events imply together a few things. First, $\cE_2$ implies that $w_{\max} = W = w(h(\OPT))$ since $h(\OPT) \in S$. Second, we get that $e \in \OPT'$ since an element which belongs to the maximum weight independent set of a matroid must also belong to the corresponding set in every restriction of the matroid containing it. Finally, we also get $h = \lceil \log(\tilde{\rho} + 1)\rceil \geq \lceil \log_2(|\OPT| + 1)\rceil$, which implies that $e$ belongs to one of the weight classes $C_1, C_2, \dotsc, C_h$ because any element belongs to these classes if its weight is at most $w_{\max} = w(h(\OPT))$ and larger than
\[
	\frac{w_{\max}}{2^h}
	\leq
	\frac{w(h(\OPT))}{2^{\log_2(|\OPT| + 1)}}
	=
	\frac{w(h(\OPT))}{|\OPT| + 1}
	<
	\frac{w(h(\OPT))}{|\OPT|}
	\enspace.
\]

By Lemma~\ref{lem:old_algorithm}, the above observations imply together that, conditioned on $\cE_2$ and $e \not \in S$, $O$ contains $e$ with probability $\Omega(1 / \log h) = \Omega(1 / \log \log \rank(M))$, where the equality holds since
\[
	h
	=
	\lceil \log_2(\tilde{\rho} + 1) \rceil
	=
	\lceil \log_2(10 \cdot \rank(S) + 1) \rceil
	\leq
	\lceil \log_2(10 \cdot \rank(M) + 1) \rceil
	=
	O(\log \rank(M))
	\enspace.
\]
Since $O$ is a subset of $\hat{O}$ when $\cE_2$ happens, this yields
\[
	\Pr[e \in \hat{O}]
	\geq
	\frac{\Pr[\cE_2 \text{ and } e\not \in S]}{O(\log \log \rank(M))}
	=
	\frac{\Pr[\cE_2 \mid e \not \in S] \cdot \Pr[e \not \in S]}{O(\log \log \rank(M))}
	\geq
	\frac{0.1}{O(\log \log \rank(M))}
	\enspace,
\]
where the last inequality follows from Lemma~\ref{lem:e_2_probability}.
\end{proof}

Combining Lemmata~\ref{lem:necessary_equality_general} and~\ref{lem:necessary_probabilities_general}, we get the following corollary.

\begin{corollary}
Algorithm~\ref{alg:weakly_general} is a $(1/O(\log \log \rank(M)), 2)$-weakly-$\optc$-competitive algorithm for {\MSP}. Hence, there exists a $(1, 1 / O(\log \log \rank(M)), 2)$-reduce-and-solve algorithm for this problem.
\end{corollary}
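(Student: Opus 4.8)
The plan is to instantiate Definition~\ref{def:weakly_opt_competitive} with $O^\cA = \hat O$ and to read off its two required conditions directly from Lemmata~\ref{lem:necessary_equality_general} and~\ref{lem:necessary_probabilities_general}. Before doing so, I would note that, after applying the order-obliviousness fix described in the Remark following Algorithm~\ref{alg:weakly_general} (draw a single large sample and split it to supply any second sample needed by Procedures~A and~B), Algorithm~\ref{alg:weakly_general} is a bona fide order-oblivious {\MSP} algorithm, so Definition~\ref{def:weakly_opt_competitive} applies to it. I would also verify that $\hat O$ is a legitimate witness set: the three cases defining $\hat O$ are mutually exclusive, in each of them $\hat O \subseteq \OPT$ (it is $\{h(\OPT)\}$, a subset of $O \cap \OPT$, or $\varnothing$), and $\hat O$ is a function of the sample together with the internal random bits of Algorithm~\ref{alg:weakly_general} and of Procedures~A and~B only; in particular it carries no dependence on the adversarial arrival order $\sigma$, which uses the fact recorded in Lemma~\ref{lem:old_algorithm} that the set $O$ there depends only on $S$ and the random coins of the procedures.

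Given these observations, the two conditions of Definition~\ref{def:weakly_opt_competitive} follow at once. Since $\hat O$ does not depend on $\sigma$, we have $\E_S[\min_\sigma \Pr_r[e \in \hat O]] = \Pr[e \in \hat O]$, which Lemma~\ref{lem:necessary_probabilities_general} lower bounds by $1/O(\log\log\rank(M))$ for every $e \in \OPT \setminus L(\OPT)$; this gives $c^o = 1/O(\log\log\rank(M))$. The second condition is precisely the inequality $w((T_1 \cup T_2) \cap \hat O) + 2 \cdot w(T_1 \setminus \hat O) \geq w(\hat O)$ of Lemma~\ref{lem:necessary_equality_general}, which holds against every switching adversary, so $c^a = 2$ works. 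Hence Algorithm~\ref{alg:weakly_general} is $(1/O(\log\log\rank(M)), 2)$-weakly-$\optc$-competitive.

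For the second assertion of the corollary I would invoke the observation made immediately after the definition of a reduce-and-solve algorithm: any $(c^o, c^a)$-weakly-$\optc$-competitive algorithm $\cA$ is a $(1, c^o, c^a)$-reduce-and-solve algorithm, via $N^\cA = N$, $M^\cA = M$, $g^\cA = \mathrm{id}$ and $\bar\cA = \cA$. Applying this to the weakly-$\optc$-competitive algorithm just constructed yields the claimed $(1, 1/O(\log\log\rank(M)), 2)$-reduce-and-solve algorithm for {\MSP} on general matroids. I do not expect a genuine obstacle: all the substance lives in Lemmata~\ref{lem:old_algorithm}, \ref{lem:necessary_equality_general} and~\ref{lem:necessary_probabilities_general}, which are already proved, so the only points needing care are the bookkeeping ones above — that $\hat O \subseteq \OPT$ in every case, that $\hat O$ is $\sigma$-independent, and that the order-obliviousness fix from the Remark is in force so that the relevant definitions even apply.
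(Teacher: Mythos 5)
Your proposal is correct and matches the paper's (implicit) argument exactly: the paper derives the corollary by combining Lemmata~\ref{lem:necessary_equality_general} and~\ref{lem:necessary_probabilities_general} with $O^{\cA}=\hat{O}$, and the second assertion follows from the general observation that any $(c^o,c^a)$-weakly-$\optc$-competitive algorithm yields a $(1,c^o,c^a)$-reduce-and-solve algorithm. Your additional bookkeeping checks (that $\hat{O}\subseteq\OPT$ in all cases, that $\hat{O}$ is $\sigma$-independent, and that the order-obliviousness fix from the Remark applies) are exactly the right points to verify and are consistent with what the paper records.
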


\section{Submodular MISP}\label{sec:submodular}

In this section we prove our reduction from the submodular version of {\MISP} to its linear version, i.e., Theorem~\ref{thm:mainSubm}. This proof is an extension of an approach used in~\cite{feldman_2015_submodular}, which implied a reduction from submodular MSP to linear MSP.
Throughout this section, let $M_i=(N,\mathcal{I}_i)$ for $i\in [k]$ be $k$ matroids on a common ground set $N$, and let $f$ be a nonnegative submodular function over $N$. We start by stating some properties of the greedy algorithm for (offline) submodular function maximization (SFM). We recall that the greedy algorithm for SFM over some constraint family $\mathcal{F}$, which, in our case, will be $\mathcal{F}=\bigcap_{i=1}^k \mathcal{I}_i$, is defined analogously as in the case of a linear function $f$, and thus works as follows:

\medskip

\quad\parbox[t]{0.8\linewidth}{
Start with the empty set $S=\varnothing$. As long as there is an element $e\in N\setminus S$ such that $f(S+e)-f(S) >0$, choose such an element $e$ maximizing $f(S+e)-f(S)$ and add it to $S$. As soon as no such element exists anymore, return $S$.
}

\bigskip

As in the case of linear objective functions, we denote by $\greedy(N)$, or when we want to be more specific by $\greedy_f(N)$, the set $S$ returned by the greedy algorithm. Moreover, for sets $A,B\subseteq N$ we use the notation $f_A(B) = f(A\cup B) - f(A)$ for the marginal value of $B$ with respect to $A$.

Moreover, following the approach in~\cite{feldman_2015_submodular}, we use \emph{convolutions} of $f$. More precisely, for a vector $w\in \mathbb{R}^N$, the function $f_w$ is defined by
\begin{equation*}
f_w(S) = \min_{A\subseteq S}\{f(A) + w(S\setminus A)\} \qquad \forall S\subseteq N\enspace.
\end{equation*}
This is a well-known construction in the field of submodular optimization, and leads to a function $f_w$ with many interesting properties. In particular, if $f$ is nonnegative submodular and $w\geq 0$, then $f_w$ is also a nonnegative submodular function. For more information, we refer the interested reader to the discussion in~\cite{feldman_2015_submodular} and references therein.

We note that, for simplicity, we do not present here the strongest possible analysis. In particular, we present the submodular to linear reduction independently of our framework for {\MISP}. One could merge this reduction with the procedure we use in the design of {\MISP} algorithms to make the optimal solutions overlap, and this way obtain better competitive ratios.

\subsection{The algorithm to reduce submodular MISP to linear MISP}

The algorithm we suggest is identical to the one used in~\cite{feldman_2015_submodular} with the only exception that we deal with matroid intersection constraints $\mathcal{F}=\cap_{i=1}^k \mathcal{I}_i$ instead of a single matroid constraint. Algorithm~\ref{alg:online} repeats the algorithm presented in~\cite{feldman_2015_submodular} adjusted to our context. We denote by $\mathcal{A}$ an $\alpha$-competitive algorithm for the linear MISP problem over $\mathcal{F}$. Algorithm~\ref{alg:online} describes how, after having observed each element with probability $0.5$, the remaining elements get passed to $\mathcal{A}$ in a black-box way. To best match our description to the one used in~\cite{feldman_2015_submodular}, Algorithm~\ref{alg:online} is described in the random arrival model, where elements arrive one-by-one in a uniformly random random. However, it is clear from its description that Algorithm~\ref{alg:online} is an order-oblivious reduction, in the sense that it can be used with any order-oblivious algorithm $\mathcal{A}$ in an order-oblivious model.
Notice that the set $E$ constructed in Algorithm~\ref{alg:online} is the set of greedy-relevant elements with respect to $X$, i.e., $E=\imp(X)$.

For simplicity, our description of Algorithm~\ref{alg:online} only passes elements to $\mathcal{A}$ which are not part of the sample $L$. Strictly speaking, this would require that $\mathcal{A}$ remains $\alpha$-competitive even if only a subset of all elements of the ground set are passed to it. However, this can easily be resolved by also passing the elements of $L$ to $\mathcal{A}$ in a uniformly random order, such that the appearance order of any element of $N$ is uniformly at random.
This reasoning is analogous to the one we used in Section~\ref{sec:makeOptsOverlap} as part of our MISP to MSP reduction.

\SetKwIF{With}{OtherwiseWith}{Otherwise}{with}{do}{otherwise with}{otherwise}{}
\begin{algorithm2e}[ht]
\caption{$\online(p)$} \label{alg:online}
\tcp{Learning Phase}
Choose $X$ from the binomial distribution $B(n, 1/2)$.\\
Observe (and reject) the first $X$ elements of the input. Let $L$ be the set of these elements.\\

\BlankLine

\tcp{Selection Phase}
Let $G$ be the output of $\greedy$ on the set $L$.\\
Let $E \gets \varnothing$.\\
\For{each arriving element $u \in N \setminus L$}
{
	Let $w(u) \gets 0$.\\
	\If{$u \in \imp(G)$}
	{
		\With{probability $p$}
		{
			Add $u$ to $E$.\\
			Let $G_u \subseteq G$ be the solution of $\greedy$ immediately before it adds $u$ to it.\\
			Update $w(u) \gets f(u \mid G_u)$.\\
		}
	}
	Pass $u$ to $\mathcal{A}$ with weight $w(u)$.
}
\Return{$Q \cap E$, where $Q$ is the output of $\mathcal{A}$}.
\end{algorithm2e}

In the rest of this section, we prove the theorem below, which shows that this reduction indeed allows for obtaining a guarantee on the competitive ratio for Submodular MISP from the $\alpha$-competitiveness of $\mathcal{A}$. Moreover, Theorem~\ref{thm:mainSubmToLin} implies our main theorem on Submodular MISP (Theorem~\ref{thm:mainSubm}).

\begin{theorem} \label{thm:mainSubmToLin}
By choosing $p=\sfrac{\alpha}{3k}$ in Algorithm~\ref{alg:online}, we obtain
\begin{equation*}
\E[f(Q\cap E)] \geq \frac{\alpha^2}{128 k^2}\cdot f(\OPT) \enspace.
\end{equation*}
\end{theorem}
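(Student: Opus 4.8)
The plan is to follow closely the structure of the analogous reduction in~\cite{feldman_2015_submodular}, but to account carefully for the factor $k$ coming from the matroid-intersection greedy guarantee. The starting point is that $G=\greedy_f(L)$, the greedy solution on the sample $L$, is a $(k+1)$-approximation to the best common independent set in $L$ for the submodular objective (the submodular greedy on the intersection of $k$ matroids loses a factor of $k+1$, not just $k$); combined with the fact that $L\sim \mu_{1/2}$ contains each element of the optimal solution $\OPT$ with probability $1/2$, this gives $\E[f(G)] = \Omega(\frac{1}{k} f(\OPT))$ after the usual argument (one also needs the standard fact that $\E[f(\OPT\cap L)]\geq \tfrac12 f(\OPT)$, which uses submodularity and nonnegativity). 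I would state this as the first lemma.

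The second ingredient is a structural claim about the linear weights $w$ produced by Algorithm~\ref{alg:online}: the weight $w(u)=f(u\mid G_u)$ assigned to an accepted greedy-relevant element is exactly its marginal contribution when greedy inserted it, so $w(G)=\sum_{u\in G}f(u\mid G_u)=f(G)-f(\varnothing)\leq f(G)$, and more generally $w$ is dominated (in the sense needed) by $f$ restricted to any chain of greedy insertions. The key point, exactly as in~\cite{feldman_2015_submodular}, is to relate the \emph{offline} optimum of the \emph{linear} instance defined by $w$ on the greedy-relevant elements to $f(G)$: using the convolution $f_w$ and the fact that $\greedy(L)=G$ is ``locally optimal'' for $f$ on $L$, one shows that the maximum-weight common independent set $\OPT^w$ (w.r.t.\ $w$) in the intersection of the $k$ matroids, restricted to greedy-relevant elements, satisfies $w(\OPT^w)\geq \frac{1}{k}\,\mathbb{E}[\cdots]$ — essentially Theorem~\ref{thm:reduce} applied with objective $w$ — so that $\E[w(\OPT^w)]=\Omega(\frac{1}{k^2} f(\OPT))$. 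Then, since $\mathcal{A}$ is $\alpha$-competitive for linear MISP, running it on the weights $w$ yields a set $Q$ with $\E[w(Q)]\geq \alpha\cdot \E[w(\OPT^w)] = \Omega(\frac{\alpha}{k^2} f(\OPT))$.

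Finally I would convert the linear guarantee $\E[w(Q)]$ back into a submodular guarantee $\E[f(Q\cap E)]$. Here the probability $p=\alpha/(3k)$ enters: each greedy-relevant element is placed in $E$ (and given its true greedy-marginal weight) only with probability $p$, independently; so on the one hand $w$ is ``diluted'' by $p$ relative to the full greedy potential, and on the other hand this independent thinning guarantees that the elements of $Q$ that actually land in $E$ are, in a submodular sense, at least as good as their total linear weight — one argues via submodularity/subadditivity of $f$ along the greedy order that $\E[f(Q\cap E)]\geq \Omega(1)\cdot \E[w(Q\cap E)] \geq \Omega(1)\cdot p\cdot\E[w(Q)]$ minus lower-order loss, and optimizing the choice of $p$ (balancing the ``$p$ gained'' against an ``$O(kp/\alpha)$ lost'' term) yields the stated $p=\alpha/(3k)$ and the final bound $\frac{\alpha^2}{128k^2}f(\OPT)$. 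Tracking the exact constants through this last conversion, and in particular reproducing the telescoping/averaging argument of~\cite{feldman_2015_submodular} that turns a sum of greedy marginals into a bound on $f$ of a random subset, is the part I expect to be the main obstacle; the rest is a matroid-count bookkeeping exercise on top of Theorem~\ref{thm:reduce} and the standard offline greedy guarantees.
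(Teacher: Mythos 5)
Your high-level plan coincides with the paper's: port the reduction of~\cite{feldman_2015_submodular} to $k$ matroids, the two genuinely $k$-dependent ingredients being the bound $\E[f(G)]\geq\frac{1}{4(k+1)}f(\OPT)$ for greedy on a half-sample (Corollary~\ref{cor:gLargeWrtOpt}) and a lower bound on the maximum-weight common independent subset of $E$ (Lemma~\ref{lem:optEBound}). You identify both, and your instinct that the second is proved with the exchange-set machinery of Section~\ref{sec:makeOptsOverlap} is right in spirit, though it is a new charging argument (maintain sets $H_j$, one per matroid, and bound the expected charge on each element by $kp\cdot w(f)$ via geometric random variables, yielding $\E[w(\OPT_w(E))]\geq\frac{p}{pk+1}\E[w(G)]$) rather than an application of Theorem~\ref{thm:reduce}.

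The genuine gap is in your final conversion from $\E[w(Q\cap E)]$ to $\E[f(Q\cap E)]$, together with the bookkeeping feeding into it. As written, your factors do not multiply out to $\alpha^2/k^2$: the claim $\E[w(\OPT_w(E))]=\Omega(\frac{1}{k^2}f(\OPT))$ cannot hold, since $\E[w(\OPT_w(E))]\leq\E[w(E)]=p\,\E[w(G)]$, so the factor $p$ must already appear there; and you then insert an extra factor $p$ between $\E[w(Q)]$ and $\E[w(Q\cap E)]$ where none exists (all elements outside $E$ are passed to $\mathcal{A}$ with weight $0$, so $w(Q)=w(Q\cap E)$ identically). Chasing your chain literally gives $\Omega(\alpha^2/k^3)$. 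More importantly, the step $\E[f(Q\cap E)]\geq\Omega(1)\cdot\E[w(Q\cap E)]$ ``via submodularity along the greedy order'' is not provable: the weights $w(u)=f(u\mid G_u)$ are marginals with respect to prefixes of $G$, which is disjoint from $Q\cap E$, and for nonmonotone $f$ the value $f(Q\cap E)$ can be far below the sum of these marginals. The paper's mechanism is instead to pass to the convolution $f_w\leq f$, use that $w(Q\cap E)-f_w(Q\cap E)\leq w(E)-f_w(E)$ (Observation~\ref{obs:diffWToFWSmall}), and control $\E[w(E)-f_w(E)]$ via $\E[w(E)]=p\,\E[w(G)]$ and a lower bound on $\E[f_w(E)]$ in terms of $\E[f_w(G)]$ (Proposition~\ref{prop:globalGuaranteeDependingOnLinSubDisc}). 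That subtracted discrepancy is not a lower-order loss: it is of the same order as the gain, and making it small is precisely what forces $p=\Theta(\alpha/k)$. So the balancing you allude to does occur, but through an additive subtraction controlled by the convolution, not through a multiplicative $\Omega(1)$ conversion factor; without this piece the proof does not close.
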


\subsection{Analysis of Algorithm~\ref{alg:online}}

\medskip

The following lemma is a slight generalization of Lemma~3 in~\cite{gupta_2010_constrained}, the only difference being that $f$ is not assumed to be normalized (i.e., $(\varnothing)=0$).\footnote{Moreover the Lemma in~\cite{gupta_2010_constrained} is stated for $k$-independence systems, which is a generalization of the intersection of $k$ matroids. Since we only need the statement for the intersection of $k$ matroids, we state it here in this form for simplicity.}
The proof presented in a long version~\cite{gupta_2010b_constrained} of the paper~\cite{gupta_2010_constrained} easily carries over to non-normalized functions, and we therefore do not repeat it here.%


\begin{lemma}[See~\cite{gupta_2010b_constrained}]\label{lem:SFMGreedyProp}
The solution $S=\greedy_f(N)$ returned by the greedy algorithm for the maximization of a nonnegative submodular function over the intersection of $k$ matroids satisfies
\begin{equation*}
f(S) \geq \frac{1}{k+1} \cdot f(C\cup S) \qquad \text{for any $C\in \bigcap_{i=1}^k \mathcal{I}_i$}\enspace.
\end{equation*}
\end{lemma}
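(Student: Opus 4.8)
The plan is to follow the classical greedy analysis for submodular maximization over the intersection of $k$ matroids (in the spirit of Fisher--Nemhauser--Wolsey, as used by Gupta et al.), keeping track of the only new wrinkle here, namely that $f$ is merely nonnegative rather than normalized. Write $S = \greedy_f(N) = \{s_1, \dots, s_m\}$ in the order greedy adds the elements, let $S_i = \{s_1, \dots, s_i\}$ (so $S_0 = \varnothing$), and let $g_i = f(S_i) - f(S_{i-1}) = f_{S_{i-1}}(s_i) > 0$ be the marginal gain at step $i$. Two elementary facts will be used throughout: (a) since at step $i$ greedy adds an element of largest (positive) marginal among all $e \notin S_{i-1}$ with $S_{i-1}+e \in \mathcal{F}$, every $c \notin S_{i-1}$ with $S_{i-1}+c \in \mathcal{F}$ and $f_{S_{i-1}}(c) > 0$ satisfies $g_i \geq f_{S_{i-1}}(c)$; and (b) by submodularity $f_{S_{i-1}}(c) \geq f_S(c)$ for $c \notin S$, and $f(S \cup C) - f(S) \leq \sum_{c \in C \setminus S} f_S(c)$. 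As a first reduction, set $C' = \{c \in C \setminus S : f_S(c) > 0\}$; elements of $(C \setminus S)\setminus C'$ contribute nonpositively to the bound above and can be dropped. Note also that for $c \in C'$ we must have $S + c \notin \mathcal{F}$, since otherwise greedy would not have terminated.

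The core of the proof is the combinatorial claim that $C'$ can be partitioned into $D_1, \dots, D_m$ with $|D_i| \leq k$ and $S_{i-1} + c \in \mathcal{F}$ for every $c \in D_i$. To prove it, observe that (because each $S_{i-1}$ is independent in every $M_\ell$) the condition $S_{i-1}+c \in \mathcal{F}$ is equivalent to $c \notin \bigcup_{\ell=1}^k \spn_{M_\ell}(S_{i-1})$, and since $S_0 \subseteq S_1 \subseteq \cdots \subseteq S_m$ these unions of spans are nondecreasing in $i$. Hence for each $c \in C'$ the set of admissible indices is a prefix $\{1, \dots, \mu(c)\}$ of $[m]$; it is nonempty since $c \in C$ is a loop in no $M_\ell$ (so $i=1$ is admissible), and $\mu(c) \leq m$ since $S + c \notin \mathcal{F}$. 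By Hall's theorem for $b$-matchings with capacity $k$ on each index (equivalently, by assigning the elements of $C'$ in nondecreasing order of $\mu(c)$ to the lowest-indexed not-yet-full bucket in $\{1, \dots, \mu(c)\}$), such a partition exists provided $|\{c \in C' : \mu(c) \leq j\}| \leq kj$ for every $j \in [m]$. This last inequality holds because any such $c$ lies in $\bigcup_{\ell=1}^k \bigl(\spn_{M_\ell}(S_j) \cap C'\bigr)$, and for each $\ell$ the set $\spn_{M_\ell}(S_j) \cap C'$ is an independent set of $M_\ell$ (a subset of $C$) contained in a flat of rank $\rank_{M_\ell}(S_j) \leq j$, hence has at most $j$ elements. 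This step --- getting the capacity exactly $k$ out of the fact that $C'$ is simultaneously independent in all $k$ matroids --- is the main obstacle; the rest is bookkeeping.

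Given the partition, the proof finishes by telescoping. For $c \in D_i$, admissibility $S_{i-1}+c \in \mathcal{F}$ together with $f_{S_{i-1}}(c) \geq f_S(c) > 0$ gives $g_i \geq f_S(c)$ by fact (a), so
\[
f(S \cup C) \leq f(S) + \sum_{c \in C \setminus S} f_S(c) \leq f(S) + \sum_{i=1}^m \sum_{c \in D_i} f_S(c) \leq f(S) + \sum_{i=1}^m |D_i|\, g_i \leq f(S) + k \sum_{i=1}^m g_i.
\]
Finally $\sum_{i=1}^m g_i = f(S) - f(\varnothing) \leq f(S)$, where nonnegativity of $f$ (rather than $f(\varnothing)=0$) is all that is used --- this is precisely the point that makes the non-normalized case go through. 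Combining, $f(S \cup C) \leq (k+1) f(S)$, which is the statement of the lemma.
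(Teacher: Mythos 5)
Your proof is correct. The paper itself does not prove this lemma --- it defers to the long version of Gupta et al.~\cite{gupta_2010b_constrained} with the remark that the argument ``easily carries over to non-normalized functions'' --- and what you have written is exactly that classical Fisher--Nemhauser--Wolsey-style argument: the prefix structure of admissible indices, the rank-based count $|\{c\in C': \mu(c)\le j\}|\le kj$ giving buckets of size at most $k$, the charging $f_S(c)\le g_i$ for $c\in D_i$, and the single point where $f(\varnothing)\ge 0$ replaces $f(\varnothing)=0$, which is precisely the adjustment the paper alludes to. In effect you have supplied, correctly, the proof the paper omits.
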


Lemma~\ref{lem:SFMGreedyProp} immediately implies that for monotone submodular functions $f$, the greedy algorithm is a $(k+1)$-approximation algorithm (a fact which was known since the late 1970's~\cite{fisher_1978_analysis}).
However, analogously to the approach in~\cite{feldman_2015_submodular}, we want to obtain results for nonmonotone submodular function maximization, and it is well known that the greedy algorithm is not a constant-factor approximation algorithm for such objectives even on a single matroid constraint.
The following is a generalization of Lemma IV.3 of~\cite{feldman_2015_submodular}, which was originally presented in the context of applying the greedy algorithm to a single matroid constraint, to the intersection of $k$ matroids.
It shows that letting the greedy algorithm run on a randomly sampled subset provides approximation guarantees even for nonmonotone SFM.
Again, the proof is identical with the only difference that we invoke Lemma~\ref{lem:SFMGreedyProp} for a general $k$ instead of $k=1$.

\begin{lemma}[see~\cite{feldman_2015_submodular}]\label{lem:fmValue}
Let $S$ be a random set containing every element of $N$ with probability $1/2$, independently, then
\begin{equation*}
\E[f(\greedy_f(S))] \geq \frac{1}{4(k+1)} \cdot f(\OPT)\enspace.
\end{equation*}
\end{lemma}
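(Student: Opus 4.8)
The plan is to follow the proof of the $k=1$ special case (Lemma~IV.3 of~\cite{feldman_2015_submodular}) almost verbatim, replacing the single-matroid greedy bound by Lemma~\ref{lem:SFMGreedyProp}. The only external ingredient needed is the standard submodular sampling estimate (see~\cite{feldman_2015_submodular} and the references therein): if $g\colon 2^V\to\mathbb{R}_{\geq 0}$ is submodular and $R\subseteq V$ is random with $\Pr[e\in R]\leq q$ for every $e\in V$ (the events need not be independent), then $\E[g(R)]\geq(1-q)\,g(\varnothing)$. Applying this fact to the nonnegative submodular function $Y\mapsto g(X\setminus Y)$ gives the ``removal'' counterpart: if $P\subseteq X$ is random with $\Pr[e\in P]\leq q$ for every $e\in X$, then $\E[g(X\setminus P)]\geq(1-q)\,g(X)$.

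First I would reduce the statement to an estimate about a mixed set. Restricting each $M_i$ to $S$ still yields a matroid, and $\OPT\cap S$ is a common independent set of these restrictions that is contained in $S$; hence Lemma~\ref{lem:SFMGreedyProp}, applied on the ground set $S$ with $C=\OPT\cap S$, gives the pointwise bound $f(\greedy_f(S))\geq\frac{1}{k+1}\,f\bigl((\OPT\cap S)\cup\greedy_f(S)\bigr)$. Taking expectations, it therefore suffices to prove $\E_S\bigl[f((\OPT\cap S)\cup\greedy_f(S))\bigr]\geq\frac14 f(\OPT)$.

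To prove this, I would decouple $S$ into the independent samples $S_O\coloneqq S\cap\OPT$ and $S_R\coloneqq S\setminus\OPT$, each of which includes every element of its own ground set with probability $\frac12$ independently. Since $\greedy_f(S)\subseteq S$, the part of $\greedy_f(S)$ that lies in $\OPT$ is contained in $S_O$, so $(\OPT\cap S)\cup\greedy_f(S)=S_O\cup B$ with $B\coloneqq\greedy_f(S)\setminus\OPT$. Fixing $S_O$, the set $B$ is a random subset of $N\setminus\OPT$ for which $\Pr[e\in B\mid S_O]\leq\Pr[e\in S_R]=\frac12$ for every $e\in N\setminus\OPT$ (membership of such an $e$ in $\greedy_f(S)$ forces $e\in S_R$, which is independent of $S_O$); applying the sampling estimate to the nonnegative submodular function $Y\mapsto f(S_O\cup Y)$ thus gives $\E\bigl[f((\OPT\cap S)\cup\greedy_f(S))\mid S_O\bigr]\geq\frac12 f(S_O)$. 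Averaging over $S_O$ and using that $\OPT\cap S=\OPT\setminus(\OPT\setminus S)$ with $\OPT\setminus S$ an independent $\frac12$-sample of $\OPT$, the removal estimate yields $\E[f(\OPT\cap S)]\geq\frac12 f(\OPT)$; chaining the two inequalities gives $\E_S\bigl[f((\OPT\cap S)\cup\greedy_f(S))\bigr]\geq\frac12\,\E_S[f(\OPT\cap S)]\geq\frac14 f(\OPT)$, which together with the pointwise greedy bound proves the lemma.

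The main point requiring care is the decoupling in the last step: one must check that, even after conditioning on $S_O$, each fixed element of $N\setminus\OPT$ still belongs to $B=\greedy_f(S)\setminus\OPT$ with probability at most $\frac12$ — this is exactly where the independence of $S_O$ and $S_R$ is used, and it is what lets us treat $Y\mapsto f(S_O\cup Y)$ as a fixed submodular function against a small-marginal random set. Apart from that, the argument is a routine double application of the submodular sampling/removal estimate, and the passage from $k=1$ to general $k$ enters only through the factor $\frac{1}{k+1}$ of Lemma~\ref{lem:SFMGreedyProp}.
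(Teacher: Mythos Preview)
Your proposal is correct and follows exactly the approach the paper indicates: it is the proof of Lemma~IV.3 of~\cite{feldman_2015_submodular} with the single-matroid greedy bound replaced by Lemma~\ref{lem:SFMGreedyProp} for general~$k$. The decoupling of $S$ into $S_O$ and $S_R$ and the two applications of the submodular sampling/removal estimate are precisely the argument from that reference, so there is nothing to add.
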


Notice that the set $G$ constructed in Algorithm~\ref{alg:online} is indeed obtained by applying $\greedy$ to a random subset of $N$ containing each element of $N$ with probability $0.5$, independently. Lemma~\ref{lem:fmValue} thus implies the following.
\begin{corollary}\label{cor:gLargeWrtOpt}
\begin{equation*}
\E[f(G)] \geq \frac{1}{4(k+1)}\cdot f(\OPT)\enspace.
\end{equation*}
\end{corollary}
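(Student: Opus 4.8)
The plan is to obtain Corollary~\ref{cor:gLargeWrtOpt} as an immediate consequence of Lemma~\ref{lem:fmValue}, instantiating the random set $S$ of that lemma with the sample set $L$ of Algorithm~\ref{alg:online}. The first step is to verify that $L$ has precisely the distribution required by Lemma~\ref{lem:fmValue}, namely that it contains each element of $N$ independently with probability $1/2$. This is exactly the binomial-sampling observation recalled in Section~\ref{sec:preliminaries}: Algorithm~\ref{alg:online} first draws $X$ from the binomial distribution $B(n,1/2)$ and then lets $L$ be the first $X$ elements of a uniformly random arrival order; equivalently, $L$ is a uniformly random subset of $N$ whose size is $B(n,1/2)$-distributed, which is the same as including each element of $N$ in $L$ independently with probability $1/2$.

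The second step is to note that, by construction, the set $G$ in Algorithm~\ref{alg:online} is the output of $\greedy$ (in the submodular setting, $\greedy_f$) run on $L$, i.e.\ $G = \greedy_f(L)$. Applying Lemma~\ref{lem:fmValue} with $S = L$ then gives
\[
\E[f(G)] \;=\; \E[f(\greedy_f(L))] \;\geq\; \frac{1}{4(k+1)} \cdot f(\OPT),
\]
which is exactly the claimed bound, completing the proof.

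There is essentially no obstacle here; the only point meriting (minor) care is the distributional identification of $L$ as an independent $1/2$-sample of $N$, and this is already handled by the preliminaries. I would therefore present the proof as a one-line deduction, with a brief parenthetical pointing to the binomial-sampling fact, exactly as the surrounding text ("Lemma~\ref{lem:fmValue} thus implies the following") anticipates.
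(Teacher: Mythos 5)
Your proposal is correct and matches the paper's own (one-line) derivation exactly: the paper likewise notes that $L$ is an independent $\sfrac{1}{2}$-sample of $N$ by the binomial-sampling observation from the preliminaries, identifies $G=\greedy_f(L)$, and invokes Lemma~\ref{lem:fmValue}. Nothing further is needed.
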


The proof of the following two lemmata from~\cite{feldman_2015_submodular} do not depend on $\mathcal{F}$ being the independent sets of a matroid, and therefore, also hold in our setting.

\begin{lemma}[Lemma IV.5 of \cite{feldman_2015_submodular}]
\begin{equation*}
\E[w(E)] = p\cdot \E[w(G)]\enspace.
\end{equation*}
\end{lemma}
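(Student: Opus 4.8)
The plan is to compute $\E[w(E)]$ by linearity over the elements of $N$, and to match each element's expected contribution to $w(E)$ with $p$ times its expected contribution to $w(G)$ using a symmetry between the random sample $L$ and the augmented sample $L+u$. I would begin by recalling the two quantities involved. Writing $g_1,\dots,g_t$ for the order in which $\greedy_f$ adds the elements of $G=\greedy_f(L)$, and extending $w$ to the elements of $G$ in the natural way by $w(g_i)=f(g_i\mid\{g_1,\dots,g_{i-1}\})$, we have $w(G)=\sum_{i=1}^t w(g_i)=f(G)-f(\varnothing)$. On the other side, Algorithm~\ref{alg:online} assigns $w(u)=f(u\mid G_u)$ exactly to those $u\in N\setminus L$ that pass the two tests ``$u\in\imp(G)$'' (equivalently $u\in\greedy_f(G+u)$) and an independent bias-$p$ coin, where $G_u\subseteq G$ is the partial solution of $\greedy_f(G+u)$ right before $u$ is inserted; every other $u\in N\setminus L$ receives weight $0$. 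Since $G\subseteq L$, this gives $\E[w(E)]=\sum_{u\in N}\E[w(u)\cdot\mathbf{1}[u\in E]]$ and $\E[w(G)]=\sum_{u\in N}\E[\mathbf{1}[u\in G]\cdot w(u)]$, so it suffices to handle each $u$ separately.

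Next I would set up the pairing. Fix $u\in N$ and condition on the restriction $L_{-u}$ of the sample to $N\setminus\{u\}$; with probability $\tfrac12$ the realized sample is $L=L_{-u}$ and with probability $\tfrac12$ it is $L'=L_{-u}+u$, and these two outcomes are equally likely. In the first case $u\notin L$, and since neither $\imp(G)$ nor $G_u$ depends on the arrival order or on $u$'s coin, the conditional expected contribution of $u$ to $w(E)$ is $p\cdot\mathbf{1}[u\in\imp(\greedy_f(L_{-u}))]\cdot f(u\mid G_u)$. In the second case $u\in L'$, and the contribution of $u$ to $w(G)=w(\greedy_f(L'))$ is $\mathbf{1}[u\in\greedy_f(L')]$ times the marginal gain recorded for $u$ by $\greedy_f(L')$. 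The heart of the argument is the identity, for every such $L_{-u}$,
\[
	\mathbf{1}\bigl[u\in\imp(\greedy_f(L_{-u}))\bigr]\cdot f(u\mid G_u)
	\;=\;
	\mathbf{1}\bigl[u\in\greedy_f(L_{-u}+u)\bigr]\cdot\bigl(\text{marginal gain recorded for }u\text{ by }\greedy_f(L_{-u}+u)\bigr)\enspace.
\]
In words: an element outside the sample is greedy-relevant with respect to the sample's greedy solution exactly when adding it to the sample would place it in the greedy solution, and in that case the weight the algorithm gives it is the very marginal gain greedy would have recorded for it. Granting this identity, the conditional expected contribution of $u$ to $w(E)$ equals $p$ times its conditional expected contribution to $w(G)$; averaging over $L_{-u}$ and summing over $u\in N$ then yields $\E[w(E)]=p\cdot\E[w(G)]$.

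The hard part will be the displayed identity, i.e.\ understanding how $\greedy_f$ reacts to augmenting its input by a single element. For a linear objective it is immediate from prefix-consistency of greedy: processing elements in decreasing weight order, the partial solution just before $u$ depends only on the elements heavier than $u$, which are identical in $L_{-u}$ and in $L_{-u}+u$, so $u$ is selected from the larger input iff it is greedy-relevant, with the same prefix and hence the same recorded gain. For a nonnegative submodular $f$ the greedy order is adaptive, so one must verify that inserting $u$ does not perturb the greedy trajectory before the moment at which $u$ is (or is not) picked; this is exactly the content of the corresponding lemma of~\cite{feldman_2015_submodular}, whose proof uses only that the constraint family is downward closed and therefore transfers without change to $\mathcal{F}=\bigcap_{i=1}^k\mathcal{I}_i$. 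Everything else is the bookkeeping above, so I expect the write-up to consist of a clean statement of this augmentation identity (quoted from~\cite{feldman_2015_submodular}) followed by the two-line pairing computation.
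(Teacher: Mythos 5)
Your proposal is correct and follows the intended argument: the paper itself gives no proof here (it only cites Lemma~IV.5 of~\cite{feldman_2015_submodular} and notes that the proof does not use that $\mathcal{F}$ comes from a single matroid), and your element-by-element coupling of $L_{-u}$ with $L_{-u}+u$, reduced to the greedy augmentation identity, is exactly that proof, correctly transplanted to $\mathcal{F}=\bigcap_{i=1}^k\mathcal{I}_i$. You also correctly isolate the one nontrivial ingredient --- that the adaptive greedy trajectory on $L_{-u}+u$ agrees with that on $L_{-u}$ (and on $\greedy_f(L_{-u})+u$) up to the step where $u$ is inserted, so membership and recorded marginal gain coincide --- which indeed only needs downward-closedness and consistent tie-breaking.
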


\begin{lemma}
\begin{equation*}
\E[f_w(E)] \geq \frac{f(\varnothing)}{1+p} + \frac{p(1-p)}{1+p}\cdot \E[f_w(G)]\enspace.
\end{equation*}
\end{lemma}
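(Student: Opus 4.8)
The plan is to port the proof of the analogous lemma from~\cite{feldman_2015_submodular}, checking along the way that every step uses only that $\mathcal{F}$ is down-closed and $f$ is nonnegative submodular, and never that $\mathcal{F}$ is an intersection of matroids. First I would decompose the convolution into a modular and a ``residual'' part: writing $\psi(S) = \min_{A \subseteq S}\{f(A) - w(A)\}$ --- a minimum of the submodular function $f - w$ over all subsets of $S$, hence itself submodular, non-increasing, and with $\psi(\varnothing) = f(\varnothing)$ --- one has $f_w(S) = w(S) + \psi(S)$ for every $S \subseteq N$. Therefore $\E[f_w(E)] = \E[w(E)] + \E[\psi(E)]$ and $f_w(G) = w(G) + \psi(G)$, and the preceding lemma immediately supplies $\E[w(E)] = p\,\E[w(G)]$.

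Second, I would establish the clean identity $\psi(G) = f(\varnothing)$ (equivalently $f_w(G) = f(G)$). Listing the elements of an arbitrary $A \subseteq G$ in the order in which $\greedy$ selected them, telescoping $f(A) - f(\varnothing)$ into a sum of marginals, and using submodularity to bound each marginal over a prefix of $A$ from below by the marginal of the same element over the corresponding greedy prefix $G^{(i-1)}$ of $G$ (which equals exactly $w(g_i)$), one obtains $f(A) - w(A) \ge f(\varnothing)$ for all $A \subseteq G$, with equality at $A = \varnothing$. This is a purely combinatorial fact about the greedy order and does not see $\mathcal{F}$ at all. Substituting $\psi(G) = f(\varnothing)$ and $\E[w(E)] = p\,\E[w(G)]$ into the claimed inequality and rearranging reduces everything to the single estimate
\[
  \E[\psi(E)] \;\ge\; \frac{(1 + p - p^2)\,f(\varnothing) - 2p^2\,\E[w(G)]}{1 + p}\,.
\]

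The remaining task is to prove this lower bound on $\E[\psi(E)]$, and this is the main obstacle: it is where the real work lies and where the precise constants $\frac{1}{1+p}$ and $\frac{p(1-p)}{1+p}$ are forced. The crude bound $f(A) \ge 0$ does not suffice, so one must control $f(A) - w(A)$ for subsets $A$ of the greedy-relevant set $\imp(G)$ by combining submodularity with the greedy structure: $\imp(G)$ splits by ``insertion level'' into sets $E_0, \dots, E_m$, where $e \in E_i$ means that running $\greedy$ on $G + e$ inserts $e$ immediately after the prefix $G^{(i)}$, so that $w(e) = f(e \mid G^{(i)})$ and, away from the two extreme levels, $w(g_{i+1}) \le w(e) \le w(g_i)$; moreover the restriction of $E$ to each $E_i$ is an independent $p$-subsample. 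I would then either estimate $\psi(E) = \min_{A \subseteq E}\{f(A) - w(A)\}$ directly, processing the levels in order and using these sandwich inequalities, or, mirroring the $\simp/\simpExt$ potential argument of Section~\ref{sec:makeOptsOverlap}, set up an offline process that generates $(G, E, w)$ with the correct joint law and track a suitable potential whose expected change per step is balanced against the $p$ versus $1 - p$ branching. Throughout, the genuine difficulty is the non-monotonicity of $f$ --- which is exactly why $f_w$ (equivalently $\psi$), rather than $f$ itself, has to be carried around --- and none of the manipulations refer to the matroid-intersection structure of $\mathcal{F}$, which is what makes the reuse of the argument of~\cite{feldman_2015_submodular} legitimate.
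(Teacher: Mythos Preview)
Your opening sentence is exactly the paper's ``proof'': the paper does not prove this lemma at all but simply states that the proof from~\cite{feldman_2015_submodular} ``do[es] not depend on $\mathcal{F}$ being the independent sets of a matroid, and therefore, also hold[s] in our setting.'' So at the level the paper operates, your plan to port the argument of~\cite{feldman_2015_submodular} and verify it never uses the matroid structure is precisely what the paper does, and your explicit observation that only down-closedness of $\mathcal{F}$ and nonnegative submodularity of $f$ are needed is the right justification.

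Where you go beyond the paper is in trying to reconstruct the proof itself. Your decomposition $f_w(S)=w(S)+\psi(S)$ with $\psi(S)=\min_{A\subseteq S}\{f(A)-w(A)\}$ is correct, and your identity $\psi(G)=f(\varnothing)$ (equivalently $f_w(G)=f(G)$) is correctly proved via the greedy telescoping and submodularity. The algebraic reduction to a lower bound on $\E[\psi(E)]$ is also right. However, your final step is only a sketch: you describe two possible routes (a direct level-by-level estimate, or a potential argument in the style of Section~\ref{sec:makeOptsOverlap}) without carrying out either, and this is exactly where the specific constants $\tfrac{1}{1+p}$ and $\tfrac{p(1-p)}{1+p}$ must be earned. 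One caution about your level decomposition: $E$ is a $p$-subsample of $(N\setminus L)\cap\imp(G)$, not of all of $\imp(G)$, and the levels $E_i$ themselves depend on $G$ (hence on $L$), so the independence structure is more delicate than ``independent $p$-subsample of each $E_i$'' suggests. Since the paper defers the entire argument to~\cite{feldman_2015_submodular}, your proposal is not worse than the paper's treatment; but if you intend a self-contained proof, the final estimate still needs to be written out in full.
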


The following observation is implied by the fact that $\mathcal{A}$ is $\alpha$-competitive for linear objectives. The term $\OPT_w(E)$ denotes here the maximum weight independent subset of $E$ with respect to the weight function $w$.

\begin{observation}[Observation IV.7 of~\cite{feldman_2015_submodular}]\label{obs:linRatio}
\begin{equation*}
\E[w(Q\cap E)] = \E[Q] \geq \alpha\E[w(\OPT_w(E))]\enspace.
\end{equation*}
\end{observation}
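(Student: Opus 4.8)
The statement packs two assertions --- the equality $\E[w(Q\cap E)]=\E[w(Q)]$ and the inequality $\E[w(Q)]\ge\alpha\,\E[w(\OPT_w(E))]$ --- and the plan is to obtain each from the construction of the weight function $w$ in Algorithm~\ref{alg:online} together with the $\alpha$-competitiveness of $\mathcal{A}$. For the equality, I would observe that $w$ is supported on $E$: every element of the sample $L$ is passed to $\mathcal{A}$ with weight $0$, and every $u\in N\setminus L$ is initialized with $w(u)=0$ and has this weight raised to $f(u\mid G_u)$ \emph{only} inside the branch that simultaneously inserts $u$ into $E$. Hence $w(u)=0$ for every $u\notin E$, so $w(Q)=w(Q\cap E)$ holds pointwise, and in particular in expectation.

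For the inequality I would condition so as to present $\mathcal{A}$ with a fixed linear MISP instance. The crucial point is that the entire function $w$ is determined by the sample $L$ together with the independent coins of the ``with probability $p$'' steps, and is \emph{independent of the arrival order of $N\setminus L$}: $G=\greedy(L)$ depends only on $L$, membership of $u$ in $\imp(G)$ depends only on $L$ and $u$, and $G_u$ --- the greedy prefix at which $u$ would be inserted --- and hence $w(u)=f(u\mid G_u)$ are deterministic functions of $L$ and $u$ alone. Conditioning on $L$ and these coins therefore fixes $w$, and the only randomness left is $\mathcal{A}$'s internal coins together with the order in which elements are fed to $\mathcal{A}$ (recall that the elements of $L$ are also passed to $\mathcal{A}$ at the end in uniformly random order, so $\mathcal{A}$ sees all of $N$). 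Applying the $\alpha$-competitiveness of $\mathcal{A}$ to this fixed instance under the constraint $\mathcal{F}=\bigcap_{i=1}^{k}\mathcal{I}_i$ yields $\E\bigl[w(Q)\mid L,\text{coins}\bigr]\ge\alpha\cdot w(\OPT_w)$, where $\OPT_w$ denotes a maximum-$w$-weight set in $\mathcal{F}$ over all of $N$.

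Finally I would identify $w(\OPT_w)$ with $w(\OPT_w(E))$, which is immediate since $w$ is supported on $E$: on one hand $\OPT_w\cap E$ is an independent subset of $E$ with $w(\OPT_w\cap E)=w(\OPT_w)$, so $w(\OPT_w(E))\ge w(\OPT_w)$; on the other hand $\OPT_w(E)\in\mathcal{F}$, so $w(\OPT_w)\ge w(\OPT_w(E))$. Substituting this equality into the conditional bound and taking expectation over $L$ and the coins gives $\E[w(Q\cap E)]=\E[w(Q)]\ge\alpha\,\E[w(\OPT_w(E))]$, as claimed. I expect the only genuinely delicate step to be this conditioning: one must be sure that $w$ is independent of the arrival order of $N\setminus L$, so that after conditioning $\mathcal{A}$ faces a single well-defined linear MISP instance and its competitive guarantee genuinely applies; the remainder is bookkeeping.
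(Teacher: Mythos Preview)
Your proposal is correct and fleshes out precisely what the paper intends: the paper gives no proof here, simply stating that the observation ``is implied by the fact that $\mathcal{A}$ is $\alpha$-competitive for linear objectives'' and citing~\cite{feldman_2015_submodular}. Your argument---that $w$ vanishes off $E$ (giving the equality and the identification $w(\OPT_w)=w(\OPT_w(E))$), and that $w$ is determined by $L$ and the $p$-coins independently of the arrival order so that conditioning leaves $\mathcal{A}$ facing a fixed linear {\MISP} instance on which its $\alpha$-competitiveness applies---is exactly the reasoning behind that one-line justification.
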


The next lemma is a generalization of Lemma~IV.8 in~\cite{feldman_2015_submodular} from one to $k$ matroids. This lemma is a key element of the reduction, and it needs to be adjusted to the setting of $k$ matroids.

\begin{lemma}\label{lem:optEBound}
\begin{equation*}
\E[w(\OPT_w(E))] \geq \frac{p}{pk+1} \cdot \E[w(G)]\enspace.
\end{equation*}
\end{lemma}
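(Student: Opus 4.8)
The plan is to establish the inequality pointwise over the outcome of the learning phase. The sampled set $L$ determines both $G=\greedy_f(L)$ and the weight function $w$, and $E$ is then obtained from $\imp(G)$ by keeping each element independently with probability $p$. Hence it suffices to prove that, conditioned on any fixed $L$, $\E_E[w(\OPT_w(E))]\ge \tfrac{p}{pk+1}\,w(G)$; taking expectations over $L$ afterwards gives the lemma, since $w(G)$ is a function of $L$.

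First I would bridge the submodular greedy and the linear matroid machinery of Section~\ref{sec:makeOptsOverlap}. Write $G=\{g_1,\dots,g_t\}$ in the order greedy adds the elements, put $w_m=f(g_m\mid\{g_1,\dots,g_{m-1}\})$, and set $w(g_m):=w_m$ (this is the convention under which $w(G)=f(G)-f(\varnothing)$, as used in Lemma~IV.5 of~\cite{feldman_2015_submodular}); by submodularity $w_1\ge\cdots\ge w_t$. The elementary fact that makes everything work is that, for each $u\in\imp(G)$, if $G_u=\{g_1,\dots,g_s\}$ denotes the greedy prefix just before $u$ is inserted in the run of $\greedy_f$ on $G+u$, then $w_{s+1}\le w(u)=f(u\mid G_u)\le w_s$: the left inequality holds because at that step greedy prefers $u$ to $g_{s+1}$, and the right one because one step earlier greedy prefers $g_s$ to $u$ and $f$ is submodular. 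Consequently the ordinary linear greedy for the constraint $\bigcap_{i=1}^{k}\mathcal I_i$, run with the weight function $w$, outputs $G$ when applied to $G$ (as $G\in\bigcap_i\mathcal I_i$), and every $u\in\imp(G)$ is greedy-relevant with respect to $G$ in this linear sense, because $G_u$ is precisely the $w$-weight prefix of $G$ at $u$ and $G_u+u\in\bigcap_i\mathcal I_i$ is an initial segment of a run of $\greedy_f$, which maintains $\bigcap_i\mathcal I_i$-independence. Thus $\OPT_w(E)$, and a fortiori $\bigcap_i\OPTh_i$ (where $\OPTh_i\subseteq E$ is the maximum $w$-weight subset of $E$ independent in $M_i$, so that $\bigcap_i\OPTh_i$ is a common independent subset of $E$ and hence $w(\OPT_w(E))\ge w(\bigcap_i\OPTh_i)$), may be analyzed by the combinatorial argument of Section~\ref{sec:makeOptsOverlap} with $G$ in the role of $\greedy(S)$ and $\imp(G)$ in the role of $\imp(S)$ — the sole difference being that here $G$ is fixed while $\imp(G)$ is independently thinned into $E$ with probability $p$.

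Next I would set up the offline coupled process, mirroring Algorithm~\ref{alg:simpExt}. Sweep through the elements of $G\cup\imp(G)$ in decreasing $w$-weight order (ties broken so that $g_s$ precedes a relevant element $u$ with $G_u=\{g_1,\dots,g_s\}$), while maintaining: greedily grown sets $\OPTh_1,\dots,\OPTh_k$; auxiliary sets $H_1,\dots,H_k$, each kept independent in its matroid with $\spn_i(H_i)=\spn_i(G^{\mathrm{pre}}\cup\OPTh_i)$, where $G^{\mathrm{pre}}$ is the portion of $G$ already swept; and a set $W'\subseteq W:=\bigcap_i\OPTh_i$ of the elements that entered every $H_i$ (so that, as in Section~\ref{sec:makeOptsOverlap}, $W'\subseteq W$). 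When a $g_m\in G$ is swept, add it to each $H_i$, performing one exchange $H_i\gets H_i-f+g_m$ with $f\in H_i\setminus G$ whenever $H_i+g_m\notin\mathcal I_i$ — such an $f$ exists because $G\in\mathcal I_i$, so the unique circuit of $H_i+g_m$ cannot lie inside $G$. When a $u\in\imp(G)$ is swept, place it in $E$ with probability $p$, independently of the history; if so, add it to every $\OPTh_i$ and $H_i$ that can still accept it, and to $W'$ if all of them did. The potential $X:=\sum_{i=1}^k|H_i\setminus G^{\mathrm{pre}}|$ starts at $0$, is always nonnegative, decreases by the number of exchanges when a $g_m$ is swept, increases by $k-|\Gamma_u|$ when a sampled $u$ is swept (where $\Gamma_u$ is the set of matroids that reject $u$, and $u\in W'$ exactly when $\Gamma_u=\varnothing$), and is unchanged by an unsampled $u$. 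Taking expectations, and using independence of $u$'s coin from the history, yields a linear inequality relating the expected total number of exchanges, $\E[w(W')]$ and $w(G)$; applying it to every weight prefix $G_{\ge\theta}$ of $G$ together with the relevant elements of weight $\ge\theta$, then telescoping over $\theta$ exactly as in Section~\ref{sec:makeOptsOverlap}, converts cardinalities into weights and gives $\E[w(W')]\ge\tfrac{p}{pk+1}\,w(G)$. Since $w(\OPT_w(E))\ge w(W)\ge w(W')$, the lemma follows.

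The step I expect to be the main obstacle is arranging the coupling and the exchange bookkeeping so that the constant comes out exactly as $\tfrac{p}{pk+1}$. Unlike in Section~\ref{sec:makeOptsOverlap}, here $G$ is deterministic: its elements are inserted ``for free'', creating exchanges in the $H_i$'s, while each relevant element survives into $E$ only with probability $p$ and otherwise simply vanishes; one has to balance the exchanges forced by a fixed prefix $G_{\ge\theta}$ against the candidate insertions produced by the surviving relevant elements of weight $\ge\theta$, and it is precisely this accounting that turns the $\tfrac{p}{p+1}$ of the single-matroid case in~\cite{feldman_2015_submodular} into $\tfrac{p}{pk+1}$. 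The weight-interleaving inequalities $w_{s+1}\le f(u\mid G_u)\le w_s$ are routine but indispensable, since they are what let the purely matroidal argument of Section~\ref{sec:makeOptsOverlap} be reused essentially verbatim.
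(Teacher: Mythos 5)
Your opening reduction is where the proof breaks: the claim that it suffices to show, conditioned on any fixed $L$, that $\E_E[w(\OPT_w(E))]\ge \tfrac{p}{pk+1}\,w(G)$ is false, so no amount of subsequent bookkeeping can establish it. A minimal counterexample: take $k=1$ and a single non-loop element $N=\{a\}$ with $f(a)=1$, and condition on $L=\{a\}$ (an event of probability $\sfrac{1}{2}$). Then $G=\{a\}$ and $w(G)=1$, but $N\setminus L=\varnothing$, so $E=\varnothing$ and the conditional expectation of $w(\OPT_w(E))$ is $0<\tfrac{p}{p+1}$. (A non-degenerate variant: $N=\{a,b\}$ in a rank-one matroid with $f(a)=1>f(b)=0.9$ and $L=\{a\}$; then $b\notin\imp(G)$ and again $E=\varnothing$.) Once $L$ is fixed there is simply no guarantee that enough greedy-relevant weight survives outside $L$ to compete with $w(G)$; the lemma is only true after averaging over $L$, because the elements that end up in $G$ and the elements that become $E$-candidates are drawn from the same pool of ``considered'' elements. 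This is exactly what the paper's proof exploits: Algorithm~\ref{alg:onlineExt} generates $G$ and $E$ in one coupled sweep in which each considered element joins $G$ with probability $\sfrac{1}{2}$ and becomes an $E$-candidate with the complementary sub-probability $\sfrac{p}{2}$, so the expected weight increments of $G$ and of $W$-plus-charges stand in the fixed ratio $\frac{1/2}{p/2}=\frac{1}{p}$. That ratio is the source of the factor $p$, and it has no analogue when $G$ is deterministic and its elements are, as you put it, inserted ``for free''.

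The same issue resurfaces in your charging/potential step. In the paper, the total charge accumulated by an element $f\in W$ is bounded by $kp\cdot w(f)$ because each element that threatens to displace $f$ from some $H_j$ actually does so only with probability $\sfrac{1}{2}$ (the probability that it joins $G$), so the number of threats per matroid is dominated by a geometric random variable with mean $2$; this is what produces the $pk$ term in the denominator. With $G$ fixed, every conflicting element of $G$ displaces $f$ deterministically, and there is no lower bound on the number of exchanges to balance against the $p$-thinned insertions, so your potential inequality does not close. Your intermediate observations (the interleaving $w_{s+1}\le f(u\mid G_u)\le w_s$, and $w(\OPT_w(E))\ge w(W)\ge w(W')$ for a common independent $W'\subseteq E$) are correct and do appear in the paper's argument, but the analysis must be carried out over the joint randomness of $L$ and the $p$-thinning, as in Algorithm~\ref{alg:onlineExt}, not conditionally on $L$.
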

\begin{proof}

Similar to the way we proceeded in Section~\ref{sec:makeOptsOverlap}, we define Algorithm~\ref{alg:onlineExt}, which is an extended version of Algorithm~\ref{alg:online}.
More precisely, Algorithm~\ref{alg:onlineExt} produces sets $E$ and $G$ with the same joint distribution as the ones constructed by Algorithm~\ref{alg:online}. Moreover, Algorithm~\ref{alg:onlineExt} also constructs a set $W\subseteq E$ satisfying $W\in \mathcal{F}$, and which we use to show that $\E[w(\OPT_w(E))]$ is large through the inequality $w(\OPT_w(E)) \geq w(W)$. The main difference compared to Algorithm~\ref{alg:simpExt}, which we used in Section~\ref{sec:makeOptsOverlap} to prove a related result, is that the sets $H_j$ are only updated when we add an element to either $G$ or $W$.
The proof now closely follows the reasoning of the proof of Lemma IV.8 in~\cite{feldman_2015_submodular}.

\medskip
   \begin{algorithm2e}[H]
        \DontPrintSemicolon
        \SetKwFor{Prob}{With probability}{}{endprob}
        \SetKwFor{Otherwise}{Otherwise}{}{endother}
        \SetKwFor{OthProb}{Otherwise with probability}{}{endother}
        \SetKw{stop}{stop}
        $G= \varnothing$, $W = \varnothing$, $E=\varnothing$, and $H_j=\varnothing$ \quad $\forall j\in [k]$\\
        \For{$i=1,\ldots, n$}{
          Let $e_i \in \argmax\{f_G(e) \mid e\in N\setminus \{e_1,\ldots e_{i-1}\}\}$.\\
          Let $w(e_i) = f_G(e_i)$.\\
          \If{$G+e_i \in \mathcal{F}$ and $w(e_i) > 0$}{
            \Prob{$0.5$:}{
            $G= G+e_i$.\\
             \For{$j=1,\dots, k$}{
             \lIf{$H_j + e_i \in \mathcal{I}_j$}{$H_j = H_j + e_i$.}
             \If{$H_j + e_i \not \in \mathcal{I}_j$} {$H_j = H_j - f + e_i$, where $f\in H_j\setminus G$ is any element with $H_j -f +e_i\in \mathcal{I}_j$. \label{algstep:exchange}}
              }
            }
            \OthProb{$p$ (hence, with overall probability $0.5 p$):}{
              $E=E+e_i$.\\
            \If{$H_j + e_i \in \mathcal{I}_j$ for all $j\in [k]$}{
              $W = W + e_i$.\\
              $H_j = H_j +e_i$ for $j\in [k]$.
            }
            }

          }
        }

    \caption{$\onlineExt(p)$}\label{alg:onlineExt}
   \end{algorithm2e}
\medskip

Consider an arbitrary element $e\in N$ processed by Algorithm~\ref{alg:onlineExt}, and fix all history up to that point. Let $G'$, $W'$, and $H_j'$ for $j\in [k]$ be the sets $G$, $W$, and $H_j$ at that moment. Notice that, since we fixed the history, the weight $w(e)$ of $e$ is no longer a random variable.

We are interested in studying the expected increase in $w(G')$ and $w(W')$ following the processing of $e$. If $G'+e\not\in \mathcal{F}$ or $w(e) \leq 0$, then the expected increase in the weight of both sets is $0$. Now, consider the case where $G'+e\in \mathcal{F}$ and $w(e) > 0$. The expected increase in $w(G')$ is equal to $w(e)/2$ because $e$ gets added to $G'$ with probability $0.5$.
The expected increase in $w(W')$ depends on whether $H_j + e \in \mathcal{I}_j$ for all $j\in [k]$. If this is the case, then the expected increase in $w(W')$ is equal to $(p/2)\cdot w(u)$ because $e$ gets added to $W'$ with probability $p/2$. Otherwise, if there is a $j\in [k]$ such that $H_j +e \not\in \mathcal{I}_j$, then the expected increase in $w(W')$ is $0$. To compensate for this, we charge in this case $(p/2)\cdot w(u)$ units to the element $f\in H'_j\setminus G'$ that is removed from $H'_j$ in line~\ref{algstep:exchange} if $e$ gets added to $G$. If several elements from different sets $H'_{j'}$ (for different $j'\in [k]$) get removed, we only charge one arbitrarily chosen element among them. It is important to note that we do this charging regardless of whether $u$ actually ends up being added to $G'$ or not.
In summary, if $e$ is such that $G'+e\in \mathcal{F}$ and $w(e) >0$, then the expected increase of $w(G')$ is $w(u)/2$ and we either have an expected increase of $(p/2)\cdot w(u)$ for $w(W')$ or increase the load of an element by $(p/2)\cdot w(u)$. Let $c(f)$ be the total load that an element $f\in N$ obtains through this process. Notice that $c(f)$ is a random variable. By the above explanation we have
\begin{equation}\label{eq:chargeRel}
\frac{\E[w(G)]}{\E[w(W)] + \sum_{e\in N}\E[c(e)]} = \frac{1/2}{p/2} = \frac{1}{p}\enspace.
\end{equation}

We now analyze how large the expectation of all charged values is. For this observe that only elements in $W$ get charged, because we only charge elements $f$ that may be removed from a set $H_j$ in line~\ref{algstep:exchange} of Algorithm~\ref{alg:onlineExt}; hence, these are elements that are not part of $G$ but are contained in $H_j$ at that moment of the algorithm. Since $H_j$ contains a subset of elements of $G$ and $W$, the element $f$ must be contained in $W$.
Now, consider an element $f\in W$. Whenever Algorithm~\ref{alg:onlineExt} considers an element $e\in N$ that, if added to $G$, would lead to $f$ being removed from at least one set $H_j$ for $j\in [k]$, then $f$ has a probability of $0.5$ to be removed from $H_j$ (the probability of $e$ to be added to $G$). Thus, for a fixed $j$, the number of times that $f$ will be charged due to $H_j$ is upper bounded by a random variable $Y$ with geometric distribution of parameter $0.5$. Moreover, notice that $w(e)\leq w(f)$, whenever we consider an element $e$ that leads to $f$ being charged, which follows by the fact that the weights of considered elements are non-increasing in Algorithm~\ref{alg:onlineExt}. Hence, the expected load of element $f$ due to a conflict in $H_j$ is upper bounded by $\E[Y] \cdot (p/2) w(f) = p\cdot w(f)$. Since there are $k$ sets $H_j$ due to which $f$ can be charged, we obtain the following overall bound on the expected load on an arbitrary element $f\in N$:
\begin{equation*}
\E[c(f)] \leq k\cdot p \cdot \E[w(W\cap \{f\})]\enspace.
\end{equation*}
Using the above bound in~\eqref{eq:chargeRel}, we finally obtain
\begin{align*}
\frac{\E[w(G)]}{\E[w(W)] + kp\E[w(W)]} \leq \frac{1}{p}
\quad \implies \quad
\frac{p}{pk+1}\cdot \E[w(G)] \leq \E[w(W)]\enspace,
\end{align*}
and the result follows from $w(\OPT_w(E)) \geq w(W)$.
\end{proof}

Combining Observation~\ref{obs:linRatio} and Lemma~\ref{lem:optEBound} we obtain the following.
\begin{corollary}\label{cor:linValueLargeWrtG}
\begin{equation*}
\E[w(Q\cap E)] \geq \frac{\alpha p}{pk+1}\cdot \E[w(G)]\enspace.
\end{equation*}
\end{corollary}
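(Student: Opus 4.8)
The plan is immediate: Corollary~\ref{cor:linValueLargeWrtG} is a one-line consequence of chaining the two preceding results, Observation~\ref{obs:linRatio} and Lemma~\ref{lem:optEBound}. First I would recall that Observation~\ref{obs:linRatio}, which encodes the $\alpha$-competitiveness of $\mathcal{A}$ applied to the linearized instance on the greedy-relevant elements $E$ with weights $w$, gives $\E[w(Q\cap E)] = \E[w(Q)] \geq \alpha \cdot \E[w(\OPT_w(E))]$. Then I would invoke Lemma~\ref{lem:optEBound}, which lower bounds the offline optimum of this linear instance by a constant fraction of the greedy weight: $\E[w(\OPT_w(E))] \geq \frac{p}{pk+1}\cdot \E[w(G)]$.

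Combining the two, and using monotonicity (both sides nonnegative, $\alpha > 0$), we get
\[
\E[w(Q\cap E)] \geq \alpha \cdot \E[w(\OPT_w(E))] \geq \alpha \cdot \frac{p}{pk+1}\cdot \E[w(G)] = \frac{\alpha p}{pk+1}\cdot \E[w(G)]\enspace,
\]
which is exactly the claim. There is essentially no obstacle here: the two inputs have already been established, and the only thing to check is that the quantities line up (the $\OPT_w(E)$ appearing in Observation~\ref{obs:linRatio} is the same maximum-weight independent subset of $E$ under $w$ that Lemma~\ref{lem:optEBound} bounds, which is true by definition). Thus the corollary follows directly, with no additional calculation required.
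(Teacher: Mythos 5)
Your proof is correct and matches the paper exactly: the corollary is obtained there by the same one-line chaining of Observation~\ref{obs:linRatio} with Lemma~\ref{lem:optEBound}. Nothing further is needed.
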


The following proposition corresponds to Proposition IV.10 in~\cite{feldman_2015_submodular} with the only difference that we obtain a different factor on the right-hand side in front of $f(\OPT)$ due to the fact that its proof is based on Corollary~\ref{cor:gLargeWrtOpt} and Corollary~\ref{cor:linValueLargeWrtG} which we use for general $k$ instead of $k=1$. Otherwise, the proof is identical, and we therefore refer the interested reader to~\cite{feldman_2015_submodular} for further details.

\begin{proposition}[See Proposition IV.10 in~\cite{feldman_2015_submodular}]
\label{prop:globalGuaranteeDependingOnLinSubDisc}
If $\E[w(Q\cap E) - f_w(Q\cap E)] \leq q\cdot \E[w(E) - f_w(E)]$ for some value $q\geq \alpha$, then
\begin{equation*}
\E[f(Q\cap E)] \geq \E[f_w(Q\cap E)] \geq \frac{p\left(\alpha+\alpha p-2qp (pk+1)\right)}{4(k+1)(pk+1)(1+p)} \cdot f(\OPT)\enspace.
\end{equation*}
\end{proposition}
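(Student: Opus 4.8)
The plan is to reduce the statement, via the lemmas already established, to a lower bound on $\E[f_w(Q\cap E)]$ that is an explicit affine function of $\E[w(G)]$ and $f(\varnothing)$, and then to feed in Corollary~\ref{cor:gLargeWrtOpt}. I would first record three elementary facts about the convolution $f_w$: it is pointwise dominated by $f$ (take $A=S$ in $\min_{A\subseteq S}\{f(A)+w(S\setminus A)\}$), so $\E[f(Q\cap E)]\geq\E[f_w(Q\cap E)]$, which already gives the first inequality of the proposition; it is nonnegative since $f\geq 0$ and $w\geq 0$; and $f_w(G)=f(G)=w(G)+f(\varnothing)$ for the greedy set $G$, because the greedy marginals telescope to $w(G)=f(G)-f(\varnothing)$ and the greedy ordering makes $A=\varnothing$ (equivalently $A=G$) attain the defining minimum. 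In particular $\E[f_w(G)]=\E[w(G)]+f(\varnothing)$.

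Next I would start from the identity $\E[f_w(Q\cap E)]=\E[w(Q\cap E)]-\E[w(Q\cap E)-f_w(Q\cap E)]$ and apply the hypothesis to obtain
\[
\E[f_w(Q\cap E)]\ \geq\ \E[w(Q\cap E)]-q\,\E[w(E)]+q\,\E[f_w(E)].
\]
Into this I would substitute Corollary~\ref{cor:linValueLargeWrtG} ($\E[w(Q\cap E)]\geq\tfrac{\alpha p}{pk+1}\E[w(G)]$), Lemma~IV.5 of~\cite{feldman_2015_submodular} ($\E[w(E)]=p\,\E[w(G)]$), and the convolution lemma ($\E[f_w(E)]\geq\tfrac{f(\varnothing)}{1+p}+\tfrac{p(1-p)}{1+p}\E[f_w(G)]$) together with $\E[f_w(G)]=\E[w(G)]+f(\varnothing)$. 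Writing $g\coloneqq\E[w(G)]$ and collecting terms — the contributions $-qpg$ and $+\tfrac{qp(1-p)}{1+p}g$ combine to $-\tfrac{2qp^2}{1+p}g$ — one lands on
\[
\E[f_w(Q\cap E)]\ \geq\ C\cdot g + D\cdot f(\varnothing),
\qquad
C=\frac{p(\alpha+\alpha p-2qp(pk+1))}{(pk+1)(1+p)},\quad D=\frac{q(1+p-p^2)}{1+p},
\]
and the claimed bound is exactly $\tfrac{C}{4(k+1)}f(\OPT)$.

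To close the argument I would split on the sign of $C$. If $C\leq 0$, then $\tfrac{C}{4(k+1)}f(\OPT)\leq 0\leq\E[f_w(Q\cap E)]$ by nonnegativity of $f_w$, and we are done. If $C>0$, I would verify $D\geq C$: since $1+p-p^2=1+p(1-p)\geq 1$ and $q\geq\alpha$, we get $D\geq\tfrac{\alpha}{1+p}$, while $C<\tfrac{\alpha p}{pk+1}\leq\tfrac{\alpha}{1+p}$, the last inequality being $pk+1\geq p(1+p)$, which holds for every integer $k\geq 1$ and every $p\in(0,1]$. Hence, using $f(\varnothing)\geq 0$ and then Corollary~\ref{cor:gLargeWrtOpt} in the form $g+f(\varnothing)=\E[f(G)]\geq\tfrac{1}{4(k+1)}f(\OPT)$,
\[
Cg+Df(\varnothing)\ \geq\ C\,(g+f(\varnothing))\ \geq\ \frac{C}{4(k+1)}\,f(\OPT).
\]
Combined with $\E[f(Q\cap E)]\geq\E[f_w(Q\cap E)]$ this finishes the proof.

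The main obstacle I anticipate is not the chaining of inequalities but the careful bookkeeping of the additive constant $f(\varnothing)$: both the convolution lemma and the greedy guarantee carry such a term, and it must be retained rather than discarded so that its coefficient $D$ in the final estimate dominates the coefficient $C$ of $\E[w(G)]$. This $D\geq C$ comparison, paired with the case split on the sign of $C$ (which is needed because multiplying the greedy guarantee by a negative $C$ would reverse it), is precisely what lets the argument go through uniformly for every $q\geq\alpha$, including the regime where $C$ is negative and the stated bound is vacuous.
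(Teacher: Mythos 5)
Your proof is correct, and it follows exactly the route the paper intends: the paper omits the argument and defers to Proposition~IV.10 of the cited work, noting only that one substitutes Corollaries~\ref{cor:gLargeWrtOpt} and~\ref{cor:linValueLargeWrtG} for general $k$, and your chain of inequalities (the identity for $\E[f_w(Q\cap E)]$, the hypothesis, the three lemmas, and the final appeal to the greedy guarantee) is precisely that derivation; the coefficients $C$ and $D$ check out. Your explicit bookkeeping of $f(\varnothing)$ via the comparison $D\geq C$ and the case split on the sign of $C$ is a careful and valid way to close the argument.
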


The following observation, which is Observation~IV.11 from~\cite{feldman_2015_submodular}, shows that the above proposition can always be applied with $q=1$.

\begin{observation}[Observation~IV.11 in~\cite{feldman_2015_submodular}]\label{obs:diffWToFWSmall}
\begin{equation*}
w(Q\cap E) - f_w(Q\cap E) \leq w(E) - f_w(E)\enspace.
\end{equation*}
\end{observation}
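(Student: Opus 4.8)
The plan is to prove this as a purely algebraic property of the convolution $f_w$ together with the modularity of the weight function $w$, combined with the trivial inclusion $Q \cap E \subseteq E$. Concretely, I would show that the set function $S \mapsto w(S) - f_w(S)$ is monotone non-decreasing, which immediately gives the claim since $Q \cap E$ is a subset of $E$.

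First I would rewrite the convolution. By definition $f_w(S) = \min_{A \subseteq S}\{f(A) + w(S \setminus A)\}$, and since $w$ is modular and non-negative we have $w(S) - w(S \setminus A) = w(A)$ for every $A \subseteq S$. Hence
\[
w(S) - f_w(S) = \max_{A \subseteq S}\bigl\{w(S) - f(A) - w(S \setminus A)\bigr\} = \max_{A \subseteq S}\bigl\{w(A) - f(A)\bigr\}.
\]
The right-hand side is a maximum taken over a family of subsets that only grows as $S$ grows, so it is non-decreasing in $S$. Applying this with $S = E$ and with $S = Q \cap E$ and using $Q \cap E \subseteq E$ yields $w(Q \cap E) - f_w(Q \cap E) \leq w(E) - f_w(E)$, as desired.

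An equivalent route, which avoids the reformulation, is a direct exchange argument: let $A^* \subseteq Q \cap E$ attain the minimum defining $f_w(Q \cap E)$, i.e.\ $f_w(Q \cap E) = f(A^*) + w((Q \cap E) \setminus A^*)$. Since $A^* \subseteq Q \cap E \subseteq E$, the set $A^*$ is also admissible in the minimization defining $f_w(E)$, so $f_w(E) \leq f(A^*) + w(E \setminus A^*)$; and because $A^* \subseteq Q \cap E$, the set $E \setminus A^*$ is the disjoint union of $(Q \cap E) \setminus A^*$ and $E \setminus (Q \cap E)$, so $w(E \setminus A^*) = w((Q \cap E) \setminus A^*) + w(E \setminus (Q \cap E))$. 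Combining the two displays gives $f_w(E) \leq f_w(Q \cap E) + w(E \setminus (Q \cap E))$, which rearranges to the statement.

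I do not anticipate any genuine obstacle here: the observation is a one-line consequence of the definition of $f_w$ and the modularity of $w$ once the right identity is spotted. The only points to be careful about are that $w$ is indeed modular and non-negative (which holds because it is the linear weight function produced by Algorithm~\ref{alg:online} and fed to $\mathcal{A}$), so that $w(S) - w(S \setminus A) = w(A)$ for $A \subseteq S$, and that $Q \cap E \subseteq E$, which is immediate from the definitions of $Q$ and $E$.
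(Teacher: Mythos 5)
Your proof is correct. Note that the paper itself does not prove this observation at all; it simply imports it as Observation~IV.11 of the cited reference, so your argument supplies a derivation the paper omits. Both of your routes are valid and essentially equivalent: the identity $w(S) - f_w(S) = \max_{A \subseteq S}\{w(A) - f(A)\}$, which follows from modularity of $w$ via $w(S) - w(S\setminus A) = w(A)$ for $A \subseteq S$, immediately shows that $S \mapsto w(S) - f_w(S)$ is monotone under inclusion, and the claim follows from $Q \cap E \subseteq E$; the direct exchange argument with the minimizer $A^*$ of $f_w(Q\cap E)$ is the same computation unrolled. One small remark: non-negativity of $w$ plays no role in either argument (modularity alone gives $w(S) - w(S\setminus A) = w(A)$, and the exchange argument only uses additivity of $w$ over the disjoint union $E\setminus A^* = ((Q\cap E)\setminus A^*) \cup (E\setminus(Q\cap E))$), so you could drop that hypothesis from your statement of what needs checking. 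The facts you do need --- that $w$ is modular and that $Q\cap E \subseteq E$ --- are indeed immediate from Algorithm~\ref{alg:online}.
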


Finally, we obtain Theorem~\ref{thm:mainSubmToLin}, which implies our main result (Theorem~\ref{thm:mainSubmToLin}) as discussed above.

\begin{reptheorem}{thm:mainSubmToLin}
By choosing $p=\sfrac{\alpha}{3k}$ in Algorithm~\ref{alg:online}, we obtain
\begin{equation*}
\E[f(Q\cap E)] \geq \frac{\alpha^2}{128 k^2}\cdot f(\OPT) \enspace.
\end{equation*}
\end{reptheorem}
\begin{proof}
By Observation~\ref{obs:diffWToFWSmall}, we can choose $q=1$ in Proposition~\ref{prop:globalGuaranteeDependingOnLinSubDisc}. The theorem now follows by plugging in $q=1$ and $p=\sfrac{\alpha}{3 k}$ in the inequality stated in Proposition~\ref{prop:globalGuaranteeDependingOnLinSubDisc}, and simplifying the resulting expression by using $\alpha \leq 1$ and $k\geq 1$, which leads to
\begin{align*}
\E[f(Q\cap E)]
	&\geq
	\frac{\alpha\left(9k\alpha+3\alpha^2 - 2\alpha (\alpha+3)\right)}{12k(k+1)(\alpha+3)(3k+\alpha)} \cdot f(\OPT)\\
	&=
\frac{9 \alpha^2 k - 6 \alpha^2 + \alpha^3}{12k {\left[3 {\left(3  + \alpha\right)} k^2 + {\left(9  + 6 \alpha + \alpha^2\right)} k + {\left(3 \alpha + \alpha^2 \right)}\right]}}\cdot f(\OPT)\\
 &\geq \frac{ 3 \alpha^2 k}{12k {\left[3 {\left(3  + 1\right)} k^2 + {\left(9  + 6  + 1\right)} k^2 + {\left(3 + 1 \right)} k^2\right]}}\cdot f(\OPT)
=
\frac{\alpha^2}{128 k^2} \cdot f(\OPT)\enspace.
\qedhere
\end{align*}
\end{proof}

\end{document}